\newcommand{\sym}{\mathrm{symp}}
\newcommand{\Spec}{\mathrm{Spec}}
\renewcommand{\th}{^{\mathrm{th}}}
\algrenewcommand\algorithmicrequire{\textbf{Input:}}
\algrenewcommand\algorithmicensure {\textbf{Output:}}
\DeclareMathOperator{\Bin}{\mathsf{Bin}}
\title{Hamiltonian Decoded Quantum Interferometry \vspace{1mm}}
\author[1,2]{\quad\quad\, Alexander Schmidhuber\thanks{Co-first authors}}
\author[2]{Jonathan Z. Lu\protect\footnotemark[1]}
\author[1]{Noah Shutty}
\author[1]{\newline Stephen Jordan}
\author[2,3]{Alexander Poremba\thanks{Joint supervisors}}
\author[2,4]{Yihui Quek\protect\footnotemark[2]}
\affil[1]{Google Quantum AI, Venice, CA}
\affil[2]{Massachusetts Institute of Technology, Cambridge, MA}
\affil[3]{Boston University, Boston, MA} 
\affil[4]{École Polytechnique Fédérale de Lausanne, Lausanne, Switzerland}
\date{}
\begin{document}
\maketitle
{
\renewcommand\thefootnote{}\footnotetext{%
\small
\textbf{Correspondence:} alexsc@mit.edu, lujz@mit.edu,
shutty@google.com,  \\ \hspace*{3.3cm} stephenjordan@google.com, 
poremba@bu.edu, yihui.quek@epfl.ch.}
}

\begin{abstract}
We introduce Hamiltonian Decoded Quantum Interferometry (HDQI), a quantum algorithm that utilizes coherent Bell measurements and the symplectic representation of the Pauli group to reduce Gibbs sampling and Hamiltonian optimization to classical decoding. For a signed Pauli Hamiltonian $H$ and any degree-$\ell$ polynomial $\calP$, HDQI prepares a purification of the density matrix $$\rho_\calP(H) = \calP^2(H)/\Tr[\calP^2(H)]$$ by solving a combination of two tasks: decoding $\ell$ errors on a classical code defined by $H$, and preparing a \emph{pilot state} that encodes the anti-commutation structure of $H$.
Choosing $\calP(x)$ to approximate $\exp(-\beta x/2)$ yields Gibbs states at inverse temperature $\beta$; other choices of $\calP$ prepare approximate ground states, microcanonical ensembles, and other spectral filters.  

The decoding problem inherits structural properties of $H$; in particular, local Hamiltonians map to LDPC codes. Preparing the pilot state is always efficient for commuting Hamiltonians, but highly non-trivial for non-commuting Hamiltonians. Nevertheless, we prove that this state admits an efficient matrix product state representation for Pauli Hamiltonians whose anti-commutation graph decomposes into connected components of logarithmic size.

HDQI efficiently prepares Gibbs states at arbitrary temperatures for a class of physically motivated commuting Hamiltonians -- including the toric code, color code, and Haah's cubic code -- but we also develop a matching efficient classical algorithm for this task, thereby delineating the boundary of efficient classical simulation. For a non-commuting semiclassical spin glass and commuting stabilizer Hamiltonians with quantum defects, HDQI provably prepares Gibbs states up to a constant inverse-temperature threshold using polynomial quantum resources and quasi-polynomial classical pre-processing. These results position HDQI as a versatile new algorithmic primitive and the first extension of Regev’s reduction to non-abelian groups.

\end{abstract}

\newpage
\tableofcontents 


\newpage

\section{Introduction}

One of the earliest envisioned applications of quantum computers is the prediction of equilibrium properties of quantum-mechanical Hamiltonians \cite{feynman1982simulating,lloyd1996universal}. 
Algorithmically, this goal reduces to two central tasks. 
The first is (approximate) \emph{ground state preparation}, also known as \emph{Hamiltonian optimization} for combinatorially motivated Hamiltonians such as Quantum Max-Cut. 
The second is the task of \emph{sampling} from physically or statistically motivated distributions over a Hamiltonian, most notably Gibbs states at finite temperature. 

On the optimization side, there has been a long line of work on developing suitable algorithmic primitives; these include adiabatic/annealing paradigms~\cite{farhi2001adiabatic,kadowaki1998quantum}, variational and hybrid approaches~\cite{peruzzo2014variational,Cerezo_2021}, quantum phase estimation methods~\cite{Abrams_1999,kitaev1995quantummeasurementsabelianstabilizer}, filtering-based approaches~\cite{Poulin_2009,Lin_2020} and dissipative methods~\cite{cubitt2023dissipativegroundstatepreparation}. The theory of $\mathsf{QMA}$-completeness~\cite{kitaev2002local,bookatz2012qma}
complements these predominantly heuristic algorithms with fundamental worst-case hardness barriers, but leaves open the possibility of quantum speed-ups on typical instances or structured families of quantum optimization 
problems arising in physics, chemistry and materials science. 

On the sampling side, preparing Gibbs states is an essential primitive across physics and chemistry.
Gibbs ensembles provide the foundation for computing macroscopic observables from microscopic models, enabling the estimation of free energies, partition-function proxies, and thermal expectation values~\cite{PathriaBeale2011,Chandler1987,Callen1985,LandauLifshitz1980,Sachdev2011,AltlandSimons2010,Mermin1965,Tuckerman2010,LandauBinder2014}. 
Classical methods for Gibbs sampling (e.g., \emph{Markov Chain Monte Carlo} (MCMC)~\cite{speagle2020conceptualintroductionmarkovchain}) often suffer from long mixing times or sign problems, especially in frustrated systems or quantum many-body contexts.
Existing \textit{quantum} algorithms for thermal state preparation, such as quantum Metropolis sampling~\cite{Temme_2011} and related Gibbs-sampling methods~\cite{GilyenThermal23a,chen2023efficient}, achieve speedups in principle \cite{BCL24,rajakumar2024gibbs}, but exhibit runtimes that depend heavily on certain mixing properties or temperature-dependent parameters, which are often difficult to determine in practice. 

A recent work introduced \emph{Decoded Quantum Interferometry} (DQI)~\cite{JSW24}, a novel algorithmic primitive which naturally interpolates between sampling and approximate optimization; DQI reduces both these problems to a \emph{quantum} state preparation task, which then itself reduces to a \emph{classical} syndrome decoding problem. 
This algorithmic paradigm builds on Regev's quantum reduction~\cite{10.1145/1060590.1060603,cryptoeprint:2009/285}, which traces back to seminal work of Aharonov and Ta-Shma~\cite{aharonov2003adiabaticquantumstategeneration} and which also appeared in the more recent breakthrough of Yamakawa and Zhandry~\cite{10.1145/3658665}.
At a high level, DQI finds approximate optima of classical objective functions $f$ over finite fields, such as MAX-XOR-SAT over $\mathbb{F}_2^n$, by preparing a state corresponding to a superposition of a degree-$\ell$ polynomial $\mathcal{P}$ evaluated at $f$:
\begin{align}\label{eq:DQI-state}
\ket{\mathcal{P}(f)} \propto \sum_{\mathbf{x} \in \mathbb{F}_2^n} \mathcal{P}(f(\mathbf{x})) \ket{\mathbf{x}}.
\end{align}
Measurement of $\ket{\calP(f)}$ in the computational basis yields a distribution wherein a string $\mathbf{x}$ is obtained with probability proportional to $\calP(f(\mathbf{x}))^2$; choosing $\calP$ appropriately enhances the likelihood of finding a string with a large objective value. 
The key conceptual insight of DQI is that preparing the state in Eqn.~(\ref{eq:DQI-state}) reduces to a syndrome decoding problem for a classical code defined by the structure of the objective function $f$.

While DQI offers exciting new avenues for potential speedups in optimization and sampling, these appear limited to classical objective functions.
Indeed, an alternative description of the DQI algorithm (over $\F_2$) is that it samples from the density matrix $\rho_\calP (H) \propto {\calP^2(H)}$ for a classical, diagonal Hamiltonian $H$ encoding the cost function $f$ via $f(\mathbf{x}) = \langle \mathbf{x} | H | \mathbf{x} \rangle$ for $\mathbf{x} \in \mathbb{F}_2^n$. 
Such Hamiltonians consist solely of $Z$-type Pauli interactions.
General non-diagonal Pauli Hamiltonians capture a much richer landscape of tasks relevant to physics and chemistry, as well as Hamiltonian complexity theory~\cite{cubitt2016complexityclassificationlocalhamiltonian}.
For example, general Pauli Hamiltonians include the transverse‐field Ising model~\cite{bravyi2014complexityquantumisingmodel}, Quantum Max-Cut \cite{gharibian2019almost,anshu2020beyond}, Kitaev's toric code Hamiltonian~\cite{kitaev2003anyons} and other \emph{stabilizer}/\emph{commuting} Hamiltonians~\cite{BV05}, as well as quantum spin glasses and ``spin'' versions of the SYK model~\cite{Hanada_2024}. 
The broad interest in such general Hamiltonians naturally motivates the central guiding question of our work.
\begin{tcolorbox}[colback=gray!10,colframe=black!50,title=Guiding question]
Can the decoding-based framework of \emph{Decoded Quantum Interferometry} be extended from diagonal classical Hamiltonians to general non-commuting Pauli Hamiltonians, thereby reducing inherently quantum tasks such as Gibbs sampling and Hamiltonian optimization to classical decoding?
\end{tcolorbox}

\paragraph{Contributions.}
In this work, we answer this question in the affirmative by introducing Hamiltonian Decoded Quantum Interferometry (HDQI). 
HDQI is a strict and powerful generalization of DQI, and uses coherent Bell measurements and the symplectic representation of the Pauli group to reduce Gibbs sampling and Hamiltonian optimization to classical decoding and quantum state preparation.
We illustrate HDQI for Hamiltonians of the form\footnote{We discuss more general Hamiltonians in Section~\ref{sec:beyond_paulis}.} 
\begin{equation}\label{eq:H_hDQI}
    H = \sum_{i=1}^m v_i P_i,
\end{equation}
where $v_i \in \set{\pm 1}$ and the $P_i$ are arbitrary $n$-qubit Pauli operators.
Note that the diagonal Hamiltonians optimizable by DQI (over $\F_2$) are a special case of Eqn.~(\ref{eq:H_hDQI}). 
HDQI prepares a purification of the state
\begin{equation}
\label{eq:HDQI_state_intro}
    \rho_\calP (H) = \frac{\calP^2(H)}{\Tr \left[\calP^2(H)\right]},
\end{equation}
where $\calP$ is an arbitrary degree-$\ell$ polynomial. Choosing $\calP(x)$ to approximate $\exp(-\beta x/2)$ yields Gibbs states at inverse temperature $\beta$; other choices of $\calP$ prepare approximate ground states, microcanonical ensembles, and other spectral filters. If $H$ is diagonal, the state in Eqn.~(\ref{eq:HDQI_state_intro}) mimics the output distribution of DQI.

Following the framework of DQI and Regev's reduction, HDQI prepares the density matrix in Eqn.~(\ref{eq:HDQI_state_intro}) by solving a classical syndrome decoding problem. 
If the Hamiltonian is non-commuting, a further difficulty arises: the preparation of a \emph{pilot} state that captures the non-abelian expansion of $\calP(H)$. 
By extending DQI to the Pauli group, our work represents the first investigation of Regev’s reduction in the non-abelian setting.
The algorithmic procedure of HDQI is summarized in Section~\ref{sec:overview}; 
in short, it consists of the following three steps:
\begin{enumerate}
    \item \textbf{Pilot state preparation.} Prepare an $m$-qubit state which we term a ``pilot" state, which carries amplitudes of the non-commutative expansion of $\calP(H)$ into elementary symmetric polynomials. 
    For commuting Hamiltonians this pilot state is a weighted superposition of Dicke states, as in \cite{JSW24}. 
    Generally, however, preparing the pilot state is highly non-trivial. 
    
    \item \textbf{Controlled operations on Bell pairs.} 
    In a second register, prepare a maximally entangled state across two $n$-qubit systems. 
    Controlled on the pilot state, apply Pauli operations $P_i$ to the first half of the maximally entangled state, creating a superposition over both registers. 
    
    \item \textbf{Uncomputation by decoding.} To prepare $\rho_\calP(H)$ on the second register, it remains to uncompute the pilot state register. 
    We show that this uncomputation step corresponds to coherently decoding a classical error-correcting code associated to $H$, which we call the {\em symplectic code} of $H$.
    If we wish to prepare $\rho_{\calP}$ where $\calP$ is a degree-$\ell$ polynomial, then we must decode errors of weight up to $\ell$.
\end{enumerate}
Step 2 is always guaranteed to be efficient in our setting where $m=\poly(n)$; the complexity of HDQI arises from Steps 1 and 3. 
In other words, HDQI is efficient for a polynomial $\calP$ whenever (a) we can efficiently prepare the pilot state corresponding to $\calP(H)$, and (b) there exists an efficient decoder for the symplectic code of $H$ in the sense stated in Step 3. 

For commuting Hamiltonians, the pilot state can always be prepared efficiently, and so the runtime bottleneck is the decoding step. 
We identify a ``nearly independent" structure---roughly, Hamiltonians whose Pauli terms have only constantly many non-trivial relations among them---which makes this decoding step (and hence the full HDQI procedure) efficient. 
This class encompasses essentially all 2D translation-invariant Pauli stabilizer Hamiltonians, the color code, cluster Hamiltonians, and 3D Hamiltonians like Haah's code. 
For these systems, HDQI efficiently prepares Gibbs states at arbitrary temperature, as well as arbitrary microcanonical ensembles and other spectral filters. 
Gibbs sampling for such Hamiltonians is a topic of active research \cite{LinLinThermal24, PSS25}.
Notably, \cite{PSS25} conjectures efficient quantum preparation via dualities to decoupled Ising models. 
HDQI bypasses such dualities entirely, providing a direct and provably efficient eigenstate and Gibbs sampler.

We complement our quantum algorithm with an improved classical result.
Surprisingly, the same nearly independent structure also permits a classical procedure for sampling eigenstates according to an arbitrary spectral weighting function $\mathcal{P}$. 
More precisely, while HDQI prepares the purification of $\rho_{\mathcal P}$---a coherent superposition over eigenstates---the classical algorithm outputs stabilizer descriptions of eigenstates $\ket{\lambda}$ distributed proportionally to $\mathcal{P}(\lambda)^2$. 
Despite its conceptual simplicity, this appears to be a previously unrecognized classical algorithm for sampling ground or thermal stabilizer states.

Our third main contribution relates to HDQI for non-commuting Hamiltonians, for which we do not expect the pilot state preparation step to be efficient in general. 
We analyze the pilot state using two combinatorial tools: the anticommutation graph of $H$ and a derived quantity we call the antisymmetry character. 
The latter factorizes over connected components of the graph, inducing a tensorial structure that allows the pilot state to be expressed as a matrix product state (MPS) with linear bond dimension. 
In the MPS representation, each site is a qudit of dimension $q$, where $q \leq 2^d$ and $d$ is the size of the largest connected component of the anticommutation graph.
Hence, if all components have size $O(\log n)$, HDQI runs in polynomial quantum time after an $O(n^{d})$ classical pre-processing step required to explicitly compute the MPS matrices.

As an application, we consider ``defected'' Hamiltonians obtained by perturbing a commuting model with a small fraction of non-commuting terms.
More precisely, we study random local $Z$-type Hamiltonians (classical spin glasses) where each $Z$ Pauli is replaced by an $X$ Pauli with probability $p$. 
We prove that if the defect probability $p$ is below a constant threshold, the anticommutation graph has small connected components with high probability. 
Consequently, HDQI efficiently Gibbs-samples such semiclassical spin-glass models up to constant temperature. 
We also show that nearly independent Hamiltonians whose symplectic code has high distance, including the toric code and Haah's cubic code, remain Gibbs-sampleable even if every term is defected with small constant probability. 
As stated, these samplers all require a quasipolynomial classical pre-processing step. 
The pre-processing cost can, however, also be reduced to polynomial if the success probability of the sampler is allowed to be a constant arbitrarily close to $1$ as opposed to $1 - o(1)$.
\begin{figure}[t!]
    \centering
    \includegraphics[width=\linewidth]{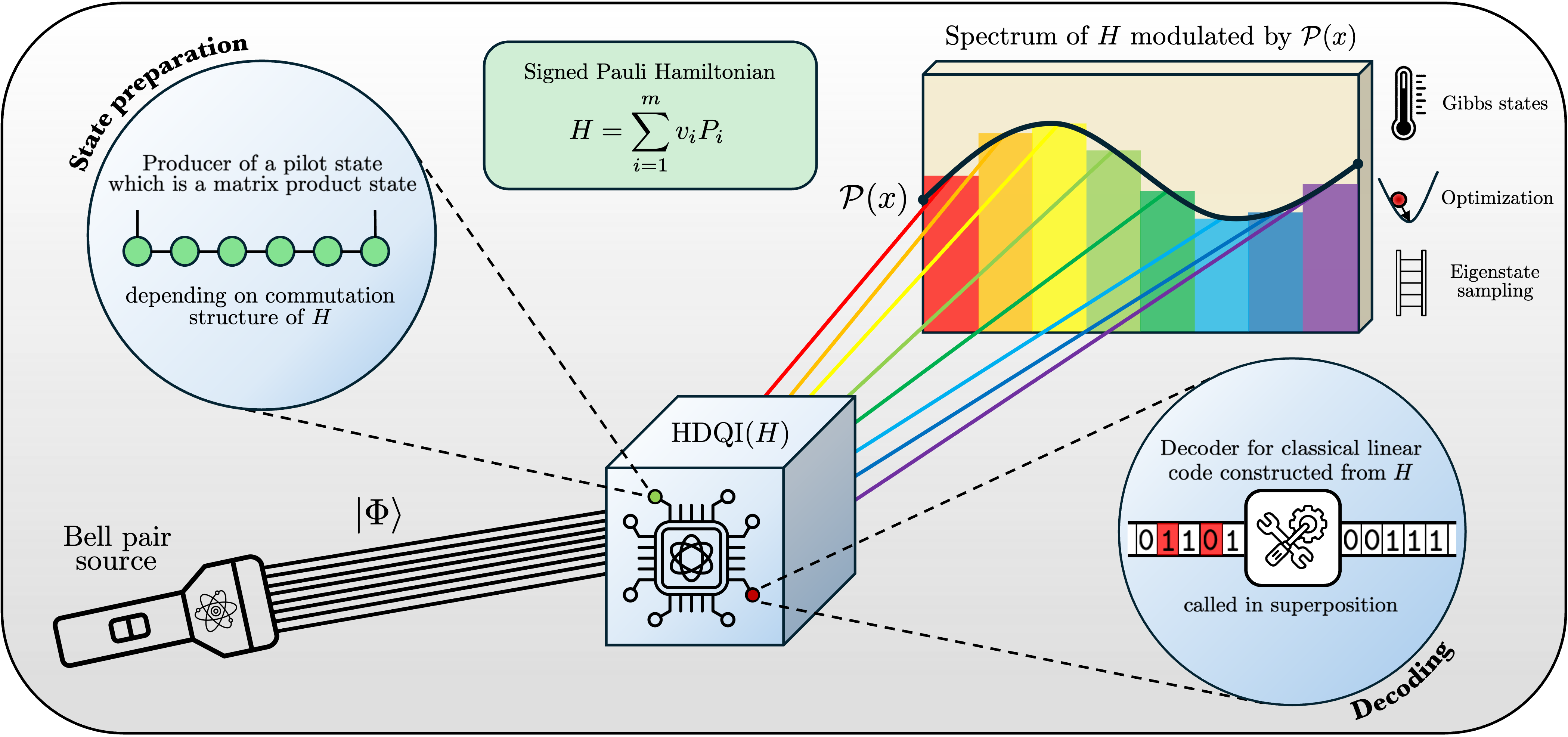}
    \caption{%
    Schematic illustration of Hamiltonian Decoded Quantum Interferometry (HDQI), a quantum algorithm for modulating the spectral distribution of a given Hamiltonian. 
    For a Pauli Hamiltonian $H = \sum_{i=1}^m v_i P_i$, HDQI transforms a source of Bell pairs $\ket{\Phi}$ into a state which is a superposition over the eigenstates $\ket{\lambda}$ of $H$, with amplitudes modulated by a high-degree polynomial $\calP(\lambda)$. HDQI implements this transformation by decoding, in superposition, errors on a certain classical linear code $\Symp(H)$ constructed from $H$.
    The output state of $\operatorname{HDQI}(H)$ has probability $\propto \calP^2(\lambda)$ of being $\ket{\lambda}$.
    In this sense HDQI acts as an algorithmic spectral filter, splitting a uniform incoming source of Bell pairs into a certain modulated spectrum of energies.
    This polynomial can be chosen to filter for desired eigenvectors.
    For example, high-degree monomials concentrate amplitudes toward high-energy eigenstates, and polynomial approximations of exponential functions yield approximate Gibbs states.}
    \label{fig:main_figure}
\end{figure}
\section{Technical overview}
\label{sec:overview}
Before we discuss our results in full detail, we provide in this section an overview of our quantum algorithm and its main applications.

\subsection{Polynomial Hamiltonian states}
Consider a Hamiltonian on $n$ qubits, \begin{equation}
\label{eq:HDQI_ham}
   H =  \sum_{i = 1}^m v_i P_i, 
\end{equation} 
where $m = \poly(n)$, each $P_i \in \{\Id,X,Y,Z\}^{\otimes n}$ is a $n$-qubit Pauli operator, and each $v_i \in \{\pm1\}$ is a phase.
Denote the eigenvalues and eigenvectors of $H$ by \begin{equation}
    \{\lambda \in \mathbb{R}, \ket{\lambda} \in (\mathbb{C}^{2})^{\otimes n}\}_\lambda, \qquad \text{ such that } \qquad H  = \sum_\lambda \lambda \ketbra{\lambda}{\lambda
    }.
\end{equation} 
Our goal in this paper is to prepare mixed\footnote{In reality, our quantum algorithm naturally prepares a minimal purification of these density matrices, providing additional applications.} quantum states that take the form of polynomials of Hamiltonians:
\begin{equation}
\label{eq:HDQI_state}
    \rho_\calP (H) = \frac{\calP^2(H)}{\Tr \left[\calP^2(H)\right]},
\end{equation}
where $\calP$ is an arbitrary degree-$\ell$ polynomial and $H$ is of the form in Eqn.~(\ref{eq:HDQI_ham}). 
Here, $\calP(H)$ refers to the linear combination of Paulis obtained by treating $\calP$ as a formal polynomial in the variable $H$. 
Before explaining how HDQI prepares the state in Eqn.~(\ref{eq:HDQI_state}), we summarize some immediate applications of $\rho_\calP(H)$.
\begin{itemize}
    \item \textbf{Optimization.} Measuring the density matrix in Eqn.~(\ref{eq:HDQI_state}) in the eigenbasis $\{\ket{\lambda}\}_\lambda$ of the Hamiltonian $H = \sum_\lambda \lambda \ketbra{\lambda}{\lambda}$ (e.g., via \emph{Quantum Phase Estimation}~\cite{kitaev1995quantum}) yields an eigenstate $\ket{\lambda}$ with probability $\sim \calP(\lambda)^2$. For appropriate polynomials, such as $\calP(x) = x^\ell$, this distribution is heavily biased towards eigenstates with large energy. 
    Hamiltonian DQI therefore naturally solves a \emph{quantum optimization problem}, where the objective function corresponds to the energy of $H$. 

    As mentioned before, if the Hamiltonian $H$ is diagonal, its eigenvectors are bitstrings in $\F_2^n$ and the corresponding optimization problem is classical. 
    In this case, the state in Eqn.~(\ref{eq:HDQI_state}) is precisely the classical probability distribution prepared by Decoded Quantum Interferometry (DQI)~\cite{JSW24}. 
    As such, Hamiltonian DQI includes DQI as a special case.

    \item \textbf{Gibbs sampling.} If the polynomial $\calP(x)$ is instead chosen to be the degree-$\ell$ Chebyshev approximation of $\exp(-\beta x/2 )$, the state in Eqn.~(\ref{eq:HDQI_state}) is an approximate \emph{Gibbs state} of the Hamiltonian $H$ at (inverse) temperature $\beta$. 
    We show that HDQI efficiently prepares Gibbs states up to an inverse temperature that scales linearly with the number of efficiently decodable errors on the symplectic code of $H$. 
    HDQI thus presents a new approach for Gibbs sampling, by mapping this problem to a decoding task.  
    
    More generally, HDQI prepares a purification of the Gibbs state known as the \emph{thermofield double state}, which appears in condensed matter theory \cite{umezawa1982thermo} and high energy physics \cite{maldacena2003eternal}.
    This state is strictly more informative than the Gibbs state itself, enabling, for example, quadratically faster detection of quantum phase transitions and a more efficient estimation of the partition function when combined with \emph{Amplitude Amplification}~\cite{brassard2000quantum}. 
    The preparation of thermofield double states has been a topic of recent interest \cite{zhu2020generation}
    
    \item \textbf{Microcanonical ensembles and spectral filters.} For certain Hamiltonians we refer to as \emph{nearly independent}, HDQI efficiently prepares \emph{arbitrary} non-negative normalized functions of $H$. 
    This class includes abelian Kitaev quantum double models such as the 2D toric code, as well as the 3D Haah code and other, non-local models. 
    For nearly independent Hamiltonians, HDQI efficiently prepares Gibbs states at \textit{any} temperature 0 $\leq \beta \leq \infty$, as well as any exact microcanonical ensembles, band-pass filters, and sector-projected ensembles.
\end{itemize} 
To prepare this density matrix, we coherently decode $\ell$ errors on a certain code defined by the structure of $H$, which we call the \emph{symplectic code} of $H$.

\subsection{Symplectic codes}

For $H =  \sum_{i = 1}^m v_i P_i$, rewrite each pauli $P_i$ in its \emph{symplectic representation} $\sym(P_i) \in \F_2^{2n}$, which is a bitstring whose first $n$ bits record the $Z$-support and the last $n$ bits record the $X$-support of $P_i$ (see Definition~\ref{def:symplectic_vectors} for details). 
This mapping is a group homomorphism:
\begin{equation}\label{eq:homomorphism}
    \sym(P_i P_j) = \sym(P_i) \oplus \sym(P_j) \in \F_2^{2n} ,
\end{equation}
where $\oplus$ denotes addition mod $2$.
The symplectic code of $H$, denoted $\Symp(H)$, is a classical linear code whose parity check matrix is given by binary matrix $B^\intercal \in \F_2^{2n \times m}$ such that \begin{align}
        B^\intercal = \begin{bmatrix}
        | & | &        & | \\
        \sym(P_1) & \sym(P_2) & \cdots & \sym(P_m) \\
        | & | &        & | 
        \end{bmatrix} .
    \end{align}
A precise definition is given in Definition~\ref{def:symplectic_codes}. 
Note that if the Hamiltonian $H$ is classical (diagonal), then only the first $n$ rows of $B^\intercal$ contain non-zero elements. 
We may thus ignore the last $n$ rows for classical Hamiltonians, in which case the resulting parity check matrix in $\F_2^{n \times m}$ precisely equals the parity check matrix studied in Decoded Quantum Interferometry \cite{JSW24}.

\subsection{Hamiltonian DQI}

Our main algorithmic result is a reduction from preparing $\rho_\calP (H)$ in Eqn.~(\ref{eq:HDQI_state}) to decoding the classical code $\Symp(H)$ combined with a state preparation task, which we now sketch.

\subsubsection{Maximally entangled state}\label{subsec:initial-state-prep}
The perhaps simplest approach to arrive at the distribution in Eqn.~(\eqref{eq:HDQI_state}) would be to prepare a ``q-sample'' given by the quantum state \begin{equation}
   \propto \calP(H)\sum_\lambda \ket{\lambda} = \sum_{\lambda}\calP(\lambda)\ket{\lambda},
\end{equation} 
and then measure in the eigenbasis of $H$. 
If $H$ is diagonal and hence characterized by a classical function $f(\bx) = \langle \mathbf{x} |H|  \mathbf{x} \rangle$, this state simplifies to \begin{equation}
     \propto \sum_{\mathbf{x} \in \mathbb{F}_2^n} \calP(f(\mathbf{x})) \ket{\mathbf{x}},
\end{equation} 
which is precisely the quantum state prepared by Decoded Quantum Interferometry \cite{JSW24}. 
In the more general setting of quantum Hamiltonians we consider here, two immediate issues arise. 
First, preparing the uniform superposition $\propto \sum_\lambda \ket{\lambda}$ is not efficient in general, without explicitly knowing the eigenbasis of $H$. 
Second, no state on $n$ qubits can be used to uniquely distinguish all Pauli errors (as is needed in the decoding procedure), since the Hilbert space has dimension $2^n$, whereas there are $4^n$ distinct Pauli operators. 
Both issues can be overcome simultaneously by replacing $\sum_\lambda \ket{\lambda}$ with the maximally entangled state 
\begin{equation}
    \ket{\Phi^n} := \left(\frac{1}{\sqrt{2}} (\ket{00} + \ket{11})\right)^{\otimes n} = \frac{1}{2^{n/2}}\sum_{\mathbf{x} \in \F_2^n} \ket{\mathbf{x}} \otimes \ket{\mathbf{x}} 
\end{equation} 
across two $n$-qubit systems. 
This state has the convenient property that it is maximally entangled in any basis. 
Indeed, for the eigenbasis $\{\ket{\lambda}\}_\lambda$ of $H$, it has the representation 
\begin{equation}
    \ket{\Phi^n} = \frac{1}{2^{n/2}}\sum_\lambda \ket{\lambda}\otimes \ket{\bar{\lambda}} 
\end{equation} 
where $\ket{\bar{\lambda}}$ is defined by $\bra{\mathbf{x}} \ket{\bar{\lambda}} = \overline{\bra{\mathbf{x}} \ket{\lambda}}$ and $\overline{z}$ denotes the complex conjugate of $z \in \C$. 
The first register of $\ket{\Phi^n}$ thus takes the desired form $\propto \sum_\lambda \ket{\lambda}$. 
We proceed by applying \begin{equation}\label{eq:preparethis_2}
    (\calP(H) \otimes \Id)  \ket{\Phi^n} = \frac{1}{2^{n/2}}\sum_{\lambda} \calP(\lambda) \ket{\lambda}\otimes \ket{\bar{\lambda}},
\end{equation} 
which, after tracing out the second register, yields the density matrix in Eqn.~(\ref{eq:HDQI_state}). 
In the above schema, the main challenge is to implement $\calP(H) \otimes \Id$ on the maximally-entangled state.
This is not a unitary operation in general. 
Nevertheless, following the philosophy of \cite{JSW24}, we may perform this operation by appending registers and applying classical decoding algorithms to uncompute those registers, as we next discuss.

\subsubsection{Implementing \texorpdfstring{$\calP(H)$}{P(H)} via decoding}\label{subsec:decoding-step}

For simplicity, let us assume in this overview that $H = \sum_{i=1}^m v_i P_i$ is a \emph{commuting} Hamiltonian such that $[P_i,P_j] = 0$. We discuss how to adapt the quantum algorithm in the non-commuting case in Section~\ref{subsec:intro-non-commuting}. The degree-$\ell$ polynomial $\calP(H)$ is a sum of signed products of Pauli terms of the form \begin{equation}
    P_{\mathbf{y}} := \prod_{i \in \supp(\mathbf{y})} P_i , 
\end{equation} 
where $\mathbf y \in \F_2^n$ with $|\mathbf y| \leq \ell$ labels which Paulis appear in $P_\mathbf{y}$. In the commuting case, the ordering of the Pauli operators in $P_{\mathbf y}$ does not matter, and $\calP(H)$ takes the simple form 
\begin{equation}
    \calP(H) = \sum_{k = 0}^\ell w_k \sum_{\mathbf y\in \F_2^m, |\mathbf y| = k} \left(\prod_{i\in \supp(\mathbf y)} v_i\right)P_{\mathbf y},
\end{equation} 
where $w_k$ are coefficients given by the expansion of $\calP$ into elementary symmetric polynomials. 
We show in Appendix~\ref{app:combinatorics} that these coefficients are efficiently computable. 
Building on this symmetric expansion, Hamiltonian DQI applies $\calP(H) \otimes \Id$ in three steps: 

\paragraph{Step 1:} Prepare the pilot state, which is a  weighted superposition of Dicke states:\footnote{We consider more general pilot states---beyond Dicke states---in Section~\ref{subsec:intro-non-commuting} and Section~\ref{sec:HDQIperfectlyDecodable}.} 
\begin{equation}
\label{eq:initial_state}
   \sum_{k = 0}^\ell w_k \sum_{\mathbf y \in \F_2^m ,|\mathbf y| = k}  \ket{\mathbf y} .
\end{equation} 
Tensor it with a maximally entangled state $\ket{\Phi^n}$.

\paragraph{Step 2:} Apply controlled Pauli and controlled phase operators on the first half of the maximally entangled state $\ket{\Phi^n}$, controlled on the pilot state, to obtain the overall state
\begin{equation}\label{eq:state_2}
   \sum_{k = 0}^\ell w_k \sum_{\mathbf y \in \F_2^m ,|\mathbf y| = k}  \ket{\mathbf y} \otimes \left(\prod_{i\in \supp(\mathbf y)} v_i\right) (P_{\mathbf y} \otimes \Id) \ket{\Phi^n}.
\end{equation} 

\paragraph{Step 3:} Uncompute the $\ket{\mathbf y}$ register given $(P_{\mathbf y} \otimes \Id) \ket{\Phi^n}$. That is, implement a unitary mapping \begin{equation}
    \calD_H^{(\ell)} \,:\, \ket{\mathbf y} \otimes (P_{\mathbf y} \otimes \Id) \ket{\Phi^n} \longmapsto \ket{0^m} \otimes (P_{\mathbf{y}} \otimes \Id) \ket{\Phi^n} \qquad \left( \text{for any } \mathbf{y} \in \F_2^m \text{ such that } |\mathbf{y}| \leq \ell\right). 
\end{equation} 
The resulting state, after discarding the $\ket{0^m}$ register, is \begin{equation}
\label{eq:purified_hdqi_sec2}
    \sum_{k = 0}^\ell w_k \sum_{\mathbf y\in \F_2^m, |\mathbf y| = k} \left(\prod_{i\in \supp(\mathbf y)} v_i\right) (P_{\mathbf{y}} \otimes \Id) \ket{\Phi^n} = (\calP(H) \otimes \Id)  \ket{\Phi^n},
\end{equation} as desired. 
Optionally, trace out the second register of $\ket{\Phi^n}$ to obtain Eqn.~(\ref{eq:HDQI_state}).

The main difficulty, in the case of commuting Hamiltonians, lies in the last step, which is a uncomputation or decoding step. 
For an arbitrary Pauli $P$, it is always possible to efficiently and unambiguously determine $P$ (up to a global phase) given $(P\otimes \Id) \ket{\Phi^n}$ with a single measurement. Such a measurement must exist information-theoretically, since for any two phase-free Paulis $P, Q$ we have the trace identity
\begin{equation}
    \langle \Phi^n |(Q^{\dagger}\otimes \Id)(P\otimes \Id) \ket{\Phi^n} = \frac{1}{2^n} \Tr(Q^{\dagger}P) = \delta_{P,Q}.
\end{equation}
Moreover, the unitary that maps $(P\otimes I) \ket{\Phi^n} \mapsto \ket{\sym(P)}$ (where $\sym(P)$ is the symplectic representation of $P$) is simply a coherent Bell measurement; that is, the unitary operation which maps the Bell basis to the computational basis.
This measurement is efficient and only requires a constant-depth Clifford circuit \cite{montanaro2017learningstabilizerstatesbell}.
Having performed this step, the actual decoding problem of interest is to recover $\mathbf y$ given the symplectic representation $\sym(P_\mathbf{y})$, so that we can set the register containing $\mathbf y$ to $0^m$. 

By Eqn.~(\ref{eq:homomorphism}), this task is equivalent to the problem of recovering $\mathbf y$, given
\begin{equation}
    \bigoplus_{i \in \supp(\mathbf y)} \sym(P_i) = B^\intercal \mathbf y, \qquad \left( \text{subject to } |\mathbf y| \leq \ell \right) .
\end{equation} 
This task is thus precisely the bounded distance syndrome decoding problem $B^\intercal \mathbf y \mapsto \mathbf y$ on the symplectic code $\Symp(H)$.
This problem can be attacked using classical decoders such as Belief Propagation, as we discuss in the following sections. 
Given such an efficient decoder for up to $\ell$ errors on $\Symp(H)$, we use the output of this decoder to prepare polynomial Hamiltonian states of $H$ up to degree $\ell$.

\begin{figure}
\centering
\quad\quad\quad\begin{minipage}[t]{0.4\textwidth}
\vspace{-1.7cm}
    \textbf{Non-commuting Hamiltonian}\\[1.2em]
    $H = X_1 Z_2 X_3 + Z_1 X_2 Z_3 - X_1$\\[1.4em]
    \begin{tikzpicture}[scale=0.8, every node/.style={scale=0.9}]
        \begin{pgfonlayer}{background}
            \fill[gray!15, rounded corners=6pt] (-3.3,-0.9) rectangle (2.2,2.3);
        \end{pgfonlayer}

        \node at (-2.0,1) {$G = (V,E)$};

        \node[draw,circle,fill=black,minimum size=4pt,inner sep=0pt,label=above:{$Z_1 X_2 Z_3$}] (top) at (0,1.5) {};
        \node[draw,circle,fill=black,minimum size=4pt,inner sep=0pt,label=below left:{$X_1 Z_2 X_3$}] (left) at (-1.2,0) {};
        \node[draw,circle,fill=black,minimum size=4pt,inner sep=0pt,label=below right:{$X_1$}] (right) at (1.2,0) {};

        \draw[line width=0.6pt] (top) -- (left);
        \draw[line width=0.6pt] (top) -- (right);
    \end{tikzpicture}
\end{minipage}
\begin{minipage}[t]{0.48\textwidth}
    \begin{tabular}{@{} l l @{}}
        Permutation: & Sign function: \vspace{3.5mm}\\
        $\pi = (1)(2)(3)$ & $\mathrm{sgn}_G(\pi) = +1$ \\
        $\pi = (1\,2)(3)$ & $\mathrm{sgn}_G(\pi) = -1$ \\
        $\pi = (1\,3)(2)$ & $\mathrm{sgn}_G(\pi) = +1$ \\
        $\pi = (1)(2\,3)$ & $\mathrm{sgn}_G(\pi) = -1$ \\
        $\pi = (1\,2\,3)$ & $\mathrm{sgn}_G(\pi) = -1$ \\
        $\pi = (1\,3\,2)$ & $\mathrm{sgn}_G(\pi) = -1$
    \end{tabular}
\ \\[1.5em]
    $\alpha_G = \underset{\pi \sim S^3}{\mathbf{E}}[\mathrm{sgn}_G(\pi)] 
    = \frac{1}{3!}(2-4) = -\frac{1}{3}$
\end{minipage}
\\[1.2em]
\caption{A simple non-commuting $3$-qubit Pauli Hamiltonian with \emph{anti-commutation graph} $G=(V,E)$, where $V=\{X_1 Z_2 X_3, Z_1 X_2 Z_3,X_1\}$ and $E=\{(X_1 Z_2 X_3,Z_1 X_2 Z_3), (Z_1 X_2 Z_3,X_1)\}$.
The quantity $\alpha_G$ is called the \emph{anti-symmetry character} of $G$ and is a coarse measure of the anti-commutativity of $H$
(see Definition~\ref{def:alpha_function}); in particular, it is equal to $1$ whenever $G$ is the empty graph and all of the Pauli operators in $V$ commute.
}
\label{fig:non-commuting}
\end{figure}

\subsubsection{The non-commuting case}
\label{subsec:intro-non-commuting}

The quantum algorithm we sketched in Sections~\ref{subsec:initial-state-prep} and \ref{subsec:decoding-step} so far crucially relies on the fact that $H = \sum_{i=1}^m v_i P_i$ is a \emph{commuting} Hamiltonian, i.e. $[P_i,P_j] = 0$ for all $i,j$. 
This property allowed us to easily expand $\calP(H)$ into elementary symmetric polynomials; in particular, as a sum of products $P_\mathbf{y} := \prod_{i \in y} P_i$ in which the ordering of the Pauli operators $\{P_i\}_{i \in y}$, for any $y \in \F_2^n$, does not matter.
A general Pauli Hamiltonian, however, may also contain \emph{anti-commuting} terms, as we illustrate for a simple case in Fig.~\ref{fig:non-commuting}. 
The anti-commuting terms render an expansion of $\calP(H)$ into elementary symmetric polynomials much more subtle because we must keep track of the precise anti-commutation properties of the given Pauli operators $\{P_i\}_{i=1}^m$. 
A natural way of representing these properties is through the notion of an \emph{anti-commutation graph} $G=(V,E)$, whereby
\begin{itemize}
    \item the vertex set $V$ consists of the Pauli operators $\{P_i\}_{i=1}^m$ which appear in the Hamiltonian $H$, and
    \item the edge set $E$ contains edges between any two Paulis $P,Q \in V$ if and only if they anti-commute.
\end{itemize}
Equipped with this representation, we tackle the expansion of $\calP(H)$ into elementary symmetric polynomials by enforcing a canonical ordering of the Pauli products. 
For any non-canonically ordered product in the expansion, we simply resort back to the canonical ordering by applying an appropriate permutation of the Pauli operators\footnote{To re-arrange the Pauli product into a canonical ordering, it is instructive to think of a \emph{bubble sort} which only uses swaps between adjacent elements; this occasionally incurs a sign if there is a corresponding edge in the anti-commutation graph $G$.}, which incurs an overall sign. 
For example, if we encountered a product of the first $q$ Pauli operators $\{P_i\}_{i=1}^q$ in some non-canonical ordering specified by a permutation $\pi \in S^q$, we can write
\begin{align}
    P_{\pi(1)} \cdots P_{\pi(q)} = \sgn_{G_q}(\pi) \,\, P_1 \cdots P_q \, ,
\end{align}
where $P_1 \cdots P_q$ is the canonical ordering (according to increasing index order), and the function $\sgn_{G_q}(\pi)$ keeps track of the overall sign of the permutation $\pi$. 
The latter depends on the anti-commutation subgraph $G_q$ in which the vertex set is restricted to $\{P_i\}_{i=1}^q$. 
Each of the $q!$ many possible orderings carries their own sign.  
To account for such symmetries in the polynomial expansion of $\calP(H)$, we introduce the notion of an \emph{antisymmetry character} $\alpha_{G_q}$ of a graph $G_q$, given by the average sign
\begin{align}
    \alpha_{G_q} := \underset{{\pi \sim S^q}}{\mathbf{E}} [\sgn_{G_q}(\pi)] = \frac{1}{q!} \sum_{\pi \in S^q} \sgn_{G_q}(\pi). \label{eq:into-anti-symmetry-character}
\end{align}
Note that an expansion of a degree-$\ell$ polynomial $\calP(H)$ can, in general, feature products of up to $\ell$-many Pauli operators in $\{P_i\}_{i=1}^m$
even with multiplicities, which ultimately requires us to generalize the notion of anti-symmetry characters even further.
In Section~\ref{sec:non-commuting-symmetric-expansion}, we show that a \emph{non-commutative} expansion of $\calP(H) = \sum_{k=0}^\ell c_k H^k$ into elementary symmetric polynomials takes the form
\begin{equation}
    \calP(H)=\sum_{k=0}^{\ell} c_k \sum_{\mathbf{y} \in \mathbb{F}_2^m, |\vec y| \leq k} \beta_G^{(k)}(\mathbf{y}) \cdot \left( \prod_{i \in \supp(\mathbf y)} v_i \right) P_{\vec y} \, ,
\end{equation}
where we call $\beta_G^{(k)}(\mathbf{y})$ the $\beta$-function of order $k$ (formally defined in Definition~\ref{def:beta_function}) and where $P_{\vec y}$ is the canonical ordering (in increasing order) of the Paulis $\{P_i : i \in \supp(\mathbf y)\}$.
At a high level, the aforementioned $\beta$-functions are combinatorial sums of  $\alpha$-functions (i.e. antisymmetry characters) across different subgraphs of $G$ which are labeled by $\mathbf y \in \mathbb{F}_2^m$. 
Both $\alpha$-functions and $\beta$-functions have a rich graph-theoretic structure.
In particular, in Lemmas~\ref{lemma:alpha_factorization} and \ref{lemma:beta_function_reduced_form}, we show that these functions can be decomposed according to the \emph{connected components} of the anti-commutation graph $G$. 

In Section~\ref{sec:main-non-commuting}, we fully describe how to implement the HDQI algorithm for general, non-commuting, Pauli Hamiltonians. 
The main difference to the commuting case is that the pilot state now carries amplitudes of the non-commutative expansion of $\calP$. 
In particular, instead of preparing the initial state in Eqn.~(\ref{eq:initial_state}), it suffices to prepare the state proportional to
\begin{align}
    \sum_{k = 0}^\ell c_k \sum_{\vec y \in \F_2^m} \beta_G^{(k)}(\mathbf{y})  \, \ket{\vec y} \otimes \ket{\Phi^n} \, .
\end{align} 
Importantly, $\beta_G^{(k)}(\mathbf{y})$ is equal to $0$, whenever the Hamming weight of $\vec y$ exceeds $k$ (see Definition~\ref{def:beta_function}). 
Due to the intricate complexity behind the $\alpha$- and $\beta$-functions, we do not expect the state preparation step to be computationally tractable in general. 
In many cases, however, it is possible to exploit the graph-theoretic structure of the $\beta$-function in order to prepare the state efficiently. 
In Theorem~\ref{thm:resource_state_algprithm_MPS}, we show that the general HDQI pilot state is in fact a qudit \emph{matrix product state} (MPS)~\cite{mps1,mps2} with $O(n)$ bond dimension, with qudit dimension $\poly(n)$ provided that the anti-commutation graph $G$  has small connected components.
By exploiting this structure, we are able to efficiently or nearly efficiently (in the sense of a quasipolynomial time classical pre-processing step to compute the matrices) prepare the pilot state when the Hamiltonian's anticommutation graph has small connected components.

\subsection{Overview of results}
\label{sec:teaser-applications}
We now present our main applications of Hamiltonian DQI to Gibbs sampling and optimization. 
In Fig.~\ref{fig:HDQI_spectrum}, we also summarize the Hamiltonian models for which we have obtained new results via HDQI and our improved classical algorithms developed in this work.

\begin{figure}[t!]
    \centering
    \includegraphics[width=0.9\linewidth]{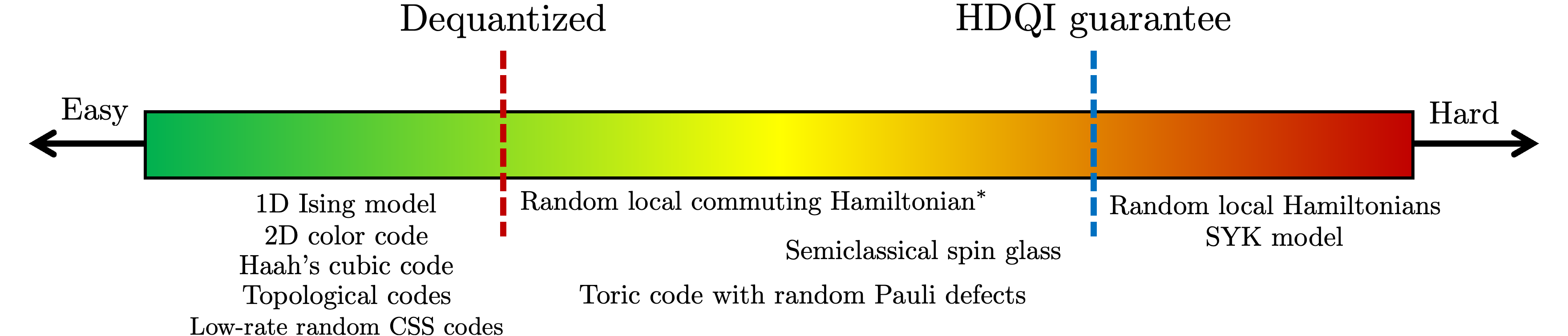}
    \caption{Spectrum of hardness for Hamiltonian models studied in this work. To the left are models which are dequantized in the sense that while HDQI can Gibbs sample them to arbitrary temperature, we develop a classical algorithm which efficiently samples stabilizer descriptions of eigenstates from the Gibbs distribution at any temperature. 
    In the middle are models for which HDQI prepares Gibbs states up to a constant temperature (perhaps with a quasipolynomial classical pre-processing step). 
    The asterisk denotes an assumption of high-distance decodability, which we show numerically but not analytically. 
    To the right are models for which we do not currently provide provable guarantees about the performance of HDQI. 
    }
    \label{fig:HDQI_spectrum}
\end{figure}

\subsubsection{Nearly independent Hamiltonians}

If the degree $\ell$ of $\calP$ exceeds the number of distinct eigenvalues of $H$, then any function of $H$ can be exactly interpolated by a degree-$\ell$ polynomial. 
In this case, HDQI prepares arbitrary non-negative functions of $H$. 
This includes \begin{itemize}
    \item the Gibbs state at any inverse temperature $0 \leq \beta \leq \infty$;
    \item any microcanonical ensemble $\rho_E(H) \propto \sum_{\ket{\psi} : H\ket{\psi} = E \ket{\psi}} \ketbra{\psi}{\psi}$, which is the uniform mixture within any fixed energy eigensector;
    \item arbitrary band-pass, Chebyshev, or Gaussian filters. 
\end{itemize} 
We identify several physically-motivated Hamiltonians for which HDQI prepares arbitrary non-negative spectral filters. 
Specifically, we show that this includes all Hamiltonians whose symplectic code has constant dimension, which we call ``nearly independent'' Hamiltonians.
\begin{restatable}[Nearly independent Hamiltonians]{theorem}{THMdecodable}
\label{thm:arbitrary_temperature}
    Let $H = \sum_{i=1}^m v_i P_i$ be a commuting Pauli Hamiltonian on $n$ qubits with $v_i \in \{\pm1\}$ and $m = \poly(n)$. 
    If the symplectic code of $H$ is of constant dimension (number of logical bits $k = \dim \Symp(H) = O(1)$), then HDQI efficiently prepares the density matrix \begin{equation}
        \rho_f(H) := f(H)/\Tr[f(H)]\label{eq:polynomialState}
    \end{equation} 
    for any non-zero function $f:\Spec(H) \to \R_{\geq 0}$ in time $\poly(n)$, where $\Spec(H)$ is the set of eigenvalues of $H$. 
    In particular, for such Hamiltonians, HDQI efficiently prepares Gibbs states at \emph{any} temperature $0 \leq \beta \leq \infty$.
\end{restatable}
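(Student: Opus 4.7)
The plan is to reduce preparing $\rho_f(H)$ to a single invocation of HDQI with a carefully chosen interpolating polynomial $\calP$, and then establish polynomial-time bounds for the pilot-state preparation and for the decoding step using the hypothesis $\dim\Symp(H)=O(1)$.

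First, I would bound $|\Spec(H)|$. Since each $P_i$ has spectrum $\{\pm 1\}$ and the $P_i$ commute, every eigenvalue of $H=\sum_i v_i P_i$ has the form $\sum_i v_i s_i$ with $s_i\in\{\pm 1\}$ and therefore lies in the $(m+1)$-point set $\{-m,-m+2,\ldots,m-2,m\}$. Because $f\ge 0$ on this finite set, $\sqrt{f}$ is well-defined on $\Spec(H)$, and Lagrange interpolation yields a polynomial $\calP$ of degree $\ell\le m=\poly(n)$ such that $\calP(\lambda)^2=f(\lambda)$ for every $\lambda\in\Spec(H)$. Since $\calP^2(H)$ and $f(H)$ coincide on the eigenbasis of $H$, we have $\rho_\calP(H)=\rho_f(H)$, so it suffices to run HDQI with this $\calP$.

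Next, I would verify that each of the three HDQI steps runs in polynomial time. Because $H$ is commuting, Step~1 reduces to preparing the weighted Dicke state $\sum_{k\le\ell}w_k\sum_{|\mathbf{y}|=k}\ket{\mathbf{y}}$ from Section~\ref{sec:overview}, which admits polynomial-depth circuits; the coefficients $w_k$ are efficiently computable from the elementary symmetric expansion of $\calP$ (Appendix~\ref{app:combinatorics}). Step~2 applies $m=\poly(n)$ controlled single-qubit Paulis and is trivially efficient. For Step~3 the coherent decoder must invert $B^\intercal\mathbf{y}\mapsto\mathbf{y}$ on arbitrary $\mathbf{y}\in\F_2^m$. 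The hypothesis $\dim\Symp(H)=k=O(1)$ implies that each syndrome has at most $2^k=O(1)$ preimages, so a canonical representative $\mathbf{y}^\ast(s)$ of each coset can be computed in $\poly(n,m)$ time by Gaussian elimination on $B^\intercal$. A coherent Bell measurement extracts the syndrome into an ancilla, and XOR-ing $\mathbf{y}^\ast(s)$ reduces the pilot register to a state supported on the $2^k$-element subgroup $\Symp(H)\subseteq\F_2^m$.

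The main obstacle is uncomputing this residual $\Symp(H)$-valued register coherently, because Pauli products $P_{\mathbf{y}}$ and $P_{\mathbf{y}+\mathbf{z}}$ with $\mathbf{z}\in\Symp(H)$ differ only by a sign $\epsilon_\mathbf{z}\in\{\pm 1\}$ defined by $P_\mathbf{z}=\epsilon_\mathbf{z}\Id$, so their amplitudes combine into a single coset-dependent coefficient in the expansion $\calP(H)=\sum_{[\mathbf{y}]}A_{[\mathbf{y}]}P_{\mathbf{y}^\ast}$. The key observation is that, because $|\Symp(H)|=2^k=O(1)$, the residual ambiguity lives in an $O(1)$-qubit ancilla, so the syndrome-controlled unitary that rotates the coset amplitudes into a fixed state (say $\ket{0}$) can be precomputed by tabulating $\Symp(H)$ and the signs $\{\epsilon_\mathbf{z}\}$ and then implemented exactly in $\poly(n)$ time. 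A bookkeeping check using the $v_i$ and $\epsilon_\mathbf{z}$ confirms that the output on the data register is $(\calP(H)\otimes\Id)\ket{\Phi^n}$ up to normalization; tracing out the second Bell half yields $\rho_\calP(H)=\rho_f(H)$, completing the proof. Setting $f(\lambda)=e^{-\beta\lambda}$ in particular recovers the Gibbs state at arbitrary $\beta$.
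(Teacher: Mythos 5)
Your interpolation step is fine and matches the paper's Lemma~\ref{lem:interpolation}: $|\Spec(H)|\le m+1$, so a degree-$\le m$ polynomial $\calP$ with $\calP^2=f$ on $\Spec(H)$ exists. The gap is in your treatment of the degenerate decoding. After the coherent Bell measurement the joint state is $\sum_{s}\ket{\phi_s}_A\otimes\ket{s}_{BC}$ with $\ket{\phi_s}=\sum_{\mathbf{y}:B^\intercal\mathbf{y}=s}w_{|\mathbf{y}|}v_{\mathbf{y}}\sigma_{\mathbf{y}}\ket{\mathbf{y}}$, whereas the target $(\calP(H)\otimes\Id)\ket{\Phi^n}$ carries the \emph{coherent signed sum} $A_s=\sum_{\mathbf{y}:B^\intercal\mathbf{y}=s}w_{|\mathbf{y}|}v_{\mathbf{y}}\sigma_{\mathbf{y}}$ on syndrome $s$. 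A syndrome-controlled unitary $U_s$ acting on the pilot register can at best send $\ket{\phi_s}/\lVert\phi_s\rVert$ to $\ket{0}$, producing coefficient $\lVert\phi_s\rVert$ on $\ket{0}\ket{s}$; it cannot produce $A_s$, because $|A_s|\le\lVert\phi_s\rVert$ with strict inequality whenever the coset amplitudes interfere (e.g.\ two coset elements of equal Dicke weight and opposite sign give $A_s=0$ but $\lVert\phi_s\rVert>0$). More fundamentally, the reduced state on $BC$ already assigns weight $\lVert\phi_s\rVert^2$ rather than $|A_s|^2$ to each syndrome, and no operation on register $A$ alone (controlled or not) can change that; the leftover norm $\sqrt{\lVert\phi_s\rVert^2-|A_s|^2}$ necessarily stays entangled with $A$, so tracing out does not yield $\rho_\calP(H)$. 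The interference you need happens only when all coset elements are mapped to the \emph{same} basis state of the whole system, which is exactly what unique decoding achieves and what is impossible (non-injectively, hence non-unitarily) when the code has positive dimension.

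The paper avoids this by never running the algorithm on the degenerate code: it uses the $k=O(1)$ relations to eliminate $k$ variables, re-expresses $\calP(H)$ in $d=m-k$ independent variables, and shows (Theorem~\ref{thm:blockwise_decomposition_theorem}) that the result is \emph{blockwise} symmetric over $r\le 2^k$ blocks, with the $\poly(n)$ many coefficients $\gamma_{\ba}$ efficiently computable. The pilot state then becomes a superposition over products of $r$ Dicke states rather than a single weighted Dicke state, and the reduced symplectic code has dimension $0$, so every syndrome has a unique preimage decodable by Gaussian elimination. Your proposal keeps the original Dicke pilot state and the original code and tries to repair things at the uncomputation stage, which is where it breaks; to salvage it you would have to redesign the pilot state so that each coset is represented once with amplitude $A_{[\mathbf{y}]}$, and making that efficient is precisely the content of the blockwise expansion you omitted.
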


Nearly independent Hamiltonians include essentially all 2D translation-invariant (TI) Pauli stabilizer Hamiltonians, the color code, cluster Hamiltonians, and 3D Hamiltonians like Haah's code. Some examples of such Hamiltonians (and notable counterexamples) are summarized in Table~\ref{tab:constant-relations}.
For the 2D toric code Hamiltonian, efficient quantum Gibbs sampling at arbitrary temperature has been shown recently in \cite{DLL24, HJ24, PSS25}.
These results are based on proving bounds on mixing times and associated spectral gaps.
Hamiltonian DQI provides a simple alternative approach that is mixing-free and based on coherent decoding, and moreover also prepares arbitrary non-negative spectral filters. 

Our other examples of nearly independent Hamiltonians, such as the color code or Haah's code, are not analyzed in \cite{DLL24, HJ24}. Recent work \cite{PSS25} provides numerical evidence of a poly-depth duality between these models and two decoupled Ising models, which would suggest efficient thermalization at arbitrary temperatures. 
However, it seems that HDQI is the first proven efficient Gibbs sampler for these models at arbitrary temperature. 

Interestingly, we show that it is not even necessary to use a quantum computer to Gibbs sample from these models, in the following sense of the word ``sample": if one is only interested in sampling stabilizer descriptions of eigenvectors from the distribution in Eqn.~(\ref{eq:polynomialState}) (as opposed to preparing the quantum state or a coherent quantum purification as in Eqn.~(\ref{eq:purified_hdqi_sec2})), there is a classical algorithm to do so. We state this as Theorem~\ref{thm:arbitrary_temperature_dequant}.
\begin{restatable}[Classical sampling for nearly independent Hamiltonians]{theorem}{THMDecodableDequant}
\label{thm:arbitrary_temperature_dequant}
    Under the same conditions on $H$ as Theorem~\ref{thm:arbitrary_temperature}, there is a classical algorithm which takes time $O(2^k m^2)$ and samples classical descriptions of stabilizer states such that the corresponding mixture of stabilizer states is $\rho_f(H)$.
\end{restatable}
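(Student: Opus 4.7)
The plan is to decompose $\rho_f(H)$ along the common eigenspaces of the commuting signed Paulis $w_i := v_i P_i$, reducing the task to sampling a sector label in $\F_2^{m-k}$ with weight proportional to $f$ evaluated at that sector's eigenvalue, and then emitting a uniformly random stabilizer basis element inside that sector. The crucial observation is that the sector eigenvalue depends on the label only through its Hamming weight together with $k$ parity bits, so the marginal can be tabulated via $2^k$ Krawtchouk-polynomial evaluations.

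First I would Gaussian-eliminate the parity-check matrix $B^{\intercal}$ of $\Symp(H)$ to extract $r := m-k$ symplectically-independent Paulis, WLOG $w_1,\ldots,w_r$, and to express each remaining $w_j = \eta_j \prod_{i\le r} w_i^{c_{ji}}$ for explicit $c_{ji}\in\F_2$ and signs $\eta_j\in\{\pm 1\}$, obtained from the standard Pauli multiplication rules (absorbing both the sign of the relation $\prod P_i^{c_{ji}} = \pm P_j$ and the $v_i$'s). The common $\pm 1$-eigensectors $\Pi_{\boldsymbol{\sigma}}$ of $\{w_i\}_{i\le r}$ are indexed by $\boldsymbol{\sigma}\in\F_2^r$, each of dimension $2^{n-r}$, and on sector $\boldsymbol{\sigma}$ the Hamiltonian acts as the scalar
\begin{equation*}
    \lambda(\boldsymbol{\sigma}) \;=\; \sum_{i=1}^{r}(-1)^{\sigma_i} \;+\; \sum_{j=1}^{k}\eta_{r+j}\,(-1)^{\langle \mathbf{c}_{r+j},\,\boldsymbol{\sigma}\rangle}.
\end{equation*}
Consequently $\rho_f(H) \propto \sum_{\boldsymbol{\sigma}\in\F_2^r} f(\lambda(\boldsymbol{\sigma}))\,\Pi_{\boldsymbol{\sigma}}$. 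Extending $\{w_i\}_{i\le r}$ to a maximal commuting Pauli set by appending $n-r$ independent generators $Q_\ell$, the normalized projector $\Pi_{\boldsymbol{\sigma}}/2^{n-r}$ is the uniform mixture of the $2^{n-r}$ stabilizer states obtained by assigning i.i.d.\ uniform $\pm 1$ signs $t_\ell$ to the $Q_\ell$'s. Hence the task reduces to sampling $\boldsymbol{\sigma}\in\F_2^r$ with probability $\propto f(\lambda(\boldsymbol{\sigma}))$, and then outputting the stabilizer description $\{(-1)^{\sigma_i} w_i\}_{i\le r} \cup \{t_\ell Q_\ell\}_\ell$ with uniformly random $t_\ell$.

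For the sector sampling, set $h := |\boldsymbol{\sigma}|$ and $\boldsymbol{\tau}(\boldsymbol{\sigma}) := (\langle \mathbf{c}_{r+j},\boldsymbol{\sigma}\rangle)_{j=1}^k \in \F_2^k$; then $\lambda(\boldsymbol{\sigma}) = (r-2h) + L_2(\boldsymbol{\tau})$ with $L_2(\boldsymbol{\tau}) := \sum_j \eta_{r+j}(-1)^{\tau_j}$. I would draw the type $(h,\boldsymbol{\tau}) \in \{0,\ldots,r\}\times\F_2^k$ from the joint marginal $\Pr[(h,\boldsymbol{\tau})] \propto f((r-2h)+L_2(\boldsymbol{\tau}))\,N(\boldsymbol{\tau},h)$, where $N(\boldsymbol{\tau},h)$ counts $\boldsymbol{\sigma}\in\F_2^r$ of weight $h$ with $\boldsymbol{\tau}(\boldsymbol{\sigma})=\boldsymbol{\tau}$. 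A Fourier expansion of the parity-constraint indicator yields the closed form
\begin{equation*}
    N(\boldsymbol{\tau},h) \;=\; 2^{-k}\sum_{\mathbf{z}\in\F_2^k}(-1)^{\langle \mathbf{z},\boldsymbol{\tau}\rangle}\,K_h\!\big(|\mathbf{d}(\mathbf{z})|;\,r\big),\qquad \mathbf{d}(\mathbf{z}) := \sum_{j=1}^{k}z_j\,\mathbf{c}_{r+j},
\end{equation*}
where $K_h(x;r)$ is the standard Krawtchouk polynomial. Precomputing $|\mathbf{d}(\mathbf{z})|$ for all $2^k$ values of $\mathbf{z}$ costs $O(2^k m k)$, computing $\{K_h(x;r)\}_{h=0}^{r}$ at fixed $x$ costs $O(r)$ via the three-term recurrence, and inverting the Fourier sum for every $(\boldsymbol{\tau},h)$ costs $O(2^{2k} r)$, for a total $O(2^k m^2)$. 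After drawing $(h,\boldsymbol{\tau})$ from this explicit table, $\boldsymbol{\sigma}$ is sampled uniformly from the induced affine slice of the Hamming sphere by choosing one bit at a time, each conditional probability being itself a Krawtchouk count on the remaining coordinates.

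The main obstacle I expect is the sign bookkeeping: the phases $\eta_j$ arise from lifting symplectic vectors back to genuine Paulis in the projective Pauli group, so one must verify that the affine coset of $\boldsymbol{\sigma}$-values realized by \emph{physical} sectors of $H$ coincides with the coset implicitly used in $\lambda(\boldsymbol{\sigma})$ and in $N(\boldsymbol{\tau},h)$. Once this $\F_2$-linear algebra with tracked $\pm 1$ phases is set up correctly, all remaining steps---Gaussian elimination on $B^{\intercal}$, three-term Krawtchouk recurrences, and uniform sampling from affine slices of the hypercube---are standard and fit within the $O(2^k m^2)$ budget stated in the theorem.
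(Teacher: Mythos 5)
Your proposal is correct and follows the same high-level reduction as the paper --- parametrize the common eigensectors of the $m-k$ independent signed Paulis by free sign bits, observe that the eigenvalue depends only on a low-dimensional sufficient statistic, sample that statistic with weight $f(\lambda)\times(\text{sector count})$, and then emit a uniformly random stabilizer description within the chosen class --- but the counting subroutine is genuinely different. The paper keeps the full $m$-bit sign vector $\mathbf{e}=(\mathbf{y}\,|\,\mathbf{t})$ subject to $k$ affine constraints, groups sectors by the total weight $|\mathbf{e}|$, and counts weight-$w$ coset elements by a dynamic program over the $2^k$ syndromes (Algorithm~\ref{alg:dequantDynamicProgramming}), with bit-by-bit sampling read off the DP table; you instead group by the pair $(h,\boldsymbol{\tau})$ of independent-part weight and parity bits, and count via a MacWilliams-style Fourier inversion whose inner sums are Krawtchouk polynomials evaluated at the weights $|\mathbf{d}(\mathbf{z})|$. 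Both fit the $O(2^k m^2)$ budget when $k=O(1)$; the DP is arguably more robust for the sequential conditional sampling (one table serves all $m$ bit decisions), whereas your closed form requires re-evaluating Krawtchouk counts on the shrinking coordinate set at each step, which you correctly account for. One point where your write-up is actually more careful than the paper's: Lemma~\ref{lem:commutingToEnumerationReduction} assumes a unique ground state (i.e.\ $m-k=n$), whereas you handle degenerate sectors explicitly by extending $\{w_i\}_{i\le r}$ to a maximal commuting set and randomizing the signs of the auxiliary generators $Q_\ell$, which is needed for the theorem as stated since Theorem~\ref{thm:arbitrary_temperature} does not assume $m-k=n$. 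Your flagged concern about the $\pm1$ phase bookkeeping is exactly the $\mathbf{u}$-vector the paper tracks in Eqn.~(\ref{eq:dependencies}), and resolves the same way.
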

We prove Theorem~\ref{thm:arbitrary_temperature_dequant} in Section~\ref{sec:perfect_classical}.
Although the proof is fairly straightforward, we did not find it in \cite{KB16, HJ24, DLL24, PSS25}, which similarly consider quantum algorithms for Gibbs sampling from quantum code Hamiltonians, a class that includes nearly independent Hamiltonians. 

We note that this classical algorithm does not recover the full power of HDQI for the same Hamiltonians, because given the state $\rho_f(H)$ one can use Kitaev's phase estimation algorithm to estimate the expectation value of an observable $M$ such that $e^{-i M t}$ can be efficiently implemented as a quantum circuit. 
By contrast, the above classical algorithm only directly yields efficient means to estimate observables within the Clifford group.

\begin{table}[t]
  \centering
  \setlength{\tabcolsep}{7pt}
  \renewcommand{\arraystretch}{1}
  \begin{tabular}{lccp{5.8cm}}
    \toprule
    \textbf{Model} & \textbf{Geometry} & $\boldsymbol{\dim\Symp(H)}$ & \textbf{Reason / remarks} \\
    \midrule
    Ising (ring) & 1D, periodic & $1$ & Unique cycle relation $\prod ZZ=I$. \\
    Surface/toric & 2D, closed & $2$ & Global star and plaquette products; homology on closed surfaces. \\
    Color code (stabilizer) & 2D, closed & $4$ & Two independent two-color products per Pauli sector. \\
    TI Pauli in 2D & 2D, closed & $2K$ & Local Clifford-equivalent to $K$ toric copies; $K$ constant. \\
    Haah's cubic code & 3D, periodic & $O(1)$ & Four non-local relations for generic CSS ``Type 1" codes. \\
    \midrule
    \multicolumn{4}{c}{\textbf{Counterexamples}} \\
    \midrule
    3D toric code & 3D, periodic & $\Theta(L^3)$ & Cube-local plaquette identities. \\
    X-cube / checkerboard & 3D, periodic & $\Theta(L^3)$ & Many local relations (type-I fracton). \\
    \bottomrule
  \end{tabular}
  \caption{Examples of nearly independent commuting Hamiltonians, i.e. commuting Hamiltonians whose symplectic code has constant dimension $k$. 
  HDQI efficiently prepares arbitrary non-negative functions of such Hamiltonians. 
  The last two entries are notable counterexamples, where the code dimension scales with system size.}
  \label{tab:constant-relations}
\end{table}

\subsubsection{Random local commuting Hamiltonians} 

For general Hamiltonians, our quantum algorithm efficiently prepares Gibbs states $\propto \exp(-\beta H)$ by choosing $\calP(x)$ to approximate the exponential function $\exp(-\beta x/2)$. 
In the generic case, we show that 
the achievable temperature $\beta$ is linearly related to the number of efficiently decodable errors in the symplectic code of $H$. 

\begin{restatable}[Gibbs state preparation]{theorem}{Gibbsthm}
    \label{thm:gibbs_explicit}
       For any Hamiltonian $H$ and approximation error $\delta >0$, there exists a polynomial $\calP$ of degree 
       
    \begin{equation}
        \ell \leq 1.12 \beta \lVert H\rVert + 0.648\ln \left(\frac{12}{\eps}\right)
    \end{equation}
    such that the polynomial Hamiltonian state $\rho_{\calP}(H) = \frac{\calP^2(H)}{\Tr[\calP^2(H)]}$ satisfies
    \begin{equation}
        \frac{1}{2} \left\lVert \rho_{\calP}(H) -\frac{e^{-\beta H}}{\Tr[e^{-\beta H}]} \right\rVert_1 \leq \delta.
    \end{equation}  
\end{restatable}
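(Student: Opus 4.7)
The plan is to choose $\calP$ as an explicit polynomial approximation of $g(x) := e^{-\beta x/2}$ on $\Spec(H) \subseteq [-\|H\|, \|H\|]$, and then to convert this uniform approximation into a trace-distance bound between the normalized states $\rho_\calP(H)$ and $e^{-\beta H}/\Tr[e^{-\beta H}]$. The structure mirrors a standard recipe for Gibbs-state polynomial approximation: obtain uniform error via truncated Chebyshev series, lift to operator norm via functional calculus, and then pay only a controlled partition-function factor when passing to normalized trace distance.

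For the first step, I would rescale so the spectrum sits in $[-1,1]$ and use the modified-Bessel expansion $e^{-ay} = I_0(a) + 2\sum_{k\geq 1}(-1)^k I_k(a) T_k(y)$ with $a = \beta\|H\|/2$, truncating at degree $\ell$. The tail is controlled by $\sum_{k>\ell} I_k(a) \le \sum_{k>\ell}(a/2)^k/k!$, which decays super-exponentially once $\ell \gtrsim ea/2$, yielding a degree bound of the form $\ell \le a_1 \beta\|H\| + a_2 \log(1/\eta)$ for uniform polynomial error $\eta > 0$; the explicit constants $1.12$ and $0.648$ in the statement correspond to one particular (nearly tight) choice in this family, which I would either cite from the literature on polynomial approximation of the exponential (e.g., Sachdeva--Vishnoi-style bounds) or re-derive using explicit Bessel-function estimates.

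To transfer the polynomial approximation to an operator statement, functional calculus applied to the Hermitian $H$ gives $\|\calP(H) - e^{-\beta H/2}\|_\infty \le \eta$. Writing $\calP(H) = e^{-\beta H/2} + E$ and expanding,
\[
\calP^2(H) - e^{-\beta H} = e^{-\beta H/2} E + E e^{-\beta H/2} + E^2,
\]
which is operator-norm bounded by $2\eta\, e^{\beta\|H\|/2} + \eta^2$. I would then invoke the standard normalized-state lemma $\|A/\Tr A - B/\Tr B\|_1 \le 2\|A-B\|_1 / \Tr B$ for PSD Hermitian $A,B$, applied to $A = \calP^2(H)$ and $B = e^{-\beta H}$. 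Combining this with $\|X\|_1 \le 2^n \|X\|_\infty$ and $\Tr[e^{-\beta H}] \ge 2^n e^{-\beta\|H\|}$ yields a trace-distance bound of order $\eta\, e^{3\beta\|H\|/2}$, so choosing $\eta = \Theta(\delta\, e^{-3\beta\|H\|/2})$ gives trace distance at most $\delta$. The resulting $\log(1/\eta) = \log(1/\delta) + \Theta(\beta\|H\|)$ merges with the $\beta\|H\|$ term from the approximation bound to produce the stated coefficients.

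The main obstacle is not conceptual but quantitative: pinning down the explicit constants $1.12$ and $0.648$ requires tighter accounting than the above. The crude step $\|X\|_1 \le 2^n \|X\|_\infty$ together with the worst-case lower bound on $\Tr[e^{-\beta H}]$ double-counts the extreme eigenvalues. A sharper approach is to approximate the shifted function $\widetilde g(x) = e^{-\beta(x + \|H\|)/2}$, which takes values in $[e^{-\beta\|H\|}, 1]$ on the spectrum, and to seek multiplicative control of the form $(1-\eta)\widetilde g(H) \preceq \widetilde\calP(H) \preceq (1+\eta)\widetilde g(H)$ so that the partition function cancels cleanly in the final normalization. Threading this refined analysis is the technical heart; the ingredients (Chebyshev/Bessel truncation, functional calculus, and the normalized-state trace-distance inequality) are all standard.
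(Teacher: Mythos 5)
Your proposal is essentially sound and, in its final "refined" paragraph, converges on exactly what the paper does; the comparison is still worth spelling out. The paper works with \emph{relative} (multiplicative) error from the very start: it defines $\delta := \sup_{u\in[-1,1]}|p_m(u)-e^{-au}|/e^{-au}$ for the truncated Chebyshev/Bessel series, bounds the squared weights by $|f(u)^2 - p_m(u)^2| \le 3\delta f(u)^2$ at the scalar level, and then invokes a short eigenvalue-wise lemma stating that if $|w_i' - w_i| \le \eta w_i$ for all $i$ then the normalized distributions are within trace distance $\eta/(1-\eta)$. This entirely bypasses your main sketched chain (operator-norm error, $\|X\|_1 \le 2^n\|X\|_\infty$, and the worst-case lower bound on $\Tr[e^{-\beta H}]$), which is valid but lossy: it forces $\eta = \Theta(\delta e^{-3\beta\|H\|/2})$ and hence an extra $\tfrac{3}{2}\beta\|H\|$ inside the logarithm, versus the factor $e^{a}=e^{\beta\|H\|/2}$ the paper pays to convert the Bessel tail into a relative error — so your primary route would prove the theorem with a strictly worse coefficient on $\beta\|H\|$, as you correctly anticipate. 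The other ingredient you leave open is how to extract the numbers $1.12$ and $0.648$: the paper's trick is to recognize $e^{-2\mu}I_k(2\mu)$ as the probability mass function of a Skellam distribution and bound the tail $\sum_{k>\ell} I_k(a)$ via the Skellam moment-generating function and a Chernoff argument, then solve $\Lambda(m,a)-2a \ge \ln(12/\eps)$ numerically; your cruder tail bound $\sum_{k>\ell}(a/2)^k/k!$ is actually incorrect as stated (that is the bound for $J_k$, not $I_k$; one has $I_k(a) = \sum_j (a/2)^{k+2j}/(j!(k+j)!) \ge (a/2)^k/k!$), though a correct factorial-type bound on the Bessel tail would still give the right qualitative scaling $\ell \gtrsim a$. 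In short: adopt your own refined multiplicative-control variant as the primary argument, replace the operator-level expansion of $\calP^2(H)-e^{-\beta H}$ with a per-eigenvalue weight comparison, and use a sharper Bessel-tail estimate, and you recover the paper's proof.
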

For many good code ensembles, including random LDPC codes, it is possible to efficiently decode errors of linear weight, which allows us to choose $\ell = \Theta(m)$. 
Since $\norm{H}$ is also linear in $m$ generically, HDQI prepares Gibbs states at constant temperature for such Hamiltonians.
To provide a concrete example, we focus on a particular class of random, commuting $k$-local Hamiltonians whose symplectic code can be decoded by Belief Propagation (BP). 
Beyond Gibbs states, an additional application of HDQI is \emph{optimization} or ground-state energy estimation. 
Indeed, this is the primary focus of Decoded Quantum Interferometry \cite{JSW24} for classical Hamiltonians. 
The performance of DQI was analyzed in \cite{JSW24} through a Fourier-analytic tool called the semicircle law, mapping the number of decodable errors $\ell$ to an approximation ratio. 
Here, we derive an analogous result for HDQI.
\begin{restatable}[Semicircle law for HDQI]{theorem}{semicircle}\label{thm:semicircle}
   Let $H = \sum_i v_i P_i$ be a commuting Pauli Hamiltonian, where $v_i \in \set{\pm 1}$ and the $P_i$ are $n$-qubit Paulis. 
   Given a degree-$\ell$ polynomial $\calP$, let $\langle E\rangle$ be the expected energy obtained upon measuring the corresponding polynomial Hamiltonian state $\rho_\calP(H)$ in the eigenbasis of $H$.
   Suppose $2 \ell+1<d^{\perp}$ where $d^{\perp}$ is the minimum distance of the symplectic code $\Symp(H)$. 
   In the limit $m \rightarrow \infty$, with $\ell / m$ fixed, the optimal choice of degree-$\ell$ polynomial $\calP$ to maximize $\langle E \rangle$ yields
    \begin{equation}
        \frac{\langle E\rangle}{m}=\left(\sqrt{\frac{\ell}{2m}}+\sqrt{\left(\frac{1}{2}-\frac{\ell}{2m}\right)}\right)^2
    \end{equation}
    if $\frac{\ell}{m} \leq \frac{1}{2}$ and $\frac{\langle E \rangle}{m}=1$ otherwise.
\end{restatable}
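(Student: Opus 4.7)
The plan is to reduce the optimization over degree-$\ell$ polynomials to an $(\ell+1)$-dimensional Rayleigh quotient, identify the resulting matrix as a truncated Krawtchouk--Hahn Jacobi operator, and read off its asymptotic top eigenvalue via the semicircle law, mirroring the proof of the classical DQI semicircle law in \cite{JSW24}. Because $H = \sum v_i P_i$ is commuting and each $P_i^2 = I$, any degree-$\ell$ polynomial $\calP$ admits a unique expansion in the elementary-symmetric basis
\[
\calP(H) \;=\; \sum_{k=0}^{\ell} w_k \sum_{\mathbf{y} \in \F_2^m,\,|\mathbf{y}|=k} \Bigl(\prod_{i \in \supp(\mathbf{y})} v_i\Bigr)\, P_{\mathbf{y}},
\]
with the map $(c_0,\ldots,c_\ell) \mapsto (w_0,\ldots,w_\ell)$ an upper-triangular bijection by Newton's identities; it therefore suffices to optimize over $\mathbf{w} \in \mathbb{R}^{\ell+1}$.

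Next, I would compute $\langle E\rangle = \Tr[H\calP(H)^2]/\Tr[\calP(H)^2]$ as an explicit quadratic form in $\mathbf{w}$. The distance hypothesis $2\ell + 1 < d^{\perp}$ is the crux: whenever $|\mathbf{y}|, |\mathbf{y}'| \leq \ell$, the product $P_{\mathbf{y}} P_{\mathbf{y}'} P_i$ equals $\pm I$ only if $\mathbf{y} + \mathbf{y}' + \mathbf{e}_i = 0$ in $\F_2^m$, since otherwise it would produce a nonzero codeword of $\Symp(H)$ of weight at most $2\ell + 1$. Combined with Pauli orthogonality $\Tr[P_{\mathbf{y}} P_{\mathbf{y}'}]/2^n = \delta_{\mathbf{y},\mathbf{y}'}$ and a short symplectic bookkeeping argument (the $v_i$-signs and the phases $\eta$ from multiplication of commuting Paulis cancel pairwise), this yields the closed-form expressions
\[
\frac{\Tr[\calP(H)^2]}{2^n} \;=\; \sum_{k=0}^{\ell}\binom{m}{k}\, w_k^2, \qquad \frac{\Tr[H\calP(H)^2]}{2^n} \;=\; 2m\sum_{k=0}^{\ell-1}\binom{m-1}{k}\, w_k\, w_{k+1},
\]
so $\langle E\rangle/m$ (under the paper's convention that the ``energy'' counts satisfied Pauli terms $(m+\lambda)/2$, giving $\langle E\rangle/m = \tfrac12 + \langle\lambda\rangle/(2m)$) reduces to $\tfrac12$ plus half a Rayleigh quotient on $\mathbb{R}^{\ell+1}$.

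Rescaling $\tilde{w}_k = w_k\sqrt{\binom{m}{k}}$ normalizes the denominator to $\|\tilde{\mathbf{w}}\|^2$, and the identity $\binom{m-1}{k} = \sqrt{(m-k)(k+1)\binom{m}{k}\binom{m}{k+1}}/m$ puts the numerator into tridiagonal form. The optimum is the top eigenvalue of the $(\ell+1)\times(\ell+1)$ symmetric tridiagonal matrix $A$ with $A_{k,k+1} = \sqrt{(m-k)(k+1)}/m$, which is precisely the truncation to the first $\ell+1$ levels of the Krawtchouk Jacobi operator, equivalently the rescaled spin-$m/2$ ladder operator $S_x/S$ in the $S_z$ basis. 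In the scaling $k = sm$ with $m\to\infty$ and $\rho = \ell/m$ fixed, the off-diagonal symbol converges to $a(s) = \sqrt{s(1-s)}$ on $s\in[0,\rho]$; a Chebyshev-type trial vector $\tilde{w}_k \propto \sin((k+1)\theta)$ localized around the maximizer of $a$ achieves a Rayleigh quotient approaching $2\max_{s\in[0,\rho]} a(s)$, which equals $2\sqrt{\rho(1-\rho)}$ for $\rho\leq 1/2$ and saturates at $1$ for $\rho\geq 1/2$. A matching upper bound follows from the three-term recurrence for Krawtchouk polynomials used in the DQI semicircle analysis of \cite{JSW24}. Applying the algebraic identity $\tfrac12 + \sqrt{\rho(1-\rho)} = \bigl(\sqrt{\rho/2} + \sqrt{1/2-\rho/2}\bigr)^2$ then recovers the stated formula.

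The main obstacle I anticipate is the rigorous asymptotic analysis of the truncated Jacobi matrix. The WKB picture transparently identifies the right limit, but making it rigorous requires either constructing a Chebyshev-type ansatz that interpolates between a WKB-allowed interior and a decaying boundary tail (with a matching upper bound from the three-term recurrence, as in \cite{JSW24}), or invoking the known edge asymptotics of the Krawtchouk/Hahn orthogonal-polynomial ensemble. A careful boundary-layer analysis near $k = \ell$ is needed to confirm that the optimum saturates the claimed formula in the regime $\rho \leq 1/2$ rather than stopping short at a sub-optimal interior value.
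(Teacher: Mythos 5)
Your proposal is correct in substance, but it takes a genuinely different route from the paper. The paper's proof is a two-line reduction: by Lemma~\ref{lem:simul-diag} a commuting Pauli Hamiltonian is conjugated by an explicit Clifford $U$ to a diagonal $Z$-type Hamiltonian, the symplectic code and hence its distance $d^\perp$ are unchanged under this change of basis, and the classical DQI semicircle law (Theorem 4.1 of \cite{JSW24}) is then invoked as a black box. You instead work directly in the Pauli picture: you use the distance hypothesis $2\ell+1 < d^\perp$ together with Pauli orthogonality to show that the only surviving cross terms in $\Tr[\calP(H)^2]$ and $\Tr[H\calP(H)^2]$ are $\mathbf{y}'=\mathbf{y}$ and $\mathbf{y}'=\mathbf{y}\oplus\mathbf{e}_i$ respectively (and your sign bookkeeping is right --- since the $P_i$ commute and square to $I$, the product $P_iP_{\mathbf{y}}P_{\mathbf{y}\oplus\mathbf{e}_i}$ is exactly $+I$ and the $v_i$ factors cancel), arriving at the same truncated Krawtchouk Jacobi matrix whose top eigenvalue governs the answer; your closed forms for the two traces and the identity $\binom{m-1}{k}=\sqrt{(m-k)(k+1)\binom{m}{k}\binom{m}{k+1}}/m$ all check out, as does the conversion $\tfrac12+\sqrt{\rho(1-\rho)}=(\sqrt{\rho/2}+\sqrt{1/2-\rho/2})^2$ under the satisfied-terms energy convention. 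What your approach buys is transparency: it makes explicit exactly where commutativity and the code distance enter, and it essentially inlines the proof of the cited DQI theorem rather than routing through a basis change. What it costs is that the hard analytic step --- the rigorous $m\to\infty$ asymptotics of $\lambda_{\max}$ of the truncated Jacobi matrix, which you correctly flag as the main obstacle --- still has to be imported from (or re-proved following) \cite{JSW24}, so the self-contained version is substantially longer than the paper's reduction while proving the same statement.
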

On this random local commuting model, we compare the performance of HDQI to a variant of the classical Simulated Annealing (SA) algorithm in Section~\ref{sec:commuting_dequantizations}. While HDQI does not reach lower energies than SA, we can choose $\calP$ to prepare Gibbs distributions\footnote{Up to a conjecture related to the analysis of HDQI beyond half the distance of the symplectic code, see Section~\ref{sec:random_local_commuting}.} at temperatures for which it is unclear whether SA efficiently converges to the Gibbs distribution.

\subsubsection{Semiclassical spin glasses}

We conclude with an application to a physically-relevant non-commuting Hamiltonian model for which HDQI is provably efficient. Since we cannot handle Hamiltonians which have strong non-commuting structure (e.g. a random $k$-local Hamiltonian), the primary usage of our results above which we explore is the optimization of Hamiltonians which begin as a commuting Hamiltonian, and then undergo a \emph{defect} process wherein some randomly-chosen terms are modified so that the Hamiltonian is no longer commuting.
We introduce an object which we call \emph{template graphs} to prove that for Hamiltonians which are local in a certain sense, there is a constant threshold value for the probability of a term becoming defected, under which the anticommutation graph has small connected components with high probability.

As an application, we prove that a local classical spin glass defected with some quantum terms---i.e. $Z$-type Hamiltonian, with each term becoming $X$-type with probability $p$---can be Gibbs sampled up to a constant temperature by HDQI if $p$ is a sufficiently small constant.
This model sits in between a completely classical spin glass, and a quantum spin glass wherein $p = 1/2$, so we refer to it as a semiclassical spin glass.

\begin{restatable}[Threshold theorem, $p$-semiclassical spin glass]{theorem}{THMspinglass}\label{thm:semiclassical_spin_glass_HDQI}
    Let $m = cn$ for a constant $c$, and let $K = a(b-1)$.
    Let $B^\intercal_0 \in \F_2^{n \times m}$ be a random matrix with row-weight $b$ and column-weight $a$, and let $\mathbf{h}_i$ be the $i$th column of $B^\intercal_0$.
    Define the Hamiltonian $H(0) = \sum_{i=1}^m v_i Z^{\mathbf{h}_i}$, where $v_i \in \set{\pm 1}$ and $Z^{\mathbf{h}_i}$ is a $Z$ Pauli supported on non-zero indices of $\mathbf{h}_i$.
    Define $H(p)$ by independently mapping each term $Z^{\mathbf{h}_i}$ in $H(0)$ to $X^{\mathbf{h}_i}$ with probability $p$.
    Then there exists a constant $d(a, b, c)$ such that with probability $1 - 1/n^{\Omega(1)}$, for any $p < 1/K^2$ and for any polynomial $\calP$ of degree $\ell \leq d(a, b, c) \cdot m$, there exists a quantum algorithm which outputs the HDQI state \begin{align}
        \rho_{\calP}(H(p)) = \frac{\calP^2(H(p))}{\Tr[\calP^2(H(p))]}
    \end{align} 
    using $\poly(n)$ time on a quantum device alongside a $n^{O(\log n)}$ time classical pre-processing step.
\end{restatable}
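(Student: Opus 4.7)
The plan is to reduce Theorem~\ref{thm:semiclassical_spin_glass_HDQI} to three ingredients that are already in hand from earlier in the paper: the pilot-state MPS construction (Theorem~\ref{thm:resource_state_algprithm_MPS}), the HDQI framework of Section~\ref{subsec:decoding-step} together with its decoding step, and a threshold estimate for the anti-commutation graph of $H(p)$. The bulk of the work is the third ingredient, which is a combinatorial statement about a random graph on $\{P_1,\dots,P_m\}$ in which $P_i$ and $P_j$ are joined by an edge iff exactly one of them is defected and $\langle \mathbf{h}_i,\mathbf{h}_j\rangle$ is odd.

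First, I would fix a random $B^\intercal_0$ with row-weight $b$ and column-weight $a$, and then analyze the anti-commutation graph $G=G(B^\intercal_0,\{\text{defect coins}\})$. Since each column $\mathbf{h}_i$ has weight $b$ and each variable appears in at most $a$ rows, any given term $P_i$ shares a variable with at most $K=a(b-1)$ other terms, so the underlying \emph{template graph} of potential anti-commutations has maximum degree $K$. An edge of the template is actually present in $G$ only if (i)~exactly one endpoint is defected, which happens with probability $2p(1-p)\le 2p$, and (ii)~$\langle \mathbf{h}_i,\mathbf{h}_j\rangle$ is odd, which is a property of $B^\intercal_0$. I would run a breadth-first exploration from an arbitrary vertex and bound the number of rooted trees in the template graph of size $s$ by $K \cdot (eK)^{s-1}$ (a standard subcritical-tree estimate). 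Each such tree is realized in $G$ with probability at most $(2p)^{s-1}$. Taking a union bound gives the probability of a component of size $\ge s$ rooted at a given vertex at most $(2eK^2 p)^{s-1}$, which for $p<1/K^2$ (up to a universal constant absorbed into $d(a,b,c)$) is exponentially small. Choosing $s=c'\log n$ and union-bounding over the $m=O(n)$ starting vertices yields that with probability $1-1/n^{\Omega(1)}$ every connected component of $G$ has size $O(\log n)$.

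With this structural statement, Theorem~\ref{thm:resource_state_algprithm_MPS} applies directly: the pilot state corresponding to $\calP(H(p))$ admits an MPS representation with qudit dimension $2^{O(\log n)}=\poly(n)$ and bond dimension $O(n)$, and the MPS matrices can be computed in classical time at most $n^{O(\log n)}$ from the $\alpha$- and $\beta$-functions restricted to each connected component. Once the pilot state is produced, Steps~2 and~3 of the HDQI algorithm as described in Section~\ref{subsec:decoding-step} apply verbatim in the non-commuting case, provided we can decode $\le \ell$ errors on $\Symp(H(p))$. The key observation for decoding is that $\Symp(H(p))$ is obtained from the LDPC parity-check matrix of $H(0)$ by relocating some columns from the $Z$-block to the $X$-block of $\F_2^{2n}$; this is a column permutation of a random $(a,b)$-regular LDPC parity-check matrix, which preserves LDPC structure and, for suitable constants $a,b,c$, admits efficient decoding up to a constant fraction $d(a,b,c)\cdot m$ of errors via belief propagation or an expander-based decoder. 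This fixes $d(a,b,c)$ in the statement.

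Putting these together: conditional on the high-probability event that $G$ has $O(\log n)$-size components, HDQI produces (via the three-step procedure with the MPS-prepared pilot state and the LDPC decoder) exactly the state $\rho_{\calP}(H(p))$ for any degree-$\ell$ polynomial $\calP$ with $\ell\le d(a,b,c)\cdot m$, in $\poly(n)$ quantum time plus $n^{O(\log n)}$ classical pre-processing. The hardest step is the template-graph analysis of the first paragraph; subtleties to watch are correctly accounting for the joint randomness in $B^\intercal_0$ and the defect coins (so that the odd-overlap condition and the defect condition can be decoupled in the union bound), and ensuring that the constant $d(a,b,c)$ returned by the LDPC decoder is compatible with the component-size threshold needed for the pilot state preparation.
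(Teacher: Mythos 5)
Your overall architecture matches the paper's: identify a bounded-degree template graph of potential anti-commutations with $K=a(b-1)$, show the realized anti-commutation graph has $O(\log n)$ components w.h.p., invoke Theorem~\ref{thm:resource_state_algprithm_MPS} for the pilot state, and argue that relocating defected columns from the $Z$-block to the $X$-block of $\Symp(H(p))$ preserves the expander/LDPC decodability of $\Symp(H(0))$ up to a linear fraction $d(a,b,c)\cdot m$ of errors. The decoding and MPS portions are essentially the paper's argument (the paper phrases the column relocation as duplicating the check nodes of the Tanner graph, which can only increase expansion, rather than as a ``column permutation,'' but the conclusion is the same).

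The gap is in your component-size bound. Your union bound over rooted trees asserts that a tree with $s-1$ template edges is realized in $G$ with probability at most $(2p)^{s-1}$, implicitly treating the edge events as independent. They are not: an edge is present iff its two endpoints receive different colors in the vertex-defect process, so the $s-1$ edge events of a tree are determined by only $s$ vertex coins and are strongly positively correlated. Concretely, a star on $s\le K+1$ vertices is fully realized whenever its center is defected and all leaves are not, i.e.\ with probability at least $p(1-p)^{s-1}\approx p$, which vastly exceeds $(2p)^{s-1}$. The correct per-tree probability is $\Pr[\text{tree realized}]\le 2\,p^{r}$ where $r$ is the smaller side of the tree's bipartition, and in a degree-$\le K$ tree one can have $r$ as small as roughly $(s-1)/K$ (e.g.\ a depth-$2$ $K$-ary tree). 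Feeding this into your tree count $K(eK)^{s-1}$ yields a threshold of order $p\lesssim (eK)^{-K}$, exponentially worse than the claimed $p<1/K^2$; and the slack cannot be ``absorbed into $d(a,b,c)$,'' which governs the decodable error fraction, not the defect threshold. The paper instead runs a BFS with adaptive color revelation and stochastic domination: a black node spawns $\Bin(K,p)$ red children and each red child spawns at most $K$ black grandchildren deterministically, so the effective branching factor is $K^2p<1$, and a Chernoff bound on $\Bin(tK,p)$ gives exponential decay of $\Pr[S_v>t]$ at exactly the threshold $p<1/K^2$. To repair your proof you would need either to adopt this two-level branching-process domination or to redo the first-moment calculation with the correct (bipartition-based) realization probabilities and accept a weaker threshold than the theorem states.
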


We moreover prove that nearly independent Hamiltonians whose symplectic code has high distance, such as the toric code, are also Gibbs-sampleable up to a constant temperature even if every term defects to a different non-commuting term with a small constant probability.
As stated, these samplers all require a quasipolynomial classical pre-processing step in addition to efficient quantum computation.
They can, however, also be made efficient if the success probability $1 - 1/n^{\Omega(1)}$ is replaced by a fixed constant which can be chosen to be arbitrarily close to 1.

\subsection{Organization of the paper}
The remainder of this paper is organized as follows. 
In Section~\ref{sec:commuting}, we present a complete description of Hamiltonian DQI (HDQI) in the special case of commuting Hamiltonians. 
We subsequently discuss our applications of HDQI to specific families of commuting Hamiltonians in Section~\ref{sec:commuting-applications}. 
In Section~\ref{sec:commuting_dequantizations}, we give classical algorithms for these families of Hamiltonians, for the task of incoherently sampling an eigenstate.
We then extend HDQI to the general setting of non-commuting Hamiltonians in Section~\ref{sec:general}, followed by corresponding applications to physically interesting Hamiltonians in Section~\ref{sec:applications_non-commuting}. 
Finally, we show that HDQI can be extended beyond signed Pauli Hamiltonians in \Cref{sec:beyond_paulis}.


\section{Commuting Pauli Hamiltonians: the simple case}\label{sec:commuting}

To begin, we study the restricted case when the Hamiltonian has pairwise commuting terms.
While commuting Hamiltonians are a special case of our main reduction in Section~\ref{sec:general}, the state preparation portion of our algorithm simplifies significantly when all terms commute.

\begin{definition}[Commuting Pauli Hamiltonians] \label{def:commuting_Hamiltonian}
    Let $H = \sum_i v_i P_i$, where $v_i \in \set{\pm 1}$ and the $P_i$ are $n$-qubit Paulis. We say that $H$ is a \emph{commuting Pauli Hamiltonian} if $[P_i, P_j] = 0$ for all $i, j$.
\end{definition}
We first show how to prepare states $\propto \calP(H)$, where $\calP$ is a degree-$\ell$ univariate polynomial, in the special case when $H$ is a commuting Hamiltonian. 
This process requires only a decoder which can correct weight-$\ell$ errors of a certain code depending on $H$, as we will show that the state preparation requirement of the reduction can always be performed efficiently.
Although most Pauli Hamiltonians do not have commuting terms, many Hamiltonians of broad interest do have this form. 
For example, any quantum stabilizer code given by stabilizers $S_1, \dots, S_r$, where each $S_i$ is a $n$-qubit Pauli, by definition satisfies $[S_i, S_j] = 0$. 
The parent Hamiltonian of such a code is given by $H = \sum_{i} S_i$ and is commuting. 
We may also add extra terms which are not independent of the first $r$ terms, producing Hamiltonians such as that of the two-dimensional toric code.
In these cases, while finding the ground state space is trivial---it is the stabilizer code space---preparing Gibbs states at low temperatures is not immediately obvious. 
Indeed, the task of preparing low-temperature Gibbs states for commuting Hamiltonians has proved challenging, being solved only for certain special cases~\cite{HJ24}. 
Our algorithm is by no means limited to stabilizer code Hamiltonians even in the commuting case, however, and in Section~\ref{sec:commuting-applications} we give additional examples based on physically motivated commuting Hamiltonians as well as random ensembles of commuting Hamiltonians for which we can prepare Gibbs states and sample low-energy states.

A key element of our algorithm is an expansion of any polynomial $\calP(x)$ on a single formal variable $x$ into a linear combination of symmetric terms, when $x$ is the sum of many formal variables $\sum_{i=1}^m z_i$. 
This expansion is well-known~\cite{JSW24}, but we give a derivation that computes all the coefficients exactly for purposes of comparison when we move to the general case. 
We show in Appendix~\ref{app:combinatorics} that these coefficients are efficiently computable.

\begin{theorem}[Univariate polynomial symmetric expansion] \label{thm:commuting_symmetric_polynomial_expansion}
    Let $\calP(x)$ be a univariate polynomial of degree $\ell$, and let $x = \sum_{i=1}^m z_i$ such that $z_i^2 = 1$ for all $i$. 
    Then there exists coefficients $w_0, \dots, w_\ell$ independent of the argument $x$, such that \begin{align}
        \calP\left( \sum_{i=1}^m z_i \right) = \sum_{j=0}^\ell w_j \sum_{\mathbf{y} \in \F_2^m : |\mathbf{y}| = j} z_1^{y_1} \cdots z_m^{y_m} .
    \end{align}
\end{theorem}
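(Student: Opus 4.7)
The plan is to reduce the theorem to a monomial-by-monomial computation and then exploit symmetry. First, write $\calP(x) = \sum_{k=0}^{\ell} c_k x^k$. By linearity of the claimed decomposition in the coefficients of $\calP$, it suffices to establish the symmetric expansion for each monomial $x^k$ with $k \leq \ell$; the general $w_j$ will then be obtained by an appropriate linear combination of the per-monomial coefficients.

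For a fixed $k$, I would apply the multinomial theorem to $\bigl(\sum_{i=1}^m z_i\bigr)^k$, producing a sum over multi-indices $\alpha = (\alpha_1, \ldots, \alpha_m) \in \Z_{\geq 0}^m$ with $\sum_i \alpha_i = k$, weighted by the multinomial coefficient $\binom{k}{\alpha_1, \ldots, \alpha_m}$. Next, invoke the hypothesis $z_i^2 = 1$ to replace each $z_i^{\alpha_i}$ by $z_i^{\alpha_i \bmod 2}$; consequently the resulting monomial depends only on the parity vector $\mathbf{y} \in \F_2^m$ defined by $y_i = \alpha_i \bmod 2$. Collecting like monomials then yields
\begin{equation*}
\Bigl(\sum_{i=1}^m z_i\Bigr)^k \;=\; \sum_{\mathbf{y} \in \F_2^m} \,\Biggl(\sum_{\substack{\alpha \in \Z_{\geq 0}^m \\ \sum_i \alpha_i = k \\ \alpha_i \equiv y_i \pmod 2}} \binom{k}{\alpha_1,\ldots,\alpha_m}\Biggr) \prod_{i=1}^m z_i^{y_i}.
\end{equation*}

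The key observation is that the inner coefficient depends only on $|\mathbf{y}|$: any permutation $\sigma \in S^m$ applied simultaneously to $\mathbf{y}$ and to each admissible $\alpha$ leaves the multinomial coefficient invariant and bijects the admissible set for $\mathbf{y}$ with that for $\sigma(\mathbf{y})$. Hence I may define $w_{k,j}$ to be the common value of the inner sum whenever $|\mathbf{y}| = j$. Two sanity checks pin down the range of $j$: the constraint $\sum_i \alpha_i = k$ with $\alpha_i \geq y_i$ forces $j \leq k$, and a parity check on $\sum_i \alpha_i$ versus $\sum_i y_i$ forces $w_{k,j} = 0$ unless $j \equiv k \pmod 2$. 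Setting $w_j := \sum_{k=j}^{\ell} c_k w_{k,j}$ (with $w_{k,j} = 0$ for $j > k$) then produces exactly the expansion claimed in the statement, and the range $j = 0, \ldots, \ell$ is inherited from $\deg \calP = \ell$.

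The proof is essentially bookkeeping, so there is no real obstacle to existence; the only care needed is in justifying the permutation-symmetry step cleanly. The more substantive companion claim, deferred to Appendix~\ref{app:combinatorics}, is that the $w_j$ (equivalently, the $w_{k,j}$) are \emph{efficiently} computable --- this will presumably rest on a generating-function or closed-form identity for the parity-restricted multinomial sum, rather than brute enumeration over $\alpha$.
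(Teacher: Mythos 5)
Your proposal is correct and follows essentially the same route as the paper's proof: multinomial expansion of each power $x^k$, reduction modulo the relations $z_i^2=1$ to a parity vector $\mathbf{y}$, and the observation that the permutation-invariance of the multinomial coefficient makes the collected coefficient depend only on $|\mathbf{y}|$. The extra sanity checks ($j\le k$ and $j\equiv k \pmod 2$) are correct refinements but not needed for the statement.
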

\begin{proof}
    Let $\calP(x) = \sum_{k=0}^\ell c_k x^k$, so that by definition \begin{align}
        \calP\left( \sum_{i=1}^m z_i \right) & = \sum_{k=0}^\ell c_k \sum_{1 \leq a_1, \dots, a_k \leq m} z_{a_1} \cdots z_{a_k} .
    \end{align}
    For $k \leq m$ and $\bmu \in \Z_{\geq 0}^m$, we say that $\bmu \vDash_m k$ if $|\bmu| := \sum_{i=1}^m \mu_i = k$. Now, to every term on the right-hand side we will associate a counting vector $\bmu \in \Z_{\geq0}^m$ such that index $i$ appears $\mu_i$ times. There are exactly $\binom{k}{\bmu} = \frac{k!}{\mu_1! \cdots \mu_m!}$ ways to choose $a_1, \dots, a_k$ such that the counting vector associated to $z_{a_1} \cdots z_{a_k}$ is $\bmu$. Hence, \begin{align}
        \calP\left( \sum_{i=1}^m z_i \right) & = \sum_{k=0}^\ell c_k \left( \sum_{i=1}^m z_i \right)^k = \sum_{k=0}^\ell c_k \sum_{\bmu : \bmu \vDash_m k} \binom{k}{\bmu} \cdot z_{1}^{\mu_1} \cdots z_{m}^{\mu_m} .
    \end{align}
    where the second line sorts the variables in order, since they all commute. Next, because $z_i^2 = 1$, only the parity of each $\mu_i$ matters. We use the notation $\bmu \text{ (mod 2)}$ to denote the element-wise mod 2 of a vector $\bmu$. We group the sum over $\bmu$ by their parity $\mathbf{y} = \bmu \text{ (mod 2)} \in \F_2^m$, and then perform a second grouping of the $\mathbf{y}$ vectors by their weight. In total,
    \begin{align}  \label{eq:commutative_expansion_divergence_point}
        \sum_{\bmu \vDash_m k} \binom{k}{\bmu} \cdot z_{1}^{\mu_1} \cdots z_{m}^{\mu_m} 
        &= \sum_{j=0}^k \, \,\sum_{\substack{\mathbf{y} \in \F_2^m\\|\mathbf{y}| = j}} \left( \sum_{\substack{\bmu : \bmu \vDash_m k\\ \bmu \text{ (mod 2)} = \mathbf{y}
        }} \binom{k}{\bmu} \right)
        z_{1}^{y_1} \cdots z_{m}^{y_m}.
    \end{align}
    The combinatorial factor $\binom{k}{\bmu}$ is invariant under permutations of the entries of $\bmu$. 
    Therefore, the term in parentheses above depends only on the weight of $\mathbf{y}$. 
    Consequently, we may exchange that term with the sum over $\mathbf{y}$ to obtain
    \begin{align}
        \sum_{\bmu \vDash_m k} \binom{k}{\bmu} \cdot z_{1}^{\mu_1} \cdots z_{m}^{\mu_m} 
        & = \sum_{j=0}^\ell b_k^{(j)} \sum_{\substack{\mathbf{y} \in \F_2^m\\|\mathbf{y}| = j}} 
        z_{1}^{y_1} \cdots z_{m}^{y_m} ,
    \end{align}
    where for any fixed choice of $\mathbf{y}$ such that $|\mathbf{y}| = j$, \begin{align}
        b_k^{(j)} = \sum_{\substack{\bmu : \bmu \vDash_m k\\ \bmu \text{ (mod 2)} = \mathbf{y}}} \binom{k}{\bmu} \, , \, \text{ if $0 \leq j \leq k$, \, and \, $b_k^{(j)}=0$, otherwise},
    \end{align}
    is a combinatorial factor. Now define $w_j = \sum_{k=0}^\ell c_k b_k^{(j)}$. 
    In sum, we obtain the claimed expansion: \begin{align}
        \calP\left( \sum_{i=1}^n z_i \right) & = \sum_{k=0}^\ell c_k \sum_{j=0}^\ell b_k^{(j)} \sum_{\mathbf{y} \in \F_2^n : |\mathbf{y}| = j} z_{1}^{y_1} \cdots z_{n}^{y_n} \\
        & = \sum_{j=0}^\ell \left(\sum_{k=0}^\ell  c_k b_k^{(j)} \right) \sum_{\mathbf{y} \in \F_2^n : |\mathbf{y}| = j} z_{1}^{y_1} \cdots z_{n}^{y_n} \\
        & = \sum_{j=0}^\ell w_j \sum_{\mathbf{y} \in \F_2^n : |\mathbf{y}| = j} z_{1}^{y_1} \cdots z_{n}^{y_n} .
    \end{align}
\end{proof}

\subsection{Reduction to decoding}
Using Theorem~\ref{thm:commuting_symmetric_polynomial_expansion}, we give a reduction from preparing the quantum state proportional to $\calP(H)$ to decoding a corresponding \textit{classical} code.
The reduction also requires the production of a certain \emph{pilot state}, which is for a general Hamiltonian not necessarily efficiently preparable.
However, the pilot state for a commuting Hamiltonian is always efficiently preparable, and hence the reduction is formally directly from preparing $\calP(H)$ to a decoding problem.
We first define the classical code which must be decoded in this reduction.

\begin{definition}[Symplectic vectors] \label{def:symplectic_vectors}
    Let $P$ be a $n$-qubit Pauli operator. 
    Define $\mathbf{h} := \sym(P) \in \F_2^{2n}$ to be the \emph{symplectic representation} of $P$. 
    That is, if $P$ on the $j$th qubit is $X$, then $h_j = 1, h_{j+n} = 0$; if it is $Z$, then $h_j = 0, h_{j+n} = 1$; and if it is $Y$, then $h_j = h_{j+n} = 1$. 
\end{definition}
The symplectic vector representation $\mathbf{b}$ completely determines a Pauli $P$ up to an overall phase.
Next, a classical lin ear code $\calC$ with $m$ physical bits is a linear subspace of $\F_2^m$. 
Such a code can be specified as the kernel of a $q \times m$ binary matrix $B^\intercal$, where $q \leq m$. 
In that case, $B^\intercal$ is known as the \emph{parity check matrix} of $\calC$.

\begin{definition}[Hamiltonian symplectic code] \label{def:symplectic_codes}
    Let $H = \sum_{i=1}^m v_i P_i$, where $v_i \in \set{\pm 1}$ and the $P_i$ are $n$-qubit Paulis. Let $\mathbf{h}^{(i)} = \sym(P_i)$ be the symplectic vector of $P_i$ as in Definition~\ref{def:symplectic_vectors}. 
    The \emph{symplectic code} of $H$, denoted $\Symp(H)$, is a classical linear code whose parity check matrix is given by binary matrix $B^\intercal \in \F_2^{2n \times m}$ such that \begin{align}
        B^\intercal = \begin{bmatrix}
        | & | &        & | \\
        \mathbf{h}^{(1)} & \mathbf{h}^{(2)} & \cdots & \mathbf{h}^{(m)} \\
        | & | &        & | 
        \end{bmatrix} .
    \end{align}
\end{definition}
Note that $H$ is a completely general Hamiltonian in Definition~\ref{def:symplectic_codes}; it need not be commuting. 
Our reduction will formally be from preparing $\rho_{\calP}(H) \propto \calP(H)$ to decoding all errors in $\Symp(H)$ with weight up to $\deg(\calP)$.
First, we show that the pilot state, which we can think of as a resource state for the reduction, can be efficiently prepared.

\begin{lemma}[Efficient preparation of pilot state for commuting Hamiltonians] \label{lemma:commuting_efficient_resource_state}
    Let $w_0, \dots, w_{\ell} \in \R$, where $\ell \leq n$. There exists a quantum algorithm running in time $\poly(m)$ which prepares the state \begin{align}
        \ket{\calR^\ell_{\text{comm}}(H)} \propto \sum_{j=0}^{\ell} w_j \sum_{\mathbf{y} \in \F_2^m : |\mathbf{y}| = j} \ket{\mathbf{y}} .
    \end{align}
\end{lemma}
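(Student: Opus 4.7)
The plan is to recognize that $\ket{\calR^\ell_{\text{comm}}(H)}$ is, up to normalization, a weighted superposition of symmetric Dicke states
\begin{equation}
    \ket{D_j^m} := \binom{m}{j}^{-1/2} \sum_{\mathbf{y} \in \F_2^m,\, |\mathbf{y}|=j} \ket{\mathbf{y}},
\end{equation}
so that
\begin{equation}
    \ket{\calR^\ell_{\text{comm}}(H)} \propto \sum_{j=0}^{\ell} w_j \sqrt{\binom{m}{j}}\, \ket{D_j^m}.
\end{equation}
The whole construction then reduces to two standard primitives: preparing a small ``weight'' register in superposition, and performing controlled Dicke state preparation.

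First I would prepare an auxiliary register of $\lceil \log_2(\ell+1)\rceil$ qubits in the superposition
\begin{equation}
    \ket{\phi} \propto \sum_{j=0}^{\ell} w_j \sqrt{\binom{m}{j}}\, \ket{j}.
\end{equation}
Because this register has $O(\log m)$ qubits and all of the amplitudes are given in closed form (the binomial coefficients $\binom{m}{j}$ can be computed classically in $\poly(m)$ time), any standard amplitude-encoded state preparation subroutine---for example the Grover--Rudolph procedure, or a direct angle-by-angle construction---prepares $\ket{\phi}$ using only $\poly(m)$ gates.

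Second, I would append $m$ ancilla qubits in $\ket{0^m}$ and apply a controlled Dicke state preparation unitary $U_D : \ket{j}\ket{0^m} \mapsto \ket{j}\ket{D_j^m}$. Deterministic circuits for Dicke state preparation on $m$ qubits of Hamming weight at most $\ell$ are well known and require only $\poly(m)$ gates; such a circuit can be made controlled on the weight register in a black-box fashion, for instance by running a circuit that handles all weights $0,\dots,\ell$ in parallel and coherently selecting the $j$th output. The resulting state is
\begin{equation}
    \sum_{j=0}^{\ell} w_j \sqrt{\binom{m}{j}}\, \ket{j}\ket{D_j^m} \;\propto\; \sum_{j=0}^{\ell} w_j \sum_{\mathbf{y} \in \F_2^m,\, |\mathbf{y}|=j} \ket{j}\ket{\mathbf{y}}.
\end{equation}

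Third, I would uncompute the weight register by exploiting the fact that on the support of this state, the integer stored in the first register equals the Hamming weight of the $m$-qubit register. A standard reversible circuit computes $|\mathbf{y}|$ from $\ket{\mathbf{y}}$ into an ancilla register in $\poly(m)$ time; subtracting this quantity from the weight register resets it to $\ket{0}$ and allows it to be discarded. What remains on the $m$-qubit register is the claimed pilot state.

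I do not expect a serious obstacle here: amplitude-encoded state preparation on a logarithmic-sized register, Dicke state preparation, and reversible Hamming-weight computation are all textbook efficient primitives. The only care required is in the normalization bookkeeping---the factors $\sqrt{\binom{m}{j}}$ absorbed into $\ket{\phi}$ must exactly cancel those appearing in $\ket{D_j^m}$, so that each computational basis vector $\ket{\mathbf{y}}$ with $|\mathbf{y}|=j$ ends up with amplitude proportional to $w_j$ as required.
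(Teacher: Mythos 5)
Your proposal is correct and follows essentially the same route as the paper: prepare the logarithmic-sized weight register in the superposition $\propto \sum_j w_j \sqrt{\tbinom{m}{j}}\ket{j}$ and then apply a coherent Dicke-state preparation $\ket{j}\mapsto\ket{D_j^m}$. The only difference is that you make the uncomputation of the weight register explicit via a reversible Hamming-weight computation, a detail the paper leaves implicit in its choice of Dicke-preparation unitary; this is a valid and slightly more careful bookkeeping of the same argument.
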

\begin{proof}
    The normalization of $\ket{\calR^\ell_{\text{comm}}(H)}$ is explicitly given by $1/\calN = \left( \sum_{j=0}^{\ell} w_j^2 \binom{m}{j} \right)^{-1/2}$. 
    To prepare $\ket{\calR^\ell_{\text{comm}}(H)}$, we first prepare the state $\ket{\psi_1} = \frac{1}{\calN} \sum_{j=0}^{\ell} w_j \sqrt{\binom{m}{j}} \ket{j}$. 
    This can be done in $\poly(m) = \poly(n)$ time because the superposition is over only $O(\ell)$ states, which can be represented in $O(\log \ell)$ qubits. 
    An explicit algorithm to execute this preparation algorithm is given in~\cite{low2024trading}. 
    Next, each term in the pilot state is proportional to a Dicke state \begin{align}
        \ket{D_j^m} = \frac{1}{\sqrt{\binom{m}{j}}} \sum_{\mathbf{y} \in \F_2^m : |\mathbf{y}| = j} \ket{\mathbf{y}} ,
    \end{align}
    for which efficient quantum state preparation algorithms mapping $\ket{j} \mapsto \ket{D_j^m}$ in superposition are well-known~\cite{bartschi2022short}. 
    By applying any one such algorithm, we can prepare \begin{align}
        \ket{\psi_2} = \frac{1}{\calN} \sum_{j=0}^{m} w_j \sqrt{\binom{m}{j}} \frac{1}{\sqrt{\binom{m}{j}}} \sum_{\mathbf{y} \in \F_2^m : |\mathbf{y}| = j} \ket{\mathbf{y}} = \frac{1}{\calN} \sum_{j=0}^{m} w_j \sum_{\mathbf{y} \in \F_2^m : |\mathbf{y}| = j} \ket{\mathbf{y}} = \ket{\calR^\ell_{\text{comm}}(H)} .
    \end{align}
\end{proof}

We next make formal the notion of a classical decoding algorithm which can be called by a quantum algorithm. The following definition concerns an idealized version of a decoder, but we show in Section~\ref{sec:decoding_failures} that our results extend to approximate decoders as well. 
\begin{definition}[Decoding oracle] \label{def:decoding_oracle}
    Let $B^\intercal \in \F_2^{2n \times m}$ be the parity check matrix of a classical linear code which is the symplectic code (as in Definition~\ref{def:symplectic_codes}) $\Symp(H)$ of a Pauli Hamiltonian $H = \sum_{i=1}^m v_i P_i$. 
    A weight-$\ell$ \emph{decoding oracle} $\calD_{H}^{(\ell)}$ for $H$ is a unitary operation such that \begin{align}
        \calD_{H}^{(\ell)} \ket{\mathbf{y}} \ket{B^\intercal \mathbf{y}} =  \ket{0} \ket{B^\intercal \mathbf{y}}
    \end{align} 
    for any $\mathbf{y} \in \F_2^m$ such that $1\leq |\mathbf{y}| \leq \ell$.
\end{definition}
The decoding oracle can be implemented by a classical decoder which can correct any error of weight up to $\ell$ on $\Symp(H)$: use the classical decoder to compute $\mathbf{y}$ from $B^\intercal \mathbf{y}$ and coherently add (via bitwise XOR) that to the first register so that $\mathbf{y} \oplus \mathbf{y} = 0$ as desired above. 
We now use this decoding oracle to produce the reduction.

\begin{theorem}[Main reduction, commuting case] \label{thm:commuting_reduction_to_decoding}
    Let $H = \sum_{i=1}^m v_i P_i$ be a commuting Hamiltonian, where $v_i \in \set{\pm 1}$ and the $P_i$ are $n$-qubit Paulis. 
    Assume that $m = \poly(n)$. 
    Suppose that $\calD_H^{(\ell)}$ is a weight-$\ell$ decoding oracle for $H$ (as in Definition~\ref{def:decoding_oracle}).
    Then Algorithm~\ref{alg:amplitude_transform} runs in time $\poly(n)$ and prepares the state \begin{align}
        \rho_{\calP}(H) = \frac{\calP^2(H)}{\Tr[\calP^2(H)]}
    \end{align}
    for any polynomial $\calP$ such that $\deg(\calP) \leq \ell \leq m$. 
    The algorithm makes a single call to $\calD_H^{(\ell)}$.
\end{theorem}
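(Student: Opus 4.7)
The plan is to realize Algorithm~\ref{alg:amplitude_transform} as the three-step pipeline described in Section~\ref{sec:overview}---pilot preparation, controlled Pauli action on a Bell pair, and oracular uncomputation---and to verify correctness by direct state tracking. The key ingredients are Theorem~\ref{thm:commuting_symmetric_polynomial_expansion} for rewriting $\calP(H)$ as a symmetric sum of Pauli products, Lemma~\ref{lemma:commuting_efficient_resource_state} for preparing the pilot state efficiently, and the fact that the symplectic map $\sym$ is a group homomorphism (Eqn.~(\ref{eq:homomorphism})) that defines a bijection between phase-free Paulis and $\F_2^{2n}$; this bijection promotes the standard Bell measurement to a constant-depth Clifford unitary $U_{\mathrm{Bell}}$ that encodes $(Q \otimes \Id)\ket{\Phi^n} \mapsto \ket{\sym(Q)}$ for any $n$-qubit Pauli $Q$.

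I would run the circuit on three registers: a pilot register $R$ of $m$ qubits and two $n$-qubit registers $A$ and $B$. First, I invoke Theorem~\ref{thm:commuting_symmetric_polynomial_expansion} with $z_i = v_i P_i$ (which commute and square to $\Id$) to obtain coefficients $w_j$ such that $\calP(H) = \sum_{j=0}^{\ell} w_j \sum_{|\mathbf y|=j} v_{\mathbf y}\, P_{\mathbf y}$, where $v_{\mathbf y} := \prod_{i \in \supp(\mathbf y)} v_i$; these $w_j$ are efficiently computable via Appendix~\ref{app:combinatorics}. Next, I use Lemma~\ref{lemma:commuting_efficient_resource_state} to prepare $\ket{\calR^\ell_{\text{comm}}(H)}$ on $R$, tensored with $\ket{\Phi^n}_{AB}$. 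I then apply, for each $i \in [m]$, a Pauli $P_i$ on $A$ plus a phase $v_i$, both controlled on the $i$th pilot qubit; by commutativity of the $P_i$ these $m$ gates implement $\sum_{\mathbf y} \ketbra{\mathbf y}{\mathbf y}_R \otimes v_{\mathbf y}(P_{\mathbf y} \otimes \Id)_{AB}$, giving the state in Eqn.~(\ref{eq:state_2}). I next apply $U_{\mathrm{Bell}}$ on $AB$ so the $AB$ register holds $\ket{\sym(P_{\mathbf y})}_{AB} = \ket{B^\intercal \mathbf y}_{AB}$, call $\calD_H^{(\ell)}$ exactly once to uncompute $R$ to $\ket{0^m}$---the pilot's support on $|\mathbf y| \leq \ell$ is precisely the hypothesis of Definition~\ref{def:decoding_oracle}---and finally apply $U_{\mathrm{Bell}}^\dagger$ on $AB$ to restore the Bell-pair structure.

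At that point $R$ is in $\ket{0^m}$ and $AB$ is proportional to $(\calP(H) \otimes \Id)\ket{\Phi^n}$; discarding $R$ and tracing out $B$ yields $\rho_\calP(H)$ on $A$ via the standard identity $\Tr_B[(M \otimes \Id)\ket{\Phi^n}\bra{\Phi^n}(M^\dagger \otimes \Id)] = MM^\dagger/2^n$ applied with $M = \calP(H)$ Hermitian, the normalization $\tfrac{1}{2^n}\Tr[\calP^2(H)]$ arising automatically from $\|(\calP(H) \otimes \Id)\ket{\Phi^n}\|^2$. The total cost is $\poly(n)$: pilot preparation is $\poly(m) = \poly(n)$, the controlled Paulis and both Bell circuits cost $O(mn)$, and the decoder is invoked once. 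The main subtlety---and where the construction differs from a naive ``measure and decode'' approach---is that the Bell measurement must be applied as a \emph{unitary} so that its inverse can restore the purification once $R$ has been zeroed; the idempotence $z_i^2 = \Id$ is what keeps the symmetric expansion confined to $|\mathbf y| \leq \ell$, matching the oracle's weight budget exactly and thereby permitting a single oracle call.
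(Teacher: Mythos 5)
Your proposal is correct and follows essentially the same route as the paper's proof: symmetric expansion with $z_i = v_i P_i$, efficient pilot (Dicke-superposition) preparation, controlled Paulis and phases on one half of $\ket{\Phi^n}$, a coherent Bell measurement to expose $\ket{B^\intercal \mathbf y}$, a single decoding-oracle call to zero the pilot register, and the inverse Bell unitary followed by a partial trace. The only cosmetic difference is that you justify the final partial trace via the identity $\Tr_B[(M\otimes\Id)\ketbra{\Phi^n}{\Phi^n}(M^\dagger\otimes\Id)] = MM^\dagger/2^n$, whereas the paper writes $\ket{\Phi^n}$ in the eigenbasis of $H$; these are equivalent.
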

\begin{proof}
    Let $z_i = v_i P_i$ and note that $z_i^2 = 1$. By Theorem~\ref{thm:commuting_symmetric_polynomial_expansion}, \begin{align}
        \calP(H) & = \calP\left( \sum_{i=1}^m z_i \right) = \sum_{j=0}^{\ell} w_j \sum_{\mathbf{y} \in \F_2^m : |\mathbf{y}| = j} z_1^{y_1} \cdots z_m^{y_m} ,
        \label{eq:commuting_reduction_matching_expansion}
    \end{align}
    where the $w_j$ depend only on $\calP$. 
    Without loss of generality, assume that $\ell \in 2\Z$. 
    We initialize three registers $A, B, C$. 
    Register $A$ has $m$ qubits and holds the pilot state $\ket{\calR^\ell_{\text{comm}}(H)}_A$ from Lemma~\ref{lemma:commuting_efficient_resource_state}, while registers $B$ and $C$ each hold $n$ qubits and are initialized jointly as a maximally entangled state $\ket{\Phi^n}_{BC} = \sum_{i=0}^{2^n-1} \ket{i}_B \ket{i}_C$.
    \begin{align}
        \ket{\psi_1}_{ABC} = \ket{\calR^\ell_{\text{comm}}(H)}_A \otimes \ket{\Phi^n}_{BC} = \frac{1}{\calN} \sum_{j=0}^{m} w_j \sum_{\mathbf{y} \in \F_2^m : |\mathbf{y}| = j} \ket{\mathbf{y}}_A \otimes \ket{\Phi^n}_{BC} ,
    \end{align}
    where $\calN$ is given in Lemma~\ref{lemma:commuting_efficient_resource_state}.
    We next apply $\bigotimes_{i=1}^m Z^{(1 - v_i)/2}_i$ to register $A$, i.e., we apply a phase of $-1$ each time $v_i = -1$. This yields the state \begin{align}
        \ket{\psi_2}_{ABC} = \frac{1}{\calN} \sum_{j=0}^{\ell} w_j \sum_{\mathbf{y} \in \F_2^m : |\mathbf{y}| = j} \left( \prod_{i \in \supp(\mathbf{y})} v_i \right) \ket{\mathbf{y}}_A \otimes \ket{\Phi^n}_{BC} ,
    \end{align}
    where $\supp(\mathbf{y})$ is the support of $\mathbf{y}$, i.e. the set of indices $i$ such that $y_i = 1$.
    Now define $P_\mathbf{y} := \prod_{i \in \supp(\mathbf{y})} P_i$ and $(C^{(i)} P)_{A \to B}$ to be the controlled-$P$ operation where the $i$th qubit of register $A$ is the control, and all of register $B$ is the target.
    Apply the operation $\prod_{i=1}^m (C^{(i)}P_i)_{A \to B}$.
    This operation transforms our state into \begin{align}
    \ket{\psi_3}_{ABC} & = \frac{1}{\calN} \sum_{j=0}^{\ell} w_j \sum_{\mathbf{y} \in \F_2^m : |\mathbf{y}| = j}  \ket{\mathbf{y}}_A \otimes \left( \prod_{i \in \supp(\mathbf{y})} v_i \right)  (P_{\mathbf{y}} \otimes I)_{BC} \ket{\Phi^n}_{BC} . 
    \end{align}
    This state is {\em almost} the target state: if the state in register $A$ were not $\ket{\bf{y}}$ but some fixed state, say $\ket{0}$, $\ket{\psi_3}$ would be a tensor product over the state $\ket{0}_A$ and the state in registers $BC$, which equals
    \begin{align}
        \frac{1}{\calN} \sum_{j=0}^\ell w_j \sum_{\mathbf{y} \in \F_2^m : |\mathbf{y}| = j} \left( \prod_{i \in \supp(\mathbf{y})} v_i P_i \otimes I \right) \,  \ket{\Phi^n}_{BC} &=  \frac{1}{\calN} \sum_{j=0}^{\ell} w_j \sum_{\mathbf{y} \in \F_2^m : |\mathbf{y}| = j} z_1^{y_1} \cdots z_m^{y_m} \otimes I \, \ket{\Phi^n}_{BC}\\
        &\propto (\calP(H) \otimes I ) \, \ket{\Phi^n}_{BC}
    \end{align}
    where the equality follows from Eqn.~(\ref{eq:commuting_reduction_matching_expansion}).
    Therefore, the final step is to uncompute register $A$ coherently, which we carry out in two steps.
    First, in superposition, we apply a coherent Bell measurement, which maps $(P_{\mathbf{y}} \otimes I) \ket{\Phi^n}$ to $\sym(P_{\mathbf{y}})$ for all $\mathbf{y}$.
    The construction follows from the unique property of the maximally entangled state that every Pauli applied to $\ket{\Phi^n}$ maps it to a distinct orthonormal state; the details of the coherent Bell measurement are given in Appendix~\ref{app:bell_basis}.
    The coherent measurement produces the state \begin{align}
        \ket{\psi_4}_{ABC} = \frac{1}{\calN} \sum_{j=0}^{\ell} w_j \sum_{\mathbf{y} \in \F_2^m : |\mathbf{y}| = j}  \ket{\mathbf{y}}_A \otimes \left( \prod_{i \in \supp(\mathbf{y})} v_i \right) \ket{\sym(P_{\mathbf{y}})}_{BC} .
    \end{align}
    The map $\sym(\cdot)$ is an isomorphism between the group of $n$-qubit Paulis (modulo overall phase) under multiplication and the group of $\F_2^{2n}$ under addition mod 2, which we denote by $\oplus$. 
    Therefore, $\sym(P_{\mathbf{y}}) = \bigoplus_{i=1}^n y_i \sym(P_{i}) = B^\intercal \mathbf{y}$, where $B^\intercal$ is the check matrix of $\Symp(H)$ from Definition~\ref{def:symplectic_codes}. 
    Consequently, by applying $\calD_H^{(\ell)}$ on registers $A$ and $D$, we map register $A$ to $\ket{\mathbf{0}}$ and may discard it. 
    The remaining state is \begin{align}
        \ket{\psi_5}_{BC} = \frac{1}{\calN} \sum_{j=0}^\ell w_j \sum_{\mathbf{y} \in \F_2^m : |\mathbf{y}| = j} \left( \prod_{i \in \supp(\mathbf{y})} v_i \right) \ket{\sym(P_{\mathbf{y}})}_{BC} .
    \end{align}
    We now undo the coherent Bell measurement, giving the final state \begin{align}
        \ket{\psi_6}_{BC} = \frac{1}{\calN} \sum_{j=0}^\ell w_j \sum_{\mathbf{y} \in \F_2^m : |\mathbf{y}| = j} \left( \prod_{i \in \supp(\mathbf{y})} v_i \right)  (P_{\mathbf{y}} \otimes I)_{BC} \ket{\Phi^n}_{BC} = \frac{1}{\calN} (\calP(H) \otimes I) \ket{\Phi^n}_{BC} .
    \end{align}
    In the eigenbasis of $H$, the maximally entangled state has the representation \begin{equation}
        \ket{\Phi^n}_{BC} = \frac{1}{2^{n/2}}\sum_\lambda \ket{\lambda}_B\otimes \ket{\bar{\lambda}}_C 
    \end{equation} 
    where $\ket{\bar{\lambda}}$ is defined by $\bra{\mathbf{x}}\ket{\bar{\lambda}} = \overline{\bra{\mathbf{x}}\ket{\lambda}}$ and $\overline{z}$ denotes the complex conjugate of $z \in \C$.
    This is a consequence of the unique property of the maximally entangled state, namely that for any matrix $M$, $(M \otimes I) \ket{\Phi^n} = (I \otimes M^\intercal) \ket{\Phi^n}$.
    Hence, in the eigenbasis representation, \begin{equation}
        (\calP(H) \otimes \Id)  \ket{\Phi^n} = \frac{1}{2^{n/2}}\sum_{\lambda} \calP(\lambda) \ket{\lambda}_B\otimes \ket{\bar{\lambda}}_C .
    \end{equation}
    By tracing out register $C$, we obtain $\rho_{\calP}(H)$. 
    Up to the oracle call, each step is accomplished in $\poly(n)$ time.
\end{proof}

\begin{algorithm}[H]        
  \caption{Hamiltonian DQI for Commuting Hamiltonians}
  \label{alg:amplitude_transform}
  \begin{algorithmic}[1]    
  \Require{$H = \sum_{i=1}^m v_iP_i = \sum_i \lambda_i \ketbra{\lambda_i}{\lambda_i}$ (commuting Pauli Hamiltonian on $n$ qubits), $|\Phi^n\rangle$ (maximally entangled state of $2n$ qubits), $\calD^{(\ell)}_H$ (decoding oracle), $\calP$ (polynomial of degree $\ell$).}
    \Ensure{$\rho_{\calP}(H) \propto \calP^2(H)$ ($n$-qubit mixed state).}
      \State Decompose $\calP(\sum_{i=1}^m z_i)$ as a sum of reduced elementary symmetric polynomials in the variables $z_i = v_i P_i$ by applying Theorem~\ref{thm:commuting_symmetric_polynomial_expansion}, obtaining
      \begin{equation}
          \calP\left( \sum_{i=1}^m z_i \right) = \sum_{j=0}^\ell w_j \sum_{\mathbf{y} \in \Z_2^{m} \,:\, |\mathbf{y}| = j} z_1^{y_1} \cdots z_m^{y_m} .
      \end{equation}
      \State Prepare the $m$-qubit pilot state in register $A$, using Lemma~\ref{lemma:commuting_efficient_resource_state}.
      \begin{equation}
         \ket{\calR_{\text{comm}}^{\ell}(H)} \propto \sum_{j=0}^\ell w_j \sum_{\mathbf{y}\in \mathbb{F}_2^m:|\mathbf{y}|=j} \ket{\bf y}_A .
      \end{equation}
      Append this state to $\ket{\Phi}_{BC}$.
      \State Apply $\bigotimes_{i=1}^m Z_i^{\frac{1-v_i}{2}}$ on the $A$ register. 
      Apply $\prod_{i=1}^m (C^{(i)} P_i)_{A\rightarrow B}$, where each operation is a Pauli $P_i$ on register $B$, controlled by the $i$th qubit of register $A$.
      \State Apply the coherent Bell measurement to registers $BC$, mapping the Bell basis to its symplectic representation as specified in Theorem~\ref{thm:app-Bell_measurement}.
      \State Uncompute register $A$ by applying $\calD_{H}^{(\ell)}$ to registers $A$ and $BC$. 
      Undo Step 4 by applying the inverse coherent Bell measurement to registers $BC$.
      \State The reduced density matrix on $B$ now equals $\rho_{\calP}(H)$. Output register $B$.
  \end{algorithmic}
\end{algorithm}

\subsection{Robustness against decoding failures}
\label{sec:decoding_failures}
A natural question arising from Theorem~\ref{thm:commuting_reduction_to_decoding} is whether the reduction is robust against decoding imperfections. 
More precisely, if the decoding oracle $\calD^{(\ell)}_H$ can correctly decode all but a $\epsilon$ fraction of errors up to weight $\ell$, then the state produced $\rho_{\calP, \epsilon}(H)$ should be close to the desired state $\rho_{\calP, \epsilon}(H)$.
We next prove that this robustness indeed holds, at least when all errors of weight $\ell$ are information-theoretically correctable.

\begin{theorem}[Robustness of commuting reduction] \label{thm:robustness_commuting}
Let $H = \sum_{i=1}^m v_i P_i$ be a commuting Hamiltonian, where $v_i \in \set{\pm 1}$ and the $P_i$ are $n$-qubit Paulis. 
Let $\calP(x)$ be an arbitrary polynomial of degree at most $\ell$.
Suppose that the Hamiltonian symplectic code $B^\intercal = \Symp(H)$ has distance at least $2 \ell + 1$,
and that there is a decoding oracle $\calD^{(\ell, \epsilon)}_H$ which for each $j$ from $1$ to $\ell$ satisfies $\calD^{(\ell, \epsilon)}_H \ket{\mathbf{y}} \ket{B^\intercal \mathbf{y}} = \ket{0} \ket{B^\intercal \mathbf{y}}$ for at least a $1 - \epsilon$ fraction of all errors $\mathbf{y}$ with weight $j$.
On the remaining $\epsilon$ fraction, $\calD^{(\ell, \epsilon)}_H$ may behave arbitrarily so long as unitarity is maintained.
Then the state $\rho_{\calP, \epsilon}(H)$ produced by Algorithm~\ref{alg:amplitude_transform} which uses $\calD^{(\ell, \epsilon)}_H$ in place of a perfect decoding oracle $\calD^{(\ell)}_H$, satisfies \begin{align}
    \calF(\rho_{\calP, \epsilon}(H), \rho_{\calP}(H)) \geq 1 - \epsilon ,
\end{align}
where $\calF$ is the fidelity, and \begin{align}
    \frac{1}{2} \lVert \rho_{\calP, \epsilon}(H) - \rho_{\calP}(H) \rVert_1 \leq \sqrt{2 \epsilon} ,
\end{align}
where $\frac{1}{2} \lVert \cdot \rVert_1$ is the trace distance.
\end{theorem}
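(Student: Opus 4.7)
The plan is to run Algorithm~\ref{alg:amplitude_transform} with $\calD_H^{(\ell,\epsilon)}$ in place of $\calD_H^{(\ell)}$ and compare to the ideal execution: since every step other than the decoding oracle is a fixed unitary applied to the same initial state, the two executions produce the same pre-decoding state $\ket{\psi_4}_{ABC}$ from the proof of Theorem~\ref{thm:commuting_reduction_to_decoding}, namely
\begin{equation}
    \ket{\psi_4} = \sum_{\mathbf{y}:\,|\mathbf{y}|\leq \ell} \alpha_\mathbf{y}\, \ket{\mathbf{y}}_A\ket{\sym(P_\mathbf{y})}_{BC},
\end{equation}
in which $|\alpha_\mathbf{y}|^2$ depends only on the weight $|\mathbf{y}|$. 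The hypothesis $d^{\perp}(\Symp(H)) \geq 2\ell+1$ guarantees that the syndromes $\{B^\intercal \mathbf{y}\}_{|\mathbf{y}|\leq \ell}$ are pairwise distinct, so the above basis is orthonormal and the probability mass partitions cleanly over $\mathbf{y}$.

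First I would split the index set into the good errors $G$ (on which the two decoders agree) and the bad errors $B$. Because $|\alpha_\mathbf{y}|^2$ is constant within each weight class and the imperfect decoder fails on at most an $\epsilon$-fraction per weight, summing over $j$ yields the total bad-weight bound $p_{\text{bad}} := \sum_{\mathbf{y}\in B}|\alpha_\mathbf{y}|^2 \leq \epsilon$. The post-decoding states $\ket{\psi^{\text{ideal}}} = \calD_H^{(\ell)}\ket{\psi_4}$ and $\ket{\psi^{\text{actual}}} = \calD_H^{(\ell,\epsilon)}\ket{\psi_4}$ therefore share a common ``good'' component of squared norm $p_{\text{good}} = 1 - p_{\text{bad}} \geq 1-\epsilon$, while unitarity forces the bad components into the orthogonal complement of the good subspace $\operatorname{span}\{\ket{0,\sym(P_\mathbf{y})}\}_{\mathbf{y}\in G}$. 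In the natural decoder implementation where $\calD_H^{(\ell,\epsilon)}$ XORs a (possibly wrong) error estimate into register $A$, bad outputs end with $A\neq 0$ and the cross term vanishes, giving $\lvert\langle\psi^{\text{ideal}}|\psi^{\text{actual}}\rangle\rvert = p_{\text{good}} \geq 1-\epsilon$.

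The steps of Algorithm~\ref{alg:amplitude_transform} after the decoder (undoing the Bell measurement) are unitary and identical in both runs, so the pure-state overlap is preserved until register $A$ is discarded and $C$ is traced out. Monotonicity of fidelity under partial trace (Uhlmann's theorem) then yields
\begin{equation}
    \calF(\rho_{\calP,\epsilon}(H),\rho_\calP(H)) \;\geq\; \lvert\langle\psi^{\text{ideal}}|\psi^{\text{actual}}\rangle\rvert \;\geq\; 1-\epsilon,
\end{equation}
and the trace-distance bound is immediate from the Fuchs--van de Graaf inequality $\tfrac{1}{2}\|\rho-\sigma\|_1 \leq \sqrt{1-\calF^2}\leq\sqrt{2\epsilon-\epsilon^2}\leq\sqrt{2\epsilon}$. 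The main obstacle is handling a fully adversarial unitary $\calD_H^{(\ell,\epsilon)}$ whose bad outputs could in principle place amplitude on the $A=\ket{0}$ subspace; controlling this cross term cleanly requires combining the syndrome-distinctness from the distance hypothesis with the unitarity constraint $\langle \calD_H^{(\ell,\epsilon)}(\mathbf{y},B^\intercal\mathbf{y}) \mid 0, B^\intercal\mathbf{y}'\rangle = 0$ for all $\mathbf{y}\in B$, $\mathbf{y}'\in G$.
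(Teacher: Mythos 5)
Your proposal is correct and follows essentially the same route as the paper: both arguments reduce to lower-bounding the overlap of the ideal and imperfect post-decoding purifications, using the distance hypothesis $d^{\perp}\geq 2\ell+1$ to make the $(P_{\mathbf{y}}\otimes I)\ket{\Phi^n}$ branches orthogonal, the per-weight failure fraction to bound the bad mass by $\epsilon$, and Uhlmann plus Fuchs--van de Graaf to finish. The caveat you flag about a fully adversarial unitary decoder placing bad amplitude back on the $A=\ket{0}$ subspace is real but is handled identically (i.e., implicitly, by modeling the decoder as a reversible classical map whose bad outputs are nonzero basis states of register $A$) in the paper's own proof.
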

\begin{proof}
    By Uhlmann's theorem, \begin{align}
        \calF(\rho_{\calP, \epsilon}(H), \rho_{\calP}(H)) \geq \langle \psi_{\calP, \epsilon}(H) | \psi_{\calP}(H) \rangle ,
    \end{align}
    where \begin{align}
        \ket{\psi_{\calP}(H)} & = \frac{1}{\calN} \sum_{j=0}^\ell w_j \sum_{\mathbf{y} \in \F_2^m : |\mathbf{y}| = j} \ket{0^n} \otimes  (v_{\mathbf{y}} P_{\mathbf{y}} \otimes I) \ket{\Phi^n} , \\
        \ket{\psi_{\calP, \epsilon}(H)} & = \frac{1}{\calN} \sum_{j=0}^\ell w_j \sum_{\mathbf{y} \in \F_2^m : |\mathbf{y}| = j} \ket{\mathbf{z}(\mathbf{y})} \otimes  (v_{\mathbf{y}} P_{\mathbf{y}} \otimes I) \ket{\Phi^n} 
    \end{align} 
    are purifications of the states in question.
    Here $\mathbf{z}(\mathbf{y}) = 0$ for a $(1-\epsilon)$ fraction each of the sets $\set{\mathbf{y} \in \F_2^m \,:\, |\mathbf{y}| = j}$ (for $j$ from $1$ to $\ell$), and otherwise can be an arbitrary bitstring in $\F_2^m$.
    We have defined $v_{\mathbf{y}} = \prod_{i \in \supp(\mathbf{y})} v_i \in \set{\pm 1}$ in the same manner as $P_{\mathbf{y}} = \prod_{i \in \supp(\mathbf{y})} P_i$.
    \begin{align}
         \langle \psi_{\calP, \epsilon}(H) | \psi_{\calP}(H) \rangle & = \frac{1}{\calN^2} \sum_{j, k = 0}^{\ell} w_j w_k \sum_{\substack{|\mathbf{y}| = j \\|\mathbf{x}| = k}} \langle 0^n | \mathbf{z}(\mathbf{y}) \rangle v_{\mathbf{x} \oplus \mathbf{y}} \langle \Phi^n | P_{\mathbf{x}} P_{\mathbf{y}} \otimes I | \Phi^n \rangle \\
         & = \frac{1}{\calN^2} \sum_{j, k = 0}^{\ell} w_j w_k \sum_{\substack{|\mathbf{y}| = j \\|\mathbf{x}| = k}} \langle 0^n | \mathbf{z}(\mathbf{y}) \rangle v_{\mathbf{x} \oplus \mathbf{y}} \delta_{\mathbf{x}, \mathbf{y}} \\
         & = \frac{1}{\calN^2} \sum_{j = 0}^{\ell} w_j^2 \sum_{|\mathbf{y}| = j} \langle 0^n | \mathbf{z}(\mathbf{y}) \rangle \\
         & = \frac{1}{\calN^2} \sum_{j = 0}^{\ell} w_j^2 (1-\epsilon) \binom{m}{j} \\
         & = 1 - \epsilon .
    \end{align}
    In the second line, we used the fact that $P_{\mathbf{x}} P_{\mathbf{y}} = I$ if and only if $\mathbf{x} = \mathbf{y}$, since the weights are each below half the distance of $\Symp(H)$.
    Then, unless $P_{\mathbf{x}} P_{\mathbf{y}} = I$, the inner product $\langle \Phi^n | P_{\mathbf{x}} P_{\mathbf{y}} \otimes I | \Phi^n \rangle$ vanishes due to the orthonormality of the Bell basis (see Section~\ref{app:bell_basis}).
    The normalization is given by $\calN^2 = \sum_{j=0}^\ell w_j^2 \binom{m}{j}$, as shown in Lemma~\ref{lemma:commuting_efficient_resource_state}. 
    Application of this fact yields the final line.
    By the Fuchs-van de Graaf inequality~\cite{fuchs2002cryptographic}, \begin{align}
        \frac{1}{2} \lVert \rho_{\calP, \epsilon}(H) - \rho_{\calP}(H) \rVert_1 \leq \sqrt{1 - \calF^2(\rho_{\calP, \epsilon}(H), \rho_{\calP}(H))} = \sqrt{1 - (1 - \epsilon)^2} \leq \sqrt{2 \epsilon} .
    \end{align}
\end{proof}

We remark that since Decoded Quantum Interferometry (DQI) is a special case of HDQI, a natural question is the extent to which HDQI strictly generalizes DQI.
In Appendix~\ref{sec:app-cDQI_simulation}, we show that at the level of eigenvalue sampling, the reduction in Theorem~\ref{thm:commuting_reduction_to_decoding} can be simulated by calls to an oracle implementing DQI, efficient classical computation, and Clifford operations.
This simulation holds only if $H$ is a commuting Hamiltonian, and even then does not full recover the scope of HDQI. 
It does not assert, for example, that DQI directly
enables estimation of general non-Clifford observables on the mixed state $\rho_{\mathcal{P}}(H)$. 
It also does not recover the preparation of coherent states like thermofield doubles.

\section{Applications of HDQI to commuting Hamiltonians} \label{sec:commuting-applications}
As first applications of HDQI, we instantiate the reduction of Theorem~\ref{thm:commuting_reduction_to_decoding} to certain commuting Hamiltonians whose symplectic codes $\Symp(H)$ have high distance and are efficiently decodable.
There are two conceivable approaches to search for target Hamiltonians that could be amenable to HDQI. First, we could construct a Hamiltonian directly (e.g. by studying physical Hamiltonians) and show that its symplectic code is efficiently decodable to high distance.
Alternatively, we could start with a classical code (corresponding to a classical $Z$-type Hamiltonian) with known high distance, and apply a transformation to it which preserves both distance and symplectic inner product in hopes of obtaining an interesting Hamiltonian.
To decode, we could reverse this transformation and apply the known decoder for the classical code.
The latter strategy does not yield any interesting quantum Hamiltonians, in the sense that any Hamiltonian produced by this approach is related to the original classical Hamiltonian by a simple change of basis. 
We prove this claim more formally in Appendix~\ref{app:symplectic_extension}.
As a consequence, we focus exclusively on the former approach.
We specifically consider two models: an explicit family which we call \emph{nearly independent Hamiltonians} that encompasses many physical Hamiltonians, and a random local commuting Hamiltonian, which may be a suitable model in the study of quantum spin glasses.

\subsection{Nearly independent Hamiltonians} \label{sec:HDQIperfectlyDecodable}
Many interesting Hamiltonians correspond to symplectic codes that are easy to decode. We now show that for a certain family of physically-motivated Hamiltonians (examples of which are given in Section~\ref{tab:constant-relations} and Section~\ref{sec:examples_constant_relations}), HDQI efficiently prepares the Gibbs state at \emph{any} inverse temperature $0 \leq \beta \leq \infty$. In fact, HDQI can prepare arbitrary (normalized) positive functions of the Hamiltonian, which also includes any exact microcanonical ensemble, band‑pass filter, or sector‑projected ensemble. We call such Hamiltonians \emph{nearly independent}.
\THMdecodable*
This task has been the focus of prior work using quantum algorithms (see Section~\ref{sec:examples_constant_relations}) and in some cases classical algorithms were known.
Surprisingly, we show in Section~\ref{sec:perfect_classical} that an even more general task---namely with $k = O(\log n)$---can be achieved with a purely classical algorithm.

For concreteness, we first discuss HDQI specifically on the Hamiltonian of a 2D toric code, which is nearly independent. 
We also give an array of further examples of nearly independent Hamiltonians.
We then conclude by proving the above theorem in full generality.
In what follows, we apply the following lemma, which is a direct consequence of the fact that degree-$\ell$ polynomials are uniquely specified by $\ell+1$ points.
\begin{lemma}[Exact polynomial interpolation] \label{lem:interpolation}
Let $H = \sum_{i=1}^m v_i P_i$ be a commuting Pauli Hamiltonian with $v_i \in \{\pm1\}$. For any non-negative function $f: \mathbb{R} \rightarrow \mathbb{R}_{\geq 0}$  there exists a real polynomial $\calP$ with $\operatorname{deg} \calP \leq m$ such that $\calP^2(H)=f(H)$. 
\end{lemma}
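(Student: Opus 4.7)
The plan is to use finite-spectrum structure plus polynomial interpolation. First I would observe that because $H=\sum_{i=1}^m v_i P_i$ is a sum of $m$ commuting Pauli terms with coefficients $v_i\in\{\pm 1\}$, and each $v_iP_i$ is a Hermitian involution with spectrum $\{\pm 1\}$, the eigenvalues of $H$ are integers supported in $\{-m,-m+2,\ldots,m-2,m\}$. In particular, the number of \emph{distinct} eigenvalues of $H$ is at most $m+1$. Call this set of distinct eigenvalues $\Spec(H)=\{\lambda_0,\lambda_1,\ldots,\lambda_s\}$ with $s\leq m$.

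Next, since $f:\mathbb{R}\to\mathbb{R}_{\geq 0}$, the values $\sqrt{f(\lambda_j)}$ are well-defined non-negative real numbers for each $\lambda_j\in\Spec(H)$. By standard Lagrange interpolation, there exists a unique real polynomial $\calP$ of degree at most $s\leq m$ such that
\begin{equation}
    \calP(\lambda_j) = \sqrt{f(\lambda_j)} \qquad \text{for all } j=0,1,\ldots,s.
\end{equation}
Explicitly, one can write $\calP(x)=\sum_{j=0}^{s}\sqrt{f(\lambda_j)}\prod_{k\neq j}\frac{x-\lambda_k}{\lambda_j-\lambda_k}$.

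Finally, diagonalize $H=\sum_{\lambda}\lambda\,\Pi_\lambda$, where $\Pi_\lambda$ is the spectral projector onto the $\lambda$-eigenspace. For any polynomial $Q$, functional calculus gives $Q(H)=\sum_\lambda Q(\lambda)\Pi_\lambda$. Applying this to $\calP^2$ and using $\calP(\lambda_j)^2=f(\lambda_j)$ at each distinct eigenvalue yields
\begin{equation}
    \calP^2(H) = \sum_{\lambda\in\Spec(H)}\calP(\lambda)^2\,\Pi_\lambda = \sum_{\lambda\in\Spec(H)}f(\lambda)\,\Pi_\lambda = f(H),
\end{equation}
as required. There is no real obstacle here; the statement is essentially a combinatorial observation (the spectrum of $H$ has size at most $m+1$) combined with the elementary fact that $\sqrt{f}$ is a well-defined real-valued function on $\Spec(H)$ whenever $f\geq 0$. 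The only subtlety worth flagging is that this construction is purely existential: the interpolating $\calP$ generally has no bounded-degree sparse representation in elementary symmetric polynomials, so Lemma~\ref{lem:interpolation} is the input to the efficiency analysis of Theorem~\ref{thm:arbitrary_temperature} rather than the efficiency statement itself.
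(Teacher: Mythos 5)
Your proof is correct and follows essentially the same route as the paper's: bound the number of distinct eigenvalues of $H$ by $m+1$ (integers of fixed parity in $[-m,m]$), interpolate $\sqrt{f}$ on the spectrum with a degree-$\leq m$ polynomial, and conclude via functional calculus. Your version just spells out the Lagrange formula and the spectral-projector step more explicitly.
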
 \begin{proof}
    $H$ has at most $m+1$ distinct eigenvalues $\{\lambda_i \}_{i = 0}^m$ because every eigenvalue is necessarily an integer between $-m$ and $m$ with the same parity as $m$.
    Hence, any function $f(H)$ can be exactly interpolated by a degree-$m$ polynomial.
    If moreover $f$ is non-negative, then it has a unique point-wise square root $\sqrt{f}$. Choosing $\calP$ to be the exact interpolation of $\sqrt{f}$ yields the statement.
\end{proof}

\subsubsection{Case study: 2D toric code}
Consider the 2D toric code Hamiltonian
\begin{equation}
    H_{\mathrm{TC}} =- \sum_v A_v- \sum_p B_p
\end{equation}
on a $L \times L$ periodic lattice with vertex and plaquette stabilizers \begin{equation}
A_v=\prod_{i \in v} X_i, \qquad B_p=\prod_{i \in p} Z_i.
\end{equation}
The edges of the lattice correspond to qubits and there are $n = |E| = 2L^2$ of them. There are $L^2$ vertices and equally many plaquettes, hence a total number of $m = 2L^2$ terms in the Hamiltonian $H_\mathrm{TC}$. Only $2L^2 - 2$ of them are independent, which is why the toric code supports 2 logical qubits. 

The fact that $H_{\text{TC}}$ corresponds to a quantum error-correcting code bears no relevance here.
Instead focus on its corresponding symplectic code $\Symp(H_{\text{TC}})$. 
As $H_{\text{TC}}$ is a commuting Hamiltonian, HDQI for $H_{\text{TC}}$ will be efficient if $\Symp(H_{\text{TC}})$ is efficiently decodable up to the desired number of errors. $\Symp(H_{\text{TC}})$ is specified by a $2n \times m$ parity check matrix in $\F_2$ whose columns each have weight 4. 
The dimension of the symplectic code is the number of dependencies among toric-code stabilizers. 
There are two such dependencies: the product of all vertex stabilizers is $1$, as is the product of all plaquette stabilizers.
\begin{equation}
\label{eq:TC_relations}
    \prod_{v\in V} A_v = \prod_{p \in F} B_p = \Id. 
\end{equation} 
There are therefore two corresponding logical bits.
The same dependencies show that the distance of $\Symp(H_{\text{TC}})$ is $L^2 = n/2 = m/2$. 

By Lemma~\ref{lem:interpolation}, to prepare arbitrary positive functions of the toric code Hamiltonian, it thus suffices to prepare arbitrary degree $2L^2$ polynomials. 
However, the distance is only $L^2$, and thus it is information-theoretically impossible to correct more than all errors with up to any weight beyond $L^2/2$. 
We circumvent this problem by manually removing the linear dependencies, which significantly boosts the distance, and then adding the dependencies back in a different way such that the distance of the symplectic code does not degrade.

To build intuition for the technique of dependency removal, consider the \emph{defected} 2D toric code where the stabilizers have been removed for a single arbitrary fixed vertex $v_0$ and plaquette $p_0$. 
The corresponding Hamiltonian 
\begin{equation}
    H_{\mathrm{TC}}' =- \sum_{v \not = v_0} A_v- \sum_{p \not = p_0} B_p
\end{equation} 
now has no dependencies between the individual stabilizer terms. 
In other words, the symplectic code $\Symp(H')$ (with block length $m-2 = 2(L^2-1)$) has no non-zero codewords. Thus any syndrome corresponds to a unique error, and \emph{any} error on this code can be efficiently decoded via Gaussian elimination. The number of correctable errors is thus $\ell = 2(L^2-1)$, which implies via Lemma~\ref{lem:interpolation} that HDQI efficiently prepares any non-negative function of $H'_{TC}$. 

We now show how to modify HDQI so as to prepare arbitrary functions even for the 2D toric code.
To prepare arbitrary non-negative functions of $H_{\text{TC}}$, we decode errors not on $\Symp(H_{\text{TC}})$, but on a slightly smaller code that is \emph{completely} independent.
More precisely, we use the relations in Eqn.~(\ref{eq:TC_relations}) to substitute the variables $A_{v_0}$ and $B_{v_0}$ in the Hamiltonian $H_{\text{TC}}$ with \begin{equation}
    A_{v_0} \mapsto \prod_{v \not = v_0} A_v, \quad B_{p_0} \mapsto \prod_{p \not = p_0} B_p.
\end{equation} 
This does not change the Hamiltonian, but it is now written as a function of $m-2$ independent stabilizers. 
Any polynomial $\calP(H_{\text{TC}})$, which we interpret as a multilinear function in the $m$ variables $A_{v_0}, \dots , A_{v_{L^2-1}}, B_{p_0}, \dots , B_{p_{L^2-1}}$ can be re-expressed as a different multilinear function $\calP'$ in the $m-2$ variables $A_{v_1}, \dots , A_{v_{L^2-1}}, B_{p_1}, \dots , B_{p_{L^2-1}}$. 
The primary issue is that $\calP'$ no longer admits a simple interpretation as a univariate polynomial acting on a sum of variables, which upends the usage of the symmetric polynomial expansion from Theorem~\ref{thm:commuting_symmetric_polynomial_expansion}.
If this can be dealt with by modifying the exact pilot state prepared by HDQI, then we would be able to decode using the symplectic code $\Symp(H'_{\text{TC}})$ instead of $\Symp(H_{\text{TC}})$.
Since $\Symp(H'_{\text{TC}})$ has maximal distance and is therefore decodable by Gaussian elimination, we can prepare arbitrary degree-$(m-2)$ polynomials in $A_{v_1}, \dots , A_{v_{L^2-1}}, B_{p_1}, \dots , B_{p_{L^2-1}}$, which also includes any degree $m$ polynomial in the original variables $A_{v_0}, \dots , A_{v_{L^2-1}}, B_{p_0}, \dots , B_{p_{L^2-1}}$. By Lemma~\ref{lem:interpolation}, this is enough to prepare arbitrary non-negative functions of $H_{\text{TC}}$.

\subsubsection{Examples of nearly independent Hamiltonians}
\label{sec:examples_constant_relations}
The above strategy can be generalized to any Hamiltonian whose symplectic code has constant dimension, resulting in Theorem~\ref{thm:arbitrary_temperature}. 
Before we provide the full proof in Section~\ref{sec:proof_of_arbitrary_temperature}, we list notable examples of such Hamiltonians. A summary is also given in Table~\ref{tab:constant-relations}.
We emphasize that many of the Hamiltonians which we describe below correspond to widely studied quantum stabilizer codes.
However, their candidacy as nearly independent Hamiltonians depends on the logical dimension of their \emph{symplectic codes}, which has no direct relation to their logical dimension as quantum stabilizer codes.
Instead, the number of logical bits in $\Symp(H)$ is instead equal to the number of \emph{dependent stabilizers} in the description of the quantum code.

\paragraph{1D Ising model (repetition code) on a ring.}
On $L$ qubits with periodic boundary conditions,
$H=-\sum_{i=0}^{L-1} Z_i Z_{i+1}$.
There is a single relation
$\prod_{i=0}^{L-1} (Z_i Z_{i+1}) = \Id$, and no other product of a nonempty
proper subset equals $\Id$. Thus the symplectic code of this model has dimension $k=1$.
(With open boundary conditions there is no relation, i.e., $k=0$, so the Hamiltonian is completely independent and our theorem also applies.)


\paragraph{2D surface/toric codes on arbitrary closed cellulations.}
Place qubits on edges $e$ of a closed 2D cell complex $\Sigma$, with
vertex checks $A_v = \prod_{e\ni v} Z_e$ and
plaquette checks $B_p = \prod_{e\in\partial p} X_e$~~\cite{kitaev2003anyons}.
On any closed, connected $\Sigma$ there are exactly two relations,
$\prod_v A_v=\Id$ and $\prod_f B_f=\Id$, and no others.
Thus $k=2$, independent of lattice size or curvature,
including Euclidean tilings (which are translation invariant (TI)) and hyperbolic tilings (non-TI).
The same conclusion holds for prime-dimensional qudits
(and, more generally, 2D Kitaev Abelian $D(\mathbb{Z}_q)$ quantum double models with Pauli stabilizers):
on closed $\Sigma$ there are two global relations, $k=2$.

\paragraph{Homological CSS codes from two–term chain complexes.}
More generally, for a cellular
two–term chain complex
\begin{align}
    C_2 \xrightarrow{\;\partial_2\;} C_1 \xrightarrow{\;\partial_1\;} C_0,
    \qquad \partial_1\partial_2=0,
\end{align}
with qubits on $C_1$ (edges), $X$–checks on $C_2$ (faces), and $Z$–checks on $C_0$
(vertices), the number of independent
multiplicative relations among the checks is
\begin{align}
    k \;=\; \dim H_0(\,\cdot\,;\F_2) \;+\; \dim H_2(\,\cdot\,;\F_2),
\end{align}
i.e., the sum of the $\F_2$–Betti numbers in degrees $0$ and $2$ (for manifolds
with boundary, the corresponding relative groups apply). In particular, for any
\emph{closed connected} surface, $k=2$ (one global product among all vertex
checks; one among all face checks), independently of the tessellation; with
boundaries, $k$ remains $O(1)$ (often $k=0$ on a simply connected planar patch).
Thus this homological formulation provides a compact and very general topological
criterion for constant $r$ that covers arbitrary (even non‑TI) tessellations and
all standard boundary types (toric, cylindrical, Möbius/Klein, planar, etc.).

\paragraph{2D color codes (stabilizer version).}
On a 3-valent, 3-colorable tiling of a closed surface, with qubits on
vertices and two checks per face $X_f=\prod_{v\in\partial f}X_v$,
$Z_f=\prod_{v\in\partial f}Z_v$ (see, e.g.,~\cite{bombin2007optimal}),
each Pauli sector has the three ``two-color'' products equal to $\Id$;
exactly two of them are independent. Hence $k=4$ in total (two in
the $X$-sector and two in the $Z$-sector), again independent of size
or the choice of 3-colorable tiling. 
With boundaries, the number of relations remains $O(1)$ and is
dictated by the boundary pattern (relative homology), typically taking values in a small set (e.g., $k\in\{0,1,2\}$ per sector for standard color‑code patches).

\paragraph{General translation-invariant commuting-Pauli models on closed 2D surfaces.}
By the 2D classification of TI Pauli stabilizer
Hamiltonians (up to local Clifford circuits and addition of ancillas),
any such model is equivalent to a finite stack of $K$ decoupled toric
codes~\cite{Haah2013Modules,haah2011local,bombin2014structure}. 
Consequently,
$k=2K$ is a constant determined by the unit cell, independent of the
system size. This umbrella covers rotated toric codes, Wen's plaquette
model (equivalent to toric code up to local Cliffords), and many other TI variants. 
More generally, any constant number $K$ of decoupled 2D layers (any geometry) has $k=2K$, constant in system size.

\paragraph{Cluster-state Hamiltonians on an arbitrary graph.}
On an arbitrary graph with vertices $V$ and neighbor function $N(v)$, the cluster stabilizers
$K_v=X_v\prod_{u\in N(v)} Z_u$ are always independent, so $k=0$.
This includes any type of $d$-dimensional lattice (with any choice of boundary conditions) and any tesselation of a closed surface.

\paragraph{Haah's cubic code (``Code 1'').}
Haah's code is a three-dimensional stabilizer code without string-like logical operators, and one of the most famous examples of a fracton stabilizer code. 
We focus on the usual CSS ``Code 1" on a periodic $L \times L \times L$ lattice. 
This 3D CSS stabilizer model~\cite{haah2011local} has no \emph{local} stabilizer constraints. 
The ground state degeneracy of Haah's code, i.e. the number of logical \emph{qubits} that the stabilizer code admits, is intricate: unless $L$ is a multiple of $4p - 1$ for $p \geq 2$, this ground state degeneracy is 4 for odd $L$ \cite{haah2013lattice}, in which case this quantum code supports two logical qubits. 
We focus on these models as they are the most relevant for quantum error correction and have been initially proposed as candidates for self-correcting quantum memory.
In particular, the criterion of being nearly independent holds for those $L$ where the number of global relations is constant (i.e. odd $L$ which are not multiples of $4p-1$) \cite{haah2013lattice}.

\paragraph{Non-local deterministic CSS code.}
Fix even $n$ and a bipartition $[n] = A \cup B$ with $|A| = |B| = n/2$.
Define $S_i^{(X)} = \bigotimes_{\ell \neq i} X_\ell$ for $i \in A$ and
$S_j^{(Z)} = \bigotimes_{\ell \neq j} Z_\ell$ for $j \in B$.
Then $[S_i^{(X)},\, S_j^{(Z)}] = 0$ because
$|\supp S_i^{(X)} \cap \supp S_j^{(Z)}| = n - 2$ is even.
The sets $\{S_i^{(X)}\}_{i \in A}$ and $\{S_j^{(Z)}\}_{j \in B}$
are linearly independent over $\mathbb{F}_2$; hence $k = 0$. Every term has weight $n - 1 = \Theta(n)$, so the Hamiltonian is highly non-local.

\paragraph{Non-local random CSS code.}
Let $\alpha\in(0,1)$ be a constant.
Choose according to any desired distribution $d_X=\lfloor \alpha n\rfloor$ high‑weight vectors $x_1,\dots,x_{d_X}\in\F_2^n$ and let $J := \mathrm{span}\{x_1, \dots, x_{d_X}\}$. 
Pick $d_Z=n-d_X-c$ vectors $z_1,\dots,z_{d_Z}\in J^\perp$ according to any desired distribution, optionally adding $c\in\{0,1,2\}$ sector‑wise redundancies by setting one $x$ (resp.\ $z$) to the sum of the others. 
Define $S^{(X)}_i=X^{x_i}$ and $S^{(Z)}_j=Z^{z_j}$. By construction $x_i\cdot z_j=0$, so all X/Z checks commute.
Moreover, with high probability the only linear dependencies are the intentionally added optional ones, giving a \emph{constant} number of relations $k = O(1)$ and minimal dependency relation size $\Theta(n)$ in each affected sector.

\paragraph{Models that \emph{do not} qualify.}
The following well-known commuting-Pauli Hamiltonians fail the constant-relations property because they possess extensive families of \emph{local} multiplicative constraints.
Therefore, they are not nearly independent Hamiltonians.
\begin{itemize}[leftmargin=*]
  \item \textbf{3D and 4D toric code and related $\Z_2$ lattice gauge theories}:
    For the 3D toric code, the product of the six plaquette checks around every cube equals $\Id$,
    yielding $k = \Theta(L^3)$ independent relations. Similarly, for the 4D toric code the product of the 8 parity checks corresponding to the edges attached to a single vertex equals $\Id$.
  \item \textbf{Type-I fracton models} such as the $X$-cube and checkerboard codes: numerous cube- and vertex-local identities imply a constant fraction of dependent checks.
  \item \textbf{Dense graph-Ising models in $d\ge 2$} (e.g., $ZZ$ on
    all edges of a 2D torus): checks along every edge of a closed loop (e.g. every square in 2D) give a dependent relation, so the number of dependencies scales with the system size for any dimension.
\end{itemize}

\subsubsection{HDQI on nearly independent Hamiltonians}
\label{sec:proof_of_arbitrary_temperature}
We now prove Theorem~\ref{thm:arbitrary_temperature}. 
As discussed in the case study of the 2D toric code, our proof strategy is to eliminate the constant number of codewords in $C = \Symp(H)$ to obtain a new code $C'$ with block length $m-k$, for which any error up to the maximal possible weight $m-k$ can be efficiently decoded. 
We then express any polynomial of $H$ in these $m-k$ new variables, and prove that it is still block-symmetric within a constant number of blocks.
Therefore, instead of expanding into a single sum of symmetric polynomials, we will expand independently within each block.
Operationally, this affects HDQI in that the pilot state instead must become \emph{the product of a constant number of Dicke states}, rather than a single Dicke state. 
Our proof makes use of the following theorem about symmetric expansions of polynomials, of which we give a self-contained proof in Appendix~\ref{app:block_expansion}.
Here, the degree-$j$ elementary symmetric polynomial in variables $x_1, \dots, x_s$ is given by $e_{j}(x_1, \dots, x_s) = \sum_{1 \leq b_1 <  \dots <  b_{j} \leq s} x_{b_1} \cdots x_{b_{j}} = \sum_{\mathbf{y} \in \F_2^s : |\mathbf{y}| = j} x_1^{y_1} \cdots x_s^{y_s}$.

\begin{restatable}[Blockwise symmetric expansion for constant $k$]{theorem}{blockexp}\label{thm:blockwise_decomposition_theorem}
Let $\calP$ be a univariate polynomial of degree $m$ and $z_i$ formal variables satisfying $z_i^2 = 1$.
Moreover, assume that there are exactly $k$ independent relations among the $z_i$, i.e. $k$ independent subsets of the $z_i$ multiply to the identity.
Assume that the first $d \leq n$ variables are independent.
Let $S := \sum_{i=1}^m z_i$.
There exists a partition of these $d$ variables into $r$ blocks
$V_1,\dots,V_r$ with $r\le 2^k$ such that:
\begin{enumerate}
\item[(i)] $S$ can be written in the form
\begin{equation}\label{eq:S-decomposition}
    S\ =\ \sum_{t=1}^r T_t\ +\ \sum_{j=1}^k p_{U_j},
    \qquad
    T_t:=\sum_{i\in V_t} z_i,\quad
    p_{U_j}:=\prod_{i\in U_j} z_i,
\end{equation}
where each $U_j$ is a union of full blocks $V_t$.
\item[(ii)] For $0\le a_t\le d_t:=|V_t|$, let $e_{a_t}(V_t)$ denote the degree-$a_t$
elementary symmetric polynomial in the variables of $V_t$, and set
\begin{align}
    E_{\boldsymbol a}\ :=\ \prod_{t=1}^r e_{a_t}(V_t),\qquad
\boldsymbol a=(a_1,\dots,a_r).
\end{align}
Then $\calP(S)$ admits a unique expansion
\begin{equation}\label{eq:block-expansion}
    \calP(S)\ =\ \sum_{\mathbf{a} \,:\, 0\le a_t\le d_t}\ \gamma_{\boldsymbol a}\,E_{\boldsymbol a},
\end{equation}
i.e.\ it belongs to the subspace of polynomials that are separately symmetric within each block.
\item[(iii)] If $k$ is a fixed constant, the coefficients
$\{\gamma_{\boldsymbol a}\}$ can be computed using $O\!\big((r+k)D\,\ell\big)$ arithmetic
operations and $O(D)$ memory, where
\begin{align}
    D\ :=\ \prod_{t=1}^r (d_t+1)
\ \ \leq\ \ \Big(\tfrac{d}{r}+1\Big)^{\!r}
\ \ =\ d^{O(1)} = n^{O(1)}
\end{align}
since $r\le 2^k=O(1)$.
Consequently, for $k = O(1)$ the expansion in Eqn.~(\ref{eq:block-expansion}) is computable in time $O(\poly(n, \ell))$.
\end{enumerate}
\end{restatable}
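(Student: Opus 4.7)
I would prove the three parts in sequence.

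For (i), assuming $d = m - k$ so that $z_1, \dots, z_d$ form a basis of the $m$ variables modulo the relations, each dependent variable $z_{d+j}$ for $j \in [k]$ factors uniquely as $z_{d+j} = \prod_{i \in I_{d+j}} z_i$ with $I_{d+j} \subseteq [d]$. I define the \emph{signature} map $\sigma : [d] \to \mathbb{F}_2^k$ by $\sigma(i)_j := [i \in I_{d+j}]$, and let $V_1, \dots, V_r$ be the fibers of $\sigma$. Since $\sigma$ takes at most $2^k$ values, $r \leq 2^k$. By construction, $I_{d+j}$ consists of exactly those indices whose signature has a $1$ in coordinate $j$, so $U_j := I_{d+j}$ is a union of complete blocks. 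Putting $T_t := \sum_{i \in V_t} z_i$ and $p_{U_j} := \prod_{i \in U_j} z_i = z_{d+j}$ yields the decomposition.

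For (ii), I would observe that $S$ is invariant under the action of the group $W := \prod_t \mathrm{Sym}(V_t)$ on the independent variables: $T_t$ is symmetric within $V_t$, and each $p_{U_j}$ is a product over complete blocks. Hence $\calP(S) \in R^W$, where $R := \mathbb{R}[z_1,\dots,z_d] / (z_i^2 - 1)$. The ring $R$ has a basis of squarefree monomials $\{z^I : I \subseteq [d]\}$, which splits into $W$-orbits indexed by tuples $\mathbf{a} = (a_1, \dots, a_r)$ with $0 \leq a_t \leq d_t$. The corresponding orbit sums are exactly $E_{\mathbf{a}} = \prod_t e_{a_t}(V_t)$. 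Since orbit sums form a basis of the invariants of a permutation representation over characteristic zero, $\{E_{\mathbf{a}}\}$ is a basis of $R^W$ of cardinality $D$; uniqueness of the expansion in (ii) follows immediately.

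For (iii), I would compute the $\gamma_{\mathbf{a}}$ by evaluating both sides on the grid $\prod_t \{0, \dots, d_t\}$ of size $D$, indexed by the per-block negative-count tuples $(n_1^-, \dots, n_r^-)$. Since both sides are $W$-invariant, these evaluations fully determine them. At each grid point, one can compute $S = d - 2 \sum_t n_t^- + \sum_j (-1)^{\sum_{t : V_t \subseteq U_j} n_t^-}$ in $O(r + k)$ time and $\calP(S)$ via Horner's rule in $O(\ell)$ time. The change-of-basis matrix whose entries are $E_{\mathbf{a}}$ values factors as a Kronecker product $\bigotimes_t M_t$, where each $M_t$ is the invertible $(d_t + 1) \times (d_t + 1)$ Krawtchouk-type matrix with entries $[x^a](1 - x)^{n^-}(1 + x)^{d_t - n^-}$. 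Blockwise inversion of this tensor product recovers $\gamma$ in $O((r + k) D)$ additional operations, matching the claimed overall bound of $O((r + k) D \ell)$.

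\emph{Main obstacle.} The most delicate step is the improvement from a naive $O(D^3)$ linear-algebraic solve in (iii) down to $O((r+k) D \ell)$. This relies on the Kronecker factorization of the change-of-basis matrix, which in turn rests on the reflection identity $u_t \cdot e_{a_t}(V_t) = e_{d_t - a_t}(V_t)$ for the block product $u_t := \prod_{i \in V_t} z_i$. This identity ensures that the cross-block couplings introduced by the $p_{U_j}$ terms still act diagonally on the $E$-basis up to a reflection, preserving the tensor-product structure and licensing block-by-block inversion; proving it carefully—and confirming that the resulting algorithm fits within $O(D)$ memory—is the principal combinatorial content of the theorem.
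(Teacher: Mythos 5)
Your parts (i) and (ii) follow the paper's argument essentially verbatim: the signature/membership-pattern map defining the blocks, the invariance of $S$ under $W=\prod_t \mathrm{Sym}(V_t)$, and the identification of $\{E_{\mathbf a}\}$ with the orbit sums that form a basis of the invariant subspace are exactly Lemma~\ref{lem:blocks}, Corollary~\ref{cor:block-symmetry}, and Lemma~\ref{lem:product-basis}. For part (iii), however, you take a genuinely different route. The paper never interpolates: it defines the operator $A$ of left multiplication by $S$ on the $E$-basis, proves the tridiagonal identity $T\,e_a=(a+1)e_{a+1}+(d_t-a+1)e_{a-1}$ and the reflection identity $p_V e_a=e_{d_t-a}$, and runs Horner's rule $v^{(q)}=Av^{(q+1)}+p_q e^{(\mathbf 0)}$; since $A$ touches each of the $D$ array entries with only $O(r+k)$ work, the bounds $O((r+k)D\ell)$ time and $O(D)$ memory fall out directly. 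Your evaluation--interpolation scheme on the grid of per-block negative counts is sound in principle (values on $W$-orbits of $\{\pm1\}^d$ do determine a $W$-invariant element of $R$, and the per-block evaluation matrices with entries $[x^a](1-x)^{n^-}(1+x)^{d_t-n^-}$ are invertible Krawtchouk-type matrices), and it suffices for the qualitative conclusion that the $\gamma_{\mathbf a}$ are computable in $\poly(n,\ell)$ time.

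Two caveats on your part (iii). First, the cost of applying $\bigotimes_t M_t^{-1}$ mode by mode is $\sum_t (d_t+1)\cdot D = O(D(d+r))$, not the $O((r+k)D)$ you claim, since each $M_t$ is dense; so your method gives $O(D(\ell+d+k))$ rather than the theorem's $O((r+k)D\ell)$, and when $r$ is small the explicit $M_t$ can require more than $O(D)$ storage (e.g.\ $r=1$ gives a $(d+1)\times(d+1)$ matrix against $D=d+1$). These discrepancies are why the paper works with the sparse multiplication operator instead. Second, your closing paragraph misattributes the role of the reflection identity $p_{V_t}e_{a_t}=e_{d_t-a_t}$: the Kronecker factorization of your evaluation matrix follows purely from $E_{\mathbf a}=\prod_t e_{a_t}(V_t)$ and the blockwise factorization of point evaluation, and needs no reflection identity; the reflection identity is instead the ingredient that makes the paper's operator $A$ act sparsely on the $E$-basis. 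Neither issue breaks existence, uniqueness, or polynomial-time computability, but the precise operation and memory bounds stated in (iii) are not established by your argument as written.
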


\begin{proof}[Proof of Theorem~\ref{thm:arbitrary_temperature}]
Let $H = \sum_{i=1}^m v_i P_i$ be a Hamiltonian for which $k =\dim C = \dim \Symp(H)$ is constant, and let $\calP$ be any non-negative function of $H$. 
By Lemma~\ref{lem:interpolation}, we can assume without loss of generality that $\calP$ is a degree-$m$ polynomial. 
We identify the Paulis $v_i P_i$ with formal variables $z_i \in \{\pm1\}$, with the relations $z_i^2 = 1$ for all $i$.
Any independent basis of $\Symp(H)$ gives us $k$ (mod-$2$ independent) product relations
\begin{equation}
    \prod_{i\in B_j} z_i \;=\; 1\qquad (j=1,\dots,k),
\end{equation}
where $B_j\subseteq[m]$ and ``independence'' means that no relation is the
(mod-$2$) product of the others. 
We interpret the univariate polynomial  $\calP(x) = \sum_{j=0}^{m} c_j x^j$ applied to $S = \sum_{i=1}^m z_i$ as a multilinear function in $z_1, \dots, z_m$. 
By Gaussian elimination over $\F_2$ we may eliminate $k$ variables and rewrite
every $z_i$ as a monomial in $d := m - k$ \emph{independent} variables. 
After this reduction has been performed, we will denote the independent variables by $z_1,\dots,z_d$. 

We now apply Theorem~\ref{thm:blockwise_decomposition_theorem}.
Let $\{\gamma_{\ba}\}$ for $\ba=(a_1,\dots,a_r)$ and $0\le a_t\le d_t$ be the unique coefficients of the expansion of $\calP(S)$ into block-symmetric elementary polynomials on the $d = m-k$ independent variables $z_1, \dots, z_d$. 
That is, \begin{equation}
    \calP(S) =\ \sum_{\boldsymbol a} \gamma_{\boldsymbol a}\,E_{\boldsymbol a},\qquad
E_{\boldsymbol a}\ :=\ \prod_{t=1}^r e_{a_t}\big(\{z_i:i\in V_t\}\big).
\end{equation}
Since $r \leq 2^k=  O(1)$, the coefficient array $\{\gamma_{\boldsymbol a}\}$ can be computed from the coefficients of $\calP$ in time $O(\poly(n,\ell))$
and memory $O\!\big(\prod_t (d_t{+}1)\big) = O(\poly(n))$ because $r = O(1)$.
The block-wise basis elements $E_{\boldsymbol a}$ are products of block-elementary symmetric sums, i.e.
\begin{align}
    E_{\boldsymbol a}\ =\ \prod_{t=1}^r \ \sum_{\substack{J_t\subseteq V_t\\ |J_t|=a_t}} \ \prod_{i\in J_t} z_i .
\end{align}
Our commuting reduction in Theorem~\ref{thm:commuting_reduction_to_decoding}
exploited the decomposition into elementary symmetric polynomials by preparing a pilot state comprised of a superposition over Dicke states.
Now, since the decomposition is instead into a product of $r=O(1)$ elementary symmetric polynomials, we modify the pilot state by instead preparing a superposition (with coefficients $\gamma_{\ba}$) over \emph{products} of $r$ Dicke states of sizes $n_t$ that coherently indexes the pairs $(J_1,\dots,J_r)$.
From there, we apply the corresponding controlled Paulis as before to produce the correct pilot state. 

In more detail, this modified HDQI prepares the pilot state representing $\calP(S) =\ \sum_{\boldsymbol a} \gamma_{\boldsymbol a} E_{\ba}$ by replacing the initial weight-varying Dicke state in Eqn.~(\ref{eq:initial_state}) as:
\begin{equation}
    \sum_{j = 0}^m w_j \ket{D_j^m} = \sum_{j = 0}^m w_j' \sum_{\mathbf{y} \in \F_2^m ,|\mathbf{y}| = j} \ket{\mathbf{y}} \longrightarrow
    \sum_{\boldsymbol a} \gamma_{\boldsymbol a} \ket{D_{\ba}} = \sum_{\boldsymbol a} \gamma_{\boldsymbol a}  \bigotimes_{t = 1}^r \sum_{\substack{J_t\subseteq V_t\\ |J_t|=a_t}} \ket{J_t} .
\end{equation}
Here the state $\ket{J_t}$ is a computational basis state on $|V_t|$ qubits and Hamming weight $|J_t|$, whose non-zero entries indicate the elements of $J_t$.
The state $\ket{D_{\ba}}$ represents a product over $t$ of Dicke states, each over the variables in $V_t$.
With this modified pilot state in hand, we proceed with HDQI as in Section~\ref{sec:commuting}. Performing the associated controlled monomials implements $\sum_{\boldsymbol a}\gamma_{\boldsymbol a}E_{\boldsymbol a}$ up to a known global scaling, after which normalization yields the desired state $\rho_{\calP}(H) \propto \calP(H)$. 
Since $r = O(1)$ and $\deg(\calP) \leq m = \poly(n)$, the number of Dicke registers is constant, each register size is at most $d$, and hence the total HDQI runtime remains $\poly(n)$.
By construction, the reduced code $C'$ of block length $d=m-k$ has dimension $0$, so every error on $C'$ is trivially decodable up to the maximal weight $d$; this is the only requirement in Theorem~\ref{thm:commuting_reduction_to_decoding} for the reduction to succeed.
\end{proof}

\subsection{Random \texorpdfstring{$k$-}\ local commuting Hamiltonians}
\label{sec:random_local_commuting}
In this section, we study a natural class of random, commuting $k$-local Hamiltonians; namely, we consider
\begin{equation}
    H\;=\;\sum_{i=1}^{m} v_i P_i, \qquad v_i \sim \set{\pm 1},  P_i \in \Pauli_n^{(k)}:=\{\text{$k$--local Paulis on $n$ qubits}\}.
\end{equation}
That is, $H$ is comprised of a sum of  $m=\Theta(n)$ signed random $k$-local Pauli terms, subject to the constraint that all terms commute.
Such a Hamiltonian can be generated by an iterative rejection sampling algorithm, given in Algorithm~\ref{alg:greedy-commuting-sampler}.
It is not immediately clear that this algorithm is efficient; however, our numerics suggest that it does in fact terminate quickly, even for large instances. 
We leave analysis of the precise runtime of the greedy sampler in Algorithm~\ref{alg:greedy-commuting-sampler} to future work.

\begin{algorithm}[H]        
  \caption{Greedy sampling of random $k$-local commuting Pauli operators}
  \label{alg:greedy-commuting-sampler}
  \begin{algorithmic}[1]    
  \Require{Integers $n,k$ and $m$.}
    \Ensure{List $\mathcal{L} = \{P_1,\dots,P_m\}$ of $m$ commuting $n$-qubit Pauli operators of weight $k$.}
\State Initialize $\mathcal{L} = \emptyset$.
\State Repeat until $\mathcal{L}$ contains $m$ commuting Pauli operators: 
\begin{itemize}
    \item Draw a random $n$-qubit Pauli $P\sim \Pauli_n^{(k)}$ of weight $k$ (repeat if $P$ is already in $\mathcal{L}$).
    \item \textbf{If} $[P, Q]=0$, for all $Q \in \mathcal{L}$, add $P$ to $\mathcal{L}$; otherwise discard $P$.
\end{itemize}
  \end{algorithmic}
\end{algorithm}

In this setting, since $H$ is not nearly independent in general, we will be interested in preparing approximate Gibbs states by choosing a good approximating polynomial.
As discussed in Section~\ref{sec:overview}, HDQI reduces the preparation of the Gibbs state $\propto \exp{-\beta H}$ for any such Hamiltonian to the decoding of the symplectic code $\Symp(H) = \{\mathbf{u} \in \mathbb{F}_2^{m} : B^\intercal \mathbf{u} = 0\}$, where $B^\intercal \in \mathbb{F}_2^{2n \times m}$ is a parity check matrix in which the $i\th$ column is the symplectic representation of the Pauli group element $P_i$. The achievable $\beta$ depends on the degree of $\calP$ admissible to HDQI, which in turn depends on the maximum weight of errors on $\Symp(H)$ that are efficiently correctable.
The precise relation between $\beta$ and the correctable error weight is given by Theorem~\ref{thm:gibbs_explicit}, which we re-state below.
\Gibbsthm*
In addition, choosing an optimal form of the polynomial derived in \cite{JSW24}, HDQI samples high energy states of this random commuting model. The achievable approximation ratio is given by a semicircle law, which we also re-state below.
\semicircle*
\begin{proof}
This is a direct corollary of Theorem 4.1 from \cite{JSW24} and Lemma~\ref{lem:simul-diag}, which relates HDQI on a commuting Hamiltonians to DQI on a classical Hamiltonian. 
$\Symp(H)$ remains unchanged in the new basis, so its distance remains the same.
\end{proof}
For commuting Pauli Hamiltonians $H = \sum_{i=1}^m v_i P_i$ (noting that we adopt an extensive normalization convention, i.e. $\lVert H \rVert = \Theta(m)$), HDQI therefore prepares Gibbs states at constant temperature $\beta$ provided the symplectic code $\Symp(H)$ can be efficiently decoded up to linearly many errors. 
On certain ensembles of low-density parity check (LDPC) codes, local algorithms like Sipser-Spielman or Belief Propagation efficiently decode linearly many errors. 

By assumption, $H$ is a $k$-local Hamiltonian, which implies that the parity check matrix $B^\intercal$ of $\Symp(H)$ has at most $2k$ nonzero entries in any column. 
Since each column is essentially random, up to commuting constraints, each row is also typically very sparse, and hence $\Symp(H)$ is an LDPC code. Consequently, Belief Propagation can be expected to successfully recover the original codeword even when some constant fraction of its $m$ bits have been flipped. 
For some standard ensembles of sparse parity check matrices, the ability of Belief Propagation-type algorithms to decode errors up to a constant fraction of maximum weight has been analytically proven \cite{gallager1963low,richardson2008modern}. However, the ensemble of parity check matrices that arises by sampling a random commuting $k$-local Hamiltonian and then writing down its symplectic representation is not standard, and the constraint of pairwise commutation obstructs standard analytical techniques for lower-bounding the distance. 
Therefore, we have carried out computer experiments with Belief Propagation to verify empirically that our ensemble of LDPC codes shares the property with generic LDPC codes of correcting a constant fraction of bit flips. 
Further, such computer experiments yield quantitative information about what this fraction is in practice. 
In turn, this fraction determines the minimum temperature (i.e. maximum $\beta$) for which HDQI is able to efficiently produce Gibbs states via Belief Propagation decoding. 
The code and data used to generate the below results are publicly available~\cite{Schmidhuber_HDQI}.

\subsubsection{Numerical results}

We here describe some optimizations we made to our Belief Propagation decoder so as to tailor it for our model.
Relative to standard ensembles of LDPC code parity check matrices, such as Gallager's ensemble~\cite{gallager1962ldpc}, the ensemble arising here has the unusual feature that it typically contains a substantial number of linear dependencies between the parity checks. 
In other words, $B^\intercal$ frequently has rank lower than $\min(m, 2n)$. 
Experience suggests that Belief Propagation decoding works best in the absence of linear dependencies, and also works best when the parity checks are as sparse as possible. 
Therefore, we use the following decoding procedure, which we found to consistently correct a slightly larger number of bit flips than standard Belief Propagation decoding.

\begin{enumerate}
    \item Eliminate parity checks one by one, starting from those with highest Hamming weight, until the remaining parity check matrix has full rank.
    \item Apply standard Belief Propagation decoding to the resulting parity check matrix.
\end{enumerate}

\begin{figure}[t!]
    \begin{center}
    \includegraphics[width=\textwidth]{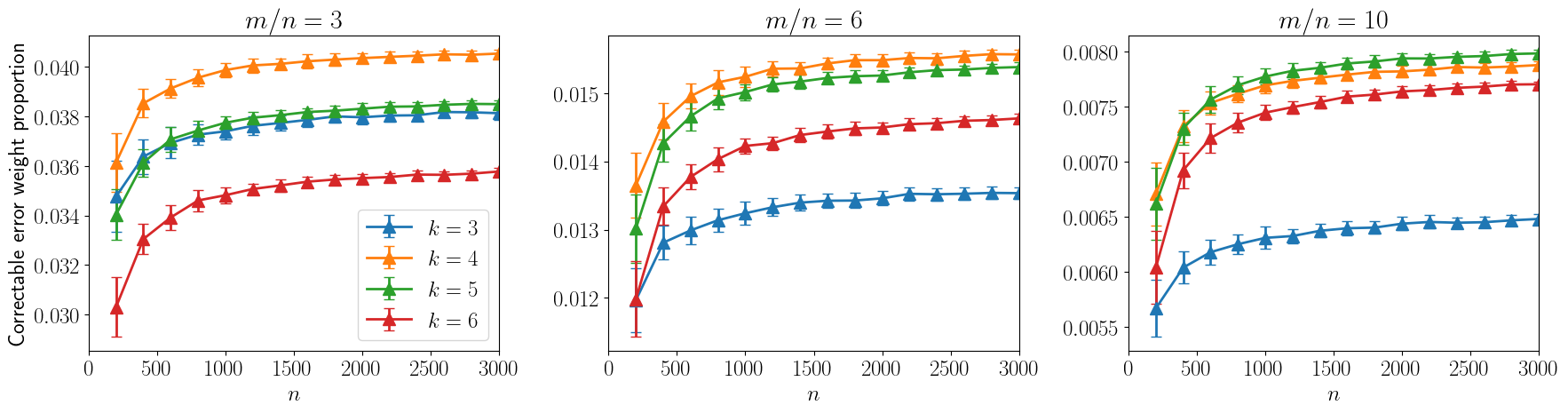}
    \end{center}
    \caption{\label{fig:bp_plot} 
    Maximum fraction of randomly-selected bits that can be flipped such that our modified Belief Propagation decoder remains able to recover the original codeword with probability $\geq 1/2$.
    For each set of parameters we independently sample 50 commuting Hamiltonians from Algorithm~\ref{alg:greedy-commuting-sampler}. 
    The points and error bars show the mean and standard deviation, respectively. 
    We consider three different ensembles of Hamiltonian with three different ratios $m/n$ of Pauli terms to qubits, and four different choices of $k$. 
    In each case one observes that the error fraction tolerated increases with $n$ and appears to converge toward a constant value as $n \to \infty$, suggesting that the decoder can correct randomly chosen errors up to a threshold linear weight with substantial probability.
    }
\end{figure}

Figure~\ref{fig:bp_plot} shows that the fraction of bits that can be flipped while still recovering the original codeword appears to asymptote to a constant in the limit of large block size. 
This is the same behavior that one sees in generic LDPC codes drawn from standard ensembles. 
The quirks of the ensemble arising from commuting $k$-local Hamiltonians are reflected only in the specific error fraction to which the curve asymptotes.

We next consider the implications of these results for the production of Gibbs states. The correctable error fraction displayed in Fig.~\ref{fig:bp_plot} is the maximum fraction of randomly-selected bits that can be flipped such that our modified Belief Propagation decoder remains able to recover the original codeword with at least $50\%$ probability. 
In an actual application of HDQI one would likely choose the polynomial degree $\ell$ such that $\ell$ bit flip errors can be decoded with high reliability, such as $99\%$. 
We nevertheless take our threshold at $50\%$ for the following reason. 
It is well-known that the success rate Belief Propagation versus number of bit flips displays a sigmoidal behavior at any given block size, but that this converges increasingly toward a step function in the limit of large block size \cite{richardson2008modern,luby2002efficient}, as illustrated schematically in Fig.~\ref{fig:waterfall}. 
Such behavior is frequently illustrated in the decoding under the name of ``waterfall plots.'' Thus, the intersection of the finite-block-size waterfall plot with the $50\%$ line should be a reasonable estimate of the location of the step function asymptotically.

\begin{figure}
    \begin{center}
        \includegraphics[width=0.5\textwidth]{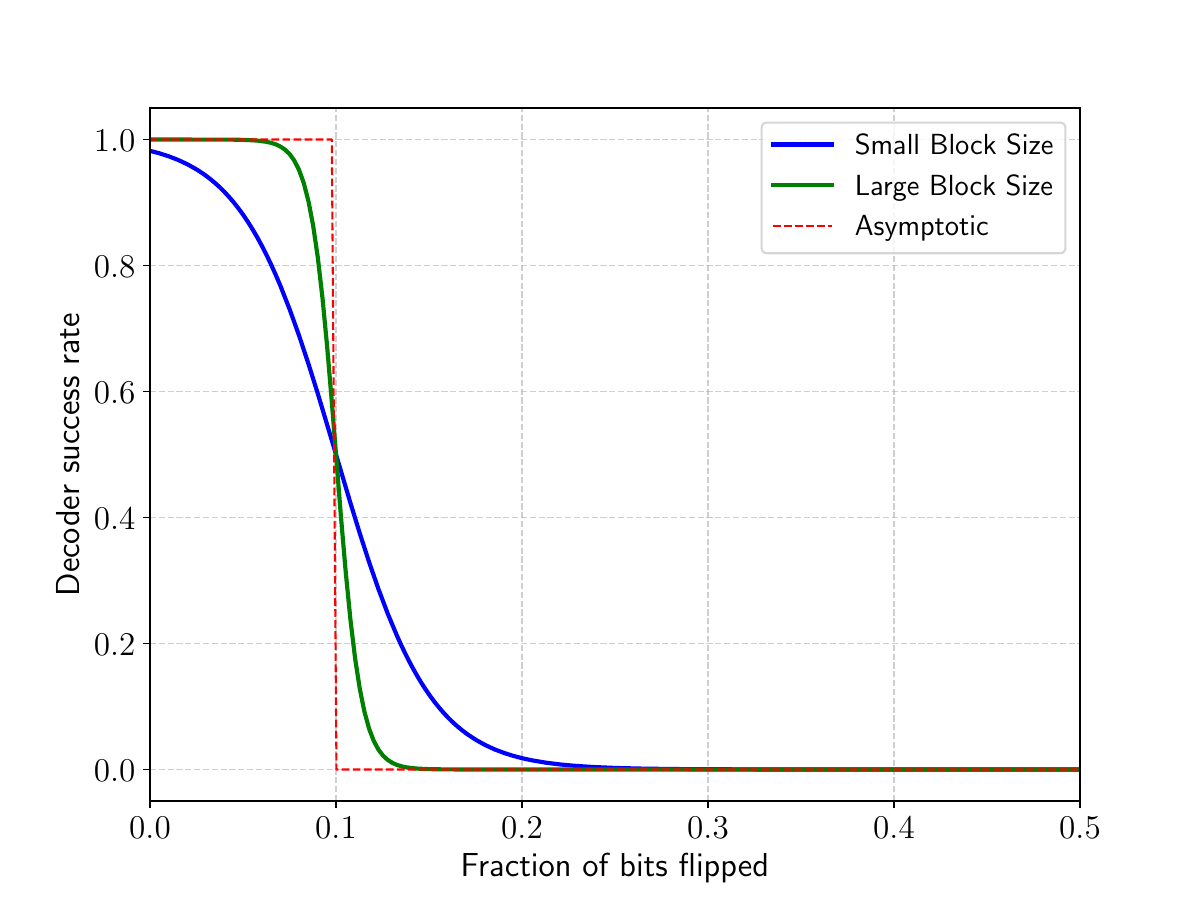}
    \end{center}
    \caption{Schematic of the qualitative behavior of typical ``waterfall plots'' from the decoding literature. Below some threshold error weight, the decoder succeeds with high probability; above the threshold, the decoder fails with high probability. The structure of the success probability curve implies that the exact choice of success probability to define decoder success does not make a significant impact on numerical analysis.}
    \label{fig:waterfall} 
\end{figure}

Next, recall that Theorem~\ref{thm:gibbs_explicit} states that Hamiltonian DQI can efficiently produce Gibbs states of any commuting Hamiltonian $H$ at inverse temperature $\beta \approx \frac{\ell}{1.12 \| H \|}$ (the $\approx$ suppresses dependence on the approximation error $\delta$), provided we can efficiently and exactly correct $\ell$ bit flip errors.
For a Hamiltonian consisting of $m$ Pauli terms, each with $\pm 1$ coefficients, we have $\|H\| \leq m$.
This is saturated only if the Hamiltonian is frustration-free. 
For a random $k$-local Hamiltonian the norm will instead be linear in $m$ but with a coefficient less than one. 
That is, for our ensemble of random $k$-local commuting Pauli Hamiltonians
\begin{equation}
\|H_{k,m/n}\| = \alpha_{k,m/n} m \quad \alpha_{k,m/n} < 1.
\end{equation}
Taking our empirical values of $e/m$, where $e$ is the maximum weight at which errors are correctable by Belief Propagation, at $n=3000$ to be our estimates of the asymptotically achievable values of $\ell/m$, we obtain estimates of the inverse temperatures efficiently reachable shown in Table~\ref{tab:all_results}.
Note that these estimates come with an important caveat:  we do not have a guarantee that the distance of our codes is linear in $n$, hence Theorem~\ref{thm:robustness_commuting} about the robustness against decoding failures does not directly apply.
In fact, there is a non-negligible (i.e. inverse-polynomial) probability that the distance is 3, since the probability that three randomly chosen commuting Paulis multiply to the identity is non-negligible.
An inverse-polynomial chance of decoding failure is not expected to degrade the performance of HDQI by too much; Indeed, a robustness theorem beyond half the distance of the symplectic code was shown for DQI in \cite{JSW24}, Theorem 10.1. Therefore, we estimate the performance of HDQI based on extrapolations beyond half the distance of the guarantees of the semicircle law and the Gibbs approximation theorem. These results are summarized in Table~\ref{tab:all_results}.

\begin{table}[ht!]
\centering
\begin{minipage}{0.3\textwidth}
    \centering
    $m/n = 3$ \\ \vspace{1mm}
    \begin{tabular}{|c|c|c|}
    \hline
    k & $\ell/m$ & $\beta$ \\
    \hline
    3 & 0.03813 & $0.03405 / \alpha_{3,3}$ \\
    4 & 0.04054 & $0.03620 / \alpha_{4,3}$ \\
    5 & 0.03850 & $0.03437 / \alpha_{5,3}$ \\
    6 & 0.03578 & $0.03194 / \alpha_{6,3}$ \\
    \hline
    \end{tabular}
\end{minipage}
\hfill 
\begin{minipage}{0.3\textwidth}
    \centering
    $m/n = 6$ \\ \vspace{1mm}
    \begin{tabular}{|c|c|c|}
    \hline
    $k$ & $\ell/m$ & $\beta$ \\
    \hline
    3 & 0.01353 & $0.01208/ \alpha_{3,6}$ \\
    4 & 0.01557 & $0.01390 / \alpha_{4,6}$ \\
    5 & 0.01538 & $0.01373 / \alpha_{5,6}$ \\
    6 & 0.01463 & $0.01306 / \alpha_{6,6}$ \\
    \hline
    \end{tabular}
\end{minipage}
\hfill 
\begin{minipage}{0.3\textwidth}
    \centering
    $m/n = 10$ \\ \vspace{1mm}
    \begin{tabular}{|c|c|c|}
    \hline
    $k$ & $\ell/m$ & $\beta$ \\
    \hline
    3 & 0.00648 & $0.00579 / \alpha_{3,10}$ \\
    4 & 0.00788 & $0.00704 / \alpha_{4,10}$ \\
    5 & 0.00799 & $0.00713 / \alpha_{5,10}$ \\
    6 & 0.00771 & $0.00688 / \alpha_{6,10}$ \\
    \hline
    \end{tabular}
\end{minipage}
\caption{Achievable polynomial degree $\ell$ to according to $\ell/m = e/m$, where $e/m$ is our estimate of the asymptotic fraction of bit flips decodable by our modified Belief Propagation decoder for the random $k$-local commuting Hamiltonian model. For this estimate we use our average value at $n=3000$. 
The corresponding achievable inverse temperature is $\beta = \frac{1}{\alpha_{k,m/n}} \frac{\ell/m}{1.12}$ in accordance with Theorem~\ref{thm:gibbs_explicit}. 
We do not have precise estimates of of $\alpha_{k,m/n}$ but we know rigorously that these are less than one, and thus $\frac{\ell/m}{\sqrt{2}}$ constitutes a lower bound on the achievable $\beta$.}
\label{tab:all_results}
\end{table}

It is illustrative also to study the capabilities of HDQI to perform approximate optimization---that is, approximately the ground state energy---on the random local commuting model.
We use the semicircle law as an upper bound on the approximation ratio.
On the other hand, there are many ways to estimate the ground state energy by simple classical optimization heuristics.
One such technique is to start with the $\ket{0}$ state, and then use simulated annealing with moves drawn from a uniform distribution over single-qubit Clifford operators.
We find, however, that even optimistically using the semicircle law to upper bound the performance of HDQI, this simulated annealing heuristic significantly outperforms HDQI.
As a consequence, HDQI may not be very useful in practice to estimate the ground state energy of random commuting Pauli Hamiltonians.
This comparison across various $m/n$ and $k$ is given in Table~\ref{tab:HDQI_vs_CliffordSA}.

\begin{table}[t!]
    \centering
    \begin{tabular}{|c|c|c|} \hline
        $(m/n, k)$ & HDQI and BP & Clifford SA \\ \hline
        $(3, 3)$ & $69.03 \pm 0.09\%$ & $78.20 \pm 0.31\%$ \\ \hline
        $(3, 4)$ & $69.61 \pm 0.07\%$ & $77.22 \pm 0.35\%$ \\ \hline
        $(3, 5)$ & $69.11 \pm 0.06\%$ & $75.84 \pm 0.39\%$ \\ \hline
        $(3, 6)$ & $68.40 \pm 0.06\%$ & $74.60 \pm 0.32\%$ \\ \hline
        $(6, 3)$ & $61.47 \pm 0.05\%$ & $70.18 \pm 0.25\%$ \\ \hline
        $(6, 4)$ & $62.30 \pm 0.04\%$ & $69.44 \pm 0.23\%$ \\ \hline
        $(6, 5)$ & $62.21 \pm 0.04\%$ & $68.56 \pm 0.25\%$ \\ \hline
        $(6, 6)$ & $61.86 \pm 0.04\%$ & $67.65 \pm 0.30\%$ \\ \hline
        $(10, 3)$ & $57.93 \pm 0.04\%$  & $65.59 \pm 0.22\%$ \\ \hline
        $(10, 4)$ & $58.76 \pm 0.04\%$ & $64.98 \pm 0.20\%$ \\ \hline
        $(10, 5)$ & $58.81 \pm 0.04\%$ & $64.33 \pm 0.21\%$ \\ \hline
        $(10, 6)$ & $58.63 \pm 0.03\%$ & $63.72 \pm 0.19\%$ \\ \hline
    \end{tabular}
    \caption{Estimated performance of HDQI with Belief Propagation (BP) decoding on a random $k$-local commuting Hamiltonian model, with $m$ terms and $n$ qubits.
    Compared with HDQI+BP is a simulated annealing heuristic consisting of random single-qubit Cliffords.
    In each case, we run the algorithm for $n = 1000$ and give the average ratio over 50 independent runs of Algorithm~\ref{alg:greedy-commuting-sampler}.
    The calculation of approximation ratio in HDQI uses the semicircle law, which is not exact here because the distance of the symplectic code is not guaranteed to be large.
    Instead, the semicircle law is an upper bound on performance.
    Nonetheless, the classical heuristic outperforms this upper bound of HDQI.}
    \label{tab:HDQI_vs_CliffordSA}
\end{table}

\section{Classical approaches for commuting Hamiltonians}\label{sec:commuting_dequantizations}

Commuting Hamiltonians are widely regarded as ``less quantum'' than general Hamiltonians, and many attempts have been made to formalize this intuition complexity-theoretically. 
In particular, the complexity of approximating ground energies of commuting Hamiltonians has been well-studied. 
A number of special cases of the problem have been shown to be contained in $\mathsf{NP}$~\cite{BV05, KKR06, AE11, Sch11, IJ23, AE15, AKV18} or even $\mathsf{P}$~\cite{YB12}, in contrast to the ground energy approximation problem for general local Hamiltonians, which is $\mathsf{QMA}$-complete. 
On the other hand, the problem of approximating ground energies of commuting Hamiltonians is not known to be contained in $\mathsf{NP}$ in general.

In this section, we consider commuting Pauli Hamiltonians with $\pm 1$ coefficients, and investigate whether the state preparation problems for such Hamiltonians that we have addressed with HDQI can also be addressed using classical means. 
Fundamentally, preparing a Gibbs state for any Hamiltonian other than a diagonal one has no classical analog and thus it is nonsensical to ask whether it can be prepared efficiently classically. 
However, commuting Pauli Hamiltonians have an eigenbasis consisting entirely of stabilizer states. 
These in turn have efficient classical descriptions, e.g. the collection of the symplectic representations of their stabilizer generators, including sign (called a stabilizer tableau). 
We can thus ask whether there is an efficient classical algorithm to sample from classical descriptions of stabilizer states, such that the corresponding mixture of stabilizer states is the desired mixed state $\rho_{\calP}(H) = \mathcal{P}(H)/\mathrm{Tr}[\mathcal{P}(H)]$ that HDQI produces.

\subsection{Consequences of simultaneous Clifford diagonalization}\label{sec:diagonalizationIntro}

Given samples from such descriptions, one can efficiently estimate the expectation value of any Pauli observable (or more generally, any Clifford observable) using classical Monte Carlo. Alternatively, one can directly sample (classical distributions of states) from $\rho_{\calP}(H)$ by sampling these descriptions classically and then constructing the corresponding stabilizer states, which requires only pure Clifford circuits.

Given a set of commuting elements $\{P_1,\ldots,P_m\}$ of the $n$-qubit Pauli group, it is known that one can efficiently construct a Clifford circuit of $\mathrm{poly}(n)$ gates implementing a unitary $U$ such that, for all $j \in \{1,\ldots,m\}$, the operator $UP_jU^\dag$ is a tensor product only of $Z$ operators~\cite{gottesman2016surviving}.
(See Lemma\ref{lem:simul-diag} in the Appendix.)
Consequently, $U H U^\dag$ is a diagonal Hamiltonian whose eigenstates computational basis states, and thus can be associated with bitstrings in $\{0,1\}^n$. 
Since the coefficients in $H$ are all $\pm 1$ these energies are all integers between $-m$ and $m$.

Suppose now that we are given some non-negative function $f(E)$ and we wish to sample (classical descriptions of) states according to the mixture
\begin{equation}
    \rho_f(H) := \frac{f(H)}{\Tr[f(H)]} .
\end{equation}
We denote this task the \emph{spectral sampling problem}. 
The diagonalizing Clifford $U$ reduces the preparation of $\rho_f(H)$ to that of sampling from bitstrings $\mathbf{x} \in \{0,1\}^m$ according to the distribution
\begin{equation}
    p_f(\mathbf{x}) = \frac{f(E(\mathbf{x})) \dim(E(\mathbf{x}))}{\sum_E \dim(E)} ,
\end{equation}
where $E(\mathbf{x})$ is the energy of $\mathbf{x}$ as defined by the pure-$Z$ Hamiltonian $U H U^\dag$, and $\dim(E)$ is the degeneracy of the energy-$E$ subspace. 
More precisely, we reduce the spectral sampling problem to sampling from $p_f$ by way of the following algorithm.

\begin{enumerate}
    \item From the description of $H$, compute a description of the Clifford circuit implementing the diagonalizing unitary $U$ and, for each $i \in \{1,\ldots,m\}$ compute $\mathbf{a}^{(i)} \in \mathbb{F}_2^n$ such that \begin{align}
        U P_i U^\dag = Z_1^{a^{(i)}_1} \otimes \ldots \otimes Z_n^{a^{(i)}_n}
    \end{align}
    This diagonalization procedure runs in time $O(\poly(n, m))$~\cite{gottesman2016surviving}.
    
    \item Draw samples $\mathbf{x}$ from the distribution $p_f(\mathbf{x})$. This cannot always be done efficiently, but below we describe several cases where it can.
    
    \item For each sample drawn, construct a classical description of the stabilizer state $U \ket{\mathbf{x}}$, using the stabilizer formalism.
\end{enumerate}
In general, sampling $\mathbf{x} \sim p_f(\mathbf{x})$ is $\mathsf{NP}$-hard.
This is because $\sum_{i=1}^m v_i Z_1^{a^{(i)}_1} \otimes \ldots \otimes Z_m^{a^{(i)}_n}$ is a classical Ising Hamiltonian, whose ground state is well-known to be $\mathsf{NP}$-hard~\cite{barahona1982computational}. 
However, there are a number of techniques that can be applied to this problem which solve it in several significant special cases.

\subsection{Stabilizer Hamiltonians}

One of the simplest cases to consider is that of stabilizer Hamiltonians. Let $S = \{P_1,\ldots,P_n\}$ be an independent set of generators for the stabilizer group of some stabilizer state $\ket{\Psi_S(\mathbf{0})}$ on $n$ qubits. By independent, we mean that there is no subset of $S$ whose product is the identity. 
This is equivalent to requiring that $\mathrm{symp}(P_1),\ldots,\mathrm{symp}(P_n) \in \mathbb{F}_2^{2n}$ be linearly independent. 
Then the Hamiltonian
\begin{equation}
    H_S = - \sum_{j=1}^n P_j
\end{equation}
has $\ket{\Psi_S(0)}$ as its unique ground state. 
More generally, an eigenbasis for $H$ can be constructed from the $2^n$ stabilizer states $\ket{\Psi_S(\mathbf{s})}$ corresponding to the stabilizer groups $\langle (-1)^{s_1} P_1, \ldots, (-1)^{s_n} P_n \rangle$ for $\mathbf{s} \in \mathbb{F}_2^n$. 
The eigenvalue of the eigenstate $\ket{\Psi_S(\mathbf{s})}$ is given by $2|\mathbf{s}|-n$.

For such a Hamiltonian the spectral sampling problem for arbitrary $f$ can be always efficiently solved as follows. 
First, sample a Hamming weight $w \in \{0,1,\ldots,n\}$ according to the distribution \begin{equation}
    p_f(w) = \frac{f(2w-n) \binom{n}{w}}{\sum_{w=0}^n f(2w-n) \binom{n}{w}}.
\end{equation}
Since the distribution has efficiently computable probabilities supported on a domain of size $O(n)$, the sampling procedure can always be done efficiently. 
Next, given $w$, sample a string $\mathbf{s}$ uniformly from the set of strings of length $n$ with Hamming weight $w$. 
Finally, efficiently prepare (a classical description of) the stabilizer state $\ket{\Psi_S(\mathrm{s})}$. 
The mixture obtained from this sampling procedure over pure stabilizer state descriptions is precisely that of $f(H)/\Tr[f(H)]$.

\subsection{Nearly independent Hamiltonians}
\label{sec:perfect_classical}

We next consider the problem of generalizing the above procedure to the case where the $m$ Pauli terms comprising the Hamiltonian still all commute but are no longer independent. 
That is, some Pauli terms in the Hamiltonian can be obtain as the product of others. 
In this circumstance, unlike for the stabilizer Hamiltonians considered in the previous section, it is not the case that each of the $2^m$ possible choices of signs for the $m$ terms is realizable by a quantum state. 
Rather, suppose the maximal independent subset of the $m$ terms has size $m-k$. 
Without loss of generality, we could choose our labels so that these are $P_1,\ldots,P_{m-k}$. 
We then have dependencies of the form
\begin{equation}
    \label{eq:dependencies}
    \textrm{for $j = m-k+1,\ldots,m$:} \quad  P_j = q_j \prod_{i \in S_j} P_i \quad \mathrm{with} \quad S_j \subseteq [m-k] \quad \mathrm{and} \quad q_j \in \{\pm 1\} .
\end{equation}

In this scenario, there exist $2^{m-k}$ stabilizer states which form the eigenbasis of $H$. 
Each such state has a stabilizer group generated as $\langle s_1 P_1,\ldots, s_{m-k} P_{m-k} \rangle$, with one of the $2^{m-k}$ assignment of signs $s_1,\ldots,s_{m-k} \in \{ \pm 1\}$. 
For $i \in [m-k]$, the state defined by stabilizer group $\langle s_1 P_1,\ldots, s_{m-k} P_{m-k} \rangle$ is an eigenstate of $P_i$ with eigenvalue $s_i$. 
Hence, by Eqn.~(\ref{eq:dependencies}), this state is also an eigenstate of $P_j$---for $j > m-k$---with eigenvalue $s_j = q_j \prod_{i \in S_j} s_i$.

Since multiplication of elements of $\{\pm 1\}$ is isomorphic to addition in $\mathbb{F}_2$, we can re-express the above in linear-algebraic language. 
Let $B \in \mathbb{F}_2^{m \times 2n}$ be a binary matrix whose $i$th row is $\Symp(P_i)$.
By construction, $\rank(B) = m - k$.
Each eigenvalue $\lambda$ of $P_1,\ldots,P_m$ can be represented as a vector $\mathbf{e} \in \mathbb{F}_2^m$ given by $e_i = \frac{1}{2} (1 - s_i)$, such $\lambda = \sum_{i=1}^m (1 - 2e_i) = m - 2 |\mathbf{e}|$.
Then the dependencies in Eqn.~(\ref{eq:dependencies}) become a set of linear equations determining the last $k$ components of $\mathbf{e}$ from the first $m-k$ components of $\mathbf{e}$. 
We can thus write $\mathbf{e} = (\mathbf{y}|\mathbf{t})$, where
\begin{equation}
    \mathbf{t} = A \mathbf{y} + \mathbf{u},
\end{equation}
$A \in \mathbb{F}_2^{k \times (m-k)}$ is given by
\begin{equation}
    A_{ij} = \begin{cases}
        1 & \textrm{if $j \in S_{m-k+i}$} \\ 0 & \textrm{otherwise,}
    \end{cases}  
\end{equation}
and $\mathbf{u} \in \mathbb{F}_2^k$ is specified by $u_i = \frac{1}{2} (1 - q_{m-k+i})$.
This simple linear-algebraic structure enables efficient sampling of nearly independent Hamiltonians, as we now prove.

\begin{lemma}[State preparation by sampling solutions to linear equations]
\label{lem:commutingToEnumerationReduction}
    Given $H = \sum_{i=1}^m v_u P_i$, where $v_i \in \set{\pm 1}$ and the $P_i$ are pairwise commuting $n$-qubit Paulis, suppose that $H$ has a unique ground state.
    Let $k = \dim \Symp(H)$ be the logical dimension of the symplectic code.
    Then if for each weight $w \in \{0, \ldots, m\}$, we can in time $\poly(n)$ both (a) count the number of weight $w$ strings $(\mathbf{y}|\mathbf{t}) \in \mathbb{F}_2^m$ such that $A \mathbf{y} = \mathbf{t} + \mathbf{u}$ (where $A$ and $\mathbf{u}$ are as defined above), and (b) sample uniformly from those strings, then we can efficiently sample classical descriptions of stabilizer states according to the mixture $\rho_f(H) \propto f(H)$, for any efficiently compute function $f$.
\end{lemma}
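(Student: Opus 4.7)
The plan is to reduce sampling from $\rho_f(H)$ to the two oracles provided by hypotheses (a) and (b) via a two-stage procedure that exploits the linear-algebraic structure developed just above the lemma. First I would note that the unique-ground-state hypothesis forces $m-k=n$: the $m-k$ independent commuting Paulis $P_1,\dots,P_{m-k}$ must form a maximal commuting (Lagrangian) family in the $n$-qubit symplectic space, so the joint eigenspace of any fixed sign pattern is one-dimensional. Consequently the eigenstates of $H$ are $2^{m-k}$ distinct pure stabilizer states $|\Psi_S(\mathbf{y})\rangle$, indexed by $\mathbf{y}\in\mathbb{F}_2^{m-k}$, each with full sign vector $\mathbf{e}(\mathbf{y})=(\mathbf{y}|A\mathbf{y}+\mathbf{u})\in\mathbb{F}_2^m$ and energy $\lambda(\mathbf{y})=m-2|\mathbf{e}(\mathbf{y})|$; the $v_i$ signs in $H$ can be absorbed into $\mathbf{u}$ by passing to the relabeled generators $\tilde P_i=v_iP_i$, which preserves pairwise commutation and the dependency structure encoded by $A$.

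The target distribution on stabilizer descriptions is then $\Pr[\mathbf{y}]\propto f(\lambda(\mathbf{y}))$, which depends on $\mathbf{y}$ only through the Hamming weight $w(\mathbf{y})=|\mathbf{e}(\mathbf{y})|$. This motivates a three-step sampler. Let $N(w)$ denote the count of strings $(\mathbf{y}|\mathbf{t})\in\mathbb{F}_2^m$ of weight $w$ satisfying $\mathbf{t}=A\mathbf{y}+\mathbf{u}$, and set $Z=\sum_{w'}N(w')f(m-2w')$. Step one: draw $w\in\{0,\dots,m\}$ from the marginal $\Pr[w]=N(w)f(m-2w)/Z$; by hypothesis (a) each $N(w)$ is computable in $\poly(n)$ time, and since the support has size $O(m)$ this step is efficient. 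Step two: invoke hypothesis (b) to draw $\mathbf{y}$ uniformly from the weight-$w$ solutions. Step three: convert $\mathbf{y}$ into the stabilizer tableau of $|\Psi_S(\mathbf{y})\rangle$, whose generators are $\{(-1)^{y_i}\tilde P_i\}_{i=1}^{m-k}$; this is routine in $\poly(n,m)$ time via the stabilizer formalism.

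Correctness will follow from a one-line calculation: the probability of emitting $\mathbf{y}$ factors as $\Pr[w(\mathbf{y})]\cdot(1/N(w(\mathbf{y})))=f(\lambda(\mathbf{y}))/Z$, and since $\{|\Psi_S(\mathbf{y})\rangle\}$ is an orthonormal eigenbasis of $H$ with eigenvalues $\lambda(\mathbf{y})$, the resulting mixture equals $\sum_{\mathbf{y}}(f(\lambda(\mathbf{y}))/Z)\,|\Psi_S(\mathbf{y})\rangle\langle\Psi_S(\mathbf{y})|=f(H)/\Tr[f(H)]=\rho_f(H)$. I do not expect any substantial obstacle: the real content of the lemma is the reduction itself, and all combinatorial hardness is deferred to the two promised oracles, namely counting and uniformly sampling fixed-weight solutions of an affine $\mathbb{F}_2$-linear system. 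The only point that requires care is confirming the $m-k=n$ identification above, so that no residual logical degeneracy corrupts the one-to-one correspondence between $\mathbf{y}$ and pure eigenstates; once that is in place, the $k=O(1)$ regime of Theorem~\ref{thm:arbitrary_temperature_dequant} follows immediately by realizing both oracles via enumeration of the $2^k$ cosets of $\ker A$ selected by $\mathbf{u}$.
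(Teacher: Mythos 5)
Your proposal is correct and follows essentially the same route as the paper's proof: use the unique-ground-state hypothesis to get $m-k=n$, sample the Hamming weight from the marginal $\propto N(w)f(m-2w)$ using oracle (a), then sample uniformly within that weight class using oracle (b), and finally output the stabilizer tableau determined by the resulting sign vector. Your extra remark about absorbing the $v_i$ signs into the generators is a harmless clarification of a detail the paper leaves implicit.
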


\begin{proof}
$H$ is a Hamiltonian whose ground state is the code state of a stabilizer code.
Thus, if there is a unique ground state, the code has zero logical qubits, and so there are $n$ independent stabilizers.
That is, $m-k=n$. 
The energy of a state corresponding to string $\mathbf{e} = (\mathbf{y}|\mathbf{t})$ is $m-2 |\mathbf{e}|$. 
Hence, given $f$,  $\rho_f(H) = f(H)/\Tr[f(H)]$ is produced by first sampling the weight according to the distribution
\begin{equation}
    p_f(w) = \frac{f(m-2w) \Delta(w)}{\sum_{w=0}^m f(m-2w) \Delta(w)},
\end{equation}
where $\Delta(w)$ is the degeneracy of the energy $E=m-2w$ eigenspace, which is equal to the number of strings $\mathbf{e} \in \mathbb{F}_2^m$ of Hamming weight $w$ of the form $\mathbf{e} = (\mathbf{y}|\mathbf{t})$ satisfying $A \mathbf{y} = \mathbf{t} + \mathbf{u}$. 
Since we can count this number efficiently by assumption, we can compute $\Delta(w)$ efficiently.
Given weight $w$, we now sample uniformly among the corresponding set of $\Delta(w)$ bitstrings which are solutions $(\mathbf{y} | \mathbf{t})$ to $A \mathbf{y} = \mathbf{t} + \mathbf{u}$. 
We then consider the stabilizer group $\langle (-1)^{y_1}P_1,\ldots,(-1)^{y_{m-k}} P_{m-k} \rangle$. 
Since $m-k = n$, this group specifies a unique stabilizer state, which we can efficiently produce by a Clifford circuit in the stabilizer formalism. 
This distribution over pure states is, by construction, precisely a mixture according to $\rho_f(H)$.
\end{proof}

\THMDecodableDequant*

\begin{proof}
As a consequence of Lemma~\ref{lem:commutingToEnumerationReduction}, it suffices to show that for each weight $w$ we can count and sample from the bitstrings of weight $w$ in a given coset of a binary linear code with $k$ parity checks.
This is a variant of the subset sum problem.
We do so with Algorithm~\ref{alg:dequantDynamicProgramming}, which counts these weight $w$ strings through a dynamic programming approach.
The algorithm exploits the recursive structure of the counting problem based on whether or not the $i$th base syndrome is included.
More precisely, for parity check matrix $B^\intercal = \left[\mathbf{h}_1\cdots\mathbf{h}_m\right]$, the number of bitstrings $\mathbf{y}$ with weight $w$, syndrome $\mathbf{z}$, and whose first $i-1$ entries are zero is the sum of (a) the number of bitstrings with weight $w-1$, syndrome $\mathbf{z} - \mathbf{h}_i$, and first $i$ entries zero, and (b) the number of bitstrings with weight $w$, syndrome $\mathbf{z}$, and first $i$ entries zero.

\begin{algorithm}
\caption{Counting coset elements of a fixed weight.}\label{alg:dequantDynamicProgramming}
\begin{algorithmic}[1]
\Require Target weight $w$, parity check matrix $B^\intercal = \left[\mathbf{h}_1\cdots\mathbf{h}_m\right]\in \mathbb{F}_2^{k\times m}$, target syndrome $\mathbf{z}\in \mathbb{F}_2^k$.
\Ensure Number $N(w) = N(w, 0, \mathbf{z})$ of $\mathbf{y}\in\mathbb{F}_2^{m}$ with  weight $|\mathbf{y}|=w$ and syndrome $B^\intercal \mathbf{y}=\mathbf{z}$.
\For{each $w, i\in \{0, \ldots, m\}$ and $\mathbf{x}\in \mathbb{F}_2^k$}
\State $N(w, i, \mathbf{x}) \gets$ 0
\EndFor
\State $N(0, m, \mathbf{0}) \gets$ 1 \Comment{Corresponds to the all-$0$ string}
\For{each $r\in (0, \ldots, w)$}
    \For{each $i \in (m, m-1, \ldots, 1)$}
        \For{each $\mathbf{x}\in\mathbb{F}_2^k$}
            \State $N(r, i-1, \mathbf{x}) \gets N(r, i, \mathbf{x}) + N(r-1, i, \mathbf{x}-\mathbf{h}_i)$  \Comment{Assuming $\mathbf{h}_i\neq \mathbf{0}$, the number of possible completions of a codeword branches depending on whether we set bit $i$ to 0 or 1.}
        \EndFor
    \EndFor
\EndFor
\State \Return $N(w, 0, \mathbf{z})$\label{line:endStateTableComplete}
\end{algorithmic}
\end{algorithm}
At the end (line~\ref{line:endStateTableComplete} of Algorithm~\ref{alg:dequantDynamicProgramming}), we have computed
\begin{equation}
    N(w, i, \mathbf{z}) = \left|\left\{\mathbf{y}\in \mathbb{F}_2^m: |\mathbf{y}|=w \text{ and } \left(\bigwedge_{j=1}^{i} (y_j=0) \right) \text{ and } B^\intercal \mathbf{y}=\mathbf{z} \right\}\right|.
\end{equation}
Setting $i = 0$ gives the desired count.
We can adapt Algorithm~\ref{alg:dequantDynamicProgramming} to uniformly sample using a simple search-to-decision reduction. 
Specifically, we decide on the bits one at a time. 
When deciding whether to flip the $i^{\text{th}}$ bit, we compute the numbers
\begin{equation}
    n_{0}:= N(r, i, \mathbf{x})  \,\,\,\,\, n_1 :=  N(r-1, i, \mathbf{x}-\mathbf{h}_i),
\end{equation}
outputted by Algorithm~\ref{alg:dequantDynamicProgramming}, and flip a $n_1/(n_0+n_1)$-biased coin that decides whether the $i$th bit is set to 1. 
If we set $x_i$ to 1, then we decrement the allowed value of $w$ and update our current partial syndrome $\mathbf{z}$ by $\mathbf{h}_i$, just like in the counting algorithm.
The complexity of Algorithm~\ref{alg:dequantDynamicProgramming} is $O(2^k m^2)$.
We note, however, that once the lookup table of $N(r, i, \mathbf{z})$ values is prepared, the sampling procedure can be done in time $O(m)$ per sample, since the sampling traces a path of length $O(m)$ through the lookup table.

\end{proof}

\subsection{Beyond commuting Hamiltonians}
We conclude our discussion about competing classical algorithms by some remarks about non-commuting Hamiltonians.
The above spectral sampling approaches are only well-defined for commuting Hamiltonians.
Nonetheless, there are multiple approaches one could take when the Hamiltonian no longer commutes.
However, the techniques we outline below are (a) heuristic in nature and (b) are likely limited to approximate energy optimization rather than Gibbs state sampling, as there is no evidence that these approaches prepare a state close to a Gibbs state.
We have already highlighted one approach previously, namely a simulated annealing algorithm where the starting state is $\ket{0}$ and moves are chosen by randomly selecting a local (e.g. 1-qubit of 2-qubit) Clifford operator.
Alternatively, we could apply a stabilizer version of Prange's algorithm~\cite{prange1962use}.
In this case, however, we are limited to selecting a commuting set of Hamiltonian terms to force a $+1$ eigenstate, which means we wish to find a large independent set in the anticommutation graph (see Definition~\ref{def:commutativity_graph}).
Finally, we can use variational algorithms on parameterized families of trial states as a general method to estimate ground state energies.

\section{Non-commuting Pauli Hamiltonians}\label{sec:general}
When the Hamiltonian is a sum of commuting Pauli terms, the state preparation step of HDQI is always efficient, as shown previously. However, this no longer holds for general Pauli Hamiltonians whose terms may not commute. Note that the second step---decoding---does not depend explicitly at all on the commutativity structure of $H$. 

In this section, we introduce our most general formulation of HDQI. It is a reduction from applying a degree-$\ell$ polynomial of an arbitrary Pauli Hamiltonian $H$ to a combination of two tasks: a generalized pilot state preparation problem, and the same classical decoding problem as before.
In the case $H$ is ``nearly commuting" in a sense that we make precise below, we prove that the state preparation step can in fact be done efficiently or nearly efficiently, by proving that the requisite pilot state takes the form of a matrix product state with low bond dimension.
As an application, we then show that state sampling tasks (Gibbs state, low-energy states) for a broader class of Hamiltonians can be performed efficiently.

\subsection{Symmetric expansions of non-commutative polynomials}\label{sec:non-commuting-symmetric-expansion}
Let \begin{align} \label{eq:non-commuting_Hamiltonian}
    H = \sum_{i=1}^m v_i P_i
\end{align}
be a Hamiltonian, where $v_i \in \set{\pm 1}$ and $P_i$ is a $n$-qubit Pauli operator.
Recall that in the commuting Hamiltonian case we began by expanding the univariate polynomial $\calP(H)$ into elementary symmetric polynomials. The fact that terms in $H$ can now either commute or anticommute significantly complicates this expansion. In this section, we derive a general symmetric polynomial expansion for univariate polynomials whose input is a sum of formal variables $\sum_{i=1}^m z_i$ which either commute or anticommute. To do so, it will be convenient to keep track of the commutation structure by way of a graph.

\begin{definition}[Anticommutation graph] \label{def:commutativity_graph}
    Let $z_1, \dots, z_m$ be formal variables such that every pair of distinct variables either commutes or anticommutes. Define an undirected graph $G = (V, E)$ with node set $V$ and edge set $E$, such that every node $w_i$ represents a formal variable $z_i$. We define $E := \set{(w_i, w_j) \,|\, z_i z_j = -z_j z_i}$. That is, connect two nodes $w_i, w_j$ if their corresponding variables anticommute.
\end{definition}
If a set of variables $z_1, \dots, z_m$ has anticommutation graph $G$ with adjacency matrix $A \in \mathbb{F}_2^{m \times m}$, then \begin{align}
    z_i z_j = (1 - 2A_{ij}) z_j z_i .
\end{align}
For a given collection of formal variables $(z_i)$, there are many ways we could order the variables in a product. Once we rearrange the variables back in increasing order, the anticommutation implies that the rearranged version may have picked up a sign. It will be useful to formally state this sign as a function of the permutation $\pi \in S^m$, where $S^m$ is the symmetric group of $m$ elements.

\begin{definition}[Anticommutation sign] \label{def:sign_function}
    Let $z_1, \dots, z_m$ be formal variables with anticommutation graph $G$, given in Definition~\ref{def:commutativity_graph}. 
    Further suppose that for all $i$, $z_i^2 = 1$. 
    The \textit{sign function} of a permutation $\pi \in S^m$ is characterized by the anti-commutation graph $G$ and is defined as \begin{align}
        \sgn_{G}(\pi) := (z_m z_{m-1} \cdots z_1) (z_{\pi(1)} \cdots z_{\pi(m)}) \in \set{\pm 1} .
    \end{align}
\end{definition}
Definition~\ref{def:sign_function} is equivalent to the following implicit definition of the sign function.
\begin{align}
    z_{\pi(1)} \cdots z_{\pi(m)} = \sgn_G(\pi) \cdot z_1 \cdots z_m .
\end{align}
Importantly, the sign function depends only on the underlying anticommutation graph $G$ and not on the formal variables $(z_i)$ themselves. 
For this reason, we label the sign function with $G$ rather than $(z_i)$. 

Given a set of formal variables $z_1, \dots, z_m$, we may wish to consider the sign function for only a subset of terms $z_{i_1}, \dots, z_{i_p}$ (where we enforce a conventional order $i_1 < i_2 < \cdots < i_p$). 
Furthermore, we may wish to include multiple copies of each variable, in which case permutations which swap two copies of the same variable are equivalent to the identity permutation. 
We may specify such a subset by using a counting vector $\bmu \in \mathbb{Z}_{\geq 0}^{m}$, where $\mu_i$ counts the number of times $z_i$ is included. For example, if we wish to compute a sign function for a permutation $\pi \in S^4$ for $z_1, z_2, z_2, z_3$, then $\bmu = (1, 2, 1, 0, \dots, 0)$. 
Given $\bmu$, we may define a new graph $\widetilde{G}(\bmu)$ with $|\bmu| = \sum_{i=1}^m \mu_i$ nodes labeled $w_1^{(1)}, \dots, w_1^{(\mu_1)}, w_2^{(1)}, \dots, w_m^{(\mu_m)}$, such that $w_i^{(j)}$ is associated with the $j$th copy of $z_i$. We connect $w_i^{(j)}$ with $w_{i'}^{(j')}$ if $z_i$ and $z_{i'}$ anticommute, or equivalently if $w_i$ and $w_{i'}$ are connected in the original graph $G$. In this case we define \begin{align}
    \sgn_{G, \bmu}(\pi) =  \sgn_{\widetilde{G}(\bmu)}(\pi) ,
    \label{eq:Gtilde_sign_function}
\end{align}
where $\pi \in S^p$ and $p = |\bmu|$. If $\bmu = (1, 1, \dots, 1)$, then $\widetilde{G}(\bmu) = G$. Note that while these definitions are all formal, the most convenient way to work with the sign function is through the simple representation \begin{align}
    \sgn_{G, \bmu}(\pi) = (-1)^{\operatorname{inv}(G, \bmu, \pi)} ,
    \label{eq:sign_function_as_inversions}
\end{align}
where $\operatorname{inv}(G, \bmu, \pi)$ is the number of \emph{inversions} needed to sort the variables back into place.
That is, the number of anticommuting swaps needed to sort the variables into ascending order.
With the sign function in place, we next formulate the average value of the sign function, which depends only on $G$. 
We call this average the \textit{antisymmetry character} of the graph $G$.
\begin{definition}[Antisymmetry character of a graph] \label{def:alpha_function}
    Let $z_1, \dots, z_m$ be formal variables with anticommutation graph $G$ as in Definition~\ref{def:commutativity_graph}.
    Then, the \emph{antisymmetry character} of $G$ is defined as \begin{align}
        \alpha_G = \underset{{\pi \sim S^m}}{\mathbf{E}} [\sgn_G(\pi)] = \frac{1}{m!} \sum_{\pi \in S^m} \sgn_G(\pi) ,
    \end{align}
    where $S^m$ is the symmetric group on $m$ elements and $\sgn_G(\pi)$ is the sign function given in Definition~\ref{def:sign_function}. 
    If we wish to restrict the formal variables to a subset with multiplicity, we again specify the restriction by way of a counting vector $\bmu \in \mathbb{Z}_{\geq 0}^{m}$. 
    Then the antisymmetry character restricted to such a subset of variables is given by \begin{align}
        \alpha_G(\bmu) = \frac{\mu_1! \cdots \mu_m!}{ |\bmu|!} \sum_{\pi \in S(\bmu)} \sgn_{G, \bmu}(\pi) .
    \end{align}
    where $|\bmu| = \sum_{i=1}^m \mu_i$ and $S(\bmu)$ is the set of all strings over the character set $\set{z_1, \dots, z_m}$ which contain $z_i$ $\mu_i$ times. Here, $\pi \in S(\bmu)$ is a slight abuse of notation intended to denote the fact that any sequence in $S(\bmu)$ can be equivalently thought of as a permutation $\pi$ mapping the canonical ordering $z_1^{\mu_1} \cdots z_m^{\mu_m}$ to the sequence associated with $\pi$. The normalization arises from the fact that $|S(\bmu)| = |\bmu|!/\mu_1! \cdots \mu_m!$. Equivalently, \begin{align} \label{eq:quotiented_alpha_function}
        \alpha_G(\bmu) = \underset{\pi \sim S(\bmu)}{\mathbf{E}}[\sgn_{G, \bmu}(\pi)] .
    \end{align}
    Depending on context, it may be convenient to refer directly to the actual sequences $\tau$ in $S(\bmu)$ instead of the corresponding permutations $\pi$. We do so in the notation \begin{align}
        \alpha_G(\bmu) = \frac{\mu_1! \cdots \mu_m!}{ |\bmu|!} \sum_{\tau \in S(\bmu)} \sgn_{G, \bmu}(\tau) ,
    \end{align}
    where $\sgn_{G, \bmu}(\tau)$ is the sign incurred by sorting the sequence $\tau$ back to canonical order.
\end{definition}
The antisymmetry character is simply the average sign incurred by sorting a set of variables with anticommutation graph $G$ over all initial orderings of the variables. 
In the case where we specify the variables at play with $\bmu$, the character is the average sign incurred by the variables specified by $\bmu$, across all possible inequivalent initial orderings.
As we next show, the antisymmetry character plays a key role in the generalized symmetric polynomial expansion.

We recall some notation which we first introduced in Theorem~\ref{thm:commuting_symmetric_polynomial_expansion}. For $1 \leq k \leq m$ and $\bmu \in \Z_{\geq 0}^m$ such that $|\bmu| = k$, let \begin{align}
    \binom{k}{\bmu} = \frac{k!}{\mu_1! \mu_2! \cdots \mu_m!} \label{eq:multinomial_coefficient}
\end{align} 
be the multinomial coefficient counting the number of ways we can distribute $k$ balls into $m$ boxes such that that $i$th box gets $\mu_i$ balls.
We use $\bmu \vDash_m k$ to denote vectors $\bmu$ which are \emph{weak compositions} of $k$; that is, $|\bmu| = k$.
Further, we denote the vector $\bmu \pmod{2} \in \F_2^{m}$ as the element-wise mod-2 parity of a vector $\bmu \in \Z^{m}_{\geq 0}$.
\begin{definition}[$\beta$-function] \label{def:beta_function}
    Let $z_1, \dots, z_m$ be a set of formal variables with anti-commutation graph $G$, as in Definition~\ref{def:commutativity_graph}. Then, for $1 \leq k \leq m$, we define the $\beta$-function of order $k$ as \begin{align}
        \beta_G^{(k)}(\mathbf{y}) = \sum_{\substack{\bmu : \bmu \vDash_m k\\\bmu \text{ (mod 2)} = \mathbf{y}}} \binom{k}{\bmu} \cdot \alpha_G(\bmu), \quad \text{ for } \vec y \in \F_2^m.
    \end{align}
    Here, $\alpha_G(\bmu)$ is given by Definition~\ref{def:alpha_function} and $\binom{k}{\bmu}$ is the multinomial coefficient from Eqn.~(\ref{eq:multinomial_coefficient}).
    Note that we have $\beta_G^{(k)}(\mathbf{y}) = 0$ whenever $|\mathbf{y}| > k$, since the sum becomes vacuous. 
\end{definition}

\begin{theorem}[Non-abelian univariate polynomial symmetric expansion]\label{thm:generalized_symmetric_expansion}
    Let $\mathcal{P}(x)$ be a degree-$\ell$ polynomial of a single formal variable $x$, and let $x = \sum_{i=1}^m z_i$ such that $z_i^2 = 1$ for all $i$ and the $z_i$ have an anticommutation graph given by $G$, as in Definition~\ref{def:commutativity_graph}. Then there exists constants $c_k \in \mathbb{R}$ for $k \in \set{0, \dots, \ell}$ such that 
    \begin{align}
        \mathcal{P}\left( \sum_{i=1}^m z_i \right) & = \sum_{k=0}^\ell c_k \, \,\sum_{\mathbf{y} \in \F_2^m}
       \beta_G^{(k)}(\mathbf{y}) \cdot
        z_{1}^{y_1} \cdots z_{m}^{y_m} ,
    \end{align}
    where $\beta_G^{(k)}(\mathbf{y})$ is the $\beta$-function of order $k$, as in Definition~\ref{def:beta_function}.
\end{theorem}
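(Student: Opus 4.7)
The plan is to mirror the proof of Theorem~\ref{thm:commuting_symmetric_polynomial_expansion} in the commuting case, tracking signs carefully using the sign function of Definition~\ref{def:sign_function} and the antisymmetry character $\alpha_G(\bmu)$ of Definition~\ref{def:alpha_function}. The only place where the commuting argument truly used commutativity was in freely rearranging the factors $z_{a_1}\cdots z_{a_k}$ into canonical order to produce a multinomial coefficient; in the present setting each such rearrangement instead picks up a sign, and once we average over all orderings with a given counting vector we recover exactly the factor $\alpha_G(\bmu)$.

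Concretely, I would write $\calP(x) = \sum_{k=0}^{\ell} c_k x^k$ and expand
\[
\Big(\sum_{i=1}^m z_i\Big)^{\!k} \;=\; \sum_{(a_1,\dots,a_k)\in[m]^k} z_{a_1}\cdots z_{a_k}
\;=\; \sum_{\bmu\,\vDash_m k}\; \sum_{\tau\in S(\bmu)} z_{\tau(1)}\cdots z_{\tau(k)},
\]
grouping the $m^k$ sequences by their counting vector $\bmu$. By the definition of $\sgn_{G,\bmu}$ through Eqn.~(\ref{eq:Gtilde_sign_function}), each ordered product equals $\sgn_{G,\bmu}(\tau)\cdot z_1^{\mu_1}\cdots z_m^{\mu_m}$, so the inner sum is $\big(\sum_{\tau\in S(\bmu)}\sgn_{G,\bmu}(\tau)\big)\, z_1^{\mu_1}\cdots z_m^{\mu_m}$. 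Since $|S(\bmu)|=\binom{k}{\bmu}$, Eqn.~(\ref{eq:quotiented_alpha_function}) identifies this prefactor as $\binom{k}{\bmu}\alpha_G(\bmu)$. Now using $z_i^2=1$ to replace $z_1^{\mu_1}\cdots z_m^{\mu_m}$ by $z_1^{y_1}\cdots z_m^{y_m}$ with $\mathbf{y}=\bmu\pmod 2$, regrouping over $\mathbf{y}\in\F_2^m$ yields
\[
\Big(\sum_{i=1}^m z_i\Big)^{\!k} \;=\; \sum_{\mathbf{y}\in\F_2^m}\Bigg[\sum_{\substack{\bmu\,\vDash_m k\\ \bmu\bmod 2\,=\,\mathbf{y}}} \binom{k}{\bmu}\alpha_G(\bmu)\Bigg]\, z_1^{y_1}\cdots z_m^{y_m},
\]
and the bracketed quantity is exactly $\beta_G^{(k)}(\mathbf{y})$ from Definition~\ref{def:beta_function}. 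Multiplying by $c_k$ and summing over $k\le\ell$ gives the claimed expansion, with the vanishing of $\beta_G^{(k)}(\mathbf{y})$ for $|\mathbf{y}|>k$ ensuring the sum over $\mathbf{y}\in\F_2^m$ is effectively finite.

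The main obstacle is conceptual rather than technical: one has to justify that sorting $z_{\tau(1)}\cdots z_{\tau(k)}$ to canonical order is well-defined as a scalar times a product, despite the presence of repeated indices and the non-commutativity. This is handled cleanly by passing to the expanded graph $\widetilde{G}(\bmu)$, in which distinct copies of the same variable are non-adjacent (they commute, by $z_i^2=1$), so that each adjacent transposition contributes a well-defined sign and the total sign depends only on the number of inversions across anticommuting pairs, as in Eqn.~(\ref{eq:sign_function_as_inversions}). Once this bookkeeping is in place, everything else is an exact parallel of the commuting proof with the scalar $\alpha_G(\bmu)$ threaded through, and the identification of $\beta_G^{(k)}$ is automatic.
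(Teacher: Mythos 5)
Your proposal is correct and follows essentially the same route as the paper's proof: expand $(\sum_i z_i)^k$ over ordered index sequences, group by counting vector $\bmu$, identify the sum of signs over orderings in $S(\bmu)$ with $\binom{k}{\bmu}\alpha_G(\bmu)$, reduce exponents mod $2$, and regroup by parity vector $\mathbf{y}$ to recover $\beta_G^{(k)}(\mathbf{y})$. The well-definedness discussion via $\widetilde{G}(\bmu)$ is a sound (and welcome) justification of a step the paper handles implicitly through its definitions.
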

\begin{proof}
    Write $\calP(x) = \sum_{k=0}^\ell c_k x^k$. Then by definition,
    \begin{align}
        \mathcal{P}\left( \sum_{i=1}^m z_i \right) & = \sum_{k=0}^\ell c_k \left( \sum_{i=1}^m z_i \right)^k = \sum_{k=0}^\ell c_k \sum_{1 \leq a_1, \dots, a_k \leq m} z_{a_1} \cdots z_{a_k}.
    \end{align}
    We simplify by collecting like terms. Specifically, we claim that
    \begin{align} \label{eq:generalized_expansion_partial}
    \sum_{1 \leq a_1, \dots, a_k \leq m} z_{a_1} \cdots z_{a_k} & =  \sum_{\bmu : \bmu \vDash_m k}  \binom{k}{\bmu} \cdot \alpha_G(\bmu) \cdot z_{1}^{\mu_1} \cdots z_{m}^{\mu_m} ,
\end{align}
To verify this equality, note that on the left-hand side of Eqn.~(\ref{eq:generalized_expansion_partial}) every term in the sum is a $k$-th power, and we sum over every choice of indices, including repeated indices. Therefore, we first categorize each term by a counting vector $\bmu$ such that $\mu_i$ counts the number of times $z_i$ appears in the term. 
Once we have fixed the count $\bmu$, the only remaining degree of freedom is the choice of how to order the terms.
Each such sequence is an element of $S(\bmu)$, which is the set of all strings comprised of characters in $\set{z_1, \dots, z_m}$ such that $z_i$ is present $\mu_i$ times.
We sum over all such sequences, and there are $\binom{k}{\bmu}$ of them. 
Thus, \begin{align}
    \sum_{1 \leq a_1, \dots, a_k \leq m} z_{a_1} \cdots z_{a_k} & = \sum_{\bmu : \bmu \vDash_m k} \sum_{\pi \in S(\bmu)} [z_1^{\mu_1} \cdots z_m^{\mu_m}]_\pi ,
\end{align}
where $[z_1^{\mu_1} \cdots z_m^{\mu_m}]_\pi$ an ordered sequence in $S(\bmu)$ which can be interpreted as a permutation $\pi$ of the canonical ordering $z_1^{\mu_1} \cdots z_m^{\mu_m}$. We can then sort every term back into increasing order, which incurs a sign function. Thus, by Eqn.~(\ref{eq:quotiented_alpha_function}), \begin{align}
    \sum_{1 \leq a_1, \dots, a_k \leq m} z_{a_1} \cdots z_{a_k} & = \sum_{\bmu : \bmu \vDash_m k} \sum_{\pi \in S(\bmu)} \sgn_{G, \bmu}(\pi) z_1^{\mu_1} \cdots z_m^{\mu_m} \\
    & = \sum_{\bmu : \bmu \vDash_m k} \frac{k!}{\mu_1! \cdots \mu_m!} \cdot \frac{\mu_1! \cdots \mu_m!}{k!} \sum_{\pi \in S(\bmu)} \sgn_{G, \bmu}(\pi) z_1^{\mu_1} \cdots z_m^{\mu_m} \\
    & =  \sum_{\bmu : \bmu \vDash_m k} \underset{\pi \sim S(\bmu)}{\mathbf{E}} \left[\sgn_{G, \bmu}(\pi)\right] z_1^{\mu_1} \cdots z_m^{\mu_m} \\
    & = \sum_{\bmu : \bmu \vDash_m k} \binom{k}{\bmu} \cdot \alpha_G(\bmu) \cdot z_1^{\mu_1} \cdots z_m^{\mu_m} .
\end{align}
Now, because $z_i^2 = 1$, we can further group terms by noting that only the parities of the $\mu_i$ matter. 
Thus, \begin{align}
    \sum_{\bmu : \bmu \vDash_m k} \binom{k}{\bmu} \cdot \alpha_G(\bmu) \cdot z_1^{\mu_1} \cdots z_m^{\mu_m} & = 
    \sum_{\mathbf{y} \in \F_2^m} \left( \sum_{\substack{\bmu : \bmu \vDash_m k\\\bmu \text{ (mod 2)} = \mathbf{y}}} \binom{k}{\bmu} \cdot \alpha_G(\bmu) \right) z_1^{y_1} \cdots z_m^{y_m} \label{eq:non-commutative_expansion_divergence_point} \\
    & = \sum_{\mathbf{y} \in \F_2^m} \beta_{G}^{(k)}(\mathbf{y}) \cdot z_1^{y_1} \cdots z_m^{y_m} ,
\end{align}
where we used Definition~\ref{def:beta_function}. The above equation, in combination with Eqn.~(\ref{eq:generalized_expansion_partial}), completes the proof.
\end{proof}
It is illustrative to compare the above derivation with that in the proof of Theorem~\ref{thm:commuting_symmetric_polynomial_expansion}. In particular, up to the point of Eqn.~(\ref{eq:commutative_expansion_divergence_point}) in the proof of Theorem~\ref{thm:commuting_symmetric_polynomial_expansion} and Eqn.~(\ref{eq:non-commutative_expansion_divergence_point}) in the proof of Theorem~\ref{thm:generalized_symmetric_expansion}, the expansion looks essentially the same, except that in the generalized case we have an extra factor of $\alpha_G(\bmu)$ which accounts for the interference of all the sign functions obtained from the anticommutation structure. In the proof of Theorem~\ref{thm:commuting_symmetric_polynomial_expansion}, the lack of the antisymmetry character enabled us to exploit the fact that the multinomial coefficient is symmetric under re-ordering of the entries of $\bmu$, and hence we can organize the sum over $\bmu$ over only the weight of $\mathbf{y}$, which simplifies the expansion in such a way that every term with the same degree has the same coefficient. The presence of the antisymmetric character destroys the symmetry, giving rise to a much more complicated sum. 

More precisely, the extra complication introduced by the anticommutation arises from the fact that higher-degree terms can interfere with lower-degree terms. In the commuting case, a term of the form $z_1 z_2 z_1$ is exactly equivalent to $z_2$, because $z_1^2 = 1$. Therefore, any time there is a repetition, we need only use combinatorics to keep track of the number of repetitions. However, in the non-commuting case, the same term may instead be $-z_2$ if $z_1 z_2 = -z_2 z_1$. Therefore, this degree-3 term may actually cancel out a degree-1 term. The $\beta$-function encodes precisely how these cancellations occur for any anticommutation graph $G$.

\subsection{Main reduction: polynomials of Hamiltonians to pilot state preparation and classical decoding}\label{sec:main-non-commuting}
\begin{theorem}[Main reduction, general case] \label{thm:main_reduction_general_case}
    Let $H = \sum_{i=1}^m v_i P_i$ be a Hamiltonian on $n$ qubits, where $v_i \in \set{\pm 1}$ and the $P_i$ are $n$-qubit Paulis. Assume that $m = \poly(n)$. 
    Suppose that $\calD_H^{(\ell)}$ is a weight-$\ell$ decoding oracle for $H$ (as in Definition~\ref{def:decoding_oracle}). 
    Let $\calP(x)$ be any univariate polynomial of degree $\ell \leq m$. 
    There is a quantum algorithm which on input a pilot state
    \begin{align}
        \ket{\calR^\ell(H)} & = \frac{1}{\calN} \sum_{k=0}^{\ell} c_k \sum_{\mathbf{y} \in \mathbb{F}_2^m} \beta_G^{(k)}(\mathbf{y}) \ket{\mathbf{y}} ,
    \end{align}
    where $\beta_G^{(k)}$ is given in Definition~\ref{def:beta_function} and $\calN$ is a normalization constant,  prepares the state 
    \begin{align}
        \rho_{\calP}(H) = \frac{\calP^2(H)}{\Tr[\calP^2(H)]}
    \end{align}
    using a single call to $\calD_H^{(\ell)}$.
    The algorithm, described in Algorithm~\ref{alg:general_HDQI_algorithm}, runs in time $\poly(n)$.
\end{theorem}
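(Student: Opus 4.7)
The plan is to generalize the reduction from the commuting case (Theorem~\ref{thm:commuting_reduction_to_decoding}) by substituting the non-commutative symmetric expansion of Theorem~\ref{thm:generalized_symmetric_expansion} for the simpler expansion of Theorem~\ref{thm:commuting_symmetric_polynomial_expansion}. Setting $z_i = v_i P_i$ (which satisfies $z_i^2 = I$), the polynomial $\calP(H)$ expands as
\begin{equation}
    \calP(H) = \sum_{k=0}^\ell c_k \sum_{\mathbf{y}\in\F_2^m} \beta_G^{(k)}(\mathbf{y}) \, z_1^{y_1}\cdots z_m^{y_m} = \sum_{k=0}^\ell c_k \sum_{\mathbf{y}\in\F_2^m} \beta_G^{(k)}(\mathbf{y}) \, v_{\mathbf{y}} \, P_{\mathbf{y}},
\end{equation}
where $P_{\mathbf{y}} := P_{i_1}\cdots P_{i_r}$ is the product of $\{P_i\}_{i\in\supp(\mathbf y)}$ in \emph{canonical} (increasing-index) order, and $v_{\mathbf y} := \prod_{i\in\supp(\mathbf y)} v_i$. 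This matches exactly the amplitude structure encoded in the pilot state $\ket{\calR^\ell(H)}$ supplied as input.

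I would then mirror the six steps of Algorithm~\ref{alg:amplitude_transform}. First, initialize register $A$ in the pilot state $\ket{\calR^\ell(H)}_A$ and registers $B,C$ in $\ket{\Phi^n}_{BC}$. Second, apply $\bigotimes_i Z_i^{(1-v_i)/2}$ on $A$ to dress the amplitudes with the signs $v_{\mathbf y}$. Third, apply the controlled Paulis $\prod_{i=1}^m (C^{(i)}P_i)_{A\to B}$ sequentially in order of increasing $i$; this is the critical point where non-commutativity enters. Because the control qubits commute, performing these operations in canonical order leaves, on each branch $\ket{\mathbf y}_A$, exactly the canonically-ordered product $P_{\mathbf y}$ acting on $B$. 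The resulting state is
\begin{equation}
    \frac{1}{\calN}\sum_{k=0}^\ell c_k \sum_{\mathbf y \in\F_2^m} \beta_G^{(k)}(\mathbf y)\, v_{\mathbf y}\, \ket{\mathbf y}_A \otimes (P_{\mathbf y}\otimes I)_{BC}\ket{\Phi^n}_{BC}.
\end{equation}
Fourth, apply the coherent Bell measurement (Appendix~\ref{app:bell_basis}) on $BC$, which sends $(P_{\mathbf y}\otimes I)\ket{\Phi^n}$ to $\omega_{\mathbf y}\ket{\sym(P_{\mathbf y})}$ for some branch-dependent phase $\omega_{\mathbf y}$. The key observation is that $\sym$ is a group homomorphism modulo phase, so even though $P_{\mathbf y}$ depends on the ordering, $\sym(P_{\mathbf y}) = \bigoplus_{i\in\supp(\mathbf y)} \sym(P_i) = B^\intercal \mathbf y$ is independent of ordering. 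Hence $\calD_H^{(\ell)}$ can be invoked on $A$ and $BC$ exactly as in the commuting case to zero out register $A$. Finally, uncompute the Bell measurement (which restores the phases $\omega_{\mathbf y}$), and trace out $C$. The resulting state on $B$ is proportional to $\calP(H)\calP(H)^\dagger = \calP^2(H)$, which upon normalization is $\rho_\calP(H)$.

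The main conceptual obstacle, compared with the commuting case, is ensuring that the $P_{\mathbf y}$ appearing after Step~3 matches the canonically-ordered product in the expansion of $\calP(H)$; this is why we insist on applying the controlled Paulis in increasing index order and why the $\beta_G^{(k)}$ coefficients (which already account for all reorderings via $\alpha_G$) yield the correct amplitudes. A secondary nuisance is bookkeeping the Bell-basis phase $\omega_{\mathbf y}$, but this is handled automatically since the inverse coherent Bell measurement in Step~5 cancels the phase introduced in Step~4 on each branch. No new decoding capability is required: the symplectic code $\Symp(H)$ is defined purely from $\{\sym(P_i)\}$ and is blind to whether the $P_i$ commute. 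The final runtime bound $\poly(n)$ follows because Steps~2--5 are Clifford or controlled-Clifford circuits of size $\poly(n,m)=\poly(n)$, and the oracle $\calD_H^{(\ell)}$ is called once; the only potentially expensive ingredient, namely the preparation of $\ket{\calR^\ell(H)}$, is supplied as an input by assumption.
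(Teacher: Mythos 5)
Your proposal is correct and follows essentially the same route as the paper's proof: substitute the non-abelian expansion of Theorem~\ref{thm:generalized_symmetric_expansion} into the commuting-case circuit, apply the controlled Paulis, perform the coherent Bell measurement, decode on $\Symp(H)$, and invert. Your explicit attention to applying the controlled Paulis in canonical increasing-index order and to the branch-dependent Bell-basis phases is a welcome addition that the paper's proof leaves implicit, but it does not change the argument.
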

\begin{proof}
    This reduction proceeds almost identically to that of Theorem~\ref{thm:commuting_reduction_to_decoding}. 
    The key difference is that in Theorem~\ref{thm:commuting_reduction_to_decoding}, the pilot state is always efficient to prepare, so we prepare it ourselves and keep the reduction only to decoding. 
    In the general case, the pilot state need not be efficient to prepare, so we reduce to both the problems of pilot state preparation \emph{and} decoding. 
    Once the pilot state has been prepared, however, the reduction to decoding is analogous.
    
    To begin, tensor on a maximally entangled state $\ket{\Phi^n}$ on $n$ pairs of qubits, to obtain 
    \begin{align}
        \ket{\psi_1}_{ABC} = \ket{\calR^\ell(H)}_A \otimes \ket{\Phi^n}_{BC} = \frac{1}{\calN} \sum_{k=0}^{\ell} c_k \sum_{\mathbf{y} \in \mathbb{F}_2^m} \beta_G^{( k)}(\mathbf{y}) \ket{\mathbf{y}}_A \otimes \ket{\Phi^n}_{BC} .
    \end{align}
    Next, we apply $\bigotimes_{i=1}^m Z_i^{(1-v_i)/2}$ to register $A$, which applies $Z$ to the $i$th qubit if $v_i = -1$: 
    \begin{align}
        \ket{\psi_2}_{ABC} = \frac{1}{\calN} \sum_{k=0}^{\ell} c_k \sum_{\mathbf{y} \in \mathbb{F}_2^m} \beta_G^{( k)}(\mathbf{y}) \left( \prod_{i \in \supp(\mathbf{y})} v_i \right) \ket{\mathbf{y}}_A \otimes \ket{\Phi^n}_{BC} .
    \end{align}
    We now apply controlled Paulis $\prod_{i=1}^m (C^{(i)}P_i)_{A \to B}$, where $(C^{(i)}P_i)_{A \to B}$ applies the $n$-qubit Pauli $P_i$ to register $B$ if the $i$th qubit of register $A$ is $1$. 
    Recall that $P_{\mathbf{y}} = \prod_{i \in \supp(\mathbf{y})} P_i$. 
    In this notation, we now have 
    \begin{align}
        \ket{\psi_3}_{ABC} = \frac{1}{\calN} \sum_{k=0}^{\ell} c_k \sum_{\mathbf{y} \in \mathbb{F}_2^m} \beta_G^{( k)}(\mathbf{y}) \left( \prod_{i \in \supp(\mathbf{y})} v_i \right) \ket{\mathbf{y}}_A \otimes (P_{\mathbf{y}} \otimes I)_{BC} \ket{\Phi^n}_{BC} .
    \end{align}
    By applying the coherent Bell measurement technique of Theorem~\ref{thm:app-Bell_measurement}, we create 
    \begin{align}
        \ket{\psi_4}_{ABC} = \frac{1}{\calN} \sum_{k=0}^{\ell} c_k \sum_{\mathbf{y} \in \mathbb{F}_2^m} \beta_G^{( k)}(\mathbf{y}) \left( \prod_{i \in \supp(\mathbf{y})} v_i \right) \ket{\mathbf{y}}_A \otimes \ket{\sym(P_\mathbf{y})}_{BC} ,
    \end{align}
    where $\sym(P_{\mathbf{y}})$ is the symplectic representation of $P_{\mathbf{y}}$ as described in Definition~\ref{def:symplectic_vectors}. 
    Using the decoding oracle $\calD^{(\ell)}_H$ on registers $A$ and $BC$, we can uncompute $\ket{\mathbf{y}}_A$ and discard register $A$. 
    \begin{align}
        \ket{\psi_5}_{BC} = \frac{1}{\calN} \sum_{k=0}^{\ell} c_k \sum_{\mathbf{y} \in \mathbb{F}_2^m} \beta_G^{( k)}(\mathbf{y}) \left( \prod_{i \in \supp(\mathbf{y})} v_i \right) \otimes \ket{\sym(P_\mathbf{y})}_{BC} .
    \end{align}
    We then uncompute the coherent Bell measurement:
    \begin{align}
        \ket{\psi_6}_{BC} & = \frac{1}{\calN} \sum_{k=0}^{\ell} c_k \sum_{\mathbf{y} \in \mathbb{F}_2^m} \beta_G^{(k)}(\mathbf{y}) \left( \prod_{i \in \supp(\mathbf{y})} v_i \right) (P_{\mathbf{y}} \otimes I)_{BC} \ket{\Phi^n}_{BC} \\
        & = \frac{1}{\calN} \sum_{k=0}^{\ell} c_k \sum_{\mathbf{y} \in \mathbb{F}_2^m} \beta_G^{( k)}(\mathbf{y}) (z_1^{y_1} z_2^{y_2} \cdots z_m^{y_m} \otimes I)_{BC} \ket{\Phi^n}_{BC} ,
    \end{align}
    where $z_i = v_i P_i$. By comparing to the generalized polynomial expansion from Theorem~\ref{thm:generalized_symmetric_expansion}, we find that \begin{align}
        \ket{\psi_6} = \frac{1}{\calN} (\calP(H) \otimes I) \ket{\Phi^n} ,
    \end{align}
    at which point tracing out the second register of $\ket{\Phi^n}$ completes the proof.
    Each step described can be implemented efficiently.
\end{proof}

\begin{algorithm}[H]        
  \caption{Hamiltonian DQI for general Hamiltonians}
  \label{alg:general_HDQI_algorithm}
  \begin{algorithmic}[1]    
  \Require{$H = \sum_{i=1}^m v_i P_i = \sum_i \lambda_i \ketbra{\lambda_i}{\lambda_i}$ (signed Pauli Hamiltonian on $n$ qubits), $|\Phi^n\rangle$ (maximally entangled state of $2n$ qubits), $\ket{\calR^{\ell}(H)}$ (pilot state), $\calD^{(\ell)}_H$ (decoding oracle), $\calP$ (polynomial of degree $\ell$).}
    \Ensure{$\rho_{\calP}(H) \propto \calP^2(H)$ ($n$-qubit mixed state).}
      \State Begin with the state $\ket{\calR^{\ell}(H)}_A \otimes \ket{\Phi^n}_{BC}$.
      \State Apply $\bigotimes_{i=1}^m Z_i^{\frac{1-v_i}{2}}$ on the $A$ register. 
      Apply the $\prod_{i=1}^m (C^{(i)} P_i)_{A\rightarrow B}$, where each operation is a Pauli $P_i$ on register $B$, controlled by the $i$th qubit of register $A$.
      \State Apply the coherent Bell measurement to registers $BC$, mapping the Bell basis to its symplectic representation as specified in Theorem~\ref{thm:app-Bell_measurement}.
      \State Uncompute register $A$ by applying $\calD_{H}^{(\ell)}$ to registers $A$ and $BC$. 
      Undo Step 3 by applying the inverse coherent Bell measurement to registers $BC$.
      \State The reduced density matrix on $B$ now equals $\rho_{\calP}(H)$. Output register $B$.
  \end{algorithmic}
\end{algorithm}

We emphasize the parallel between Theorem~\ref{thm:commuting_reduction_to_decoding} and Theorem~\ref{thm:main_reduction_general_case} (correspondingly, between Algorithm~\ref{alg:amplitude_transform} and Algorithm~\ref{alg:general_HDQI_algorithm}).
In both cases, we expand the polynomial $\calP(H)$ into a sum of symmetric terms. We then use a combination of a pilot state representation the expanded polynomial and a decoder which carefully transfers the amplitudes from the pilot state onto the maximally entangled state. 
Theorem~\ref{thm:commuting_reduction_to_decoding} is a special case of Theorem~\ref{thm:main_reduction_general_case} in which the pilot state preparation is always efficient, as proven in Lemma~\ref{lemma:commuting_efficient_resource_state}, because the antisymmetry characters are trivial. 
Note that this generalized reduction is also robust to failures in the decoding oracle.

\begin{theorem}[Robustness of general reduction] \label{thm:robustness_general}
Let $H = \sum_{i=1}^m v_i P_i$ be a Pauli Hamiltonian, where $v_i \in \set{\pm 1}$ and the $P_i$ are $n$-qubit Paulis. 
Let $\calP(x)$ be an arbitrary polynomial of degree at most $\ell$.
Suppose that the Hamiltonian symplectic code $B^\intercal = \Symp(H)$ has distance at least $2 \ell + 1$,
and that there is a decoding oracle $\calD^{(\ell, \epsilon)}_H$ which for each $j$ from $1$ to $\ell$ satisfies $\calD^{(\ell, \epsilon)}_H \ket{\mathbf{y}} \ket{B^\intercal \mathbf{y}} = \ket{0} \ket{B^\intercal \mathbf{y}}$ for a $1 - \epsilon$ fraction of all errors $\mathbf{y}$ with weight $j$.
On the remaining $\epsilon$ fraction, $\calD^{(\ell, \epsilon)}_H$ may behave arbitrarily so long as unitarity is maintained.
Then the state $\rho_{\calP, \epsilon}(H)$ produced by the method of Theorem~\ref{thm:main_reduction_general_case} which uses $\calD^{(\ell, \epsilon)}_H$ in place of a perfect decoding oracle $\calD^{(\ell)}_H$, satisfies \begin{align}
    \calF(\rho_{\calP, \epsilon}(H), \rho_{\calP}(H)) \geq 1 - \epsilon ,
\end{align}
where $\calF$ is the fidelity, and \begin{align}
    \frac{1}{2} \lVert \rho_{\calP, \epsilon}(H) - \rho_{\calP}(H) \rVert_1 \leq \sqrt{2 \epsilon} ,
\end{align}
where $\frac{1}{2} \lVert \cdot \rVert_1$ is the trace distance.
\end{theorem}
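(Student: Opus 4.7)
The plan is to mirror the structure of the commuting robustness proof (Theorem~\ref{thm:robustness_commuting}) with the non-abelian $\beta$-function coefficients playing the role of the weights $w_j$. By Uhlmann's theorem, it suffices to lower bound the overlap of some purification of $\rho_{\calP,\epsilon}(H)$ with some purification of $\rho_{\calP}(H)$. A natural choice is to take the states produced by Algorithm~\ref{alg:general_HDQI_algorithm} just \emph{before} tracing out register $C$ of $\ket{\Phi^n}$: for the ideal oracle we get
\begin{align}
    \ket{\psi_{\calP}(H)} \;=\; \frac{1}{\calN}\sum_{k=0}^{\ell} c_k \sum_{\mathbf{y}\in\F_2^m}\beta_G^{(k)}(\mathbf{y})\; \ket{0^m}\otimes v_{\mathbf{y}}(P_{\mathbf{y}}\otimes I)\ket{\Phi^n},
\end{align}
while the faulty oracle leaves a residue on register $A$, yielding
\begin{align}
    \ket{\psi_{\calP,\epsilon}(H)} \;=\; \frac{1}{\calN}\sum_{k=0}^{\ell} c_k \sum_{\mathbf{y}\in\F_2^m}\beta_G^{(k)}(\mathbf{y})\; \ket{\mathbf{z}(\mathbf{y})}\otimes v_{\mathbf{y}}(P_{\mathbf{y}}\otimes I)\ket{\Phi^n},
\end{align}
where $\mathbf{z}(\mathbf{y})=0^m$ for all correctly decoded errors and is otherwise an arbitrary bitstring. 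The normalization $\calN$ is inherited from $\ket{\calR^\ell(H)}$ and coincides with that of $\rho_{\calP}(H)$.

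Next, I would compute $\langle \psi_{\calP,\epsilon}(H)|\psi_{\calP}(H)\rangle$. The distance hypothesis $d^\perp(\Symp(H))\ge 2\ell+1$ guarantees that for any $\mathbf{x},\mathbf{y}$ with $|\mathbf{x}|,|\mathbf{y}|\le \ell$, we have $\sym(P_{\mathbf{x}})=\sym(P_{\mathbf{y}})$ iff $\mathbf{x}=\mathbf{y}$. Consequently, $\langle\Phi^n|(P_{\mathbf{x}}^\dagger P_{\mathbf{y}})\otimes I|\Phi^n\rangle=\frac{1}{2^n}\Tr(P_{\mathbf{x}}^\dagger P_{\mathbf{y}})$ vanishes unless $\mathbf{x}=\mathbf{y}$, and equals $1$ when $\mathbf{x}=\mathbf{y}$ (in which case $v_{\mathbf{x}}v_{\mathbf{y}}=1$ and $P_{\mathbf{x}}^\dagger P_{\mathbf{y}}=I$). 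Therefore only diagonal $\mathbf{x}=\mathbf{y}$ terms survive, yielding
\begin{align}
    \langle \psi_{\calP,\epsilon}(H)|\psi_{\calP}(H)\rangle \;=\; \frac{1}{\calN^2}\sum_{\mathbf{y}\in\F_2^m}\mathbf{1}[\mathbf{z}(\mathbf{y})=0^m]\, a(\mathbf{y})^2,\qquad a(\mathbf{y}):=\sum_{k=0}^{\ell} c_k\,\beta_G^{(k)}(\mathbf{y}),
\end{align}
with $\calN^2=\sum_{\mathbf{y}}a(\mathbf{y})^2$. Grouping by weight $j=|\mathbf{y}|$ and invoking the assumption that the decoder succeeds on a $(1-\epsilon)$ fraction of weight-$j$ errors, the sum is at least $(1-\epsilon)\calN^2$, giving $\calF(\rho_{\calP,\epsilon}(H),\rho_{\calP}(H))\ge 1-\epsilon$. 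The trace-distance bound then follows directly from the Fuchs--van de Graaf inequality \cite{fuchs2002cryptographic}, exactly as in the commuting case.

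The main obstacle is step three: unlike the commuting case, where $a(\mathbf{y})=w_{|\mathbf{y}|}$ depends only on the Hamming weight of $\mathbf{y}$ and the fractional counting argument is immediate, the non-abelian amplitudes $a(\mathbf{y})$ vary across errors of a fixed weight through the graph-theoretic structure of $\beta_G^{(k)}$. To close the gap one needs to argue that the $\epsilon$-fraction of weight-$j$ errors on which the decoder fails does not carry a disproportionate share of the $a(\mathbf{y})^2$ mass---either by strengthening the oracle assumption to a weighted failure bound, by averaging over the decoder's failure set, or by showing (using Lemmas~\ref{lemma:alpha_factorization} and \ref{lemma:beta_function_reduced_form}) that the $\beta$-coefficients are sufficiently flat within each weight class. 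Modulo this mass-redistribution step, the remainder of the proof transfers verbatim from Theorem~\ref{thm:robustness_commuting}.
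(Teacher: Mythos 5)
Your proposal follows exactly the route the paper intends: the paper's entire proof of Theorem~\ref{thm:robustness_general} is the single sentence ``The proof follows analogously to that of Theorem~\ref{thm:robustness_commuting}.'' Your setup --- Uhlmann with the purifications obtained just before tracing out register $C$, orthogonality of the $(P_{\mathbf{y}}\otimes I)\ket{\Phi^n}$ for distinct $\mathbf{y}$ of weight $\le\ell$ via the distance hypothesis, and Fuchs--van de Graaf at the end --- is precisely the commuting argument transplanted, with $a(\mathbf{y})=\sum_k c_k\beta_G^{(k)}(\mathbf{y})$ in place of $w_{|\mathbf{y}|}$. Up to that point your proof is correct and matches the paper.

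The obstacle you flag in your final paragraph is genuine, and the paper does not resolve it. In the commuting proof the step
\begin{align}
\sum_{|\mathbf{y}|=j}\langle 0^m|\mathbf{z}(\mathbf{y})\rangle\, w_j^2 \;\ge\; (1-\epsilon)\binom{m}{j}w_j^2
\end{align}
works only because every weight-$j$ error carries the \emph{same} squared amplitude $w_j^2$, so a $(1-\epsilon)$ counting fraction of successes is automatically a $(1-\epsilon)$ fraction of the amplitude mass. In the non-commuting case $a(\mathbf{y})^2$ varies within a weight class, so the hypothesis as stated (an unweighted $1-\epsilon$ fraction of errors of each weight) does not imply $\sum_{\mathbf{y}}\mathbf{1}[\mathbf{z}(\mathbf{y})=0^m]\,a(\mathbf{y})^2\ge(1-\epsilon)\calN^2$: an adversarial failure set concentrated on the large-$|a(\mathbf{y})|$ errors would give a worse fidelity. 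The clean fixes are the ones you name --- restate the oracle assumption as a failure bound with respect to the $a(\mathbf{y})^2/\calN^2$ distribution (under which the theorem is immediate), or prove a flatness/anti-concentration property of the $\beta$-coefficients within weight classes. Since the paper's own proof is a one-line appeal to analogy, your proposal is at least as complete as what the paper provides, and your identification of where the analogy breaks is a substantive and correct observation rather than a defect of your argument.
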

The proof follows analogously to that of Theorem~\ref{thm:robustness_commuting}.

A key question for the generalized reduction is, therefore, whether there are any classes of Hamiltonians beyond commuting Hamiltonians for which the pilot state preparation task is efficient.

\subsection{Computing the \texorpdfstring{$\alpha$}\ \ and \texorpdfstring{$\beta$}\ \ functions}
Before we discuss techniques for efficient (and almost efficient) pilot state preparation, we develop further structural results about the antisymmetry character and the $\beta$-function which will be useful for state preparation algorithms.

The antisymmetry character has a recursive structure which is useful for computation.
\begin{lemma}[Recursive structure of $\alpha$] \label{lemma:alpha_recursion}
Let $G = (V, E)$ be an anticommutation graph as in Definition~\ref{def:commutativity_graph}, where $|V| = m$. Label the nodes $1, \dots, m$ and associate formal variable $z_i$ with node $i$ such that $z_i$ and $z_j$ anticommute if $(i, j) \in E$ and commute otherwise.
Let $\bmu \in \Z_{\geq0}^m$. Define \begin{align}
    \label{eq:Sigma_function}
    \Sigma_G(\bmu) = \frac{|\bmu|!}{\mu_1! \cdots \mu_m!} \alpha_G(\bmu) = \sum_{\tau \in S(\bmu)} \sgn_G(\tau) 
\end{align}
to be the un-normalized version of $\alpha_G(\bmu)$ from Definition~\ref{def:alpha_function} (and $S(\bmu)$ are the sequences where $z_i$ appears $\mu_i$ times). Then \begin{align}
    \Sigma_G(\bmu) & = \sum_{j : \mu_j > 0}  (-1)^{\sum_{i : i > j} A_{ij} \mu_i} \Sigma_G(\bmu - \mathbf{e}_j) ,
\end{align}
where $\mathbf{e}_j$ is the $j$th standard basis vector in $\F_2^m$.
\end{lemma}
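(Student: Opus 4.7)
The plan is to prove this recursion by peeling off the last character of a sequence $\tau \in S(\bmu)$ and tracking how the sign function transforms. First, I would use Eqn.~(\ref{eq:sign_function_as_inversions}) to reinterpret $\sgn_G(\tau)$ as a product over inversions: writing $\tau = (\tau_1, \dots, \tau_k)$ with $k = |\bmu|$ and letting $\mathrm{ind}(\tau_a) \in [m]$ be the index of the formal variable $\tau_a$, one has
\begin{equation}
    \sgn_G(\tau) \;=\; \prod_{\substack{1 \le a < b \le k \\ \mathrm{ind}(\tau_a) > \mathrm{ind}(\tau_b)}} (-1)^{A_{\mathrm{ind}(\tau_a), \mathrm{ind}(\tau_b)}},
\end{equation}
since bubble-sorting $\tau$ into canonical order requires exactly one swap per such pair, and each swap of $z_i$ with $z_{i'}$ ($i \neq i'$) contributes $(-1)^{A_{ii'}}$ while swaps of equal variables are free.

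Next, I would partition $S(\bmu)$ according to the value of the last letter $\tau_k$. Writing $\tau = (\sigma, z_j)$ where $\sigma \in S(\bmu - \mathbf{e}_j)$ (which requires $\mu_j > 0$), the inversion product factors cleanly into (i) inversions entirely inside $\sigma$, which by the same formula yield exactly $\sgn_G(\sigma)$, and (ii) inversions involving the last position $k$. The latter arise precisely from indices $a < k$ with $\mathrm{ind}(\sigma_a) > j$; for each $i > j$ there are exactly $\mu_i$ such positions (the count of $z_i$ in $\sigma$ equals $\mu_i$ since $i \neq j$). Therefore
\begin{equation}
    \sgn_G(\tau) \;=\; \sgn_G(\sigma) \cdot \prod_{i > j} (-1)^{A_{ij} \mu_i} \;=\; (-1)^{\sum_{i > j} A_{ij} \mu_i} \sgn_G(\sigma).
\end{equation}

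Finally, I would sum over $\tau$ by summing over the choice of $j$ (with $\mu_j > 0$) and over all $\sigma \in S(\bmu - \mathbf{e}_j)$. The sign prefactor $(-1)^{\sum_{i > j} A_{ij} \mu_i}$ depends only on $j$ and $\bmu$, so it pulls out of the inner sum:
\begin{equation}
    \Sigma_G(\bmu) \;=\; \sum_{j \,:\, \mu_j > 0} (-1)^{\sum_{i > j} A_{ij} \mu_i} \sum_{\sigma \in S(\bmu - \mathbf{e}_j)} \sgn_G(\sigma) \;=\; \sum_{j \,:\, \mu_j > 0} (-1)^{\sum_{i > j} A_{ij} \mu_i} \Sigma_G(\bmu - \mathbf{e}_j),
\end{equation}
which is the claimed identity. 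The only point requiring care is step (ii) of the factorization above: one must confirm that the inversion-product representation of $\sgn_G$ is well-defined independently of the sorting order, i.e.\ that bubble-sorting lands on the same sign as any other adjacent-transposition sorting scheme. This follows from the fact that the parity of the number of anticommuting swaps between a fixed pair of distinct-indexed positions is determined by whether they are in order, and commuting swaps contribute trivially; I do not expect this to be a serious obstacle.
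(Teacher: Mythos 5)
Your proposal is correct and follows essentially the same route as the paper: both partition $S(\bmu)$ by the last letter, factor $\sgn_G(\tau)$ into the sign of the truncated sequence times $(-1)^{\sum_{i>j} A_{ij}\mu_i}$ from sorting the final symbol into place, and then sum over the choice of last letter. Your closing remark on the well-definedness of the inversion-count representation is a reasonable point of care that the paper treats implicitly.
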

\begin{proof}
We work with the representation of the sign function given in Eqn.~(\ref{eq:sign_function_as_inversions}), namely as the number of anticommuting swaps made while sorting a given string.
This representation can be interpreted more explicitly by executing the sorting via the bubble sort algorithm: find an adjacent pair out of order, and swap them.
Define the local inversion counting function $\operatorname{inv}_\tau(i, j)$, for $i < j$, to be the number of pairs $z_i, z_j$ in a string $\tau$ where $z_i$ appears after $z_j$.
Then \begin{align}
    \sgn_G(\tau) = \prod_{i < j : (i, j) \in E} (-1)^{\operatorname{inv}_\tau(i, j)} = \prod_{i < j} (-1)^{A_{ij} \operatorname{inv}_\tau(i, j)} ,
\end{align}
where $A$ is the adjacency matrix of $G$. 
Now, suppose we wish to compute $\Sigma_G(\bmu)$, which is a sum over all length $p = |\bmu|$ sequences $\tau$ of characters $\set{z_1, \dots, z_m}$ such that $z_i$ appears $\mu_i$ times. 
We may group the sequences $\tau$ by the last element of their sequence. Let $\calT_j = \set{\tau \in S(\bmu)\ \,|\, \tau_p = z_j}$. Then \begin{align}
    \label{eq:Sigma_last_symbol}
    \Sigma_G(\bmu) = \sum_{j : \mu_j > 0} \; \sum_{\tau \in \calT_j} \sgn_G(\tau) .
\end{align}
For $\tau \in \calT_j$, \begin{align}
    \sgn_G(\tau) = \sgn_G(\tau_{:p}) \prod_{i : i > j, (i, j) \in E} (-1)^{\mu_i} = \sgn_G(\tau_{:p}) (-1)^{\sum_{i : i > j} A_{ij} \mu_i}
\end{align} 
where $\tau_{:p}$ denotes the sequence $\tau$ with the final character omitted. This equation follows because we can compute the sign function by first sorting all but the last character, incurring a sign of $\sgn_G(\tau_{:p})$. Then, we sort the last symbol $z_j$ in place, where it will pick up a sign each time it passes through any symbol $z_i$ such that $i > j$ and $(i, j) \in E$ (i.e. $A_{ij} = 1$).
Combining this decomposition with Eqn.~(\ref{eq:Sigma_last_symbol}), 
\begin{align}
    \Sigma_G(\bmu) & = \sum_{j : \mu_j > 0} \; \sum_{\tau \in \calT_j} \sgn_G(\tau_{:p}) (-1)^{\sum_{i : i > j} A_{ij} \mu_i} \\
    & = \sum_{j : \mu_j > 0}  (-1)^{\sum_{i : i > j} A_{ij} \mu_i} \left( \sum_{\tau \in \calT_j} \sgn_G(\tau_{:p}) \right) \\
    & = \sum_{j : \mu_j > 0}  (-1)^{\sum_{i : i > j} A_{ij} \mu_i} \Sigma_G(\bmu - \mathbf{e}_j) ,
\end{align}
where the last line follows because we sum over all sequences $\tau \in S(\bmu)$ whose last character is $z_j$, of the sign ignoring the last character. This is exactly equivalent to summing over all signs of sequences from $S(\bmu - \mathbf{e}_j)$.
\end{proof}

This recursive structure enables a dynamic programming algorithm which computes the antisymmetry characters in exponential time. 
\begin{theorem}[Computing $\alpha$ via dynamic programming] \label{thm:alpha_dynamic_programming}
Let $G = (V, E)$ be an anticommutation graph as in Definition~\ref{def:commutativity_graph}, where $|V| = m$. 
Label the nodes $1, \dots, m$ and associate formal variable $z_i$ with node $i$ such that $z_i$ and $z_j$ anticommute if $(i, j) \in E$ and commute otherwise.
Let $\bmu \in \Z_{\geq0}^m$ such that $\bmu \vDash_m k$. 
Then there exists a deterministic classical algorithm which computes $\alpha_G(\bmu)$ in time 
\begin{align}
    O(k m N(\bmu) + m\cdot \poly(k)) = \begin{cases}
        O(k m 2^k) & \text{ if } k \leq m, \\
        O(k m (\lceil k/m \rceil + 1)^m ) & \text{ if } k > m .
    \end{cases}
\end{align}
where $N(\bmu) = \prod_{i=1}^m (\mu_i+1)$. 
The runtime assumes that elementary arithmetic and classical memory access are $O(1)$ time operations. 
\end{theorem}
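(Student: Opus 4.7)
The plan is to turn the recursion of the preceding lemma into a bottom-up dynamic program over the lattice $\{\boldsymbol{\nu} \in \mathbb{Z}_{\geq 0}^m : \mathbf{0} \leq \boldsymbol{\nu} \leq \bmu\}$, which contains exactly $N(\bmu) = \prod_{i=1}^m (\mu_i + 1)$ points. I would allocate a table storing $\Sigma_G(\boldsymbol{\nu})$, initialize $\Sigma_G(\mathbf{0}) = 1$ (the empty sequence has trivial sign), and fill the remaining entries in order of increasing total weight $|\boldsymbol{\nu}|$ from $1$ up to $k$. Since $|\boldsymbol{\nu} - \mathbf{e}_j| = |\boldsymbol{\nu}| - 1$, every value needed on the right-hand side of the recursion has already been computed by the time it is needed.

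For each state $\boldsymbol{\nu}$ I would evaluate
\[
\Sigma_G(\boldsymbol{\nu}) \;=\; \sum_{j \in \supp(\boldsymbol{\nu})} (-1)^{\sum_{i > j} A_{ij}\,\nu_i}\;\Sigma_G(\boldsymbol{\nu} - \mathbf{e}_j),
\]
noting that the outer sum has at most $|\supp(\boldsymbol{\nu})| \leq |\boldsymbol{\nu}| \leq k$ terms. For each index $j$ the sign exponent is read off in $O(m)$ time by scanning indices $i>j$, so processing a single state costs $O(km)$ and the full DP costs $O(km\,N(\bmu))$. Afterwards I would recover $\alpha_G(\bmu) = (\mu_1!\cdots \mu_m!/k!)\,\Sigma_G(\bmu)$ from the definition of $\Sigma_G$ in Eqn.~(\ref{eq:Sigma_function}); this requires only $O(m+k)$ arithmetic operations once factorials up to $k$ are precomputed, which fits inside the additive $m\,\poly(k)$ term.

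It then remains to bound $N(\bmu)$ in the two regimes of the theorem. In the regime $k \leq m$, I would note that for any $a, b \geq 1$ one has $(a+1)(b+1) \geq a+b+1$, so splitting a positive coordinate of $\bmu$ into two smaller positive pieces never decreases $\prod_i(\mu_i+1)$; maximization subject to $\sum_i \mu_i = k$ is therefore attained at a $\{0,1\}$-vector with exactly $k$ ones, giving $N(\bmu) \leq 2^k$. In the regime $k > m$, the AM--GM inequality gives
\[
\prod_{i=1}^m (\mu_i+1) \;\leq\; \Bigl(\tfrac{1}{m}\sum_i (\mu_i+1)\Bigr)^m \;=\; (k/m + 1)^m \;\leq\; (\lceil k/m\rceil + 1)^m,
\]
matching the stated bound.

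Correctness of the algorithm is automatic from the recursion already established in the previous lemma, so there is no deep obstacle; the entire content of the theorem is the runtime accounting. The one genuinely load-bearing observation is that the constraint $|\boldsymbol{\nu}|\leq k$ bounds $|\supp(\boldsymbol{\nu})|\leq k$, which is what keeps the per-state work at $O(km)$ rather than the na\"ive $O(m^2)$ one would get by letting the outer sum range over all of $[m]$. This distinction is especially important in the interesting regime $k \ll m$, where it is the difference between the claimed $O(km\,2^k)$ bound and a much weaker $O(m^2\,2^k)$.
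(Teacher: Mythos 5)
Your proposal is correct and follows essentially the same route as the paper's proof: a bottom-up dynamic program over the lattice $\{\boldsymbol{\nu} : \mathbf{0} \leq \boldsymbol{\nu} \leq \bmu\}$ using the recursion for $\Sigma_G$, with per-state cost $O(km)$ (at most $k$ nonzero support indices, $O(m)$ work per sign exponent) and the same balancing/AM--GM bound on $N(\bmu)$ in the two regimes. The only cosmetic difference is that you justify the $k \leq m$ case via the splitting inequality $(a+1)(b+1) \geq a+b+1$ while the paper uses a Lagrangian relaxation; both yield the identical bound.
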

\begin{proof}
It is enough to compute $\Sigma_G(\bmu)$ from Eqn.~(\ref{eq:Sigma_function}), which differs from $\alpha$ only by a normalization $\mu_1! \cdots \mu_m! / |\bmu|!$ which can be computed in $O(m \cdot \poly(k))$ time and space. 
Ultimately, this additive factor is subsumed by the runtime computed below and can thus be asymptotically ignored.
By Lemma~\ref{lemma:alpha_recursion}, $\Sigma_G$ has a recursive structure as \begin{align}
    \Sigma_G(\bmu) = \sum_{j : \mu_j > 0}  (-1)^{\sum_{i : i > j} A_{ij} \mu_i} \Sigma_G(\bmu - \mathbf{e}_j) .
\end{align}
This structure motivates a simple dynamic programming algorithm. 
Initially, $\Sigma_G(0) = 1$. 
Then, we can compute $\Sigma_G(\mathbf{e}_i)$ for all $i$ such that $\bmu -\mathbf{e}_i \in \Z_{\geq0}^m$. 
Then, we compute $\Sigma_G(\mathbf{e}_i + \mathbf{e}_j)$ for all $i, j$ such that $\bmu - \mathbf{e}_i - \mathbf{e}_j \in \Z_{\geq0}^m$. 
We repeat this process until we have computed $\Sigma_G(\bmu)$. 
At each step, we store our computation in a lookup table so that the next computation can reference the table without re-computation. 
Hence, calculating each next $\Sigma_G$ takes only $O(km)$ computations, since there are $O(m)$ terms in the exponentiated sum and $O(k)$ terms in the outer sum because at most $k$ entries of $\bmu$ are nonzero.
Since $\bmu$ satisfies $\bmu \vDash_m k$, each $\mu_i$ is $O(k)$.
Finally, the number of $\Sigma_G$ we must calculate is given by $N(\bmu) = \prod_{i=1}^m (\bmu_i + 1)$, since we calculate a $\Sigma_G(\bmu')$ for each $\bmu'$ such that $\bmu - \bmu' \in \Z_{\geq0}^m$, and we can select $\bmu'$ entry-wise by choosing between $0$ and $\bmu_i$ to subtract off the $i$th entry. 
The total runtime is thus $O(k m N(\bmu))$.

To complete the proof, we show that $N(\bmu) \leq 2^k$ if $k \leq m$ and $N(\bmu) \leq (\lceil k/m \rceil + 1)^m$ otherwise. 
This bound follows from the fact that a product of terms, subject to a constraint that the sum of terms is fixed, is maximized when the terms are as balanced as possible. 
If we relax $\bmu$ to be real-valued, we can prove this fact by Lagrangian optimization, by maximizing \begin{align}
    \calL(\bmu, \lambda) = \sum_{i=1}^m \log(1 + \mu_i) - \lambda \left( \sum_{i=1}^m \mu_i - k \right) .
\end{align}
We have taken the $\log$ of $N(\bmu)$ to simplify the maximization, since $\log f(x)$ and $f(x)$ have the same optima. Solving for $\nabla_{\bmu, \lambda}$, we find that $\mu_i = k/m$, the maximally balanced condition. 
In our case, $\mu_i \in \Z_{\geq0}$, so the most balanced we can be is to set is to distribute the integers as evenly as possible. 
If $k \leq m$, then we give $k$ spots $1$ and the remaining $m-k$ spots $0$, so $N(\bmu) \leq 2^k$. 
If $k > m$, then we iteratively add 1 to every spot until we run out of allocations, in which case every entry is $\geq 1$ and $\leq \lceil k/m \rceil$, so $N(\bmu) \leq (\lceil k/m \rceil + 1)^m$
\end{proof}
If we are interested only in computing $\alpha_G$ (with no multiplicity), then $N(\bmu) = 2^k$ and the algorithm above runs in $O(km 2^k)$ time. 
Typically, we take $k \leq n$ and $m = O(n)$, in which case we get a $O(n^2 2^{n})$ time algorithm.

Given an algorithm to calculate the antisymmetry character, we can also readily compute the $\beta$-function.
\begin{theorem}[Computing the $\beta$-function] \label{thm:computing_beta}
Let $\mathbf{y} \in \F_2^m$ such that $|\mathbf{y}| = w$, and let $ \beta_{G}^{(k)}(\mathbf{y})$ be as in Definition~\ref{def:beta_function}.
Assume that elementary arithmetic operations and classical memory access take $O(1)$ time.
There is a deterministic classical algorithm which computes $ \beta_{G}^{(k)}(\mathbf{y})$ in time $T$, where
    \begin{align} \label{eq:beta_compute_runtime}
    T = \begin{cases}
        O\left( \binom{\frac{k-w}{2} + m - 1}{m - 1} \cdot k m 2^k \right) & \text{ if } k \leq m, \\
        O\left( \binom{\frac{k-w}{2} + m - 1}{m - 1} \cdot k m (\lceil k/m \rceil + 1)^m \right) & \text{ if } k > m .
    \end{cases}
\end{align}
\end{theorem}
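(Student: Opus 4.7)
The plan is to compute $\beta_G^{(k)}(\mathbf{y})$ directly from its definition by enumerating the admissible counting vectors $\bmu$, invoking Theorem~\ref{thm:alpha_dynamic_programming} on each one to obtain $\alpha_G(\bmu)$, and summing the multinomial-weighted contributions. First I would observe that every $\bmu \in \Z_{\geq 0}^m$ with $\bmu \vDash_m k$ and $\bmu \pmod 2 = \mathbf{y}$ can be written uniquely as $\bmu = \mathbf{y} + 2\boldsymbol\nu$ for some $\boldsymbol\nu \in \Z_{\geq 0}^m$ with $|\boldsymbol\nu| = (k-w)/2$. In particular, if $k-w$ is negative or odd the set is empty and $\beta_G^{(k)}(\mathbf{y}) = 0$, matching the convention in Definition~\ref{def:beta_function}. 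Otherwise the number of such $\boldsymbol\nu$ is the standard stars-and-bars count
\begin{equation}
    \binom{(k-w)/2 + m - 1}{m-1}.
\end{equation}

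Next I would enumerate these vectors in a straightforward way (e.g. lex-ordered recursion over the coordinates of $\boldsymbol\nu$) so that each one is produced in $O(m)$ amortized time. For each admissible $\bmu$, the algorithm computes $\alpha_G(\bmu)$ by invoking the dynamic program of Theorem~\ref{thm:alpha_dynamic_programming}, incurring cost $O(km N(\bmu))$ where $N(\bmu) = \prod_i(\mu_i+1)$, bounded by $2^k$ when $k \leq m$ and by $(\lceil k/m\rceil + 1)^m$ otherwise. In parallel, it computes the multinomial coefficient $\binom{k}{\bmu} = k!/(\mu_1!\cdots \mu_m!)$; since $\bmu$ has at most $\min(k,m)$ nonzero entries and each is at most $k$, this costs $O(m\cdot\poly(k))$, which is absorbed by the $\alpha$-computation cost. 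The algorithm then accumulates $\binom{k}{\bmu}\alpha_G(\bmu)$ into a running total. Multiplying the number of admissible $\bmu$ by the per-vector cost yields the stated bound in Eqn.~(\ref{eq:beta_compute_runtime}).

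The only mild subtlety I anticipate is arithmetic bookkeeping: the multinomial coefficients can be large, so under the $O(1)$-time arithmetic model stated in the theorem the bound is immediate, whereas with bit-complexity one would need to add an $O(\log(k!))$ factor per multiplication. Since the theorem explicitly adopts the RAM/unit-cost model, no additional argument is needed. A secondary point worth stating in the write-up is that one can in principle reuse partial $\alpha$-tables across different $\bmu$ sharing a common prefix of coordinate values, which would further amortize the cost; however, since the stated bound already follows from the naive call-per-vector accounting, I would not pursue this optimization in the proof itself and would simply remark on it after stating the final bound.
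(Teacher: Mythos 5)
Your proposal is correct and follows essentially the same route as the paper: count the admissible $\bmu$ via the bijection $\bmu = \mathbf{y} + 2\boldsymbol\nu$ with $\boldsymbol\nu$ an $m$-weak composition of $(k-w)/2$, giving the $\binom{(k-w)/2+m-1}{m-1}$ factor, and multiply by the per-vector cost from Theorem~\ref{thm:alpha_dynamic_programming}. The only cosmetic difference is that the paper folds the multinomial coefficient into the quantity $\Sigma_G(\bmu)$ already produced by the dynamic program, whereas you compute it separately, which changes nothing asymptotically.
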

\begin{proof}
From Definition~\ref{def:beta_function},
\begin{align}
     \beta_{G}^{(k)}(\mathbf{y}) & = \sum_{\substack{\bmu : \bmu \vDash_m k\\\bmu \text{ (mod 2)} = \mathbf{y}}} \binom{k}{\bmu} \, \alpha_G(\bmu) = \sum_{\substack{\bmu : \bmu \vDash_m k\\\bmu \text{ (mod 2)} = \mathbf{y}}} \Sigma_G(\bmu) ,
\end{align}
where $\Sigma_G(\bmu)$ is defined in Eqn.~(\ref{eq:Sigma_function}).
Recall that vectors $\bmu$ for which $\bmu \vDash_m k$ are known as $m$-\emph{weak compositions} of $k$. 
There are exactly $\binom{k+m-1}{m-1}$ $m$-weak compositions of $k$.
We wish, however, to count the number of $m$-weak compositions of $k$, $\bmu$, whose parity is $\mathbf{y}$. 
Let $w = |\mathbf{y}|$ be the weight of $\mathbf{y}$.
If $w > k$, then there are zero such weak compositions and $ \beta_{G}^{(k)}(\mathbf{y}) = 0$. 
If $w \leq k$, then a $m$-weak composition of $k$, $\bmu$, with parity $\mathbf{y}$ can equivalently be expressed as an assignment of $\boldsymbol{\nu} \in \Z_{\geq0}^{m}$ such that $\bmu = \mathbf{y} + 2\boldsymbol{\nu}$. In other words, \begin{align}
    k = \sum_{i=1}^m \mu_i = \sum_{i=1}^m y_i + 2 \sum_{i=1}^m \nu_i = w + 2 \sum_{i=1}^m \nu_i .
\end{align}
Therefore, $\bmu$ is a $m$-weak composition of $k$ with parity $\mathbf{y}$ if and only if $\boldsymbol{\nu}$ is a $m$-weak composition of $\frac{k-w}{2}$.
If $\frac{k-w}{2} \notin \Z_{\geq0}$, then there are no such compositions.
Otherwise, there are $\binom{\frac{k-w}{2} + m - 1}{m - 1}$ such compositions.
Combined with the piecewise runtime analysis in Theorem~\ref{thm:alpha_dynamic_programming}, we find a total runtime for calculating the $\beta$-function to be bounded by Eqn.~(\ref{eq:beta_compute_runtime}).
\end{proof}

\subsection{Efficient pilot state preparation for sparse anticommutation graphs}
In this section, we prove that if the anticommutation graph (from Definition~\ref{def:commutativity_graph}) of the Hamiltonian $H$ is sufficiently sparse in a precise sense, then there is an efficient algorithm to prepare the pilot state required for Theorem~\ref{thm:main_reduction_general_case}. This algorithm relies on the factorization structure of the antisymmetry character, which we first prove.

\begin{lemma}[$\alpha$ factorizes into connected components] \label{lemma:alpha_factorization}
Let $z_1, \dots, z_m$ denote a set of $m$ formal variables with anticommutation graph $G = (V, E)$ as in Definition~\ref{def:commutativity_graph}.
Define $C_1, \dots, C_r$ to be the connected components of $G$.
That is, $C_t = (V_t, E_t)$, $V = \bigsqcup_{t=1}^r V_t$, $E = \bigsqcup_{t=1}^r E_t$.
Recall that we denote $\bmu \vDash_m k$ if $\bmu \in \Z_{\geq 0}^m$ and $\sum_{i=1}^m \mu_i = k$. Then the antisymmetry character, given in Definition~\ref{def:alpha_function}, factorizes as
\begin{align}
    \alpha_G(\bmu) = \prod_{t=1}^r \alpha_{C_t}(\bmu\rvert_{C_t}),
\end{align}
where $\bmu\rvert_{C_t}$ is the restriction of $\bmu$ onto the indices corresponding to the vertices in $C_t$.
\end{lemma}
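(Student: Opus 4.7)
The plan is to prove the factorization by establishing two multiplicative decompositions: one at the level of the sign function for each fixed sequence, and another at the level of the combinatorial structure of $S(\bmu)$.

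First, I would prove that for any sequence $\tau \in S(\bmu)$, the sign $\sgn_{G,\bmu}(\tau)$ factors across connected components. Using the interpretation $\sgn_{G,\bmu}(\tau) = (-1)^{\operatorname{inv}(G,\bmu,\tau)}$ from Eqn.~(\ref{eq:sign_function_as_inversions}), the exponent counts inversions weighted by edges of $G$. Since there are no edges between distinct components $C_t$ and $C_{t'}$, only inversions between elements lying in the same component contribute. Letting $\tau|_{C_t}$ denote the subsequence obtained by restricting $\tau$ to symbols indexed by $V_t$, we get
\begin{equation}
    \sgn_{G,\bmu}(\tau) \;=\; \prod_{t=1}^{r} \sgn_{C_t,\, \bmu|_{C_t}}(\tau|_{C_t}).
\end{equation}

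Second, I would decompose the sum over $S(\bmu)$ into independent choices per component. Any sequence $\tau \in S(\bmu)$ is determined by (i) a partition of the $|\bmu|$ positions into $r$ disjoint sets of sizes $k_t := |\bmu|_{C_t}| = \sum_{i \in V_t} \mu_i$, together with (ii) a choice of subsequence $\tau_t \in S(\bmu|_{C_t})$ placed into the positions allocated to component $C_t$. The number of position-partitions is the multinomial coefficient $\binom{|\bmu|}{k_1,\dots,k_r} = |\bmu|! / \prod_t k_t!$, and this choice does not affect the sign by the previous step. Combining both observations yields
\begin{equation}
    \sum_{\tau \in S(\bmu)} \sgn_{G,\bmu}(\tau) \;=\; \frac{|\bmu|!}{\prod_{t=1}^{r} k_t!} \, \prod_{t=1}^{r} \sum_{\tau_t \in S(\bmu|_{C_t})} \sgn_{C_t,\, \bmu|_{C_t}}(\tau_t).
\end{equation}

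Finally, I would plug this into the normalization from Definition~\ref{def:alpha_function}. Using $\mu_1!\cdots\mu_m! = \prod_{t=1}^{r} \prod_{i \in V_t} \mu_i!$ together with $|\bmu|! = |\bmu|!$, the $|\bmu|!$ factors cancel, and
\begin{equation}
    \alpha_G(\bmu) \;=\; \frac{\mu_1!\cdots\mu_m!}{|\bmu|!} \sum_{\tau \in S(\bmu)} \sgn_{G,\bmu}(\tau) \;=\; \prod_{t=1}^{r} \frac{\prod_{i \in V_t}\mu_i!}{k_t!}\sum_{\tau_t \in S(\bmu|_{C_t})} \sgn_{C_t,\, \bmu|_{C_t}}(\tau_t) \;=\; \prod_{t=1}^{r} \alpha_{C_t}(\bmu|_{C_t}),
\end{equation}
completing the proof.

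The main obstacle will be justifying the combinatorial decomposition of $S(\bmu)$ cleanly --- in particular, making precise that the ``position-partition'' choice is independent of the sign, so that it factors out as a pure multinomial coefficient that then telescopes against the normalization. This is conceptually straightforward since swaps across components carry no sign, but requires a careful accounting of how $\operatorname{inv}(G,\bmu,\tau)$ decomposes when $\tau$ is decomposed into its component-subsequences. Once that is nailed down, the rest is manipulation of factorials.
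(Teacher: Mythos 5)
Your proposal is correct and follows essentially the same route as the paper: both proofs rest on the observation that, since no edges cross components, the inversion-counting sign factorizes as a product of the signs of the component-restricted subsequences, after which the sum over $S(\bmu)$ splits componentwise. The only difference is presentational --- the paper phrases the second step as independence of the random variables $X_t(\pi)$ under uniform $\pi \sim S(\bmu)$ and factorizes the expectation, whereas you make the same fact explicit via the bijection between $S(\bmu)$ and (position-partition, component-subsequence) pairs, with the multinomial coefficient telescoping against the normalization; your version supplies the counting argument that underlies the paper's independence claim.
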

\begin{proof}
By Definition~\ref{def:alpha_function},
\begin{align}
    \alpha_{G}(\bmu) & = \frac{\mu_1! \cdots \mu_m!}{|\bmu|!} \sum_{\pi \in S(\bmu)} \sgn_{G, \bmu}(\pi) ,
\end{align}
where $\sgn_{G, \bmu}(\pi)$ is the sign function of Definition~\ref{def:sign_function}.
Our technique is similar to that in the proof of Lemma~\ref{lemma:alpha_recursion}, but we give a self-contained argument.
We wish to re-express the sign function in a manifestly factorizable manner.
To do so, we will apply Eqn.~(\ref{eq:Gtilde_sign_function}).
In other words, define $\widetilde{G} = (\widetilde{V}, \widetilde{E})$ a new graph where $|\widetilde{V}| = |\bmu|$, constructed by associating the first $\mu_1$ nodes with $z_1$, the next $\mu_2$ nodes with $z_2$, and so on. 
Label the nodes in $\widetilde{V}$ as $a_1, \dots, a_p$.
We define the function $z(a_i)$ to map $a_i$ to the corresponding $z_j$. 
For example, $z(a_1) = z(a_2) = \cdots = z(a_{\mu_1}) = z_1$.
Connect $a_i$ to $a_j$ in $\widetilde{E}$ if $z(a_i)$ and $z(a_j)$ are connected in $E$.

A direct way to compute the sign function is to perform bubble sort on the starting sequence of variables. 
Specifically, there are $p = |\bmu|$ symbols, given by $\mu_i$ copies of $z_i$ for each $i$. 
We define symbols $a_1, \dots, a_p$ which are understood to represent the $z_i$ in such a way that $a_1, \dots, a_{\mu_1}$ each represent $z_1$, $a_{\mu_1 + 1}, \dots, a_{\mu_1 + \mu_2}$ represent $z_2$, and so on.
Initially then, our sequence is $a_{\pi(1)} a_{\pi(2)} \cdots a_{\pi(p)}$.
Using bubble sort, we initialize a running sign variable $\sigma = 1$, then repeatedly swap adjacent variables which are not in order until no such pair exists. 
If the swap occurs between $a_i$ and $a_j$ such that $(a_i, a_j) \in \widetilde{E}$, flip the sign of $\sigma$.
When bubble sort terminates, $\sigma = \sgn_{\widetilde{G}}(\pi) = \sgn_{G}(\pi)$ where the second equality follows from Eqn.~(\ref{eq:Gtilde_sign_function}).

When we execute bubble sort, the total number of adjacent swaps required to map $(a_{\pi(1)}, \dots, a_{\pi(p)})$ back to the initial sorted array $(a_1, \dots, a_p)$ is equal to the number of index pairs $(i, j)$ such that $i<j$ and $\pi(i)>\pi(j)$.
Among these swaps, the only ones that flip the sign are further those for which $(a_i, a_j) \in \widetilde{E}$.
As a consequence,
\begin{align}
    \alpha_G(\bmu) &= \frac{\mu_1! \cdots \mu_m!}{p!} \sum_{\pi \in S(\bmu)} (-1)^{\sum_{i < j : (a_i, a_j) \in \widetilde{E}} \mathbb{1}[\pi(i) > \pi(j)]} \\
    & = \frac{\mu_1! \cdots \mu_m!}{p!} \sum_{\pi \in S(\bmu)} \; \prod_{i < j : (a_i, a_j) \in \widetilde{E}} (-1)^{\mathbb{1}[\pi(i) > \pi(j)]}
\end{align}
Recall that $E = \bigsqcup_{t=1}^r E_t$, where $E_t$ is the edge set of the $t$th connected component of $G$. Using the same procedure which we applied to create $\widetilde{G}$ from $G$, we can create $\widetilde{C}_t = (\widetilde{V}_t, \widetilde{E}_t)$ from $C_t$. Note that $\widetilde{E} = \bigsqcup_{t=1}^r \widetilde{E}_t$, because $\widetilde{C}_t$ is constructed by making copies of each node in $C_t$ according to the multiplicities given by $\bmu$, and then duplicating their edge connections. Therefore, \begin{align}
    \alpha_G(\bmu) & = \frac{\mu_1! \cdots \mu_m!}{p!} \sum_{\pi \in S(\bmu)} \prod_{k=1}^r \;\;\prod_{i < j : (a_i, a_j) \in \widetilde{E}_k} (-1)^{\mathbb{1}[\pi(i) > \pi(j)]} \\
    & = \underset{\pi \sim S(\bmu)}{\mathbf{E}}\left[ \prod_{t=1}^r \underbrace{\left( \prod_{i < j : (a_i, a_j) \in \widetilde{E}_t} (-1)^{\mathbb{1}[\pi(i) > \pi(j)]} \right)}_{X_t(\pi)} \right] ,
\end{align}
where we have defined a new set of random variables $X_1(\pi), \dots, X_r(\pi)$.
For each $\pi \in S(\bmu)$, $X_t(\pi)$ depends only on the action of $\pi$ within $\widetilde{V}_t$. 
Since each such vertex set is disjoint, $X_t(\pi)$ and $X_{t'}(\pi)$ must be independent for $t \neq t'$. 
As a result, \begin{align}
    \alpha_G(\bmu) & = \prod_{t=1}^r \underset{\pi \sim S(\bmu)}{\mathbf{E}}\left[ \prod_{(i, j) \in \widetilde{E}_t} (-1)^{\mathbb{1}[\pi(i) > \pi(j)]} \right] \\
    & =\prod_{t=1}^r \underset{\pi \sim S(\bmu\rvert_{C_t})}{\mathbf{E}}\left[ \prod_{(i, j) \in \widetilde{E}_t} (-1)^{\mathbb{1}[\pi(i) > \pi(j)]} \right] \\
    & = \prod_{t=1}^r \underset{\pi \sim S(\bmu\rvert_{C_t})}{\mathbf{E}}\left[ \sgn_{\widetilde{C}_t}(\pi) \right] \\
    & = \prod_{t=1}^r \underset{\pi \sim S(\bmu\rvert_{C_t})}{\mathbf{E}}\left[ \sgn_{C_t, \bmu\rvert_{C_t}}(\pi) \right] \\
    & = \prod_{t=1}^r \alpha_{C_t}(\bmu\rvert_{C_t}) .
\end{align}
In the second line, we restricted the permutation to the only relevant indices (with an implicit understanding that the permutation acts on the index set in the relevant connected component). In the third line, we used the fact that the sign function can be expressed in the bubble sort form of the second line. 
In the fourth line, we applied Eqn.~(\ref{eq:Gtilde_sign_function}). The last line applies Definition~\ref{def:alpha_function} to complete the proof.
\end{proof}

The factorization of $\alpha$ implies that the $\beta$-function takes on a form reminiscent of a tensor contraction structure.
\begin{lemma}[$\beta$-function reduced form] \label{lemma:beta_function_reduced_form}
    Let $z_1, \dots, z_m$ denote a set of $m$ formal variables with anticommutation graph $G = (V, E)$ as in Definition~\ref{def:commutativity_graph}.
    Define $C_1, \dots, C_r$ to be the connected components of $G$, where $C_t = (V_t, E_t)$, $V = \bigsqcup_{t=1}^r V_t$, and $E = \bigsqcup_{t=1}^r E_t$.
    Let $m_t = |V_t|$ be the number of nodes in $C_t$, so that $\sum_{t=1}^r m_t = m$.
    Recall that we denote $\bkappa \vDash_r k$ if $\bkappa \in \Z_{\geq 0}^r$ and $\sum_{i=1}^r \kappa_i = k$. Then for $\mathbf{y} \in \F_2^m$, \begin{align}
         \beta_{G}^{(k)}(\mathbf{y}) = \sum_{\bkappa : \bkappa \vDash_r k} \binom{k}{\bkappa} \prod_{t=1}^r \beta_{C_t}^{(\kappa_t)}(\mathbf{y}\rvert_{C_t}) ,
    \end{align}
    where $\mathbf{y}\rvert_{C_t}$ denotes the restriction of $\mathbf{y}$ to the indices whose corresponding nodes are in $C_t$.
\end{lemma}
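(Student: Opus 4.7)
The plan is to directly expand the definition of $\beta_G^{(k)}(\mathbf{y})$ and then exploit two pieces of structure: the multiplicative factorization of $\alpha_G$ across connected components (already established in Lemma~\ref{lemma:alpha_factorization}), and a standard splitting of a multinomial coefficient indexed by $\bmu$ into a product indexed by the restrictions $\bmu\rvert_{C_t}$ and an outer multinomial coefficient indexed by the component totals. No inductive or combinatorial identities beyond these should be needed.

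First, I would write
\[
\beta_G^{(k)}(\mathbf{y}) \;=\; \sum_{\substack{\bmu \,\vDash_m\, k \\ \bmu \,(\mathrm{mod}\,2)\,=\,\mathbf{y}}} \binom{k}{\bmu}\,\alpha_G(\bmu)
\]
straight from Definition~\ref{def:beta_function}, and substitute the factorization
\[
\alpha_G(\bmu) \;=\; \prod_{t=1}^{r} \alpha_{C_t}(\bmu\rvert_{C_t})
\]
from Lemma~\ref{lemma:alpha_factorization}. At this point every factor in the summand depends only on the restriction $\bmu\rvert_{C_t}$, except for the multinomial coefficient, which still couples all components.

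Next, I would reparametrize the sum. For any $\bmu\,\vDash_m\,k$ with parity $\mathbf{y}$, let $\kappa_t := |\bmu\rvert_{C_t}|$; the vector $\bkappa := (\kappa_1,\dots,\kappa_r)$ is then a weak composition $\bkappa \vDash_r k$, and the parity condition decouples as $(\bmu\rvert_{C_t}) \bmod 2 = \mathbf{y}\rvert_{C_t}$ for each $t$. The key algebraic step is the multinomial factorization
\[
\binom{k}{\bmu} \;=\; \frac{k!}{\prod_{i} \mu_i!} \;=\; \frac{k!}{\prod_t \kappa_t!}\cdot\prod_{t=1}^{r}\frac{\kappa_t!}{\prod_{i\in V_t}\mu_i!} \;=\; \binom{k}{\bkappa}\prod_{t=1}^{r}\binom{\kappa_t}{\bmu\rvert_{C_t}}.
\]
Substituting this in and exchanging the order of summation to sum first over $\bkappa$ and then, for each $t$, over $\bmu\rvert_{C_t}\,\vDash_{m_t}\,\kappa_t$ with the appropriate parity, the sum separates into
\[
\beta_G^{(k)}(\mathbf{y}) \;=\; \sum_{\bkappa \,\vDash_r\, k}\binom{k}{\bkappa}\prod_{t=1}^{r}\Biggl(\sum_{\substack{\bmu_t \,\vDash_{m_t}\, \kappa_t \\ \bmu_t\,(\mathrm{mod}\,2)\,=\,\mathbf{y}\rvert_{C_t}}}\binom{\kappa_t}{\bmu_t}\,\alpha_{C_t}(\bmu_t)\Biggr).
\]

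Finally, recognizing the inner sum as exactly $\beta_{C_t}^{(\kappa_t)}(\mathbf{y}\rvert_{C_t})$ by Definition~\ref{def:beta_function} applied to the subgraph $C_t$ yields the claim. I do not anticipate any real obstacle here: once the $\alpha$-factorization is in hand from Lemma~\ref{lemma:alpha_factorization}, the only nontrivial step is the multinomial identity above, and the bookkeeping of sum ranges and parity constraints is routine. The main thing to be careful about is verifying that every valid $\bmu$ in the original sum corresponds to exactly one pair $(\bkappa,(\bmu_t)_t)$ and conversely, which follows because the components $\{V_t\}$ partition $V$.
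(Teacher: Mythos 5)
Your proposal is correct and follows essentially the same route as the paper's proof: expand the definition of $\beta_G^{(k)}$, apply the factorization of $\alpha_G$ over connected components from Lemma~\ref{lemma:alpha_factorization}, split the multinomial coefficient as $\binom{k}{\bmu} = \binom{k}{\bkappa}\prod_t\binom{\kappa_t}{\bmu\rvert_{C_t}}$, and reparametrize the sum by the component totals $\bkappa$. The bijection you flag at the end (between valid $\bmu$ and pairs $(\bkappa,(\bmu_t)_t)$) is exactly the step the paper also justifies, so there is nothing missing.
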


\begin{proof}
For brevity, let $\bmu^{(t)}=\bmu\rvert_{C_t}$ and let $\kappa_t = |\bmu^{(t)}|$ denote the size of the restriction of $\bmu$ onto $C_t$ so that $\sum_{t=1}^r \kappa_t = |\bmu|= k$.
We observe that for any $\boldsymbol{\kappa} \vDash_r k$, the multinomial coefficient admits a decomposition over subsets as
\begin{align}
    \binom{k}{\bmu} = \binom{k}{\bkappa} \cdot \prod_{t=1}^r \binom{\kappa_t}{\bmu^{(t)}}.
\end{align}
Applying this observation and Lemma~\ref{lemma:alpha_factorization},
\begin{align}
    \beta_{G}^{(k)}(\mathbf{y}) & =  \sum_{\substack{\bmu : \bmu \vDash_m k \\
    \bmu \text{ (mod 2)} = \mathbf{y}}} \binom{k}{\bmu} \, \alpha_G(\bmu) \\
    & = \sum_{\substack{\bmu : \bmu \vDash_m k \\
    \bmu \text{ (mod 2)} = \mathbf{y}}} \binom{k}{\bmu} \prod_{t=1}^r \alpha_{C_t}(\bmu^{(t)}) \\
    & = \sum_{\substack{\bmu : \bmu \vDash_m k \\
    \bmu \text{ (mod 2)} = \mathbf{y}}} \left[ \binom{k}{\bkappa} \cdot \prod_{t=1}^r \binom{\kappa_t}{\bmu^{(t)}} \right] \prod_{t=1}^r \alpha_{C_t}(\bmu^{(t)}) .
\end{align}
Next, choosing $\bmu$ such that $\bmu \vDash_m k$ and $\bmu \text{ (mod 2)} = \mathbf{y}$ is equivalent to choosing how many of each node of $G$ to include in a given term. 
An equivalent way to make this choice is to first choose how many terms is included in each connected component, i.e. choosing $\bkappa \in \Z_{\geq0}^{r}$ such that $\bkappa \vDash_r k$, and then for each connected component $C_t$ choosing $\bmu^{(t)}$ how many of each node to include in the total allocated number $\kappa_t$, such that the count's parity is $\mathbf{y}^{(t)}$. 
Therefore, \begin{align}
    \beta_{G}^{(k)}(\mathbf{y}) & =  \sum_{\bkappa : \bkappa \vDash_r k} \sum_{\substack{\bmu^{(t)} : \bmu^{(t)} \vDash_{m_t} \kappa_t \\
    \bmu^{(t)} \text{ (mod 2)} = \mathbf{y}^{(t)}}} \left[ \binom{k}{\bkappa} \cdot \prod_{t=1}^r \binom{\kappa_t}{\bmu^{(t)}} \right] \prod_{t=1}^r \alpha_{C_t}(\bmu^{(t)}) \\
    & = \sum_{\bkappa : \bkappa \vDash_r k} \binom{k}{\bkappa} \sum_{\substack{\bmu^{(t)} : \bmu^{(t)} \vDash_{m_t} \kappa_t \\
    \bmu^{(t)} \text{ (mod 2)} = \mathbf{y}^{(t)}}} \prod_{t=1}^r \binom{\kappa_t}{\bmu^{(t)}} \alpha_{C_t}(\bmu^{(t)}) \\
    & = \sum_{\bkappa : \bkappa \vDash_r k} \binom{k}{\bkappa} \prod_{t=1}^r \beta_{C_t}^{(\kappa_t)}(\mathbf{y}\rvert_{C_t}) .
\end{align}
\end{proof}

The tensorial structure of the reduced form in Lemma~\ref{lemma:beta_function_reduced_form} hints that the pilot state may also have a tensor-type form. 
To elucidate this connection more precisely---and thereby construct an algorithm to prepare the general pilot state $\ket{\calR^\ell(H)}$---we first review the formulation of matrix product states (MPS).
\begin{definition}[Matrix product state with open boundary conditions] \label{def:matrix_product_state}
    Let $\ket{\psi} \in (\C^2)^{\otimes n}$ be a $n$-qubit quantum state. For a vector $\mathbf{y} \in \F_2^n$, partition the vector into $\mathbf{y} = (\mathbf{y}^{(1)}, \dots, \mathbf{y}^{(r)})$. Suppose that there exists matrices $A^{(1)}[\mathbf{y}^{(1)}], \dots, A^{(r)}[\mathbf{y}^{(r)}] \in \C^{D \times D}$, and vectors $\mathbf{v}_L, \mathbf{v}_R \in \F_2^{D}$, such that \begin{align}
        \ket{\psi} = \sum_{\mathbf{y} \in \F_2^m} \mathbf{v}_L^\intercal A^{(1)}[\mathbf{y}^{(1)}] \cdots A^{(r)}[\mathbf{y}^{(r)}] \mathbf{v}_R \ket{\mathbf{y}} .
    \end{align}
    Then we say that $\ket{\psi}$ is a $q$-ary \emph{matrix product state}, with open boundary conditions, of \emph{bond dimension} $D$, where $q = \max_{t \in [1, r]} 2^{\dim \mathbf{y}^{(t)}}$.
\end{definition}
Much of the development of MPS theory has been motivated by understanding when a quantum state admits a concise classical description. 
In general, a MPS with $\poly(n)$ bond dimension can be represented in $\poly(n)$ classical memory~\cite{mps1,mps2}.
We next show, however, that MPS theory is also useful for quantum algorithms.

\begin{lemma}[General HDQI pilot state is a MPS] \label{lemma:resource_state_is_MPS}
    Let $H$ be a Pauli Hamiltonian and let $G = (V, E)$ be the corresponding anticommutation graph as in Definition~\ref{def:commutativity_graph}. 
    Suppose that $G = \bigsqcup_{t=1}^r C_t$, where $C_t = (V_t, E_t)$ are the connected components of $G$. 
    Let $m_t = |V_t|$ be the number of nodes in the $t$-th connected component.
    Then the general pilot state of degree $\ell$ for $H$, 
    \begin{align}
        \label{eq:general_resource_state}
        \ket{\calR^\ell(H)} & = \frac{1}{\calN} \sum_{k=0}^{\ell} c_k \sum_{\mathbf{y} \in \mathbb{F}_2^m}  \beta_{G}^{(k)}(\mathbf{y}) \ket{\mathbf{y}} ,
    \end{align}
    is a $q$-ary matrix product state with open boundary conditions of bond dimension $D = \ell + 1$, where \begin{align}
        q = \underset{1 \leq t \leq r}{\max} 2^{m_t} .
    \end{align}
\end{lemma}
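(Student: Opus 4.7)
The plan is to exhibit the MPS explicitly, using the decomposition of $\beta_G^{(k)}(\mathbf{y})$ into connected-component factors from Lemma~\ref{lemma:beta_function_reduced_form}. The key insight is that the bond of the MPS should track the ``running total'' $K_t = \kappa_1 + \cdots + \kappa_t$ as one sweeps through the connected components $C_1, \ldots, C_r$. Since this running total never exceeds $\ell$, it can be encoded in a bond of dimension $\ell+1$.

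Concretely, I would first substitute Lemma~\ref{lemma:beta_function_reduced_form} into Eqn.~(\ref{eq:general_resource_state}) to obtain
\begin{align}
    \ket{\calR^\ell(H)} \propto \sum_{k=0}^{\ell} c_k \sum_{\bkappa \vDash_r k} \binom{k}{\bkappa} \bigotimes_{t=1}^r \Bigl( \sum_{\mathbf{y}^{(t)} \in \F_2^{m_t}} \beta_{C_t}^{(\kappa_t)}(\mathbf{y}^{(t)}) \ket{\mathbf{y}^{(t)}} \Bigr),
\end{align}
where $\mathbf{y}^{(t)} = \mathbf{y}\rvert_{C_t}$. This manifests the tensorial structure: once $\bkappa$ is fixed, the state is a product across connected components. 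The multinomial coefficient splits as $\binom{k}{\bkappa} = k!/(\kappa_1!\cdots\kappa_r!)$, so all factors of the form $\kappa_t!$ can be attached to the local tensors while the global $k!$ factor is absorbed into the right boundary vector.

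Next, I would define $(\ell+1)\times(\ell+1)$ matrices indexed by $i,j \in \{0,1,\ldots,\ell\}$ via
\begin{align}
    A^{(t)}[\mathbf{y}^{(t)}]_{i,j} := \begin{cases} \beta_{C_t}^{(j-i)}(\mathbf{y}^{(t)})/(j-i)! & \text{if } j \geq i, \\ 0 & \text{otherwise,} \end{cases}
\end{align}
together with boundary vectors $\mathbf{v}_L = \mathbf{e}_0$ (the first standard basis vector) and $\mathbf{v}_R[k] = c_k \cdot k!$ for $0 \leq k \leq \ell$. The main verification step is a direct calculation: letting $i_0 = 0$ and summing over intermediate bond indices $0 = i_0 \leq i_1 \leq \cdots \leq i_r$ with $\kappa_t := i_t - i_{t-1}$,
\begin{align}
    \mathbf{v}_L^\intercal A^{(1)}[\mathbf{y}^{(1)}]\cdots A^{(r)}[\mathbf{y}^{(r)}]\, \mathbf{v}_R = \sum_{k=0}^\ell c_k\, k! \sum_{\bkappa\,\vDash_r\, k} \prod_{t=1}^r \frac{\beta_{C_t}^{(\kappa_t)}(\mathbf{y}^{(t)})}{\kappa_t!} = \sum_{k=0}^\ell c_k\, \beta_G^{(k)}(\mathbf{y}),
\end{align}
where the final equality applies Lemma~\ref{lemma:beta_function_reduced_form} in reverse. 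This matches the amplitudes of $\ket{\calR^\ell(H)}$ up to the overall normalization $1/\calN$.

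Finally, I would verify the stated bond and qudit dimensions. The bond dimension is $D = \ell+1$ by construction of the matrices. Each MPS site corresponds to an entire connected component $C_t$, and so lives in $(\C^2)^{\otimes m_t}$, a qudit of dimension $2^{m_t}$; taking the maximum over $t$ gives qudit dimension $q = \max_{t} 2^{m_t}$, as claimed. The only non-routine step is the combinatorial bookkeeping of verifying the matrix product identity above; all other manipulations are direct substitution, so I do not anticipate a serious obstacle beyond confirming that the multinomial-factor decomposition and the telescoping of partial sums $K_t$ line up correctly.
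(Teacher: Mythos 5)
Your construction is correct and follows essentially the same route as the paper: the bond index tracks the running partial sum $K_t=\kappa_1+\cdots+\kappa_t$, the local tensors are upper-triangular in the bond indices with entries built from $\beta_{C_t}^{(\kappa_t)}(\mathbf{y}^{(t)})$, and the identity is closed by applying Lemma~\ref{lemma:beta_function_reduced_form} in reverse. The only difference is a gauge choice in where the multinomial coefficient lives — the paper places telescoping binomials $\binom{K_t}{K_{t-1}}$ in the matrices and $c_k$ in $\mathbf{v}_R$, whereas you place $1/\kappa_t!$ in the matrices and $c_k\,k!$ in $\mathbf{v}_R$ — and these agree since $k!\prod_t(1/\kappa_t!)=\binom{k}{\bkappa}$.
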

\begin{proof}
First, let 
\begin{align} \label{eq:MPS_reduction_vLvR}
    \vec v_L = (1,0,\dots,0), \quad \vec v_R = (c_0,c_1, \dots, c_\ell) .
\end{align}
Next, we construct the matrices $A^{(t)}[\mathbf{y}^{(t)}]$ element-wise, and only write down certain entries of the matrices. 
The value of the remaining entries will not be relevant---they are forced to vanish because they will at some point be multiplied by 0---but by convention we define any entry not explicitly defined to be 0. 
Define indices $K_0, K_1, \dots, K_{r-1}$, $K_r = k$ (where $k$ is also just an arbitrary index) which all range from $0$ to $\ell$, and let $\kappa_t = K_t - K_{t-1}$ for $t$ from $1$ to $r$. 
We construct \begin{align} \label{eq:MPS_reduction_matrix_def}
    (A^{(t)}[\mathbf{y}^{(t)}])_{K_{t-1}, K_t} & = \begin{cases}
        \binom{K_{t}}{K_{t-1}} \cdot \beta^{(\kappa_t)}_{C_t}(\mathbf{y}^{(t)}) & \text{ if } \kappa_t \geq 0, \\
        0 & \text{ otherwise.}
    \end{cases} 
\end{align}
Then \begin{align}
    \mathbf{v}_L^\intercal A^{(1)}[\mathbf{y}^{(1)}] \cdots A^{(r)}[\mathbf{y}^{(r)}] \mathbf{v}_R & = \sum_{K_0, \dots, K_r} (v_L^\intercal)_{K_0} (A^{(1)}[\mathbf{y}^{(1)}])_{K_{1} K_2} \cdots (A^{(r)}[\mathbf{y}^{(r)}])_{K_{r-1} K_r} (\mathbf{v}_R)_{K_r} \\
    & = \sum_{K_1, \dots, K_{r-1}, k} (A^{(1)}[\mathbf{y}^{(1)}])_{K_{1} K_2} \cdots (A^{(r)}[\mathbf{y}^{(r)}])_{K_{r-1} k} c_{k} \\
    & = \sum_{k, K_1, \dots, K_{r-1}} c_{k} \binom{K_1}{0} \beta_{C_1}^{(\kappa_1)}(\mathbf{y}^{(1)}) \cdots \binom{K_r}{K_{r-2}} \beta_{C_r}^{(\kappa_r)}(\mathbf{y}^{(r)}) \\
    & = \sum_{k, K_1, \dots, K_{r-1}} c_{k} \binom{k}{\bkappa} \left[\beta_{C_1}^{(\kappa_1)}(\mathbf{y}^{(1)}) \cdots \beta_{C_r}^{(\kappa_r)}(\mathbf{y}^{(r)}) \right] \label{eq:MPS_reduction_calculation_last_line}
\end{align}
where the last line follows the telescoping structure of binomial coefficients \begin{align}
    \prod_{t=1}^{r} \binom{K_t}{K_{t-1}} & = \prod_{t=1}^{r} \frac{K_t!}{K_{t-1}! \kappa_t!} = \frac{K_r!}{\kappa_1! \cdots \kappa_r!} = \binom{k}{\bkappa} .
\end{align}
The sum in Eqn.~(\ref{eq:MPS_reduction_calculation_last_line}) goes over $K_1, \dots, K_{r-1}$ all from $0$ to $\ell$, but unless $K_t \geq K_{t-1}$ for all $t$, the summand is zero by Eqn.~(\ref{eq:MPS_reduction_matrix_def}). 
Thus, we can equivalently sum over the differences $\kappa_1, \dots, \kappa_{r-1}$, and maintain the final sum $K_r = k$ from $0$ to $\ell$. 
Furthermore, because all the $K_t$'s must be in increasing order for a nonzero summand and $K_r = k$, we need only keep terms for which $\sum_{t=1}^{r} \kappa_t = \sum_{t=1}^r K_{t} - K_{t-1} = K_r = k$. 
That is, we can sum over $\bkappa$ such that $\bkappa \vDash_r k$. 
These observations yield
\begin{align}
    \mathbf{v}_L^\intercal A^{(1)}[\mathbf{y}^{(1)}] \cdots A^{(r)}[\mathbf{y}^{(r)}] \mathbf{v}_R & = \sum_{k=0}^\ell \; \sum_{\bkappa : \bkappa \vDash_r k} c_{k} \binom{k}{\bkappa} \prod_{t=1}^r \beta_{C_t}^{(\kappa_t)}(\mathbf{y}^{(t)}) \\
    & = \sum_{k=0}^\ell c_k \left[ \sum_{\bkappa : \bkappa \vDash_r k} c_{k} \binom{k}{\bkappa} \prod_{t=1}^r \beta_{C_t}^{(\kappa_t)}(\mathbf{y}^{(t)}) \right] \\
    & = \sum_{k=0}^\ell c_k  \beta_{G}^{(k)}(\mathbf{y})
\end{align}
by Lemma~\ref{lemma:beta_function_reduced_form}. 
Therefore, \begin{align}
    \ket{\calR^\ell(H)} & = \frac{1}{\calN} \sum_{k=0}^{\ell} c_k \sum_{\mathbf{y} \in \mathbb{F}_2^m}  \beta_{G}^{(k)}(\mathbf{y}) \ket{\mathbf{y}} \\
    & = \frac{1}{\calN} \sum_{\mathbf{y} \in \F_2^m} \mathbf{v}_L^\intercal A^{(1)}[\mathbf{y}^{(1)}] \cdots A^{(r)}[\mathbf{y}^{(r)}] \mathbf{v}_R \ket{\mathbf{y}} ,
\end{align}
and we can absorb the normalization into $\mathbf{v}_L$ by defining $\mathbf{v}_L = (1/\calN, 0, \dots, 0)$ instead of $\mathbf{v}_L = (1, 0, \dots, 0)$ as we initially did Eqn.~(\ref{eq:MPS_reduction_vLvR}).

We conclude with remarks about the dimensions of the MPS. Eqn.~(\ref{eq:MPS_reduction_matrix_def}) defines the matrix from index $0$ up to at most $K_r = k \leq \ell$, so the matrices are $D \times D$ where $D = \ell + 1$. Also, by definition, $\dim \mathbf{y}^{(t)} = m_t$, so $q$ is as claimed.
\end{proof}

\begin{theorem}[MPS pilot state preparation algorithm] \label{thm:resource_state_algprithm_MPS}
    Let $H = \sum_{i=1}^m v_i P_i$, where $v_i \in \set{\pm 1}$ and the $P_i$ are $n$-qubit Pauli operators, and let $G = (V, E)$ be the corresponding anticommutation graph as in Definition~\ref{def:commutativity_graph}. 
    Assume that $m = \poly(n)$.
    Suppose that $G = \bigsqcup_{t=1}^r C_t$, where $C_t = (V_t, E_t)$ are the connected components of $G$. 
    Let $m_t = |V_t|$ be the size of the $t$-th connected component.
    If $\max_{t} m_t = \calM$, then we can prepare the general pilot state for a polynomial $\calP$ of degree $\ell \leq n$,
    \begin{align}
        \ket{\calR^\ell(H)} & = \frac{1}{\calN} \sum_{k=0}^{\ell} c_k \sum_{\mathbf{y} \in \mathbb{F}_2^m}  \beta_{G}^{(k)}(\mathbf{y}) \ket{\mathbf{y}} ,
    \end{align}
    using a classical pre-processing step which takes time $T_{\text{compute}} = n^{O(\calM)}$ and a subsequent quantum algorithm which takes time $T_{\text{prepare}} = O(\poly(n) \cdot 2^\calM)$.
\end{theorem}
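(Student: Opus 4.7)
The plan is to prove Theorem~\ref{thm:resource_state_algprithm_MPS} by combining the MPS representation from Lemma~\ref{lemma:resource_state_is_MPS} with explicit runtime bounds for computing the MPS matrices (using Theorem~\ref{thm:computing_beta}) and standard sequential MPS preparation techniques.

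First, I invoke Lemma~\ref{lemma:resource_state_is_MPS} directly: the pilot state $\ket{\calR^{\ell}(H)}$ is an MPS on $r$ qudits (one qudit per connected component $C_t$), with qudit dimension $2^{m_t} \leq 2^{\calM}$ and bond dimension $D = \ell + 1 = O(n)$. The MPS matrices $A^{(t)}[\mathbf{y}^{(t)}] \in \C^{D \times D}$ are defined in the proof of that lemma by
\begin{equation}
    (A^{(t)}[\mathbf{y}^{(t)}])_{K_{t-1}, K_t} = \binom{K_t}{K_{t-1}} \cdot \beta^{(\kappa_t)}_{C_t}(\mathbf{y}^{(t)}),
\end{equation}
for $\kappa_t = K_t - K_{t-1} \geq 0$ and zero otherwise. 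To prepare this state as a quantum circuit, we must (a) compute all of these matrices classically in the pre-processing stage, and then (b) compile them into a sequential quantum circuit that prepares the MPS site by site.

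For the classical pre-processing step, the total number of matrix entries across all sites is $r \cdot (\ell+1)^2 \cdot 2^{\calM} = \poly(n) \cdot 2^{\calM}$. Each entry requires one binomial coefficient (negligible cost) and one evaluation of $\beta^{(\kappa_t)}_{C_t}(\mathbf{y}^{(t)})$. By Theorem~\ref{thm:computing_beta} applied to the subgraph $C_t$ (which has $m_t \leq \calM$ vertices) with $k = \kappa_t \leq \ell \leq n$, each such evaluation runs in time $n^{O(\calM)}$, because the binomial factor $\binom{(k-w)/2 + m_t - 1}{m_t - 1}$ is $n^{O(\calM)}$ and the antisymmetry character dynamic-programming cost from Theorem~\ref{thm:alpha_dynamic_programming} is at most $O(km_t (\lceil k/m_t \rceil + 1)^{m_t}) = n^{O(\calM)}$. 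Summing over all entries yields total classical time $n^{O(\calM)}$, as claimed.

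For the quantum preparation step, I appeal to the standard sequential generation scheme for MPS with open boundary conditions: after bringing the MPS to left-canonical form by a polynomial-size sequence of QR/SVD decompositions (involving only $D \times qD$ matrices, so $\poly(D, q)$ classical work per site), the state is prepared by applying one isometry $V^{(t)} : \C^{D} \to \C^{qD}$ on each of the $r$ sites in order, entangling a physical qudit register (of dimension $q \leq 2^{\calM}$) with an auxiliary bond register (of dimension $D = O(n)$). Each isometry acts on $\log(qD) = O(\calM + \log n)$ qubits and can be compiled into $\poly(qD) = \poly(n) \cdot 2^{O(\calM)}$ two-qubit gates by standard Clifford+T synthesis of a generic unitary. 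Multiplying by $r \leq m = \poly(n)$ sites gives total quantum runtime $\poly(n) \cdot 2^{\calM}$. A final trivial permutation of qubits maps the physical registers to the original index order of the Paulis in $H$.

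The main potential obstacle is showing that the above canonicalization step remains efficient, since $A^{(t)}[\mathbf{y}^{(t)}]$ as constructed in Lemma~\ref{lemma:resource_state_is_MPS} is not a priori in canonical form and could in principle be ill-conditioned. This is handled by the standard fact that any MPS on $r$ sites with bond dimension $D$ and physical dimension $q$ can be brought to left-canonical form using $O(r)$ QR decompositions on $qD \times D$ matrices, in total classical time $O(r \cdot \poly(qD)) = \poly(n) \cdot 2^{O(\calM)}$, which is absorbed into $T_{\text{compute}} = n^{O(\calM)}$. This completes the proof.
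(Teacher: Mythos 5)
Your proposal is correct and follows essentially the same route as the paper: invoke Lemma~\ref{lemma:resource_state_is_MPS}, bound the classical cost of computing all $r\cdot D^2\cdot 2^{\calM}$ matrix entries via Theorem~\ref{thm:computing_beta} to obtain the $n^{O(\calM)}$ pre-processing time, and then prepare the resulting $q$-ary MPS sequentially. The only cosmetic differences are that you derive the sequential-isometry preparation and canonicalization explicitly (which gives $\poly(n)\cdot 2^{O(\calM)}$ for the quantum step, marginally looser than the stated $O(\poly(n)\cdot 2^{\calM})$ but harmless for every application since $\calM$ already sits in the exponent of $T_{\text{compute}}$), whereas the paper cites known MPS-preparation results for that step and instead spells out the computation of the normalization $\calN$ via a transfer-matrix contraction, which your left-canonicalization handles implicitly.
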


\begin{proof}
By Lemma~\ref{lemma:resource_state_is_MPS}, we may express the pilot state as a $q$-ary MPS
\begin{align}
     \ket{\calR^\ell(H)} & = \frac{1}{\calN} \sum_{\mathbf{y} \in \F_2^m} \mathbf{v}_L^\intercal A^{(1)}[\mathbf{y}^{(1)}] \cdots A^{(r)}[\mathbf{y}^{(r)}] \mathbf{v}_R \ket{\mathbf{y}}
\end{align}
of bond dimension $D = \ell + 1$ and $q = \max_{1 \leq t \leq r} 2^{m_t}$, where the matrices $A^{(t)}[\mathbf{y}^{(t)}]$ depend on the $\beta$-function, where each nonzero entry is of the form \begin{align}
    g_t(K_t, K_{t-1}) = \binom{K_{t}}{K_{t-1}} \cdot \beta^{(\kappa_t)}_{C_t}(\mathbf{y}^{(t)}) 
\end{align}
by Eqn.~(\ref{eq:MPS_reduction_matrix_def}). The $\beta$-function in this term is given by \begin{align}
    \beta^{(\kappa_t)}_{C_t}(\mathbf{y}^{(t)}) = \sum_{\substack{\bmu^{(t)} : \bmu^{(t)} \vDash_{m_t} \kappa_t\\\bmu^{(t)} \text{ (mod 2)} = \mathbf{y}^{(t)}}} \binom{\kappa_t}{\bmu^{(t)}} \alpha_{C_t}(\bmu^{(t)}) .
\end{align}
We first classically compute $g_t(K_t, K_{t-1})$. The normalization factor is a binomial coefficient where $K_{t} = O(k) = O(\ell) = O(n)$ for all $t$, so it can be computed in $\poly(n)$ time. 
For any graph $G'$ with $m$ nodes, the runtime of the $\beta_{G'}^{(k)}(\mathbf{y})$ function computation algorithm in Theorem~\ref{thm:computing_beta} increases with $k$ and $m$, and decreases with $w = |\mathbf{y}|$. 
Therefore, the worst case is when $k$ and $m$ are both at their largest. 
In our case, $m_t \leq \calM$, and $\kappa_t \leq k \leq n$, so the hardest computation is for a function of the form $\beta_{C}^{(n)}(0)$, where $C$ is some arbitrary graph with $\calM$ nodes.
Computing this function takes time \begin{align}
    O\left( \binom{\frac{n}{2} + \calM - 1}{\calM - 1} \cdot n \calM n^\calM \right) = O\left( \left(\frac{n}{2}^{\calM}\right) \cdot n \calM \cdot n^\calM \right) = n^{O(\calM)} .
\end{align}
There are $r = O(m)$ such matrices, each having $O(q) = O(2^\calM)$ choices of $\mathbf{y}^{(t)}$, and each matrix has $D^2 = O(\ell^2) = O(n^2)$ entries. 
Hence, the runtime to compute the entire set of matrices is $O(mn^2 2^{\calM} n^{O(\calM)}) = n^{O(\calM)}$.
We must also compute the vectors $\mathbf{v}_L$ and $\mathbf{v}_R$. 
By expanding $\calP(x) = \sum_{k=0}^{\ell} c_k x^k$, we showed in Lemma~\ref{lemma:resource_state_is_MPS} that $\mathbf{v}_R = (c_0, \dots, c_\ell)$. 
Finally, $\mathbf{v}_L = (1/\calN, 0, 0, \dots, 0)$, so the final task is to compute the normalization $\calN$. Note that \begin{align}
    \calN^2 & = \sum_{\mathbf{y} \in \F_2^m} \left( \mathbf{v}_L^\intercal A^{(1)}[\mathbf{y}^{(1)}] \cdots A^{(r)}[\mathbf{y}^{(r)}] \mathbf{v}_R \right)^2 \\
    & = \sum_{\mathbf{y} \in \F_2^m} \mathbf{v}_R^\intercal A^{(r)}[\mathbf{y}^{(r)}]^\intercal \cdots A^{(1)}[\mathbf{y}^{(1)}]^\intercal \mathbf{v}_L \mathbf{v}_L^\intercal A^{(1)}[\mathbf{y}^{(1)}] \cdots A^{(r)}[\mathbf{y}^{(r)}] \mathbf{v}_R .
\end{align}
Summing over $\mathbf{y}$ is equivalent to summing over all of $\mathbf{y}^{(1)}, \dots, \mathbf{y}^{(r)}$. 
We may distribute the first sum over $\mathbf{y}^{(1)}$ to calculate $A_1 = \sum_{\mathbf{y}^{(1)}} A^{(1)}[\mathbf{y}^{(1)}]^\intercal \mathbf{v}_L \mathbf{v}_L^\intercal A^{(1)}[\mathbf{y}^{(1)}]$ in $\poly(D) \cdot 2^{m_1}$ time. 
Once that is calculated, we can distribute the sum over $\mathbf{y}^{(2)}$ to calculate $A_2 = \sum_{\mathbf{y}^{(2)}} A^{(2)}[\mathbf{y}^{(2)}]^\intercal A_1 A^{(2)}[\mathbf{y}^{(2)}]$ in $\poly(D) \cdot 2^{m_2}$ time. 
We repeat until we have calculated $A_r$, at which point the remaining quantity is $\mathbf{v}_R^\intercal A_r \mathbf{v}_R$, which we compute in $\poly(D)$ time. Thus, the runtime for the normalization is $O(m \cdot \poly(D) \cdot 2^{\calM}) = O(\poly(n) \cdot 2^{\calM})$. 
In sum, the runtime to compute the classical representation of the MPS is \begin{align}
    T_{\text{compute}} = n^{O(\calM)} + O(\poly(n) \cdot 2^{\calM}) = n^{O(\calM)} .
\end{align}
Once we have computed the matrices and vectors representing the MPS, we can use the fact that a $q$-ary MPS (with $r = O(n)$ matrices) with open boundary conditions of bond dimension $D = \poly(n)$ can be prepared by a quantum algorithm running in time $T_{\text{prepare}} = O(\poly(n) \cdot q)$~\cite{prep_mps_1,prep_mps_2,prep_mps_3}.
\end{proof}

As a result of the general pilot state preparation algorithm in Theorem~\ref{thm:resource_state_algprithm_MPS}, the pilot state can always be efficiently prepared so long as the connected components of the anticommutation graph are sufficiently small.
\begin{corollary}[Efficient and nearly efficient pilot state preparation for sparse anticommutation graphs] \label{cor:resource_state_sparse_graph}
    Let $H$ and $G$ be as in Theorem~\ref{thm:resource_state_algprithm_MPS}. 
    Assume that $m = \poly(n)$.
    Suppose that $G$ decomposes into connected components $C_t$ having vertex sets $V_t$, and let $\max_{t} |V_t| = \calM$. 
    The following guarantees hold.
    \begin{enumerate}
        \item[(1) ] If $\calM = O(1)$, then we can prepare the pilot state $\ket{\calR^{\ell}(H)}$ from Eqn.~(\ref{eq:general_resource_state}) in quantum time $\poly(n)$.
        \item[(2) ] If $\calM = O(\log n)$, then we can prepare the state using a combination of a classical pre-processing step running in $n^{O(\log n)}$ time, and a quantum algorithm running in $\poly(n)$ time.
        \item[(3) ] If $\calM = O(\polylog(n))$, then we can prepare the state in quantum time $O(2^{\polylog(n)})$.
    \end{enumerate}
\end{corollary}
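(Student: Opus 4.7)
The plan is to derive this corollary as a direct substitution into the runtime bounds of Theorem~\ref{thm:resource_state_algprithm_MPS}, which established that for an anticommutation graph with maximum component size $\calM$, the pilot state $\ket{\calR^\ell(H)}$ can be prepared using a classical pre-processing step of time $T_{\text{compute}} = n^{O(\calM)}$ followed by a quantum algorithm of time $T_{\text{prepare}} = O(\poly(n) \cdot 2^\calM)$. Each of the three cases reduces to plugging the asymptotic growth of $\calM$ into these two expressions and observing that the quoted bounds are met. No new structural ingredient beyond Theorem~\ref{thm:resource_state_algprithm_MPS} is needed.

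For case (1), setting $\calM = O(1)$ gives $T_{\text{compute}} = n^{O(1)} = \poly(n)$ and $T_{\text{prepare}} = O(\poly(n) \cdot 2^{O(1)}) = \poly(n)$. Since the classical pre-processing is already polynomial, we may fold it into the overall quantum runtime and conclude that the full pilot state preparation runs in quantum time $\poly(n)$, as claimed. For case (2), setting $\calM = O(\log n)$ yields $T_{\text{compute}} = n^{O(\log n)}$, matching the claimed classical pre-processing bound, while $T_{\text{prepare}} = O(\poly(n) \cdot 2^{O(\log n)}) = O(\poly(n) \cdot n^{O(1)}) = \poly(n)$. Here the pre-processing is genuinely quasi-polynomial but the subsequent quantum stage is polynomial, matching the statement.

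For case (3), with $\calM = O(\polylog(n))$, the classical cost becomes $T_{\text{compute}} = n^{O(\polylog(n))} = 2^{O(\polylog(n) \cdot \log n)} = 2^{\polylog(n)}$, while the quantum cost is $T_{\text{prepare}} = O(\poly(n) \cdot 2^{O(\polylog(n))}) = 2^{\polylog(n)}$. Combining the two stages, the overall preparation runs in total time $2^{\polylog(n)}$, as claimed.

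There is really no obstacle to this argument: the entire content lies in verifying the arithmetic of the exponents for each regime. The one minor point worth being explicit about is that in case (3) the statement collapses classical pre-processing and quantum runtime into a single $2^{\polylog(n)}$ bound; this is legitimate because $\polylog(n)$ is closed under multiplication by $\log n$ (which is what converts $n^{O(\polylog(n))}$ to a $2^{\polylog(n)}$ expression) and because the quantum runtime $\poly(n) \cdot 2^{O(\polylog(n))}$ is itself of the form $2^{\polylog(n)}$.
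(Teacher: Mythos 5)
Your proof is correct and is exactly the argument the paper intends: the corollary is stated as an immediate consequence of Theorem~\ref{thm:resource_state_algprithm_MPS}, and the paper provides no separate proof beyond this substitution of $\calM$ into the bounds $T_{\text{compute}} = n^{O(\calM)}$ and $T_{\text{prepare}} = O(\poly(n)\cdot 2^{\calM})$. Your exponent arithmetic in all three cases, including the observation that $n^{O(\polylog(n))} = 2^{O(\polylog(n)\cdot\log n)} = 2^{\polylog(n)}$ in case (3), is accurate.
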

This result shows that for a sufficiently sparse anticommutation graph, we get either an efficient or nearly efficient---in the sense of a quasipolynomial time algorithm of runtime $2^{\polylog(n)}$---algorithm to prepare the pilot state.

\section{Applications of HDQI to non-commuting Hamiltonians}
\label{sec:applications_non-commuting}

We next give concrete non-commuting Hamiltonians  which are optimizable by the general formulation of HDQI.
In each case, the approach is to begin with a completely commuting Hamiltonian (i.e. the anticommutation graph is completely disconnected), and then to introduce random defects in such a way that the graph does not gain large connected components.

\subsection{Connected component sizes for sparse local random graphs}
We begin by developing a general formalism for analyzing the size of connected components of random graphs.
This technique is inspired by the proof strategies devised by Erdős and Rényi for uniformly random graphs---$n$ nodes with every edge included independently with probably $p$~\cite{erdds1959random}.
We here generalize parts of their proof techniques to give guarantees for random \emph{local} graphs; that is, random graphs with a guaranteed upper bound on the degree.
To enforce locality, we follows a two-level sampling approach.
First, a random graph with bounded degree is chosen; we call this graph the \emph{template graph}.
Then, we choose randomly which edges in the template graph to keep or discard, producing the final random graph.

\begin{definition}[Template distributions] \label{def:template_distributions}
    Let $\calG_m$ be a distribution over graphs with $m$ nodes.
    We say that $\widetilde{\calG}_m$ is a $(K, \calD_m)$ \emph{template distribution} for $\calG_m$ if almost surely a sample $\widetilde{G} \sim \calG_m$ has maximum degree $K$, and there exists a sampling process of the following form, whose output distribution is $\calG_m$.
    First, sample $\widetilde{G} \sim \widetilde{\calG}_m$.
    Next, choose according to some distribution $\calD_m$ over $[m]$ a set of nodes to color red; all other nodes are colored black.
    Finally, construct $G$ by deleting all edges in $\widetilde{G}$ which connect nodes of the same color.
\end{definition}

Template distributions enable us to create sparse local random graphs of interest by first choosing a random local connectivity structure---which nodes in the graph that one \emph{could} connect such that the maximum degree is bounded by a desired number---and then deciding by some random process exactly which of those edges to actually include in the final graph.
The simplest way to partition nodes into the red and black sets is to start with all nodes black and then, independently for each node, color it red with probability $p$.
This is the only type of node-coloring distribution $\calD_m$ we consider, and thus in this case we refer to $\widetilde{\calG}_m$ as a $(K, p)$ template distribution.
We next show that if $p$ is a sufficiently small constant depending on $K$, then graphs drawn from $\calG_m$ have small connected components with high probability.

\begin{lemma}[Connected component size of random graphs with templates] \label{lemma:connected_components_size_template}
    Let $(\calG_m)_m$ be a family of distributions over graphs with $m$ nodes, which has a corresponding $(K, p)$ template distribution $\widetilde{\calG}_m$.
    Then if $p < 1/K^2$, the largest connected component of a random graph $G \sim \calG_m$ has size $O(\log m)$ with probability $1 - \frac{1}{\Omega(\poly(m))}$.
\end{lemma}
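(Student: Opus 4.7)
The plan is to couple the BFS exploration of a connected component of $G$ to a two-type Galton--Watson branching process, verify subcriticality, and union-bound over vertices.

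First, condition on the template $\widetilde G$, which by assumption almost surely has maximum degree at most $K$. Fix any vertex $v$ and expose the colors of vertices one at a time via breadth-first search (BFS) from $v$: a visited vertex $u$ becomes the parent of its unvisited neighbors in $\widetilde G$, and such a neighbor $u'$ joins the component of $v$ in $G$ if and only if its color differs from $u$'s. Since node colors are independent Bernoulli$(p)$ draws, the BFS exploration is stochastically dominated by a two-type branching process in which a red (respectively black) parent produces an independent $\mathrm{Bin}(K, 1-p)$ (respectively $\mathrm{Bin}(K, p)$) number of black (respectively red) children. The domination follows because (i) every vertex of $\widetilde G$ has at most $K$ unvisited neighbors and (ii) relaxing the BFS no-revisit constraint can only inflate offspring counts.

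Second, this two-type branching process is subcritical under the hypothesis $p < 1/K^2$. Its mean offspring matrix
\begin{equation}
    M = \begin{pmatrix} 0 & K(1-p) \\ Kp & 0 \end{pmatrix}
\end{equation}
has Perron eigenvalue $\lambda_1(M) = K\sqrt{p(1-p)}$, and the hypothesis gives $p(1-p) \leq p < 1/K^2$, so $\lambda_1(M) < 1$. Standard tail bounds for subcritical multi-type Galton--Watson processes with bounded offspring distributions then yield $\Pr[|X| \geq s] \leq C\, r^s$ for constants $C > 0$ and $r \in (0,1)$ depending only on $K$ and $p$, where $|X|$ denotes the total progeny; stochastic domination upgrades this to $\Pr[|C_v| \geq s \mid \widetilde G] \leq C\, r^s$ uniformly over templates with maximum degree $K$, where $C_v$ is the component of $v$ in $G$.

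Finally, pick $s = c \log m$ with $c > 1/|\log r|$ and union-bound over the $m$ choices of $v$:
\begin{equation}
    \Pr\!\left[\max_v |C_v| \geq c \log m\right] \leq m \cdot C\, r^{c \log m} = O\!\bigl(m^{1 + c \log r}\bigr) = O\!\bigl(m^{-\Omega(1)}\bigr),
\end{equation}
which yields the desired inverse-polynomial bound after averaging over $\widetilde G$. The main subtlety is the stochastic-domination coupling, a standard technique for sparse random-graph exploration; modulo this, the proof is a classical branching-process argument in which the role of the Erd\H{o}s--R\'enyi threshold $1/n$ is played here by the local-degree threshold $1/K^2$.
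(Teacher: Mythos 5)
Your overall strategy is the same as the paper's: explore the component by BFS, dominate the exploration by a branching-type process that is subcritical when $p<1/K^2$, extract an exponential tail on the component size, and union-bound over the $m$ roots. Your eigenvalue computation $K\sqrt{p(1-p)}<1$ even identifies the natural two-level growth rate $K^2p(1-p)$ that motivates the threshold. However, the step you flag as "the main subtlety" is in fact where the argument breaks: the BFS exploration is \emph{not} stochastically dominated by the two-type Galton--Watson process with offspring laws $\mathrm{Bin}(K,1-p)$ and $\mathrm{Bin}(K,p)$. The standard domination you invoke relies on the edge indicators being independent coins (as in $G(n,p)$ or bond percolation), but here an edge is present iff its two endpoints have different colors, so edges sharing a vertex are correlated. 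Concretely, if a template-neighbor $u'$ of a black explored vertex $u_1$ is revealed to be black (and hence does not attach), then the first time $u'$ is reached from a \emph{red} explored vertex it attaches with conditional probability $1$, not $1-p$; conditionally on the history, the offspring count exceeds the claimed binomial. A minimal counterexample is the triangle template on $\{v,a,b\}$ with $v$ black and $K=2$: the true component satisfies $\Pr[|C_v|\ge 3]=1-(1-p)^2$, whereas the two-type process started from a black root has $\Pr[T\ge 3]=1-(1-p)^2-2p^3(1-p)$, strictly smaller. So the inequality $\Pr[|C_v|\ge s]\le \Pr[T\ge s]$ is false in general, and points (i) and (ii) in your justification do not address this color-correlation issue.

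The gap is repairable, and the paper's proof shows one way: make the domination cruder so that re-exposed colors can never help. There, every red vertex is assumed to attach \emph{all} $\le K$ of its template-neighbors deterministically (so any previously-exposed red vertex is already in the component and cannot be "rediscovered"), while black vertices attach each fresh neighbor independently with probability $p$; collapsing each red vertex with its forced children gives a one-type process in which each step contributes at most $K\cdot\mathrm{Bin}(K,p)$ new vertices, which is subcritical precisely when $K^2p<1$, and a Chernoff bound on $\mathrm{Bin}(tK,p)$ finishes the tail estimate. If you want to keep your two-type formulation, you must either prove a genuine coupling that accounts for deterministic re-attachments (e.g., by charging each such vertex to the step at which its color was first exposed), or inflate one of the offspring laws to $K$ deterministic children as the paper does. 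As written, the claim "stochastic domination upgrades this to $\Pr[|C_v|\ge s\mid \widetilde G]\le C\,r^s$" is unsupported and, taken literally, false.
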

\begin{proof}
We first sketch our proof technique.
We probe this random graph by fixing a starting node $v$ and running a breadth-first search (BFS) starting from that node.
That is, we search for connected neighbors of $v$, add them to a queue, then take the next node off the queue; we continue until the queue is empty.
The queue empties precisely when we have explored the entire connected component contained in $v$.
We will show that with high probability, we explore less than $\sim \log n$ nodes, which implies that that $v$ is contained in a component of size $\gg \log n$ is very small.
From there, we union bound over all $m$ possible choices of $v$.
Within each fixed node exploration, we show that the distribution of the number of connected components is dominated stochastically by a simple binomial distribution, and apply Chernoff concentration to bound the size of the exploration.

Sample $G \sim \calG_m$ with vertices $V$ and edges $E$.
Fix a node $v \in V$. 
We run a BFS with $v$ as the root.
More precisely, let $\calQ$ be a queue which initially only contains $v$.
At step $t$, we dequeue one element $u$ from $\calQ$, mark $u$ as \texttt{explored}, then add all un-\texttt{explored} neighbors of $u$ to $\calQ$.
This process continues until the queue is empty.
We will show that the queue empties quickly with high probability.

Let $X_t$ be the number of newly enqueued vertices.
Then at time $t$, the size of the queue is precisely $1 - t + \sum_{j=1}^t X_j$ (with the convention that $t$ starts at 1).
We wish to bound the probability that the size $S_v$ of the connected component containing $v$ is above some number $s$.
Since we dequeue exactly one node every step, $S_v$ is exactly the number of steps the BFS runs before the queue empties and the algorithm terminates.
Hence, if $S_v > t$, then the queue must not be empty at time $t$.
That is, $1 - t + \sum_{j=1}^t X_j \geq 1$, or equivalently $\sum_{j=1}^t X_j \geq t$.
It is therefore sufficient to show that $\Pr[\sum_{j=1}^t X_j \geq t]$ decays exponentially with $t$.

We employ two techniques to achieve this bound.
First, we observe that the coloring process can be done dynamically, in the sense that there is no need to decide the color of a node $u$ until the first time we encounter $u$ as a neighbor to the current node on the BFS.
The second is stochastic domination; that is, if we wish to upper bound the probability of the event $A = \set{\sum_{j=1}^t X_j \geq t}$, we are always free to instead use a different event which has a greater probability.
In our case, the distribution of $\sum_{j=1}^t X_j$ is complicated, but we can dominate it with a simpler random variable.

Assume that the fixed starting node $v$ is black; we remove this assumption later.
Since $\calG_m$ has a $(K, p)$ template distribution $\deg(v) \leq K$ almost surely, so there are at most $K$ nodes to which $v$ \emph{could} be connected.
Let these nodes be $w_1, \dots, w_{K'}$, where $K' \leq K$.
In reality, any $w_i$ will only actually be connected to $v$ if the $w_i$ is colored red, which occurs with probability $p$.
Amongst $w_1, \dots, w_{K'}$, we now generate all the colors.
The number of red nodes, and thus $X_1$, is distributed as $\Bin(K', p)$, which is stochastically dominated by $\Bin(K, p)$.
Now, all of the nodes just added to the queue are red, and they are allowed to connect to only black nodes.
We could repeat this sampling process, but nodes are black with probability $1 - p$ where $p$ is small, so there is a high probability that most of the potential connections become genuine connections.
We bound over this complication by using stochastic domination over the worst case in which the red node connects to all its potential neighbors, of which there are at most $K$.
Thus, instead of adding $\Bin(K, p)$ nodes to the queue, we add at most $K \cdot \Bin(K, p)$ nodes, and can assume these new nodes are always black.
In future steps, i.e. for some $t \geq 2$, some of the nodes we encounter as we examine neighbor sets may have already been colored.
If such a node is red, it will be connected to our node at hand $u$.
However, the fact that it is red and colored previously means it must have been connected to a node considered in a previous step of the BFS, and therefore already accounted for in our counting of the connected component size.
Therefore, the connected component can only get larger if we assume that at each step every neighbor has not yet been colored, so $K \cdot \Bin(K, p)$ stochastically dominates our random variable.
Consequently, this sequence of stochastic dominations imply that \begin{align}
    \Pr\left[\sum_{j=1}^t X_j \geq t \right] \leq \Pr \left[K \cdot \sum_{j=1}^t Y_j \geq t \right] = \Pr\left[Z_t \geq t/K\right] ,
\end{align}
where $Y_j \stackrel{\text{iid}}{\sim} \Bin(K, p)$ and $Z_t \sim \Bin(tK, p)$.
The last inequality uses the fact that a sum of $t$ independent $\Bin(K, p)$ random variables is equal in distribution to a single $\Bin(tK, p)$ random variable.
Now that we have bounded our probability of interest by a simple Binomial probability, we apply the Chernoff bound, namely that for $X \sim \Bin(n, q)$ and $\delta > 0$, \begin{align}
    \Pr[X \geq (1 + \delta) nq] \leq \exp{- \frac{\delta^2}{2 + \delta} nq} .
\end{align}
In our case, $\mathbf{E}[Z_t] = tKp$.
Since $p < 1/K^2$ is a constant, we may write $p = \frac{1}{K^2 (1 + \delta)}$ for some $\delta > 0$.
Thus, $t/K = tK (1+\delta) p$, and
\begin{align}
    \Pr[Z_t \geq t/K] & = \Pr[Z_t \geq (1+\delta) tKp] \leq \exp{- \left(\frac{\delta^2 K p}{2+\delta}\right) t} = \exp{- c_{K, \delta} t} ,
\end{align}
where $c_{K, \delta} = \delta^2 K p/ (2+\delta) = \delta^2 / K(1+\delta)(2+\delta)$.
The condition $Z_t \geq t/K$ is implied by $S_v > t$, i.e. the size of the connected component containing $v$ being larger than $t$.
Consequently, $\Pr[S_v > t] \leq \exp{- c_{K, \delta} t}$.
Now, let $S$ be the size of the largest connected component in $G$. Then by a union bound, \begin{align}
    \Pr[S > t] & = \Pr[\exists v \in G \,:\, S_v > t] \\
    & \leq m \cdot \Pr[S_v > t] \\
    & \leq m \cdot \exp{- c_{K, \delta} t} \\
    & = \exp{- (c_{K, \delta} t - \ln m)} .
\end{align}
By choosing $t = \bar{c}_{K, \delta, D} \ln m = O(\log m)$, where $\bar{c}_{K, \delta, D} = (D+1)/c_{K, \delta}$, the right-hand side is bounded by $1/n^D$. Since $D$ can be made any positive constant, the probability that the largest connected component has size $O(\log m)$ can be made $1 - 1/\poly(m)$ as claimed.
\end{proof}

\subsection{Semiclassical spin glasses}
We now apply the formalism of Lemma~\ref{lemma:connected_components_size_template} to some concrete Hamiltonians.
We first give a model which is not a defected version of a Hamiltonian discussed previously in this work; rather, it is a random \emph{classical} spin glass with random quantum defects.
Starting from a certain type of classical spin glass, which has a Hamiltonian consisting only of $Z$-type terms, we turn each term into a $X$-type term independently with probability $p$.
When $p = 1/2$, this Hamiltonian takes the form of a quantum spin glass.
For a general constant $p \in (0, 1/2)$, the model can be interpreted as a ``semiclassical spin glass" or a ``classical spin glass with quantum defects".
Formally, let $B^\intercal_0 \in \F_2^{n \times m}$ be a random $(a, b)$-sparse binary matrix.
That is, $B^\intercal_0$ is a uniformly random sample from the set of $n \times m$ binary matrices for which every column has weight $a$ and every row has weight $b$.
Let $\mathbf{h}_i \in \F_2^n$ be the $i$th column of $B^\intercal_0$.
We define a Hamiltonian $H(p)$ as \begin{align}
    H(p) = \sum_{i=1}^m v_i P(\mathbf{h}_i) ,
\end{align}
where independently for each $i$, $P(\mathbf{h}_i) = Z^{\mathbf{h}_i}$ with probability $1-p$ and $P(\mathbf{h}_i) = X^{\mathbf{h}_i}$ with probability $p$.
The $v_i \in \set{\pm 1}$ can be drawn from any distribution desired.
We denote this model the $p$-semiclassical spin glass with sparsity $(a, b)$.

\THMspinglass*


\begin{proof}
We show two properties, that $\Symp(H(p))$ can be efficiently decoded up to errors of weight $d(a, b, c)$ and that the anticommutation graph of $H(p)$ has no large connected components---both with high probability.
Let $B_H^\intercal \in \F_2^{2n \times m}$ be the parity check matrix of $\Symp(H(0))$.
To the former, when $p = 0$, \begin{align}
    B_H^\intercal = \begin{bmatrix}
        0 \\ B^\intercal_0
    \end{bmatrix} ,
\end{align}
where $B^\intercal_0 \in \F_2^{n \times m}$ is a random parity check matrix with column-weight $a$ and row-weight $b$.
By the results of Appendix~\ref{sec:app-expander_graphs}, with probability $1 - n^{-\Omega(1)}$ this code is efficiently decodable for all errors up to weight $dm$, for some threshold constant $d(a, b, c) \in (0, 1)$.
The Pauli terms which are randomly chosen to become $X$-type terms flip to the top of the symplectic code.
That is, if $P(\mathbf{h}_i)$ becomes $X$-type, then the $i$th column of $B_H^\intercal$ is given by $\begin{bmatrix}
    \mathbf{h}_i \\ 0
\end{bmatrix}$ as opposed to $\begin{bmatrix}
    0 \\ \mathbf{h}_i
\end{bmatrix}$.
The efficient decodability of $\Symp(H(0))$ arises from the fact that its Tanner graph has, with high probability, a certain property known as high expansion which implies efficient decodability (see Appendix~\ref{sec:app-expander_graphs}).
In the Tanner graph representation, the action of moving some columns from the $Z$ to the $X$ part of the symplectic representation corresponds to making a copy of all the check nodes, and then moving the selected data nodes (each of which corresponds to a column) to connect identically with this second copy instead of the first.
Such an action can only increase the expansion property, and thus $\Symp(H(p))$ remains efficiently decodable.

Next we show that the anticommutation graph of $H(p)$ has connected components only of size $O(\log n)$.
This completes the proof by way of Theorem~\ref{thm:main_reduction_general_case} and Corollary~\ref{cor:resource_state_sparse_graph}.
Initially, with $H(0)$, the anticommutation graph $G_0$ is completely disconnected.
Then, independently for each $Z$-type Pauli, we decide with probability $p$ whether or not to flip it into a $X$-type Pauli.
This is equivalent to deciding for each node whether or not to color it red (with probability $p$) or black ($1-p$).
At that point, we connect nodes which anticommute, forming the anticommutation graph $G$ of $H(p)$.
Note that every node corresponds to a column in $B_H^\intercal$ which has weight $a$.
Each row has weight $b$, so any given column has overlapping support with at most $a(b-1)$ other columns.
Thus, the degree of every node in $G$ is at most $K = a(b-1)$.
The distribution over anticommutation graphs $G$ is therefore precise that of a random graph with a $(K, p)$ template, so by Lemma~\ref{lemma:connected_components_size_template} the connected components are of size $O(\log n)$ with probability $1 - n^{-\Omega(1)}$ as desired.
\end{proof}

\begin{corollary}[Gibbs sampling semiclassical spin glasses at constant temperature]
    Let $H(p)$ be a $p$-semiclassical spin glass as in Theorem~\ref{thm:semiclassical_spin_glass_HDQI}. 
    If $p < 1/K^2$, there exists a constant threshold inverse temperature $\beta^* > 0$ such that for every $\beta < \beta^*$, there is an algorithm running in $n^{O(\log n)}$ classical pre-processing time and $\poly(n)$ quantum time which prepares the Gibbs state with inverse temperature $\beta$ up to trace distance $\exp{-\Omega(n)}$.
\end{corollary}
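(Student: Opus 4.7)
The plan is to combine Theorem~\ref{thm:semiclassical_spin_glass_HDQI} with the Gibbs-state approximation bound of Theorem~\ref{thm:gibbs_explicit}. First I would observe that each term $v_i P(\mathbf{h}_i)$ in $H(p)$ is a signed Pauli and hence has operator norm $1$, so by the triangle inequality $\lVert H(p) \rVert \leq m = cn$. Applying Theorem~\ref{thm:gibbs_explicit} with target trace distance $\delta$ thus yields a polynomial $\calP$ of degree
\begin{equation}
    \ell \;\leq\; 1.12\,\beta\,\lVert H(p)\rVert + 0.648\,\ln(12/\delta) \;\leq\; 1.12\,\beta\,c\,n + 0.648\,\ln(12/\delta)
\end{equation}
such that $\rho_\calP(H(p))$ is within trace distance $\delta$ of the Gibbs state $e^{-\beta H(p)}/\Tr[e^{-\beta H(p)}]$.

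Next I would invoke Theorem~\ref{thm:semiclassical_spin_glass_HDQI}, which (with probability $1 - n^{-\Omega(1)}$ over the random Hamiltonian) prepares $\rho_\calP(H(p))$ in $\poly(n)$ quantum time after $n^{O(\log n)}$ classical pre-processing, as long as $\ell \leq d(a,b,c)\cdot m = d(a,b,c) \, c \, n$. Setting $\delta = e^{-\gamma n}$ for a constant $\gamma > 0$ to be chosen, the degree constraint becomes, for all sufficiently large $n$,
\begin{equation}
    1.12\,\beta + \frac{0.648\,\gamma}{c} \;\leq\; d(a,b,c).
\end{equation}
This inequality is satisfiable by some strictly positive $\beta$ and $\gamma$: for instance, pick $\gamma = c\,d(a,b,c)/(2 \cdot 0.648)$ so that the second term equals $d(a,b,c)/2$, and then define
\begin{equation}
    \beta^* \;:=\; \frac{d(a,b,c)}{2 \cdot 1.12} \;>\; 0.
\end{equation}
For any $\beta < \beta^*$ this choice keeps the polynomial degree below the decodability threshold while guaranteeing $\delta = e^{-\Omega(n)}$.

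Putting the pieces together, for $\beta < \beta^*$ one runs the HDQI procedure of Theorem~\ref{thm:semiclassical_spin_glass_HDQI} with the Gibbs-approximating polynomial supplied by Theorem~\ref{thm:gibbs_explicit}, yielding a state within trace distance $e^{-\Omega(n)}$ of the Gibbs state in the claimed runtime. There is no hard step here — the argument is essentially a bookkeeping exercise — the only subtlety is balancing the two linear-in-$n$ contributions to the polynomial degree (one from $\beta\lVert H\rVert$ and one from $\ln(1/\delta)$) against the linear decodability budget $d(a,b,c)\,m$, which is precisely what forces $\beta^*$ to be a (small) constant rather than growing with $n$.
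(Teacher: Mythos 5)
Your proposal is correct and follows exactly the route the paper takes: the paper's proof is the one-line remark that the corollary "follows immediately from Theorems~\ref{thm:gibbs_explicit} and \ref{thm:semiclassical_spin_glass_HDQI}," and your write-up simply makes explicit the bookkeeping (bounding $\lVert H(p)\rVert \le cn$, setting $\delta = e^{-\gamma n}$, and balancing the two linear-in-$n$ degree contributions against the budget $d(a,b,c)\,m$ to extract a positive constant $\beta^*$). No gaps.
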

\begin{proof}
    Follows immediately from Theorems~\ref{thm:gibbs_explicit} and \ref{thm:semiclassical_spin_glass_HDQI}.
\end{proof}

\begin{remark}
    Note that with constant probability (which can be made arbitrarily close to $1$ by increasing the maximum allowed size of the connected components), the maximum connected component size is $O(1)$.
    Hence, if we replace the probability bound of $1 - 1/n^{\Omega(1)}$ in Theorem~\ref{thm:semiclassical_spin_glass_HDQI} with $1 - \epsilon$ for an arbitrarily small choice of $\epsilon$, we instead get an efficient quantum algorithm by Case 1 of Corollary~\ref{cor:resource_state_sparse_graph}.
\end{remark}

\begin{remark}
    The above strategy does not immediately generalize to a defected version of the random $k$-local commuting Hamiltonian of Section~\ref{sec:commuting-applications}.
    However, a slightly more restrictive version of the model---namely one where every term is $k$-local and also every qubit is supported in at most $d$ terms---would immediately generalize to the defected version, and the above theorem would hold just as well for this model.
\end{remark}

\subsection{Commuting Hamiltonians with random non-commuting defects}

The template graph formalism developed above applies to a wide range of physically relevant Hamiltonians.
In particular, if we start with \emph{any} commuting Hamiltonian which is amenable to HDQI (i.e. $\Symp(H)$ is an efficiently decodable code up to error weight $\Omega(n)$) such that \begin{enumerate}
    \item[(a) ] the Hamiltonian has bounded locality interaction, i.e. every Pauli term has overlapping support with at most $O(1)$ other terms, and
    \item[(b) ] the defects introduced do not destroy the efficient decodability of $\Symp(H)$ up to linear error weight.
\end{enumerate}
The former condition is usually satisfied by physically relevant models, e.g. every Pauli has $O(1)$ weight and has overlapping support with only $O(1)$ other Paulis.
The latter condition depends more sensitively on the exact structure of the decoder applied to the particular model, but is likely generally satisfied as well.
To illustrate the generality of the template graph formalism, we revisit the model of local nearly independent Hamiltonians from Section~\ref{sec:commuting-applications} and show that even with defects HDQI can prepare Gibbs states up to some constant temperature.
Here, by local, we mean that every Pauli term has overlapping support with at most $O(1)$ other terms.
While in the commuting case we were able to Gibbs sample at arbitrary temperature, this required a different state preparation procedure.
While it is plausible that a generalized version of this different procedure will hold also for non-commuting Hamiltonians, we leave this for future work.
\begin{theorem}[Gibbs sampling of defected high-distance nearly independent Hamiltonians] \label{thm:gibbs_defected_perfectly_decodable}
    Let $H_0 = \sum_{i=1}^m v_i P_i$ be a commuting Pauli Hamiltonian on $n$ qubits with $v_i \in \set{\pm 1}$ and $m = O(n)$, such that $\Symp(H)$ encodes $k = O(1)$ logical bits and every relation (set of Paulis that multiply to $\mathbb{1}$) involves $\Omega(m)$ terms.
    Define $H$ by applying a random Pauli to each term independently with probability $p$, where the random Pauli is allowed to be drawn from any distribution as long as it does not introduce new independent relations among the Hamiltonian terms.
    Suppose there exists a constant $K$ such that for every $i \neq j$, $|\supp(P_i) \cap \supp(P_j) | \leq K$.
    Then there exists a threshold inverse temperature $\beta^*$ such that for any $\beta \leq \beta^*$ and any $p < 1/K^2$, HDQI prepares the Gibbs state of $H$ up to $\exp{-\Omega(n)}$ trace distance.
    With probability $1 - 1/n^{\Omega(1)}$, the algorithm runs in $n^{O(\log n)}$ classical pre-processing time and $\poly(n)$ quantum time.
\end{theorem}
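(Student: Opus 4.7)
The plan is to combine the main non-commuting reduction of Theorem~\ref{thm:main_reduction_general_case} with the MPS-based pilot state preparation of Corollary~\ref{cor:resource_state_sparse_graph} and the Gibbs approximation of Theorem~\ref{thm:gibbs_explicit}. The two technical facts that need to be verified are (a) the symplectic code $\Symp(H)$ of the defected Hamiltonian remains efficiently decodable up to $\Omega(m)$ errors, and (b) the anti-commutation graph of $H$ has connected components of logarithmic size with high probability.

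For (a), the assumption that the random defects do not introduce new independent relations translates, at the level of symplectic codes, into the containment $\Symp(H) \subseteq \Symp(H_0)$ as $\F_2$-linear codes: any $\mathbf{y} \in \F_2^m$ such that $\sum_i y_i \sym(P_i') = 0$ must also satisfy $\sum_i y_i \sym(P_i) = 0$. Every codeword of $\Symp(H)$ is therefore a codeword of $\Symp(H_0)$, and since the latter has minimum distance $\Omega(m)$ by the hypothesis that each original relation involves $\Omega(m)$ terms, so does $\Symp(H)$. Combined with $\dim \Symp(H) \leq k = O(1)$, the code has only $2^k = O(1)$ codewords; decoding up to $\lfloor (d-1)/2 \rfloor = \Omega(m)$ errors then reduces to enumerating this constant number of cosets and returning the closest, which is efficient.

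For (b), I would reuse the template-graph machinery deployed in the proof of Theorem~\ref{thm:semiclassical_spin_glass_HDQI}. Let $\widetilde{G}$ be the graph on $m$ vertices whose edges connect Paulis with overlapping support; by the locality assumption $\max_{i \neq j} |\supp(P_i) \cap \supp(P_j)| \leq K$, the graph $\widetilde{G}$ has maximum degree $K$. Because $H_0$ commutes, any edge of the anti-commutation graph of $H$ must lie in $\widetilde{G}$ and requires at least one endpoint to have been defected; hence the anti-commutation graph is a subgraph of a random subgraph of $\widetilde{G}$ in which each potential edge survives with probability at most a constant multiple of $p$. Lemma~\ref{lemma:connected_components_size_template}, or a direct adaptation of its BFS/Chernoff argument to accommodate the general (not necessarily bipartite) defect distribution, then yields largest-component size $O(\log n)$ with probability $1 - 1/n^{\Omega(1)}$ whenever $p < 1/K^2$.

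Putting the pieces together, Corollary~\ref{cor:resource_state_sparse_graph} case~(2) produces the pilot state in $n^{O(\log n)}$ classical preprocessing and $\poly(n)$ quantum time. Feeding this and the decoder from (a) into Theorem~\ref{thm:main_reduction_general_case} yields $\rho_\calP(H)$ for any polynomial $\calP$ of degree $\ell = \Omega(m)$, and Theorem~\ref{thm:gibbs_explicit} with $\calP$ the Chebyshev approximation of $\exp(-\beta x/2)$ then delivers the Gibbs state for every $\beta$ below some constant threshold $\beta^* \propto \ell/m$, to trace distance $\exp(-\Omega(n))$. The main obstacle I anticipate is the slight generalization of Lemma~\ref{lemma:connected_components_size_template} needed to handle the case where both endpoints of a potential edge are defected (so that anti-commutation is itself a random event depending on the defect distribution rather than forced by a bipartite color mismatch); step (a), by contrast, is a clean consequence of the containment $\Symp(H) \subseteq \Symp(H_0)$ and requires no new ideas.
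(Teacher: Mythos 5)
Your proposal is correct and follows essentially the same route as the paper's (sketched) proof: decodability of $\Symp(H)$ via the containment $\Symp(H)\subseteq\Symp(H_0)$ together with constant dimension and linear distance (Gaussian elimination plus enumeration of the $2^k=O(1)$ coset elements), small anticommutation components via the template-graph machinery, and assembly through Theorem~\ref{thm:main_reduction_general_case}, Corollary~\ref{cor:resource_state_sparse_graph}, and Theorem~\ref{thm:gibbs_explicit}. The one point you flag---that Lemma~\ref{lemma:connected_components_size_template} must be mildly generalized because an edge can survive when \emph{both} endpoints are defected, so anticommutation is not forced by a bipartite color mismatch---is a genuine subtlety that the paper's sketch glosses over; it is handled, e.g., by noting that every component of the anticommutation graph is contained in the $1$-neighborhood of a connected cluster of defected vertices in the square of the overlap graph (maximum degree $K^2$), for which site percolation at $p<1/K^2$ gives $O(\log n)$ clusters with high probability.
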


The proof is a substantially simpler version of the argument given for the semiclassical spin glass, so we briefly sketch it here.
As before, the assumption that the Pauli terms have bounded interaction and that $p$ is below threshold implies that connected components of the anticommutation graph are $O(\log n)$ in size with high probability.
As for the decoder, since the distance is linear and there are only $k = O(1)$ logical bits, for a sufficiently small constant $\alpha$, any weight-$\alpha n$ error on $\Symp(H_0)$ can be decoded efficiently by direct Gaussian elimination. 
This fact holds true as well for $\Symp(H)$, since defects preserve independence relations by assumption.
Small connected components and efficient decoding are the only requirements for HDQI to succeed, so the theorem is proven.
Like before, should one wish for a completely efficient algorithm, one can do so by replacing the probability $1 -1/n^{\Omega(1)}$ with $1 - \epsilon$, where $\epsilon$ can be chosen to be an arbitrarily small constant.

As a concrete example, consider the 2D toric code wherein every vertex, which typically hosts a 4-local $X$ operator, has a probability $p$ of having a random Pauli (supported on that vertex) applied to it.
This may cause it to anticommute with at most 8 other operators, namely the 4 plaquette operators and 4 vertex operators it touches.
By Theorem~\ref{thm:gibbs_defected_perfectly_decodable}, so long as $p < 1/64$, it is still possible via HDQI to prepare the Gibbs state up to a constant inverse temperature of this defected toric code Hamiltonian, with high probability.

\section{Beyond Pauli Hamiltonians}
\label{sec:beyond_paulis}
While we have exclusively focused on Pauli Hamiltonians so far, the framework of HDQI is not limited to this setting.
This section extends HDQI from the Pauli group, acting on qubits, to the Weyl group (also known as the discrete Heisenberg group or generalized Pauli group), acting on qudits. 
The corresponding decoding problem is a syndrome decoding problem over prime fields. 
For simplicity, we give a detailed analysis only on commuting $n$-qudit Weyl Hamiltonians.
We, however, expect that our results for non-commuting Hamiltonians, i.e. Theorem~\ref{thm:main_reduction_general_case} and Corollary~\ref{cor:resource_state_sparse_graph}, hold just as well in the $\F_p$ generalization, and at the end of this section we sketch the generalized proof.
Our result is summarized in the following theorem. 
\begin{theorem}[Main reduction over $\F_p$, commuting case] \label{thm:commuting-p}
Let $p$ be prime and let $H=\sum_{i=1}^m v_i W(\mathbf{u}_i)$ be a commuting $n$-qudit Hamiltonian where $W(\mathbf{u})$ is an $n$-qudit Weyl operator, $v_i \in \set{e^{2 \pi i t / p} \,|\, 0 \leq t \leq p-1}$ is a phase, and $m=\mathrm{poly}(n)$.  
Let $\mathcal D_H^{(\ell)}$ be a decoding oracle (as in Definition~\ref{def:decoding_oracle}) for the classical linear $p$-ary symplectic code $\mathrm{Symp}_p(H)$, and let $\calP$ be any univariate  polynomial with $\deg(\calP) \leq \ell$.  
There is a quantum algorithm, running in $\poly(n)$ time, which prepares
\begin{equation}
\rho_{\mathcal P}^{(p)}(H) := \frac{\mathcal P(H)\,\mathcal P(H)^\dagger}{\mathrm{Tr}[\mathcal P(H)\,\mathcal P(H)^\dagger]}
\end{equation}
using a single call to the decoding oracle $\mathcal D_H^{(\ell)}$.
\end{theorem}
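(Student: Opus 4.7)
The plan is to mirror the three-step structure of Algorithm~\ref{alg:amplitude_transform} by replacing every $\F_2$-valued object with its $\F_p$ analog and every Pauli with its Weyl counterpart. The core combinatorial ingredient is an $\F_p$ version of Theorem~\ref{thm:commuting_symmetric_polynomial_expansion}: setting $z_i = v_i W(\mathbf{u}_i)$, commutativity gives $z_i^p = I$ and every monomial $z_1^{a_1}\cdots z_m^{a_m}$ depends only on the residues $a_i \bmod p$, so grouping the multinomial expansion of $(\sum_i z_i)^k$ by residue pattern $\mathbf{y}\in\F_p^m$ yields
\begin{equation}
\calP\!\left(\sum_{i=1}^m z_i\right) \;=\; \sum_{\mathbf{y}\in\F_p^m} w_{\mathbf{y}}\, z_1^{y_1}\cdots z_m^{y_m},
\end{equation}
where each $w_{\mathbf{y}}$ depends only on the composition type $\mathbf{n}(\mathbf{y}) = (n_0,\ldots,n_{p-1})$ recording how many coordinates lie in each residue class. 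For constant $p$ there are only $\binom{m+p-1}{p-1}=\poly(m)$ such types, and each $w_{\mathbf{n}}$ is computable in polynomial time by the same multinomial recursion that underlies the qubit case.

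Given this expansion, the pilot state becomes a weighted superposition of generalized Dicke states on $m$ qudits of dimension $p$,
\begin{equation}
\ket{\calR^\ell_{p}(H)} \;\propto\; \sum_{\mathbf{n}} w_{\mathbf{n}} \sum_{\mathbf{y}\in\F_p^m \,:\, \mathbf{n}(\mathbf{y}) = \mathbf{n}} \ket{\mathbf{y}},
\end{equation}
which I prepare in $\poly(n)$ time by a qudit variant of the Dicke-state preparation used in Lemma~\ref{lemma:commuting_efficient_resource_state}. I then tensor with the $n$-qudit maximally entangled state $\ket{\Phi_p^n} = p^{-n/2}\sum_{\mathbf{x}\in\F_p^n}\ket{\mathbf{x}}\ket{\mathbf{x}}$, apply a diagonal phase $\ket{\mathbf{y}}\mapsto\prod_i v_i^{y_i}\ket{\mathbf{y}}$ on the pilot register, and apply the controlled Weyl operation $\prod_i (C^{(i)} W(\mathbf{u}_i)^{y_i})_{A\to B}$. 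Uncomputation uses the qudit Bell basis: since $\{(W(\mathbf{v})\otimes I)\ket{\Phi_p^n}\}_{\mathbf{v}\in\F_p^{2n}}$ is orthonormal (via $\Tr[W(\mathbf{v})^\dagger W(\mathbf{w})] = p^n \delta_{\mathbf{v},\mathbf{w}}$), a constant-depth qudit Clifford circuit implements the unitary $(W(\mathbf{v})\otimes I)\ket{\Phi_p^n}\mapsto\ket{\mathbf{v}}$, taking the accumulated register-$B$ Weyl to the basis state $\ket{B^\intercal \mathbf{y}}\in\F_p^{2n}$, where $B^\intercal$ is the $p$-ary parity check of $\Symp_p(H)$. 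A single call to $\mathcal{D}_H^{(\ell)}$ then zeros register $A$; after undoing the Bell measurement and tracing out register $C$, the identity $(M\otimes I)\ket{\Phi_p^n}=(I\otimes M^\intercal)\ket{\Phi_p^n}$ yields $\calP(H)\calP(H)^\dagger/\Tr[\calP(H)\calP(H)^\dagger]$ on register $B$.

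The main obstacle, and the one genuinely new wrinkle relative to Section~\ref{sec:commuting}, is phase bookkeeping inside the Weyl group. Unlike the qubit Pauli group, where the only non-trivial commutator phase is $-1$, Weyl operators satisfy $W(\mathbf{u})W(\mathbf{v}) = \omega^{c(\mathbf{u},\mathbf{v})}\, W(\mathbf{u}+\mathbf{v})$ with $\omega=e^{2\pi i/p}$ for a cocycle $c$, and even for commuting pairs the symmetric part of $c$ need not vanish. Consequently the identifications $W(\mathbf{u}_i)^{y_i} = \omega^{\binom{y_i}{2} c(\mathbf{u}_i,\mathbf{u}_i)} W(y_i\mathbf{u}_i)$ and $W(\mathbf{u}_1)^{y_1}\cdots W(\mathbf{u}_m)^{y_m} = \omega^{\phi(\mathbf{y})} W(B^\intercal \mathbf{y})$ introduce deterministic phases $\phi(\mathbf{y})\in\F_p$ that depend on the chosen ordering of $[m]$. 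I will handle this by precomputing $\phi(\mathbf{y})$ in closed form from the fixed cocycle convention and absorbing it into a second diagonal gate on the pilot register before the controlled-Weyl step; because $H$ is commuting, $\phi$ is a simple quadratic form in $\mathbf{y}$ and the extra gate costs $O(m^2)$. Robustness against imperfect decoders then proceeds verbatim as in Theorem~\ref{thm:robustness_commuting}, with the orthonormality of the Pauli Bell basis replaced by its Weyl analog.
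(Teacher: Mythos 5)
Your proposal follows essentially the same route as the paper's proof: the $p$-ary symmetric expansion grouped by composition type (the paper's Theorem~\ref{thm:p-ary_symmetric_expansion}), a pilot state built from multinomial/qudit Dicke states, controlled Weyl operators on $\ket{\Phi_p^n}$, the coherent Weyl--Bell measurement, and a single decoding-oracle call, so the core argument is correct and matches the paper.

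One caution on your ``phase bookkeeping'' step. The expansion is written in terms of the \emph{ordered} products $z_1^{y_1}\cdots z_m^{y_m}$ with $z_i = v_i W(\mathbf{u}_i)$, and the sequential controlled-Weyl gates apply exactly that ordered product to register $B$; the cocycle phase $\omega^{\phi(\mathbf{y})}$ is therefore already present as part of the amplitude, rides through the coherent Bell measurement and the decoder (both of which act only on the symplectic label $B^\intercal\mathbf{y}$), and is restored when the Bell measurement is undone. Hence no separate phase-correction gate is needed, and inserting an extra diagonal $\omega^{\phi(\mathbf{y})}$ on the pilot register \emph{in addition to} the sequential controlled-Weyl application would double-count the phase and produce the wrong state; such a gate is only appropriate if one instead implements the register-$B$ operator directly as $W(B^\intercal\mathbf{y})$ rather than as the ordered product.
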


\subsection{Generalized Pauli operators and Bell states}
Fix a prime $p$ and set $\omega := e^{2\pi i/p}$. On one qudit $\mathbb C^{p}$ define for each $j \in \F_p$
\begin{equation}
X\ket{j} = \ket{j+1 \text{ (mod $p$)}}, \qquad Z\ket{j} = \omega^{j}\ket{j} .
\end{equation}
Note that both operators are traceless, just as their analogs over qubits are.
Let $X_k$ be a $n$-qudit operator with $X$ the $k$th qudit and $I$ on all others; define $Z_k$ similarly.
For $\mathbf{u} = (\mathbf a \,|\, \mathbf b) \in \mathbb F_p^{2n}$ with $\mathbf a, \mathbf b \in \mathbb F_p^n$, the $n$-qudit Weyl (generalized Pauli) operator is defined as
\begin{equation}
\label{def:Weyl_operator}
W(\mathbf{u}) := \prod_{k=1}^{n} Z_k^{a_k} X_k^{b_k}.
\end{equation}
The Weyl operators obey the the \emph{Weyl commutation relations}:
\begin{equation}
\label{eq:Weyl_commutation_relations}
W(\mathbf{u}) W(\mathbf v) = \omega^{\langle \mathbf u , \mathbf v \rangle} W(\mathbf v) W(\mathbf u), \qquad
\langle \mathbf u , \mathbf v \rangle := \mathbf a \cdot \mathbf b' - \mathbf b \cdot \mathbf a' \in \mathbb F_p,
\end{equation}
where $\mathbf v = (\mathbf a' \,|\, \mathbf b')$ and $\cdot$ is the standard dot product over $\mathbb F_p$. 
In other words, $\langle \cdot, \cdot \rangle$ is the natural generalization of the symplectic inner product to a general prime field.
On the Weyl group, we define the \emph{symplectic map}
\begin{equation}
\sym \,:\, \omega^z W(\mathbf u) \longmapsto \mathbf u \in \F_p^{2n} \quad \left(\forall z \in \F_p\right) ,
\end{equation}
which is the analog of the symplectic map of Paulis on an arbitrary prime field.
As in the Pauli case, $\sym$ is a homomorphism from the generalized Pauli group to $\F_p^{2n}$, since
\begin{equation}
\sym(\omega^z W(\mathbf u) \cdot \omega^{y} W(\mathbf v)) = \sym(W(\mathbf u)) + \sym(W(\mathbf v)) = \mathbf u + \mathbf v,
\end{equation}
where addition is performed mod $p$. 
The kernel of $\mathrm{symp}$ is precisely the center $\langle \omega \Id \rangle$ of the Weyl group.
To define HDQI, we require an analogous notion of the maximally entangled state over $\F_p$.
We define this $2n$-qudit state as
\begin{equation}
\ket{\Phi_p^n} = \frac{1}{\sqrt{p^{n}}} \sum_{\mathbf x \in\mathbb F_p^n}\ket{\mathbf x} \otimes \ket{\mathbf x} .
\end{equation}
In Appendix~\ref{app:bell_basis}, we show that, as in the case of the maximally entangled state over qubits, application of two distinct Weyl operators on $\ket{\Phi_p^n}$ yields two orthonormal states.
Hence, the set $\set{(W(\mathbf u) \otimes I) \ket{\Phi_p^n}}$ for all $\mathbf{u}$ forms an orthonormal basis for $(\C^p)^{\otimes 2n}$.
We moreover derive the $\F_p$ analog of the coherent Bell measurement, which unitarily maps $(W(\mathbf u) \otimes I) \ket{\Phi_p^n}$ to $\ket{\sym(W(\mathbf{u}))} = \mathbf{u}$.
These facts serve as the foundation for the lifting of HDQI to higher prime fields.

\subsection{Symplectic codes over \texorpdfstring{$\mathbb F_p$}{Fp}}
Given a commuting $n$-qudit Weyl Hamiltonian
\begin{equation}
H = \sum_{i=1}^{m} v_i W(\mathbf{u}_i) ,
\end{equation}
where $\mathbf{u}_i \in \F_p^{2n}$, $v_i \in \set{e^{2 \pi j / p} \,:\, 0 \leq j \leq p-1}$, and $[W(\mathbf{u}_i), W(\mathbf{u}_j)] = 0$ for all $i, j \in [m]$,
we define the symplectic code $\Symp(H)$ of $H$ as by the parity check matrix
\begin{equation}
B^\intercal := 
\begin{bmatrix} 
| & | & & | \\ 
\mathbf{u}_1 & \mathbf{u}_2 & \cdots & \mathbf{u}_m \\ 
| & | & & | 
\end{bmatrix} \in \mathbb F_p^{2n \times m}.
\end{equation}
That is, the code $\Symp(H)$ is given by
\begin{equation}
\mathrm{Symp}_p(H) := \{ \mathbf y \in \F_p^m : B^\intercal \mathbf y = 0 \}.
\end{equation}
We write $\mathrm{wt}(y)$ for the Hamming weight (number of nonzero components) of $y$.
Just as in the binary setting, for $\ell\ge 1$, we define a weight-$\ell$ decoding oracle for $H$ as a unitary that implements
\begin{equation}
\mathcal D_H^{(\ell)}: \ket{\mathbf y}\ket{B^\intercal \mathbf y} \mapsto \ket{0} \ket{B^\intercal \mathbf y}
\end{equation}
for all $\mathbf y\in\mathbb F_p^m$ with $\mathrm{wt}(y) \in [1, \ell]$.
\begin{remark}
  For $p>2, W(\mathbf u)$ is unitary but not Hermitian, and moreover $H$ need not be Hermitian since the phases $v_i$ are not necessarily real.
  Thus, to call $H$ a Hamiltonian is a slight abuse of notation.
  Nevertheless, the output state $\rho_{\mathcal{P}}^{(p)}(H) \propto \mathcal{P}(H) \mathcal{P}(H)^{\dagger}$ of HDQI is positive for any polynomial $\mathcal{P}$, so the output state is well-defined.
\end{remark}

\subsection{\texorpdfstring{$p$}-ary symmetric expansion and multinomial Dicke states}
With the $p$-ary symplectic codes defined, we next formulate a $p$-ary symmetric polynomial expansion.
For any $\mathbf{y} \in \F_p^m$, let $\mathbf{a} = (a_0, \dots, a_{p-1}) \in \Z_{\geq 0}^{p}$ be the \emph{type} of $\mathbf{y}$.
That is, $a_t = |\set{i \,:\, y_i = t}|$, and we denote $\mathsf{Type}(\mathbf{y}) = \mathbf{a}$.
We denote the set of types for all $\mathbf{y} \in \F_p^m$ as $\calT_p(m)$.

\begin{theorem}[$p$-ary symmetric polynomial expansion] \label{thm:p-ary_symmetric_expansion}
    Fix a prime $p$.
    Let $z_1, \dots, z_m$ be formal variables such that $z_i^p = 1$ for all $i \in [m]$.
    Let $\calP(x)$ be a univariate polynomial of degree $\ell$.
    Then there exists coefficients $w(\mathbf{a})$, collectively computable in time $\poly(m)$, for each $\mathbf{a} \in \calT_p(m)$, which do not depend on $\set{z_i}$, such that \begin{align}
    \mathcal P\left(\sum_{i=1}^m z_i\right) 
    & = \sum_{\mathbf{a} \in \calT_p(m)} w(\mathbf{a}) \sum_{\substack{\mathbf{y} \in \F_p^m \\ \mathsf{Type}(\mathbf{y}) = \mathbf{a}}} z_1^{y_1} \cdots z_m^{y_m} .
    \end{align}
    $w(\mathbf{a}) = 0$ for all $\mathbf{a}$ such that $a_0 < m - \ell$.
\end{theorem}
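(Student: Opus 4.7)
The plan is to generalize the proof of Theorem~\ref{thm:commuting_symmetric_polynomial_expansion} step by step, replacing the $\F_2$ parity class $\mathbf{y}\in\F_2^m$ with an $\F_p$ residue class $\mathbf{y}\in\F_p^m$, and then using symmetry of the multinomial coefficient to further aggregate by type $\mathbf{a}$. To begin, I would write $\calP(x)=\sum_{k=0}^\ell c_k x^k$ and apply the multinomial theorem:
\begin{equation*}
\calP\left(\sum_{i=1}^m z_i\right)=\sum_{k=0}^\ell c_k\sum_{\boldsymbol\mu\,\vDash_m k}\binom{k}{\boldsymbol\mu}\,z_1^{\mu_1}\cdots z_m^{\mu_m},
\end{equation*}
which is valid since all $z_i$ commute in the commuting case (or as formal variables, which is all we need). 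The only structural fact used beyond commutativity is the relation $z_i^p=1$, which implies that each monomial $z_1^{\mu_1}\cdots z_m^{\mu_m}$ depends on $\boldsymbol\mu$ only through $\mathbf{y}:=\boldsymbol\mu \pmod p\in\F_p^m$. Grouping the inner sum by residue class gives
\begin{equation*}
\calP\left(\sum_{i=1}^m z_i\right)=\sum_{k=0}^\ell c_k\sum_{\mathbf{y}\in\F_p^m}\Bigg(\sum_{\substack{\boldsymbol\mu\,\vDash_m k\\\boldsymbol\mu\equiv\mathbf{y}\!\!\pmod p}}\binom{k}{\boldsymbol\mu}\Bigg)z_1^{y_1}\cdots z_m^{y_m}.
\end{equation*}

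The next step, and the real content of the generalization, is the symmetry argument. I want to show that the inner parenthesized coefficient depends on $\mathbf{y}$ only through its type $\mathbf{a}=\mathsf{Type}(\mathbf{y})$. This follows because the map $\boldsymbol\mu\mapsto\binom{k}{\boldsymbol\mu}$ is invariant under permutations of indices, and the constraint ``$\boldsymbol\mu\equiv\mathbf{y}\pmod p$'' is equivariant under permutations that take $\mathbf{y}$ to another $\mathbf{y}'$ of the same type: any permutation $\sigma\in S_m$ sends the set of $\boldsymbol\mu$'s reducing to $\mathbf{y}$ bijectively to the set reducing to $\sigma(\mathbf{y})$, preserving the multinomial weight. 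Writing $b_k^{(\mathbf{a})}$ for the common value of this inner sum over any $\mathbf{y}$ of type $\mathbf{a}$, and setting $w(\mathbf{a}):=\sum_{k=0}^\ell c_k\, b_k^{(\mathbf{a})}$, I obtain the claimed expansion by interchanging the sums over $k$ and $\mathbf{a}$.

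It remains to verify the vanishing condition $w(\mathbf{a})=0$ whenever $a_0<m-\ell$. This will follow from a simple weight bound: for any $\mathbf{y}$ of type $\mathbf{a}$ and any $\boldsymbol\mu\vDash_m k$ with $\boldsymbol\mu\equiv\mathbf{y}\pmod p$, every nonzero coordinate $y_i\in\{1,\dots,p-1\}$ forces $\mu_i\ge 1$, so $k=|\boldsymbol\mu|\ge m-a_0$. Hence when $m-a_0>\ell$ the inner sum is empty for all $k\in[0,\ell]$ and every contributing $b_k^{(\mathbf{a})}$ vanishes. Finally, for the efficient computability of $\{w(\mathbf{a})\}$, I note that for fixed prime $p$ the number of types is $\binom{m+p-1}{p-1}=\poly(m)$, and each $b_k^{(\mathbf{a})}$ can be expressed as a coefficient of a suitable generating function (one factor $\sum_{j\ge 0}\frac{x^{y_i+pj}}{(y_i+pj)!}$ per coordinate, times $k!$), which can be evaluated by the same dynamic-programming tricks used in Appendix~\ref{app:combinatorics} for the $p=2$ case. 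The main place to be careful, rather than truly difficult, is the bookkeeping in the symmetry argument: one must check that permuting a representative $\mathbf{y}$ only permutes the summation index $\boldsymbol\mu$ without changing the multiset of multinomial coefficients encountered, which is immediate from invariance of $\binom{k}{\boldsymbol\mu}$ under coordinate permutations.
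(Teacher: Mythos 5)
Your proposal is correct and follows essentially the same route as the paper's proof: multinomial expansion, grouping by residue class mod $p$, using permutation-invariance of $\binom{k}{\boldsymbol\mu}$ to reduce dependence to the type $\mathbf{a}$, the empty-sum argument for the vanishing condition, and the $\binom{m+p-1}{p-1}=\poly(m)$ count of types for efficiency. The only cosmetic difference is that you make the symmetry/equivariance step and the generating-function computation of the coefficients more explicit, where the paper simply asserts them and defers to Appendix~\ref{app:combinatorics}.
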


\begin{proof}
Let $\mathcal P(x) = \sum_{k=0}^{\ell} c_k x^k$.
Since the $z_i$ commute and $z_i^p = \mathbb{1}$, we can expand in an analogous manner to Theorem~\ref{thm:commuting_symmetric_polynomial_expansion}:
\begin{equation}
\mathcal P\left(\sum_{i=1}^m z_i\right)
= \sum_{k=0}^{\ell} c_k \sum_{\mathbf y\in\mathbb F_p^m} M_k(\mathbf y) z_1^{y_1} \cdots z_m^{y_m} ,
\end{equation}
where $M_k(\mathbf y) = \sum_{\bmu \vDash_m k : \bmu \text{ (mod $p$)} = \mathbf{y}} \binom{k}{\bmu}$.
Since $\binom{k}{\bmu}$ is symmetric under permutations of indices of $\bmu$, $M_k(\mathbf{y})$ depends only on the type of $\mathbf{y}$, so we instead write it as $M_k(\mathbf{a})$.
Hence, the generalized expansion is given by \begin{align}
    \mathcal P\left(\sum_{i=1}^m z_i\right) & = \sum_{k=0}^{\ell} c_k \sum_{\mathbf{a} \in \calT_p(m)} M_k(\mathbf{a}) \sum_{\substack{\mathbf{y} \in \F_p^m \\ \mathsf{Type}(\mathbf{y}) = \mathbf{a}}} z_1^{y_1} \cdots z_m^{y_m} \\
    & = \sum_{\mathbf{a} \in \calT_p(m)} \left(\sum_{k=0}^{\ell} c_k M_k(\mathbf{a}) \right) \sum_{\substack{\mathbf{y} \in \F_p^m \\ \mathsf{Type}(\mathbf{y}) = \mathbf{a}}} z_1^{y_1} \cdots z_m^{y_m} .
\end{align}
We define $w(\mathbf{a})$ to be the parenthesized expression.
Using analogous techniques as those in Appendix~\ref{app:combinatorics}---replacing binomials with multinomials---we may compute each $w(\mathbf{a})$ in time $\poly(m)$.
The number of types $|\calT_p(m)|$ is the number non-negative integer vectors in $p$ dimensions whose $L^1$ norm is $m$.
This is precisely the number of $p$-weak compositions of $m$, so there are \begin{align}
    \binom{m+p-1}{p-1} = O(m^{p-1}) = \poly(m)
\end{align}
terms total.
Note that $M_k(\mathbf{y}) = 0$ if $|\mathbf{y}| > k$, so $w(\mathbf{a}) = 0$ when the number of zero elements $a_0$ is less than $m-\ell$, since for such $\mathbf{a}$, $M_k(\mathbf{a}) = 0$ for all $k \leq \ell$.
\end{proof}

As in the qubit case, the $p$-ary symmetric expansion is quantumly implemented in part by a qudit generalization of Dicke states.

\begin{definition}[Multinomial Dicke states]
For $\mathbf{a} = (a_0, \ldots, a_{p-1}) \in \calT_p(m)$ a type, the qudit Dicke state with type $\mathbf{a}$ is defined as
\begin{equation}
\ket{D_\mathbf{a}^m} := \frac{1}{\sqrt{\binom{m}{\mathbf{a}}}} \sum_{\substack{\mathbf y \in \F_p^m \\ \mathsf{Type}(\mathbf y) = \mathbf a}}\ket{\mathbf y}.
\end{equation}
\end{definition}

\subsection{Commuting HDQI algorithm for the Weyl group}
We now derive qudit Hamiltonian Decoded Quantum Interferometry on Weyl Hamiltonians of the form
\begin{equation}
    H = \sum_{i=1}^{m} v_i W(\mathbf u_i) ,
\end{equation}
where the Weyl operators satisfy $[W(\mathbf{u}_i), W(\mathbf{u}_j)] = 0 \, \forall i, j$ and $v_i \in \set{e^{2 \pi j / p} \,:\, 0 \leq j \leq p-1}$.

\begin{lemma}[Efficient pilot state preparation for commuting qudit Hamiltonians]
\label{lemma:p-ary_commuting_efficient_resource_state}
    Fix a prime $p$ and for each $\mathbf{a} \in \calT_p(m)$ let $w(\mathbf{a}) \in \R$ be given coefficient.
    There exists a quantum algorithm running in time $\poly(m)$ which prepares the $m$-qudit state \begin{align}
        \ket{\calR^\ell_{p, \text{comm}}(H)} \propto \sum_{\mathbf{a} \in \calT_p(m)} w(\mathbf{a}) \sum_{\substack{\mathbf{y} \in \F_p^m \\ \mathsf{Type}(\mathbf{y}) = \mathbf{a}}} \ket{\mathbf{y}} .
    \end{align}
\end{lemma}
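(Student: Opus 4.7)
The proof proceeds in direct analogy to Lemma~\ref{lemma:commuting_efficient_resource_state}, with qubit Dicke states replaced by qudit (multinomial) Dicke states. First, I would rewrite the target pilot state in the manifestly factored form
\begin{equation}
    \ket{\calR^\ell_{p, \text{comm}}(H)} = \frac{1}{\calN}\sum_{\mathbf{a} \in \calT_p(m)} w(\mathbf{a}) \sqrt{\binom{m}{\mathbf{a}}} \ket{D_\mathbf{a}^m},
\end{equation}
where $\binom{m}{\mathbf{a}} = m!/(a_0! \cdots a_{p-1}!)$ counts the strings of type $\mathbf{a}$, and the normalization is $\calN^2 = \sum_{\mathbf{a}} w(\mathbf{a})^2 \binom{m}{\mathbf{a}}$. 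This factoring reduces the preparation to two independent subroutines, exactly as in the binary setting.

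The plan is then: (i) prepare in an auxiliary register a compressed superposition $\ket{\psi_1} = \frac{1}{\calN}\sum_{\mathbf{a} \in \calT_p(m)} w(\mathbf{a}) \sqrt{\binom{m}{\mathbf{a}}} \ket{\mathbf{a}}$ over the set of types, and (ii) apply a coherent type-to-Dicke map $\ket{\mathbf{a}} \mapsto \ket{D_\mathbf{a}^m}$ on the $m$-qudit pilot register. For (i), since $|\calT_p(m)| = \binom{m+p-1}{p-1} = O(m^{p-1}) = \poly(m)$, the superposition lives on $O(\log m)$ qubits, its amplitudes are efficiently computable by the $\poly(m)$-time procedure for the $w(\mathbf{a})$ guaranteed in Theorem~\ref{thm:p-ary_symmetric_expansion} (together with easily evaluated multinomial factors), and thus it can be prepared in $\poly(m)$ time using standard state-preparation routines such as~\cite{low2024trading}.

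The nontrivial step is (ii): a qudit generalization of the Bartschi--Eidenbenz construction~\cite{bartschi2022short}, which coherently implements $\ket{\mathbf{a}} \mapsto \ket{D_\mathbf{a}^m}$ on the $m$-qudit register in $\poly(m,p)$ time. Explicitly, this can be obtained by iterating a symmetric ``splitting'' unitary $U_{\mathrm{split}}$ that, given counts $\mathbf{a}$ on a block of qudits, distributes one qudit at a time according to the hypergeometric conditional probabilities $\Pr[\text{symbol } t] = a_t / (\sum_s a_s)$, analogously to the binary Bartschi--Eidenbenz gadget; each application is a small, efficient $p$-ary controlled rotation, and $O(m)$ of them suffice. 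Composing the two subroutines and discarding the now-disentangled type register yields exactly $\ket{\calR^\ell_{p, \text{comm}}(H)}$ in overall time $\poly(m)$, completing the proof. The principal subtlety to verify carefully is the correctness of the qudit splitting gadget, in particular that the conditional probabilities it induces reproduce the uniform distribution over all strings of type $\mathbf{a}$ so that the resulting state is indeed the symmetric multinomial Dicke state.
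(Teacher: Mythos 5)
Your proposal is correct and follows essentially the same route as the paper: prepare the $\poly(m)$-sized superposition over types in a compressed register, then coherently map $\ket{\mathbf{a}} \mapsto \ket{D_{\mathbf{a}}^m}$; the paper simply cites an existing qudit Dicke-state preparation result for the second step rather than sketching a $p$-ary Bartschi--Eidenbenz splitting gadget. Your inclusion of the explicit $\sqrt{\binom{m}{\mathbf{a}}}$ amplitudes in the type register is the right bookkeeping and is, if anything, slightly more careful than the paper's own write-up.
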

\begin{proof}
First, prepare the state $\propto \sum_{\mathbf{a} \in \calT_p(m)} w(\mathbf{a}) \ket{\mathbf{a}} \otimes \ket{0^m}$, using the same manner as in that in the proof of Lemma~\ref{lemma:commuting_efficient_resource_state} but for qudits.
In essence, since there are only $\poly(m)$ nonzero amplitudes, this step can be done efficiently.
Next, apply a unitary transformation $U \,:\, \ket{\mathbf{a}} \otimes \ket{0^m} \mapsto \ket{\mathbf{a}} \otimes \ket{D^m_{\mathbf{a}}}$.
This is the qudit generalization of the Dicke state preparation step which we used in Lemma~\ref{lemma:commuting_efficient_resource_state}.
For a general $p$, it is known how to efficiently implement $U$ in polynomial depth~\cite{nepomechie2023qudit}.
\end{proof}

We now construct HDQI over $\F_p$ to prove Theorem~\ref{thm:commuting-p}.

\begin{proof}[Proof of Theorem~\ref{thm:commuting-p}]
We begin with the tripartite state
\begin{align}
    \ket{\psi_1}_{ABC} = \ket{\calR^{\ell}_{p, \text{comm}}(H)}_{A} \otimes \ket{\Phi_p^n}_{BC} ,
\end{align}
where the coefficients used to prepare the pilot state are precisely those in the symmetric expansion of $\calP$, as given in Theorem~\ref{thm:p-ary_symmetric_expansion}.
Next, let $\widetilde{v}_i := \frac{p}{2 \pi} \operatorname{arg}(v_i)$, i.e. $v_i = e^{2 \pi i \widetilde{v}_i / p}$.
We then apply $\bigotimes_{i=1}^m Z^{\widetilde{v}_i}$ to get the state \begin{align}
    \ket{\psi_2}_{ABC} = \frac{1}{\calN}  \sum_{\mathbf{a} \in \calT_p(m)} w(\mathbf{a}) \sum_{\substack{\mathbf{y} \in \F_p^m \\ \mathsf{Type}(\mathbf{y}) = a}} \left( \prod_{i=1}^m v_i^{y_i} \right) \ket{\mathbf{y}}_A \otimes \ket{\Phi_p^n}_{BC} .
\end{align}
Here $\calN$ is an appropriate normalization.
Next, let $(C^{(i)} W(\mathbf{u}_i))_{A \to B}$ be the operation of applying the Weyl operator $W(\mathbf{u}_i)$ to register $B$, controlled on the $i$th qubit of register $A$.
Apply $\prod_{i=1}^m (C^{(i)} W(\mathbf{u}_i))_{A \to B}$, yielding 
\begin{align}
    \ket{\psi_3}_{ABC} & = \frac{1}{\calN} \sum_{\mathbf{a} \in \calT_p(m)} w(\mathbf{a}) \sum_{\substack{\mathbf{y} \in \F_p^m \\ \mathsf{Type}(\mathbf{y}) = a}}  \ket{\mathbf{y}}_A \otimes \left( \prod_{i=1}^m [v_i W(\mathbf{u}_i)]^{y_i} \otimes I\right) \ket{\Phi_p^n}_{BC} \\
    & = \frac{1}{\calN} \sum_{\mathbf{a} \in \calT_p(m)} w(\mathbf{a}) \sum_{\substack{\mathbf{y} \in \F_p^m \\ \mathsf{Type}(\mathbf{y}) = a}}  \ket{\mathbf{y}}_A \otimes \left( \prod_{i=1}^m v_i^{y_i} \right) (W(\mathbf{u}(\mathbf y)) \otimes I)\ket{\Phi_p^n}_{BC} 
\end{align}
where for $\mathbf y \in \F_p^m$, we define $\mathbf{u}(\mathbf{y}) = \sum_{i=1}^m y_i \mathbf{u}_i$ (and all arithmetic done mod $p$ as usual).
We next utilize the coherent Weyl-Bell basis measurement, which unitarily maps $(W(\mathbf{u}) \otimes I) \ket{\Phi_p^n} \longrightarrow \ket{\mathbf{u}}$ as described in Theorem~\ref{thm:coherent_weyl_bell_measurement} in Appendix~\ref{app:bell_basis}.
This gives the state
\begin{align}
    \ket{\psi_4}_{ABC} = \frac{1}{\calN} \sum_{\mathbf{a} \in \calT_p(m)} w(\mathbf{a}) \sum_{\substack{\mathbf{y} \in \F_p^m \\ \mathsf{Type}(\mathbf{y}) = a}}  \ket{\mathbf{y}}_A \otimes \left( \prod_{i=1}^m v_i^{y_i} \right) \ket{\mathbf{u}(\mathbf{y})}_{BC} .
\end{align}
By Theorem~\ref{thm:p-ary_symmetric_expansion}, the only elements of the superposition which have nonzero amplitude are those for which $|\mathbf{y}| \leq \ell$. 
Hence, we may apply our decoding oracle $\calD_H^{(\ell)}$ to registers $A$ and $BC$ to uncompute register $A$ and discard it.
We are left with the state
\begin{align}
    \ket{\psi_5}_{BC} = \frac{1}{\calN} \sum_{\mathbf{a} \in \calT_p(m)} w(\mathbf{a}) \sum_{\substack{\mathbf{y} \in \F_p^m \\ \mathsf{Type}(\mathbf{y}) = a}} \left( \prod_{i=1}^m v_i^{y_i} \right) \ket{\mathbf{u}(\mathbf{y})}_{BC} .
\end{align}
Let $z_i = v_i W(\mathbf{u}_i)$.
We uncompute the coherent Weyl-Bell measurement, yielding
\begin{align}
    \ket{\psi_6}_{BC} & = \frac{1}{\calN} \sum_{\mathbf{a} \in \calT_p(m)} w(\mathbf{a}) \sum_{\substack{\mathbf{y} \in \F_p^m \\ \mathsf{Type}(\mathbf{y}) = a}} \left( \prod_{i=1}^m v_i^{y_i} \right) (W(\mathbf{u}(\mathbf y)) \otimes I)\ket{\Phi_p^n}_{BC} \\
    & = \frac{1}{\calN} \sum_{\mathbf{a} \in \calT_p(m)} w(\mathbf{a}) \sum_{\substack{\mathbf{y} \in \F_p^m \\ \mathsf{Type}(\mathbf{y}) = a}} z_1^{y_1} \cdots z_m^{y_m} \ket{\Phi_p^n}_{BC} \\
    & = \frac{1}{\calN} \calP(H) \ket{\Phi_p^n}_{BC} .
\end{align}
By discarding register $C$, we produce precisely $\rho_{\calP}^{(p)}(H)$.
\end{proof}

\begin{algorithm}      
  \caption{Hamiltonian DQI over $\F_p$ for Commuting Hamiltonians}
  \label{alg:HDQ)_qudit}
  \begin{algorithmic}[1]    
  \Require{$H = \sum_{i=1}^m v_i W(\mathbf{u}_i)$ (commuting Weyl Hamiltonian on $n$ qudits with $d = p$ for prime $p$), $|\Phi_p^n\rangle$ (maximally entangled state of $2n$ qudits), $\calD^{(\ell)}_H$ (decoding oracle), $\calP$ (polynomial of degree $\ell$).}
    \Ensure{$\rho_{\calP}^{(p)}(H) \propto \calP(H) \calP(H)^\dag$ ($n$-qudit mixed state).}
      \State Decompose $\calP(\sum_{i=1}^m z_i)$ as a sum of reduced elementary symmetric polynomials in the variables $z_i = v_i W(\mathbf{u}_i)$ by applying Theorem~\ref{thm:p-ary_symmetric_expansion}, obtaining (for $\calT_p(m)$ the set of types over $\F_p^m$)
      \begin{equation}
          \calP\left( \sum_{i=1}^m z_i \right) = \sum_{\mathbf{a} \in \calT_p(m)} w(\mathbf{a}) \sum_{\substack{\mathbf{y} \in \F_p^m \\ \mathsf{Type}(\mathbf{y}) = \mathbf{a}}} z_1^{y_1} \cdots z_m^{y_m} .
      \end{equation}
      \State Prepare the $m$-qubit pilot state in register $A$, using Lemma~\ref{lemma:p-ary_commuting_efficient_resource_state}.
      \begin{equation}
         \ket{\calR^\ell_{p, \text{comm}}(H)} \propto \sum_{\mathbf{a} \in \calT_p(m)} w(\mathbf{a}) \sum_{\substack{\mathbf{y} \in \F_p^m \\ \mathsf{Type}(\mathbf{y}) = \mathbf{a}}} \ket{\mathbf{y}} .
      \end{equation}
      Append this state to $\ket{\Phi_p^n}_{BC}$.
      \State Apply $\bigotimes_{i=1}^m Z^{\widetilde{v}_i}$ on the $A$ register, where $\widetilde{v}_i = \frac{2 \pi}{p} \operatorname{arg}(v_i)$. 
      Apply $\prod_{i=1}^m (C^{(i)} W(\mathbf{u}_i))_{A\rightarrow B}$, where each operation is a Weyl operator $W(\mathbf{u}_i)$ on register $B$, controlled by the $i$th qubit of register $A$.
      \State Apply the coherent Weyl-Bell measurement to registers $BC$, mapping the Weyl-Bell basis to its symplectic representation as specified in Theorem~\ref{thm:coherent_weyl_bell_measurement}.
      \State Uncompute register $A$ by applying $\calD_{H}^{(\ell)}$ to registers $A$ and $BC$. 
      Undo Step 4 by applying the inverse coherent Bell measurement to registers $BC$.
      \State The reduced density matrix on $B$ now equals $\rho_{\calP}^{(p)}(H)$. Output register $B$.
  \end{algorithmic}
\end{algorithm}

\begin{remark}
When $p=2$ the expansion collapses to the elementary-symmetric basis, the pilot reduces to a weighted sum of binary Dicke states, and the target becomes $\rho_{\mathcal P}(H)=\mathcal P(H)^2/\mathrm{Tr}[\mathcal P(H)^2]$, reproducing the previous description of HDQI.
\end{remark}

\begin{remark}
    The analogous robustness guarantee from Theorem~\ref{thm:robustness_commuting} holds in this more general $\F_p$ case, by a similar proof.
\end{remark}

\subsection{Generalization to non-commuting Hamiltonians}
We conclude with remarks on the non-commuting case over $\F_p$.
Much of the algorithm and proof in this case remain similar to that of Section~\ref{sec:general}.
The main issue is that the Weyl commutation relations are more general than either commuting or anticommuting, as swapping the order of multiplication can incur a phase of the form $e^{2 \pi i j / p}$.
Therefore, the sign function must be modified to account for this increased generality.
That is, a permutation $\pi$ no longer can be mapped only to a sign, but instead to a general phase.
Nonetheless, the antisymmetry character $\alpha_G$ still maintains a factorization structure over connected components of the anticommutation graph (where now we connect nodes if their corresponding Weyl operators do not commute), and a recursive computation structure.
For this reason, the computations we give in the qubit case are still generalizable to the qudit case (with mod 2 computations, e.g. in the $\beta$-function definition, lifting to mod $p$), albeit with more cumbersome notation by way of a phase function in place of a sign function.

\section*{Acknowledgments}
\addcontentsline{toc}{section}{Acknowledgments}  
We are grateful to Tommy Schuster for suggesting the use of the maximally entangled state in the HDQI reduction. 
We also thank Anurag Anshu, Ryan Babbush, Dave Bacon, Anthony Chen, Aram Harrow, Seth Lloyd, Mary Wootters, and Henry Yuen for helpful conversations and feedback.

A.S. is supported by a Google PhD Fellowship, the Simons Foundation (MP-SIP-00001553, AWH), and NSF grant PHY-2325080. 
J.Z.L. is funded in part by a National Defense Science and Engineering Graduate (NDSEG) fellowship. J.Z.L. and A.P. are supported in part by the U.S. Department of Energy,
Office of Science, National Quantum Information Science Research Centers, Co-design
Center for Quantum Advantage (C2QA) under contract number DE-SC0012704.
Y.Q. is supported by a collaboration between the US DOE and other Agencies. 
This material is based upon work supported by the U.S. Department of Energy, Office of Science, National Quantum Information Science Research Centers, Quantum Systems Accelerator. 
This work was done in part while some of the authors were visiting the Simons Institute for the Theory of Computing and the Challenge Institute for Quantum Computation at UC Berkeley.

\bibliographystyle{alpha}
\bibliography{references}


\appendix      
\addcontentsline{toc}{section}{Appendix}  

\section{Preliminaries}
\label{sec:app_prelim}
We here review some facts from quantum information theory and classical coding theory which are useful in the main text.

\subsection{Bell basis} \label{app:bell_basis}
Let $\ket{\Phi^n} = \frac{1}{2^{n/2}} \sum_{\mathbf{x} \in \F_2^n} \ket{\mathbf{x}} \otimes \ket{\mathbf{x}}$ be the maximally entangled state on $n$ pairs of qubits. Equivalently, $\ket{\Phi^n}$ can be interpreted as $n$ copies of a EPR pair $\frac{1}{\sqrt{2}} (\ket{00} + \ket{11})$. A useful fact about a Hilbert space over $2n$ qubits is that there exists an orthonormal basis of states which are all of the form $(P \otimes I) \ket{\Phi^n}$, where $P$ is a $n$-qubit Pauli operator.
\begin{lemma}[Bell basis is orthonormal] \label{lemma:bell_basis}
    Let $n \geq 1$ be an integer. The set of states given by \begin{align} \label{eq:app-Bell_basis}
        \ket{P} := (P \otimes I) \ket{\Phi^n} ,
    \end{align}
    for every (sign-free) $n$-qubit Pauli operator $P \in \set{I, X, Y, Z}^{\otimes n}$,
    forms an orthonormal basis of $(\C^2)^{\otimes 2n}$.
\end{lemma}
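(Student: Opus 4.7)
My plan is to verify orthonormality directly by computing the pairwise inner products, and then to invoke a dimension-counting argument to conclude that we have a basis.

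First, I would compute $\langle P \mid Q \rangle$ for two sign-free $n$-qubit Paulis $P, Q$. Using the definition in Eqn.~(\ref{eq:app-Bell_basis}),
\begin{equation}
    \langle P \mid Q \rangle = \langle \Phi^n | (P^\dagger Q \otimes I) | \Phi^n \rangle.
\end{equation}
The next step is to apply the well-known ``ricochet'' identity for the maximally entangled state, namely that for any operator $M$ on $n$ qubits,
\begin{equation}
    \langle \Phi^n | (M \otimes I) | \Phi^n \rangle \;=\; \frac{1}{2^n}\,\mathrm{Tr}(M),
\end{equation}
which follows from expanding $\ket{\Phi^n} = 2^{-n/2}\sum_{\mathbf{x}} \ket{\mathbf{x}}\ket{\mathbf{x}}$ and contracting one copy of the computational basis sum with $M$.

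Applying this identity with $M = P^\dagger Q$ reduces the problem to a statement about Pauli traces. Since all $P, Q \in \{I, X, Y, Z\}^{\otimes n}$ are self-adjoint up to a sign, the product $P^\dagger Q$ is (up to a sign) another sign-free Pauli. The key fact I would then invoke is the Hilbert--Schmidt orthogonality of Paulis,
\begin{equation}
    \mathrm{Tr}(P^\dagger Q) = 2^n\, \delta_{P, Q},
\end{equation}
which holds because $P^\dagger Q = \pm I^{\otimes n}$ if and only if $P = Q$ (for sign-free Paulis), and any nontrivial tensor product of Paulis is traceless since at least one of the local factors is a traceless single-qubit Pauli. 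Combining the two identities gives $\langle P \mid Q \rangle = \delta_{P,Q}$, establishing orthonormality.

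Finally, I would count: there are exactly $4^n$ sign-free $n$-qubit Paulis, and the ambient Hilbert space $(\mathbb{C}^2)^{\otimes 2n}$ has dimension $2^{2n} = 4^n$. An orthonormal set of size equal to the dimension is automatically a basis, which completes the proof. I do not anticipate a substantial obstacle here; the only subtlety is being careful that the identity $(P \otimes I)\ket{\Phi^n} \leftrightarrow P$ is well-defined precisely because we restrict to sign-free Paulis, so that the map $P \mapsto \ket{P}$ is injective rather than merely injective up to a global phase.
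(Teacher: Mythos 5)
Your proposal is correct and follows essentially the same route as the paper's proof: reduce the inner product $\langle P \mid Q\rangle$ to $\frac{1}{2^n}\Tr(P^\dagger Q)$ via the maximally entangled state, invoke tracelessness of non-identity Paulis, and finish by counting $4^n$ states against $\dim(\C^2)^{\otimes 2n} = 4^n$. No gaps.
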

\begin{proof}
    We show that the set of states in Eqn.~(\ref{eq:app-Bell_basis}) are normalized, orthogonal, and span the space.
    Normalization is visible from the definition. We first show that if $P \neq I$, $\braket{I}{P} = 0$. 
    We call this basis the \emph{Bell basis}. \begin{align}
        \braket{I}{P} & = \bra{\Phi^n} (P \otimes I) \ket{\Phi^n} = \frac{1}{2^n} \left( \sum_{\mathbf{y} \in \F_2^n} \bra{\mathbf{y}} \otimes \bra{\mathbf{y}} \right) \left( \sum_{\mathbf{x} \in \F_2^n} P \ket{\mathbf{x}} \otimes \ket{\mathbf{x}} \right) \\
        & = \frac{1}{2^n} \sum_{\mathbf{x}, \mathbf{y} \in \F_2^n} (\bra{\mathbf{y}} P \ket{\mathbf{x}}) (\braket{\mathbf{y}}{\mathbf{x}}) = \frac{1}{2^n} \sum_{\mathbf{x} \in \F_2^n} \bra{\mathbf{x}} P \ket{\mathbf{x}} = \frac{1}{2^n} \Tr[P] = 0 
    \end{align}
    because any non-identity Pauli is traceless. 
    Finally, for any $P \neq P'$, \begin{align}
        \braket{P}{P'} & = \bra{\Phi^n} (P^\dagger \otimes I) (P' \otimes I) \ket{\Phi^n} \\
        & = \bra{\Phi^n}(P P' \otimes I) \ket{\Phi^n} = \braket{I}{PP'} = 0
    \end{align}
    since $PP' \neq I$. 
    Therefore, the set $\set{\ket{P}}$ is orthonormal. 
    Finally, there are $4^n = 2^{2n}$ distinct such states, which matches $\dim (\C^2)^{\otimes 2n}$, so the states span $(\C^2)^{\otimes 2n}$. 
\end{proof}
As a consequence, we observe that a measurement in the Bell basis distinguishes with probability $1$ all Bell basis states.
\begin{theorem}[Coherent Bell measurement] \label{thm:app-Bell_measurement}
    Let $\ket{\psi} = \sum_{P} c_P (P \otimes I) \ket{\Phi^n}$, where $P \in \set{I, X, Y, Z}^{\otimes n}$ is a sign-free $n$-qubit Pauli, $c_P \in \C$, and $\ket{\Phi^n}$ is the maximally entangled state on $n$ pairs of qubits. There exists a Clifford operation $V$ implementable with $\poly(n)$ gates in $\set{H, CX_{i \to j}}$ which maps \begin{align}
    \label{eq:app-Bell_measurement_result}
        \ket{\psi} \longmapsto \sum_{P} c_P \ket{\sym(P)} ,
    \end{align}
    where $\sym(P)$ is the symplectic representation of $P$ from Definition~\ref{def:symplectic_vectors}.
\end{theorem}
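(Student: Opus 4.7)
The plan is to reduce the problem to building a coherent Bell decoder on a single EPR pair, and then tensoring $n$ copies of it together. By linearity, it suffices to construct a Clifford unitary $V$ that realizes $\ket{P} := (P \otimes I)\ket{\Phi^n} \mapsto \ket{\sym(P)}$ on each basis element of Lemma~\ref{lemma:bell_basis}. Since $\ket{\Phi^n} = \bigotimes_{i=1}^n \ket{\Phi^+}_i$ and any $n$-qubit sign-free Pauli factorizes as $P = \bigotimes_i P_i$, the per-pair maps $(P_i \otimes I)\ket{\Phi^+}_i \mapsto \ket{\sym(P_i)}$ combine to give the desired unitary after a classical rewiring that gathers all $Z$-bits into the first $n$ output wires and all $X$-bits into the last $n$, matching the convention of Definition~\ref{def:symplectic_vectors}.

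For the single-pair construction, I would invoke the standard Bell decoder: apply $\mathrm{CNOT}_{1\to 2}$ followed by $H$ on qubit $1$. A direct computation on the four Bell states verifies
\begin{align*}
\ket{\Phi^+} & \mapsto \ket{00} = \ket{\sym(I)}, \\
(X\otimes I)\ket{\Phi^+} & \mapsto \ket{01} = \ket{\sym(X)}, \\
(Z\otimes I)\ket{\Phi^+} & \mapsto \ket{10} = \ket{\sym(Z)}, \\
(Y\otimes I)\ket{\Phi^+} & \mapsto i\ket{11} = i\ket{\sym(Y)},
\end{align*}
so the map is correct on three of four basis elements and carries an extraneous factor of $i$ on the $Y$ branch. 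To remove this phase I would append a two-qubit Clifford gate that applies $(-i)$ precisely when both output wires are $1$, for instance a controlled-$S^\dagger$ between the two outputs; this is a constant-depth Clifford. Tensoring the corrected per-pair circuit over all $n$ pairs, followed by the classical wire permutation described above, yields a Clifford unitary of size $O(n)$ that implements Eqn.~(\ref{eq:app-Bell_measurement_result}) on the Bell basis, and then on all of $(\mathbb{C}^2)^{\otimes 2n}$ by linearity.

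The only subtlety, and what I would flag as the main technical obstacle, is exactly this phase on the $Y$-branch. The state $(Y\otimes I)\ket{\Phi^+}$ is purely imaginary while $\ket{11}$ is real, so no operator built solely from the real gate set $\{H, \mathrm{CX}_{i\to j}\}$ literally stated in the theorem can map one to the other; a phase-introducing Clifford such as $S$ (or an equivalent controlled-$S^\dagger$ as above) is needed. I would resolve this either by reading the gate set in the theorem as the full Clifford generating set $\{H,\mathrm{CX},S\}$, or by absorbing the $i^{\#Y(P)}$ phase into the coefficients $c_P$ in the definition of $\ket{\sym(P)}$; either convention preserves the downstream uses of the coherent Bell measurement in Theorems~\ref{thm:commuting_reduction_to_decoding} and~\ref{thm:main_reduction_general_case}, since those reductions invert $V$ immediately after the decoding oracle and any such basis-dependent phase cancels.
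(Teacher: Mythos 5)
Your construction is essentially the paper's: invert the Bell-pair preparation circuit pairwise (CNOT then $H$ on each pair) and follow with a wire permutation to match the symplectic bit-ordering convention. So at the level of approach there is nothing new to compare.

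What you add, and what is genuinely correct, is the phase on the $Y$ branch. A direct computation confirms that the inverse Bell circuit sends $(Y\otimes I)\ket{\Phi^+}$ to $\pm i\,\ket{11}$, and since $\{H,\mathrm{CX}\}$ generate only real orthogonal matrices while $(Y\otimes I)\ket{\Phi^+}$ has purely imaginary amplitudes, no circuit over that gate set can literally realize $\ket{P}\mapsto\ket{\sym(P)}$ without a residual factor of $(\pm i)^{\#Y(P)}$. The paper's one-line proof addresses only the bit permutation and is silent on this phase, and the theorem's gate-set claim is too restrictive as stated. Your two proposed fixes (append a phase-correcting Clifford such as a controlled-$S^{\dagger}$ per pair, or absorb the phase into the basis convention) are both valid, and your observation that the phase is harmless downstream is also right: in Theorems~\ref{thm:commuting_reduction_to_decoding} and~\ref{thm:main_reduction_general_case} the decoding oracle acts diagonally on $\ket{\sym(P_{\mathbf y})}$ and the measurement is immediately inverted, so any basis-dependent phase $\omega_P$ cancels between $V$ and $V^{-1}$. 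This is a correct proof with a legitimate repair of a small imprecision in the paper.
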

\begin{proof}
Apply the inverse of the circuit $U = (\prod_{i=1}^n CX_{i \to i+n}) (\prod_{i=1}^n H_i)$ which maps $\ket{0^{2n}} \to \ket{\Phi^n}$, which uniquely maps every $(P \otimes I) \ket{\Phi^n}$ to a computational basis state $\mathbf{x}(P)$ by the orthonormality property given in Lemma~\ref{lemma:bell_basis}. 
One can show by direct computation that $\mathbf{x}(P)$ differs from $\sym(P)$ by a permutation that we can explicitly write down.
By applying this permutation, we complete the construction of $V$.
\end{proof}

We next show that the above facts have natural generalizations to \emph{qudits}, where $d = p$ is a prime.
Recall in particular that the maximally entangled state on $n$ pairs of qudits is defined by
\begin{align}
    \ket{\Phi^n_p} := \frac{1}{p^{n/2}} \sum_{\mathbf{x} \in \F_2^p} \ket{\mathbf{x}} \otimes \ket{\mathbf{x}} .
\end{align}

\begin{lemma}[Weyl-Bell basis is orthonormal] \label{lemma:weyl_bell_orthonormal}
    Fix a prime $p$.
    Let $W(\mathbf{u})$ be a Weyl operator for $\mathbf{u \in \F_p^{2n}}$, as defined in Eqn.~(\ref{def:Weyl_operator}).
    Then the set of states given by \begin{align}
        \ket{W(\mathbf{u})} := (W(\mathbf{u}) \otimes I) \ket{\Phi_p^n}
    \end{align}
    for each $\mathbf{u} \in \F_p^{2n}$ forms an orthonormal basis of $(\C^p)^{\otimes 2n}$.
\end{lemma}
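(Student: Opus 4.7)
The plan is to mirror the proof of Lemma~\ref{lemma:bell_basis} (the qubit case) verbatim, with qudit-appropriate substitutions. There are exactly $p^{2n}$ vectors $\mathbf{u} \in \F_p^{2n}$ and $\dim (\C^p)^{\otimes 2n} = p^{2n}$, so orthonormality alone will be enough to conclude the spanning claim; hence the whole task reduces to verifying $\langle W(\mathbf{u}) | W(\mathbf{v}) \rangle = \delta_{\mathbf{u},\mathbf{v}}$.

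First I would establish the ``ricochet'' identity $\bra{\Phi_p^n} (A \otimes I) \ket{\Phi_p^n} = \tfrac{1}{p^n}\Tr[A]$ for any $n$-qudit operator $A$, which drops out immediately from expanding both copies of $\ket{\Phi_p^n}$ in the computational basis. Applied to $A = W(\mathbf{u})^{\dagger} W(\mathbf{v})$ this gives
\begin{equation}
\langle W(\mathbf{u}) | W(\mathbf{v}) \rangle
= \tfrac{1}{p^n}\Tr\!\bigl[W(\mathbf{u})^{\dagger} W(\mathbf{v})\bigr].
\end{equation}
Setting $\mathbf{u} = \mathbf{v}$ gives normalization ($W(\mathbf{u})$ is unitary, so the trace is $p^n$).

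Next I would invoke the Weyl composition law: using the defining relations of $X, Z$ on a single qudit one checks that $W(\mathbf{u})^{\dagger} W(\mathbf{v}) = \omega^{c(\mathbf{u},\mathbf{v})} \, W(\mathbf{v} - \mathbf{u})$ for some phase $c(\mathbf{u},\mathbf{v}) \in \F_p$; the exact form of $c$ is immaterial because the trace is multiplicative under phase factors. It therefore suffices to show that $\Tr[W(\mathbf{w})] = 0$ for every nonzero $\mathbf{w} \in \F_p^{2n}$. Since $W(\mathbf{w}) = \bigotimes_{k=1}^n Z_k^{a_k} X_k^{b_k}$, the trace factorizes, and on a single qudit I would compute $\Tr[Z^a X^b] = \delta_{b,0}\sum_{j=0}^{p-1}\omega^{ja}$, which vanishes whenever $(a,b) \neq (0,0) \pmod{p}$: when $b \neq 0$, $X^b$ is a strict cyclic shift with no diagonal support; when $b = 0$ but $a \neq 0$, the sum is a nontrivial geometric series of $p$th roots of unity, hence zero. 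Combining the factors yields $\Tr[W(\mathbf{w})] = 0$ for $\mathbf{w} \neq 0$, completing orthogonality.

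The one place where the $p=2$ proof does not copy over cleanly is the relation $P^{\dagger} = P$ used in Lemma~\ref{lemma:bell_basis}; for $p > 2$ Weyl operators are unitary but not Hermitian, so I must keep the dagger explicit and rely on the composition law plus trace-invariance under phases. This is the only mildly delicate step, but once the Weyl composition law is in hand it is routine. Orthonormality plus the matching cardinality $p^{2n}$ finishes the proof that $\{\ket{W(\mathbf{u})}\}_{\mathbf{u} \in \F_p^{2n}}$ is a basis.
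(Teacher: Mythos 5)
Your proposal is correct and follows essentially the same route as the paper's proof: reduce the inner product to $\tfrac{1}{p^n}\Tr[W(\mathbf{u})^{\dagger}W(\mathbf{v})]$ via the maximally entangled state, use the Weyl composition law up to a phase, and kill the trace of any nonidentity Weyl operator. Your explicit single-qudit computation $\Tr[Z^a X^b] = \delta_{b,0}\sum_{j}\omega^{ja}$ is in fact slightly more careful than the paper's appeal to tracelessness of $X$ and $Z$, since it correctly handles the tensor factors where $(a_k,b_k)=(0,0)$.
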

\begin{proof}
    We proceed by direct calculation.
    For any $\mathbf{u}, \mathbf{v} \in \F_p^{2n}$,
    \begin{align}
        \langle W(\mathbf{u}) | W(\mathbf{v}) \rangle & = \langle \Phi_p^n | (W(\mathbf u)^\dagger W(\mathbf{v}) \otimes I) | \Phi_p^n \rangle \\
        & = \frac{1}{p^n} \left( \sum_{\mathbf{x}\in \F_p^n} \bra{\mathbf{x}} \otimes \bra{\mathbf{x}} \right) \left( \sum_{\mathbf{y}\in \F_p^n} W(\mathbf u)^\dag W(\mathbf v) \ket{\mathbf{y}} \otimes \ket{\mathbf{y}} \right) \\
        & = \frac{1}{p^n} \sum_{\mathbf{x} \in \F_p^n} \langle \mathbf{x} | W(\mathbf u)^\dag W(\mathbf v) | \mathbf{x} \rangle \\
        & = \frac{1}{p^n} \Tr[W(\mathbf u)^\dag W(\mathbf v)] .
    \end{align}
    If $\mathbf{u} = \mathbf{v}$ then the above evaluates to $1$.
    Otherwise, note that $W(\mathbf{u})^\dag = W(u\mathbf{u})$, and up to a phase, $W(\mathbf{u})^\dag W(\mathbf{v}) = W(\mathbf{v} - \mathbf{u})$.
    Let $\mathbf{w} = \mathbf{v} - \mathbf{u} = (\mathbf{a} \,|\, \mathbf{b})$.
    Then \begin{align}
        \Tr[W(\mathbf{w})] = \Tr\left[ \prod_{k=1}^n Z_k^{a_k} X_k^{b_k} \right] = \prod_{k=1}^n Tr[Z_k^{a_k} X_k^{b_k}] = 0 ,
    \end{align}
    since the base Weyl operators $Z$ and $X$ are traceless.
\end{proof}

Using this generalized orthonormal basis, we next show that the $\F_p$ maximally entangled state admits an analog of the coherent Bell measurement, which we denote as the coherent Weyl-Bell measurement.

\begin{theorem}[Coherent Weyl-Bell measurement] \label{thm:coherent_weyl_bell_measurement}
    Let $\ket{\psi} = \sum_{\mathbf{u} \in \F_p^{2n}} c_{\mathbf{u}} (W(\mathbf{u}) \otimes I) \ket{\Phi_p^n}$, where $W(\mathbf{u})$ is a $n$-qudit Weyl operator over $\F_p$.
    There exists an efficiently implementable unitary which maps \begin{align}
        \ket{\psi} \longmapsto  \sum_{\mathbf{u} \in \F_p^{2n}} c_{\mathbf{u}} \ket{\mathbf{u}} .
    \end{align}
\end{theorem}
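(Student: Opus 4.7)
The plan is to mimic the qubit proof of Theorem~\ref{thm:app-Bell_measurement}, replacing the Hadamard by the qudit Fourier transform $F$ (acting as $F\ket{j}=p^{-1/2}\sum_k \omega^{jk}\ket{k}$) and the CNOT by the qudit sum gate $\mathrm{SUM}\ket{a,b}=\ket{a,a+b}$, and then correcting by an efficient classical permutation. Explicitly, let $U=\bigl(\prod_{i=1}^n \mathrm{SUM}_{i\to i+n}\bigr)\bigl(\prod_{i=1}^n F_i\bigr)$, which is the standard Clifford circuit on $2n$ qudits that maps $\ket{0^{2n}}\mapsto \ket{\Phi_p^n}$.

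First I would compute, by direct calculation, the image of a single Weyl-Bell basis state under $U^{\dagger}$. For $\mathbf{u}=(\mathbf{a}\,|\,\mathbf{b})\in\F_p^{2n}$, using $W(\mathbf{u})\ket{\mathbf{x}}=\omega^{\mathbf{a}\cdot(\mathbf{x}+\mathbf{b})}\ket{\mathbf{x}+\mathbf{b}}$, one obtains
\begin{equation}
    (W(\mathbf{u})\otimes I)\ket{\Phi_p^n}\;=\;\frac{1}{p^{n/2}}\sum_{\mathbf{y}\in\F_p^n}\omega^{\mathbf{a}\cdot\mathbf{y}}\,\ket{\mathbf{y}}\otimes\ket{\mathbf{y}-\mathbf{b}}.
\end{equation}
Applying $\prod_i \mathrm{SUM}^{-1}_{i\to i+n}$ replaces $\ket{\mathbf{y}-\mathbf{b}}$ by $\ket{-\mathbf{b}}$, and then $\bigotimes_i F_i^{\dagger}$ on the first register is inverse Fourier on $p^{-n/2}\sum_{\mathbf{y}}\omega^{\mathbf{a}\cdot\mathbf{y}}\ket{\mathbf{y}}=F^{\otimes n}\ket{\mathbf{a}}$, giving $\ket{\mathbf{a}}$. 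Thus $U^{\dagger}(W(\mathbf{u})\otimes I)\ket{\Phi_p^n}=\ket{\mathbf{a}}\otimes\ket{-\mathbf{b}}$ is a computational basis state determined by an affine bijection $\mathbf{u}\mapsto (\mathbf{a},-\mathbf{b})$ of $\F_p^{2n}$.

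Next I would correct this affine bijection to obtain $\ket{\mathbf{u}}=\ket{\mathbf{a},\mathbf{b}}$. The map $(\mathbf{a},-\mathbf{b})\mapsto(\mathbf{a},\mathbf{b})$ is just coordinate-wise multiplication by $-1$ on the last $n$ qudits, which is a classical permutation of the $p$-dimensional local basis and hence implemented by $n$ single-qudit Clifford operations. Composing, the full unitary $V:=\bigl(I\otimes M_{-1}^{\otimes n}\bigr)\,U^{\dagger}$ sends each Weyl-Bell basis element to the corresponding $\ket{\mathbf{u}}$, and by linearity sends $\ket{\psi}=\sum_{\mathbf{u}}c_{\mathbf{u}}(W(\mathbf{u})\otimes I)\ket{\Phi_p^n}$ to $\sum_{\mathbf{u}}c_{\mathbf{u}}\ket{\mathbf{u}}$. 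The orthonormality established in Lemma~\ref{lemma:weyl_bell_orthonormal} guarantees that $V$ is a well-defined unitary on the full Hilbert space.

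The only mildly delicate point, and the step I would pay closest attention to, is the bookkeeping of the phase $\omega^{\mathbf{a}\cdot(\mathbf{x}+\mathbf{b})}$ versus conventions like $\omega^{\mathbf{a}\cdot\mathbf{x}}$: different orderings of $Z^{a_k}X^{b_k}$ vs.\ $X^{b_k}Z^{a_k}$ in the definition of $W(\mathbf{u})$ can change the affine correction needed at the end. The substantive content is not difficult, but one must verify that the resulting correction is a purely classical permutation on $\F_p^{2n}$ so that it is implementable by a depth-$O(1)$ Clifford circuit, leaving $V$ efficient with $\poly(n,\log p)$ gates overall.
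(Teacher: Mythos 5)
Your proof is correct and uses essentially the same construction as the paper: the unitary is $(F^{\dagger}\otimes I)\,\mathrm{SUM}^{\dagger}$ applied transversally, followed by a classical relabeling of the output registers, exactly as in the paper's $U_{\mathrm{WBell}}$. If anything, your direct computation of $U^{\dagger}(W(\mathbf{u})\otimes I)\ket{\Phi_p^n}=\ket{\mathbf{a}}\otimes\ket{-\mathbf{b}}$ and the explicit $M_{-1}$ correction supply the phase and ordering bookkeeping that the paper's proof asserts only ``by construction.''
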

\begin{proof}
Define the following two-qudit Weyl operators
\begin{equation}
S_Z := Z \otimes Z^{-1}, \quad S_X := X \otimes X ,
\end{equation}
which commute because by Eqn.~(\ref{eq:Weyl_commutation_relations}),
\begin{equation}
S_Z S_X = (Z X) \otimes\left(Z^{-1} X\right)=(\omega X Z) \otimes\left(\omega^{-1} X Z^{-1}\right)=S_X S_Z. 
\end{equation}
The Weyl-Bell basis on two qudits can be represented as
\begin{equation}
\left|\Phi_{p, a, b} \right\rangle: = \left(I \otimes Z^b X^a\right)\left|\Phi^{1}_{p}\right\rangle = \frac{1}{\sqrt{p}} \sum_{k=0}^{p-1} \omega^{b k}|k\rangle \otimes|k+a\rangle, 
\end{equation}
indexed by $(a, b) \in \mathbb{F}_p^2$. These are joint eigenstates of $S_Z$ and $S_X$:
\begin{equation}
S_Z\left|\Phi_{p, a, b} \right\rangle = \omega^{-a}\left|\Phi_{p, a, b} \right\rangle, \quad S_X\left|\Phi_{p, a, b} \right\rangle = \omega^{-b}\left|\Phi_{p, a, b} \right\rangle .
\end{equation}
The ``generalized Bell measurement'' which simultaneously diagonalizes the abelian quotient of the Weyl group is thus given by parallel measurements of $S_Z$ and $S_X$ across all $n$ qudit-pairs. 
The corresponding coherent operation is defined in terms of qudit \texttt{SUM} (generalized controlled-\texttt{NOT}) 
\begin{equation}
\text{\texttt{SUM}}_{1 \rightarrow 2}:|x\rangle|y\rangle \mapsto|x\rangle|x+y\rangle,
\end{equation}
(where, as usual, arithmetic is mod $p$)
and the single-qudit Fourier transform \begin{equation}
F|x\rangle=\frac{1}{\sqrt{p}} \sum_{k=0}^{p-1} \omega^{k x}|k\rangle .
\end{equation}
Both are Clifford operations over $\F_p$ (i.e. they map Weyl operators to Weyl operators by conjugation). 
By construction, the two-qudit Clifford
\begin{equation}
    U_{\mathrm{WBell}}:=\left(F^{\dagger} \otimes I\right) \mathrm{SUM}_{1 \rightarrow 2}^{\dagger} 
\end{equation} satisfies for every $(a, b) \in \mathbb{F}_{p,}^2$,
\begin{equation}
U_{\mathrm{WBell}}\left|\Phi_{a, b}\right\rangle=|b\rangle \otimes|a\rangle .\end{equation}
\end{proof}

\subsection{Tanner graphs and vertex expanders} \label{sec:app-expander_graphs}
Tanner graphs represent a parity check matrix using a bipartite graph. This representation turns out to be very useful to construct code ensembles which have high distance and efficient decodability.
Note that the notation in this subsection follows standard coding theory notation rather than the notation we have established in the main text.
Namely, $H$ is a parity check matrix, not a Hamiltonian.

\begin{definition}[Tanner graph] \label{def:Tanner_graph}
    Let $H \in \mathbb{F}_2^{(1-R)n \times n}$, where $R \in (0, 1)$, be a parity check matrix. The Tanner graph $\calG(H) = (\calL, \calR, E)$ is an undirected bipartite graph with $\calL = [n]$, $|\calR| = [(1-R)n]$ and $E = \{ (i, j) \in \calL \times \calR \,:\, H_{ji} = 1 \}$. 
\end{definition}
We refer to the left nodes as \textit{data nodes} and the right nodes as \textit{check nodes} or \textit{syndrome nodes}. The Tanner graph connects a data node to a check node if the corresponding data bit is supported in the corresponding check.

\begin{definition}[Bipartite regularity]
    A bipartite graph $G = (\calL, \calR, E)$ is (weakly) $(a, b)$-regular if $\deg(i) = a$ ($\deg(i) \leq a$) for all $i \in \calL$ and $\deg(j) = b$ ($\deg(j) \leq b$) for all $j \in \calR$.
\end{definition}
A (weakly) $(a, b)$-regular bipartite graph corresponds to a LDPC code with check sparsity (bounded by) $a$ and bit sparsity (bounded by) $b$. Note that $\frac{a}{b} = 1 - R$ for a $(a, b)$-regular Tanner graph.

Let $N(v)$ be the set of neighbors of a node $v$. 
For a subset $S \subseteq \mathcal{L}$ of a bipartite graph, we denote $\Gamma(S) = \{v \in \mathcal{R} \,:\, \exists u \in S \, (v, u) \in \mathcal{E}\}$ to be the set of nodes in $\mathcal{R}$ adjacent to $S$. 
That is, $\Gamma(S)$ is the set of neighbors of $S$.
Moreover, we denote by $\Gamma_1(S)$ to be the set of neighbors of $S$ which are connected to exactly one element of $S$.
We refer to these as the neighbor set and the unique neighbor set, respectively.

\begin{lemma}[Syndromes in the graph picture] \label{lemma:syndromes_in_graph}
Let $H \in \mathbb{F}_2^{(1-R)n \times n}$ be a parity check matrix with Tanner graph $\mathcal{G} = (\mathcal{L} , \mathcal{R}, \mathcal{E})$. Let $\mathbf{e} \in \mathbb{F}_2^n$ be an error with support $S = \{ j \in [n] \,:\, e_j = 1\}$. 
The syndrome computed is $H \mathbf{e}$. For $j \in \mathcal{R}$, let $w_j = \sum_{v \in N(j)} \mathbf{1}[v \in S] \pmod{2}$. Then $\mathbf{w} = H \mathbf{e}$.
\end{lemma}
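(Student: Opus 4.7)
The statement is essentially a bookkeeping identity: it asserts that the $j$-th coordinate of the syndrome $H\mathbf{e}$ can be read off the Tanner graph by counting parities over the neighborhood of the check node $j$. My plan is to unwind the definitions in order: first expand the matrix-vector product, then translate incidence in $H$ into adjacency in $\mathcal{G}(H)$, then translate bits of $\mathbf{e}$ into membership in the support $S$.

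Concretely, I would begin by writing, for each $j \in \mathcal{R}$,
\begin{equation}
(H\mathbf{e})_j \;=\; \sum_{i=1}^{n} H_{ji}\, e_i \pmod{2}.
\end{equation}
Since $H_{ji} \in \{0,1\}$, the nonzero contributions come precisely from those $i$ with $H_{ji} = 1$. By Definition~\ref{def:Tanner_graph}, $H_{ji} = 1$ is equivalent to $(i,j) \in E$, which is equivalent to $i \in N(j)$ (where $N(j) = \{i \in \mathcal{L} : (i,j) \in E\}$ is the neighborhood of the check node $j$). Hence the sum collapses to
\begin{equation}
(H\mathbf{e})_j \;=\; \sum_{i \in N(j)} e_i \pmod{2}.
\end{equation}

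Finally, I would use the definition of $S$ as the support of $\mathbf{e}$ to replace $e_i$ by the indicator $\mathbf{1}[i \in S]$, obtaining
\begin{equation}
(H\mathbf{e})_j \;=\; \sum_{i \in N(j)} \mathbf{1}[i \in S] \pmod{2} \;=\; w_j,
\end{equation}
which is exactly the claim $\mathbf{w} = H\mathbf{e}$. Since each step is an equality by definition, there is no real obstacle here; the lemma is pure translation between the matrix and graph pictures. The only thing to be careful about is to make explicit that all arithmetic is performed modulo $2$ and that the indexing conventions for rows/columns of $H$ match the left/right partition of the Tanner graph as fixed in Definition~\ref{def:Tanner_graph}.
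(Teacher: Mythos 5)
Your proof is correct and follows essentially the same reasoning as the paper's (which is a more terse one-sentence version of the same definition-unwinding): the $j$th syndrome bit is the mod-2 sum of the error bits at the data nodes adjacent to check node $j$, which is exactly $w_j$. No gaps.
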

\begin{proof}
    The neighbors of a check node $j$ determine which data nodes $j$ sums together as the check. 
    The $j$th syndrome bit is 1 if and only if the parity of the data nodes connected to check node $j$ which are also supported by the error is 1.
\end{proof}

\begin{corollary} \label{corollary:unique_neighbor_syndrome}
    In the setup of Lemma~\ref{lemma:syndromes_in_graph}, if for an error $\mathbf{e}$ with support $S$ we have $|\Gamma_1(S)| > 0$, then $H \mathbf{e} \neq 0$.
\end{corollary}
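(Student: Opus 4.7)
The plan is to derive the corollary as an immediate consequence of the formula for the syndrome established in Lemma~\ref{lemma:syndromes_in_graph}. That lemma expresses each bit of the syndrome $H\mathbf{e}$ as a mod-2 count of how many neighbors of a given check node lie in $S = \mathrm{supp}(\mathbf{e})$. So to prove that $H\mathbf{e}\neq 0$, it suffices to exhibit a single check node whose count is odd.

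The construction is straightforward: pick any $j \in \Gamma_1(S)$, which exists by assumption. By definition of the unique-neighbor set, $j$ is adjacent to exactly one element of $S$. Hence
\begin{equation}
    w_j \;=\; \sum_{v \in N(j)} \mathbf{1}[v \in S] \;=\; 1 \pmod{2}.
\end{equation}
By Lemma~\ref{lemma:syndromes_in_graph}, the $j$th coordinate of $H\mathbf{e}$ equals $w_j = 1$, so $H\mathbf{e}\neq 0$.

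There is no real obstacle here — the corollary is essentially a restatement of the lemma specialized to check nodes of unique-neighbor degree one. The only thing to verify in passing is that the unique-neighbor definition (as stated just before the corollary) indeed gives an odd count of $1$ rather than just an ``at least one'' count, which is immediate from the phrasing ``connected to exactly one element of $S$.''
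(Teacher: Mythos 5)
Your proposal is correct and follows exactly the same argument as the paper: pick any $j \in \Gamma_1(S)$, observe that $|N(j) \cap S| = 1$ forces $w_j = 1$, and conclude via Lemma~\ref{lemma:syndromes_in_graph} that $H\mathbf{e} \neq 0$. There is nothing to add.
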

\begin{proof}
    Since $|\Gamma_1(S)| > 0$, there exists some $j \in \Gamma_1(S)$. Then $w_j = 1$, since $N(j) \cap S = 1$. 
    By Lemma~\ref{lemma:syndromes_in_graph}, $\mathbf{w} = H \mathbf{e}$, so $H \mathbf{e} \neq 0$.
\end{proof}

We now state the key structural property that we need, which is related to the neighbor sets.

\begin{definition}[Vertex expansion]
    Let $\mathcal{G} = (\mathcal{L} , \mathcal{R}, \mathcal{E})$ be a weakly $(a, b)$-regular bipartite graph with $|\mathcal{L}| = n$ and $|\mathcal{R}| = (1-R)n$. 
    We say that $\mathcal{G}$ is a $(\delta, \gamma)$ (unique) expander if for all $S \subseteq \mathcal{L}$ such that $|S| \leq \delta n$, $|\Gamma(S)| \geq \gamma |S|$ ($|\Gamma_1(S)| \geq \gamma |S|$).
\end{definition}
Here $a, b, \delta, \gamma, R$ are all understood to be constants. We now show that high expansion implies unique expansion.

\begin{lemma}[High expansion implies unique expansion] \label{lemma:expansion_implies_unique_expansion}
    Let $\mathcal{G} = (\mathcal{L} , \mathcal{R}, \mathcal{E})$ be a weakly $(a, b)$-regular bipartite graph with $|\mathcal{L}| = n$ and $|\mathcal{R}| = (1-R)n$. 
    If $\mathcal{G}$ is a $(\delta, \gamma)$ expander, then $\mathcal{G}$ is a $(\delta, 2 \gamma - a)$ unique expander.
\end{lemma}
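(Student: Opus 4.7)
The proof plan is a standard double-counting argument on the edges between $S$ and its neighborhood. Fix any $S \subseteq \mathcal{L}$ with $|S| \leq \delta n$, and consider the bipartite subgraph of $\mathcal{G}$ induced on $S \cup \Gamma(S)$. I will count the number of edges $E(S)$ in this induced subgraph in two ways.

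First, since $\mathcal{G}$ is weakly $(a,b)$-regular, every vertex in $S$ has degree at most $a$, so $E(S) \leq a|S|$. Second, partition $\Gamma(S)$ into $\Gamma_1(S)$ (unique neighbors, contributing exactly $1$ edge each) and $\Gamma(S) \setminus \Gamma_1(S)$ (non-unique neighbors, contributing at least $2$ edges each by definition). This gives
\begin{equation}
    E(S) \;\geq\; |\Gamma_1(S)| + 2\bigl(|\Gamma(S)| - |\Gamma_1(S)|\bigr) \;=\; 2|\Gamma(S)| - |\Gamma_1(S)|.
\end{equation}

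Combining the two bounds yields $|\Gamma_1(S)| \geq 2|\Gamma(S)| - a|S|$, and invoking the $(\delta,\gamma)$ expansion hypothesis $|\Gamma(S)| \geq \gamma |S|$ completes the argument:
\begin{equation}
    |\Gamma_1(S)| \;\geq\; 2\gamma|S| - a|S| \;=\; (2\gamma - a)\,|S|.
\end{equation}

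The proof is essentially a one-liner once framed as edge counting; there is no real obstacle, and the bound $2\gamma - a$ appears naturally from the fact that each non-unique neighbor ``wastes'' at least one extra edge out of the budget $a|S|$. The only minor subtlety is that the lemma is stated for a \emph{weakly} $(a,b)$-regular graph, so I use the inequality $\deg(v) \leq a$ for $v \in \mathcal{L}$ rather than equality; this does not affect the argument since the upper bound $E(S) \leq a|S|$ is all that is needed.
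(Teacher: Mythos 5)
Your proof is correct and is essentially the same edge-counting argument as the paper's: both bound the edges leaving $S$ above by $a|S|$ and below by $|\Gamma_1(S)| + 2(|\Gamma(S)| - |\Gamma_1(S)|)$, then invoke the expansion hypothesis. Your write-up is in fact slightly cleaner than the paper's (which contains a minor sign typo in its final chain of inequalities), but the underlying reasoning is identical.
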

\begin{proof}
    Fix a subset of nodes $S$ such that $|S| \leq \delta n$. 
    We partition $\Gamma(S)$ into two pieces, $\Gamma_1(S)$ and $\Gamma(S) - \Gamma_1(S)$. Denote $u = |\Gamma_1(S)|$ and $r = |\Gamma(S) - \Gamma_1(S)|$ the number of unique neighbors and repeated neighbors, respectively. 
    Then $u + r = |\Gamma(S)| \geq \gamma |S|$ by assumption. 
    Next, there are at most $c |S|$ edges connected to $S$. 
    Each neighbor in $\Gamma_1(S)$ has exactly one such edge, and each neighbor in $\Gamma(S) - \Gamma_1(S)$ has at least 2. 
    Hence, $u + 2r \leq a |S|$. Equivalently, $r \leq \frac{a}{2} |S| - \frac{1}{2} u$. 
    Solving, \begin{align}
        u \geq \gamma |S| - r \geq \gamma |S| - \frac{a}{2} |S| - \frac{1}{2} u .
    \end{align}
    Rearranging proves the claim.
\end{proof}

Expansion is useful because it implies unique expansion, and unique expansion is useful because it implies linear distance codes. 
We now show the latter.
\begin{lemma}[Unique expansion implies linear distance] \label{lemma:unique_expansion_implies_distance}
    Let $H \in \mathbb{F}_2^{(1-R)n \times n}$ be a parity check matrix whose Tanner graph $\mathcal{G} = (\mathcal{L} , \mathcal{R}, \mathcal{E})$ is a weakly $(a, b)$-regular bipartite graph. 
    If, for any constants $\gamma' > 0$ and $\delta > 0$, $\mathcal{G}$ is a $(\delta, \gamma')$ unique neighbor expander, then $\mathcal{G}$ corresponds to a linear distance code with relative distance $\delta$.
\end{lemma}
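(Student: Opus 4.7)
The plan is to proceed by contradiction, leveraging the corollary that links nonzero unique-neighbor sets to nonzero syndromes. Suppose that the code $C = \ker H$ contains a nonzero codeword $\mathbf{c}$ with Hamming weight $|\mathbf{c}| \leq \delta n$, and let $S \subseteq \mathcal{L}$ denote its support. Since $\mathbf{c} \neq 0$, we have $|S| \geq 1$, and by hypothesis $|S| \leq \delta n$, so $S$ falls within the regime where the $(\delta, \gamma')$ unique-neighbor expansion property applies.

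Applying the unique-neighbor expansion to $S$ gives
\begin{equation}
    |\Gamma_1(S)| \;\geq\; \gamma' \, |S| \;>\; 0,
\end{equation}
where the strict inequality uses that $\gamma' > 0$ and $|S| \geq 1$. Now I invoke Corollary~\ref{corollary:unique_neighbor_syndrome}, which asserts that any error vector whose support has a nonzero unique-neighbor set produces a nonzero syndrome. Applied to $\mathbf{c}$ (viewed as an ``error'' with support $S$), this yields $H \mathbf{c} \neq 0$, directly contradicting the assumption $\mathbf{c} \in \ker H$.

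Hence no nonzero codeword of weight at most $\delta n$ exists, so the minimum distance of the code is at least $\delta n$, i.e.\ the relative distance is at least $\delta$, as claimed. There is no real obstacle here: the content of the lemma is entirely packaged inside the earlier Corollary~\ref{corollary:unique_neighbor_syndrome} (which in turn follows from the graph-theoretic reading of syndromes in Lemma~\ref{lemma:syndromes_in_graph}), and the present argument is just a one-line contrapositive. The only subtlety to flag is the strictness $\gamma' > 0$ combined with $|S| \geq 1$, which is what guarantees at least one unique neighbor and hence a nontrivial syndrome bit.
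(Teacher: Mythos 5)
Your proof is correct and matches the paper's argument essentially verbatim: both apply the unique-neighbor expansion to the support of any nonzero vector of weight at most $\delta n$ and invoke Corollary~\ref{corollary:unique_neighbor_syndrome} to conclude the syndrome is nonzero, hence the vector is not a codeword. The only cosmetic difference is that you phrase it as a contradiction while the paper states it directly.
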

\begin{proof}
    Let $\mathbf{e} \in \mathbb{F}_2^n$ be any error such that $1 \leq |\mathbf{e}| \leq \delta n$. 
    Let $S := \{j \in [n] \,:\, e_j = 1 \}$ be the support of $\mathbf{e}$, and thus $|S| \leq \delta n$. 
    Then $|\Gamma_1(S)| \geq \gamma' |S| > 0$. 
    By Corollary~\ref{corollary:unique_neighbor_syndrome}, $H \mathbf{e} \neq 0$, so $\mathbf{e}$ is not a codeword.
\end{proof}

\begin{corollary}[High expansion implies linear distance] \label{corollary:expansion_implies_linear_distance}
    Let $H \in \mathbb{F}_2^{(1-R)n \times n}$ be a parity check matrix whose Tanner graph $\mathcal{G} = (\mathcal{L} , \mathcal{R}, \mathcal{E})$ is a weakly $(a, b)$-regular bipartite graph. 
    If there exists $\delta > 0$ and $\gamma > c/2$ for which $\mathcal{G}$ is a $(\delta, \gamma)$ expander, then $\operatorname{dist}(H) \geq \delta n$.
\end{corollary}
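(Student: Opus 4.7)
The plan is to chain the two immediately preceding lemmas, since the corollary is essentially a one-line consequence of them once the hypotheses are aligned.

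First, I would invoke Lemma~\ref{lemma:expansion_implies_unique_expansion}, which converts ordinary (vertex) expansion into unique-neighbor expansion at the cost of subtracting $a$ from twice the expansion factor. Applying this to the hypothesized $(\delta, \gamma)$ expansion yields that $\mathcal{G}$ is a $(\delta, 2\gamma - a)$ unique-neighbor expander. The hypothesis $\gamma > a/2$ (I read the ``$c$'' in the statement as a typo for the left-degree bound $a$) is precisely what we need to ensure the new expansion parameter $\gamma' := 2\gamma - a$ is strictly positive.

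Second, I would feed this unique-expansion property into Lemma~\ref{lemma:unique_expansion_implies_distance}. That lemma requires only $\gamma' > 0$ and $\delta > 0$, both of which are now in hand, and concludes that the code has relative distance at least $\delta$, i.e.\ $\operatorname{dist}(H) \geq \delta n$, which is exactly what we want.

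There is no real obstacle here: the work has been done in the two lemmas, and the corollary simply records the quantitative threshold $\gamma > a/2$ at which ordinary expansion is strong enough to guarantee a nontrivial unique-neighbor expansion, and hence linear distance. If I wanted to sharpen the presentation, I would note that the conclusion is tight in the sense that $\gamma = a/2$ is the critical value: below it, Lemma~\ref{lemma:expansion_implies_unique_expansion} produces a vacuous bound $|\Gamma_1(S)| \geq 0$, and one can construct bipartite graphs with $\gamma$ slightly less than $a/2$ that contain short codewords. But for the corollary as stated, the two-line argument above suffices.
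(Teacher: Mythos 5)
Your proof is correct and follows exactly the paper's (one-line) argument: chain Lemma~\ref{lemma:expansion_implies_unique_expansion} with Lemma~\ref{lemma:unique_expansion_implies_distance}, using $\gamma > a/2$ to ensure the unique-expansion parameter $2\gamma - a$ is positive. Your reading of the ``$c$'' in the statement as the left-degree bound $a$ is also the intended one (the same $a$/$c$ slip appears in the proof of Lemma~\ref{lemma:expansion_implies_unique_expansion}).
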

\begin{proof}
    This is an immediate consequence of Lemma~\ref{lemma:expansion_implies_unique_expansion} and Lemma~\ref{lemma:unique_expansion_implies_distance}.
\end{proof}

Not only does high expansion imply a linear distance, but it also implies that all information-theoretically correctable errors can be efficiently corrected, using the \texttt{Flip} decoding algorithm, provided that $\gamma > \frac{3}{4} a$~\cite{richardson2008modern}.
Fortunately, with high probability this property holds for a random $(a, b)$-regular Tanner graph.
More precisely, a classical result of coding theory is that random graphs have high vertex expansion with high probability. 
In what follows we define for $x \in [0, 1]$ the binary entropy function $H_2(x) = -x \log_2 x - (1-x) \log_2(1-x)$.
\begin{theorem}[Random regular graphs are expanders] \label{thm:random_expanders}
    Let $\mathcal{G} = (\calL, \calR, \calE)$ be a uniformly random $(a-1, b-1)$-regular bipartite graph such that $|\calL| = n$ and $|\calR| = nR = \frac{a-1}{b-1} n$, where $R = \frac{a-1}{b-1}$. For any $\widetilde{\gamma}\in (0, 1 - \frac{1}{a})$ and any $\delta \in (0, \delta^*)$, where $\delta^*$ is the positive solution to the equation \begin{align}
        \frac{a-1}{a} H_2(\delta) - \frac{a}{b} H_2(\delta \widetilde{\gamma} b) - \delta \widetilde{\gamma} b H_2\left( \frac{1}{\widetilde{\gamma} b} \right) = 0 ,
    \end{align}
    we have \begin{align}
        \Pr[\calG \text{ is a } (\delta, \gamma) \text{ expander}] \geq 1 - O(n^{-\beta}) ,
    \end{align}
    where $\gamma = (a-1)\widetilde{\gamma}$ and $\beta = a (1 - \widetilde{\gamma}) - 1 \in (0, a-1)$.
\end{theorem}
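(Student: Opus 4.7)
My plan is to execute the classical first-moment / probabilistic-method argument (in the spirit of Gallager and Pinsker) using the configuration model to generate the random regular bipartite graph. Concretely, I would attach $(a-1)$ half-edges to each of the $n$ left vertices and $(b-1)$ half-edges to each of the $nR = \frac{a-1}{b-1}n$ right vertices, so that both sides contain $M := (a-1)n$ half-edges, and then sample a uniformly random perfect matching between the two sides. Standard arguments show that a uniformly random $(a-1,b-1)$-regular bipartite graph can be obtained from this model (up to conditioning on a simple event of probability $\Omega(1)$, which only affects constants), so it suffices to prove the expansion claim in the configuration model.

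The main step is a union bound over all ``bad'' events. For each $1 \leq s \leq \delta n$ and each left subset $S \subseteq \calL$ with $|S| = s$, the bad event ``$|\Gamma(S)| < \gamma s$'' implies the existence of a right subset $T \subseteq \calR$ with $|T| = t := \lfloor \gamma s \rfloor$ such that every half-edge emanating from $S$ is matched to some half-edge of $T$. For fixed $S$ and $T$, the probability of this containment event in the configuration model is bounded by
\begin{equation}
  \Pr\bigl[\Gamma(S)\subseteq T\bigr] \;\leq\; \frac{\binom{(b-1)t}{(a-1)s}\,((a-1)s)!\,(M-(a-1)s)!}{M!} \;\leq\; \left(\frac{(b-1)t}{M}\right)^{(a-1)s} \binom{(b-1)t}{(a-1)s}\bigl/\!\binom{M}{(a-1)s}.
\end{equation}
Multiplying by the $\binom{n}{s}\binom{nR}{t}$ choices of $(S,T)$ and summing over $s$ yields the union bound
\begin{equation}
  \Pr[\calG \text{ is not a }(\delta,\gamma)\text{-expander}] \;\leq\; \sum_{s=1}^{\delta n} \binom{n}{s}\binom{nR}{t}\,\frac{\binom{(b-1)t}{(a-1)s}}{\binom{M}{(a-1)s}}.
\end{equation}
Applying Stirling's approximation $\binom{N}{\alpha N} = 2^{N H_2(\alpha) + O(\log N)}$ to each binomial and substituting $t = (a-1)\widetilde{\gamma} s$, $M = (a-1)n$, $nR = \frac{a-1}{b-1}n$, the $s = \delta n$ summand becomes $2^{n\cdot F(\delta,\widetilde{\gamma}, a, b) + o(n)}$ for an exponent $F$ that, after simplification, equals (up to a positive rescaling) the left-hand side of the equation
\begin{equation}
   \tfrac{a-1}{a}H_2(\delta) - \tfrac{a}{b}H_2(\delta\widetilde{\gamma} b) - \delta\widetilde{\gamma} b\, H_2\!\bigl(\tfrac{1}{\widetilde{\gamma} b}\bigr) = 0
\end{equation}
that defines $\delta^*$ in the statement. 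By concavity/monotonicity of $F$ in $\delta$, the exponent is strictly negative for every $\delta < \delta^*$, so the exponential contribution from $s = \Theta(n)$ decays as $2^{-\Omega(n)}$.

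The part I expect to be the main obstacle is obtaining the \emph{polynomial} tail $O(n^{-\beta})$ with $\beta = a(1-\widetilde{\gamma}) - 1$, which governs the bound when $s$ is small (in particular $s = O(1)$), rather than the exponential regime where Stirling gives clean entropies. For $s$ a small constant, I would bound the summand directly: $\binom{n}{s} = \Theta(n^s)$, $\binom{nR}{t} = \Theta(n^t)$, and $\binom{(b-1)t}{(a-1)s}/\binom{M}{(a-1)s} = \Theta(n^{-(a-1)s})$ for constant $s,t$. Plugging in $t = (a-1)\widetilde{\gamma} s$ produces a term of order $n^{s + (a-1)\widetilde{\gamma} s - (a-1)s} = n^{-s(a(1-\widetilde{\gamma})-1)}$, which is $O(n^{-\beta})$ for $s = 1$ and decays faster for larger constant $s$. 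Combining this polynomial contribution from $s = O(1)$ with the exponential contribution from $s = \Omega(1)\cdot n$ (with the intermediate range handled by interpolating the two estimates and noting that the entropy bound is $\leq 0$ uniformly for $\delta \leq \delta^*$) gives the stated $1 - O(n^{-\beta})$ success probability. Finally, I would justify the restriction $\widetilde{\gamma} \in (0, 1 - 1/a)$ by observing that $\beta > 0$ precisely requires $a(1-\widetilde{\gamma}) > 1$, and $\beta < a-1$ is automatic since $\widetilde{\gamma} > 0$.
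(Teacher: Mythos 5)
The paper does not actually prove this theorem; its ``proof'' is a one-line citation to Chapter~8 of Richardson--Urbanke, and your proposal is a faithful reconstruction of exactly that argument: configuration model, union bound over pairs $(S,T)$ with $\Pr[\Gamma(S)\subseteq T]=\binom{(b-1)t}{(a-1)s}\big/\binom{M}{(a-1)s}$, Stirling/entropy analysis for $s=\Theta(n)$ to produce the defining equation for $\delta^*$, and a direct polynomial count for $s=O(1)$ to extract the $n^{-\beta}$ tail. Structurally this is the right proof and it is the proof the paper is implicitly relying on.

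One concrete arithmetic point needs fixing. In the small-$s$ regime you write
\begin{equation}
  n^{\,s+(a-1)\widetilde{\gamma}s-(a-1)s}\;=\;n^{-s\left(a(1-\widetilde{\gamma})-1\right)},
\end{equation}
but this equality is false: $s+(a-1)\widetilde{\gamma}s-(a-1)s=-s\bigl((a-1)(1-\widetilde{\gamma})-1\bigr)$, which differs from the claimed exponent by $(1-\widetilde{\gamma})s$. With left degree literally equal to $a-1$ (as the theorem's ``$(a-1,b-1)$-regular'' phrasing suggests), your computation yields the weaker exponent $\beta'=(a-1)(1-\widetilde{\gamma})-1$, and the positivity condition becomes $\widetilde{\gamma}<1-\tfrac{1}{a-1}$ rather than $\widetilde{\gamma}<1-\tfrac{1}{a}$. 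The stated $\beta=a(1-\widetilde{\gamma})-1$ (and the entropy equation, whose coefficients $\tfrac{a-1}{a}$ and $\tfrac{a}{b}$ are those of Richardson--Urbanke's ensemble with edge-degree polynomials $x^{a-1},x^{b-1}$, i.e.\ \emph{node} degrees $a$ and $b$) corresponds to left degree $a$; in that convention you should take $M=an$, $(a)s$ half-edges leaving $S$, and $t=a\widetilde{\gamma}s$, after which the $s=1$ term is $\Theta(n^{-(a(1-\widetilde{\gamma})-1)})$ as claimed. So the discrepancy traces to a degree-convention ambiguity in the statement itself, but as written your simplification does not establish the stated $\beta$; pick one convention and carry it through the half-edge count, the containment probability, and the exponent uniformly. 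The remaining informality (handling $t=\lfloor\gamma s\rfloor$ when $\gamma s<1$, and interpolating between the constant-$s$ and linear-$s$ regimes) is standard and acceptable at the level of a proof sketch.
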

\begin{proof}
    See Chapter 8 of~\cite{richardson2008modern}.
\end{proof}

As a corollary, if we generate a $(a, b)$-regular Tanner graph with $n$ data nodes and $nR$ check nodes, with probability $1 - 1/\poly(n)$ the code will be efficiently decodable up to a linear number of errors $\delta^* n$, for some constant $\delta^*$ which depends on $a, b, R$.
Such a Tanner graph has a parity check matrix which is distributed as a uniformly random matrix whose rows all have weight $b$ and columns all have weight $a$.

\section{Relation between Hamiltonian DQI and Decoded Quantum Interferometry}
\label{sec:app-cDQI_simulation}
In this appendix we formalize the claim that when
$H=\sum_{i=1}^m v_i P_i$ is a \emph{commuting} Pauli Hamiltonian, the commuting-case
reduction of Section~\ref{sec:commuting} 
can be simulated by calls to an oracle implementing Decoded Quantum Interferometry (DQI) together with efficient classical computation and Clifford operations.
Intuitively, commuting Pauli terms can be simultaneously diagonalized by a Clifford;
after this change of basis the task reduces to the purely classical/diagonal setting
addressed by DQI, and we can map the outcome back to the original basis by undoing
the Clifford rotation.%

Let $H=\sum_{i=1}^m v_i P_i$ be a commuting Pauli Hamiltonian on $n$ qubits,
with $v_i\in\{\pm 1\}$ and $P_i\in\{I,X,Y,Z\}^{\otimes n}$.
Let $B^\intercal\in\mathbb{F}_2^{2n\times m}$ denote the parity-check matrix of the 
symplectic code $\mathrm{Symp}(H)$.
Fix a univariate polynomial $\mathcal{P}$ of degree $\ell$.
Measuring the HDQI state $\rho_{\mathcal{P}}(H)=\mathcal{P}(H)^2/\mathrm{Tr}[\mathcal{P}(H)^2]$
in the eigenbasis of $H$ yields an eigenpair $(\lambda,\ket{\lambda})$ with probability
proportional to $\mathcal{P}(\lambda)^2$ (see Section~\ref{sec:commuting}).

\begin{lemma}[Simultaneous Clifford diagonalization~\cite{gottesman2016surviving}]
\label{lem:simul-diag}
If $[P_i,P_j]=0$ for all $i,j$, there exists a Clifford unitary $U$ and binary
vectors $\mathbf a_1 , \dots , \mathbf a_m \in \mathbb{F}_2^n$ such that
\begin{align}
    U P_i U^\dagger \;=\; Z^{\mathbf a_i} \;:=\; \bigotimes_{q=1}^n Z_q^{(a_i)_q}
\qquad\text{and hence}\qquad
U H U^\dagger \;=\; \sum_{i=1}^m v_i\, Z^{\mathbf a_i}.
\end{align}
Moreover, $U$ and the list $(\mathbf a_i)_{i=1}^m$ can be computed efficiently by standard encoding circuit synthesis algorithms in stabilizer coding theory.
\end{lemma}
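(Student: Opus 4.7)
The plan is to reduce the lemma to a standard fact about symplectic forms over $\mathbb{F}_2$, together with the Clifford--symplectic correspondence. Recall from Definition~\ref{def:symplectic_vectors} that to each Pauli $P_i$ we can associate $\mathrm{symp}(P_i)\in\mathbb{F}_2^{2n}$, and that the commutation relation $[P_i,P_j]=0$ is equivalent to $\langle \mathrm{symp}(P_i),\mathrm{symp}(P_j)\rangle_{\mathsf{J}}=0$, where $\langle\cdot,\cdot\rangle_{\mathsf{J}}$ is the standard symplectic form on $\mathbb{F}_2^{2n}$. Thus the hypothesis of the lemma says exactly that the subspace $V:=\mathrm{span}\{\mathrm{symp}(P_1),\ldots,\mathrm{symp}(P_m)\}\subseteq \mathbb{F}_2^{2n}$ is isotropic.

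First, I would invoke the well-known Clifford--symplectic correspondence: every element of the symplectic group $\mathrm{Sp}_{2n}(\mathbb{F}_2)$ arises as conjugation by some Clifford unitary, and conversely Clifford conjugation induces a symplectic action on $\mathbb{F}_2^{2n}$. Moreover, the generating set $\{H_q, S_q, \mathrm{CNOT}_{q\to q'}\}$ suffices to realize every symplectic transformation, and the map from a sequence of such generators to the corresponding symplectic matrix is computable in time $\mathrm{poly}(n)$. Hence it suffices to find a symplectic transformation $M\in\mathrm{Sp}_{2n}(\mathbb{F}_2)$ that sends $V$ into the ``pure-$Z$'' Lagrangian $L_Z:=\{(\mathbf a\mid \mathbf 0):\mathbf a\in\mathbb{F}_2^n\}$, decompose $M$ into generators, and take $U$ to be the corresponding Clifford circuit.

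The construction of $M$ is essentially symplectic Gaussian elimination, sometimes called the ``tableau reduction'' algorithm. Stack the vectors $\mathrm{symp}(P_i)$ as the rows of an $m\times 2n$ binary matrix $[\,A\mid B\,]$ where the left block records $Z$-supports and the right block records $X$-supports. The goal is to apply a sequence of elementary symplectic column operations (swaps, $H$-type blocks that interchange the $A$ and $B$ columns on a single qubit, $S$-type shears that add a column of $A$ into the corresponding column of $B$, and $\mathrm{CNOT}$-type operations that add one qubit's columns into another's in both blocks simultaneously) so as to zero out the entire right block $B$. Concretely, I would proceed column-by-column on the $B$ block: if the $B$ block has a nonzero column, use $H$ to swap it into the $A$ block on that qubit; if a column of $B$ becomes a linear combination of other $B$ columns after a pivot, use $\mathrm{CNOT}$-type operations to eliminate it; $S$ gates handle the case where a qubit has both $A$ and $B$ support on the same row. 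Because $V$ is isotropic, a standard dimension count (the Gram--Schmidt type argument for symplectic forms) guarantees that this reduction terminates with $B=0$, i.e.\ $V\subseteq L_Z$. The total number of elementary operations is $O(nm)$, and each operation has an obvious implementation as a single Clifford gate on the physical qubits.

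After applying $U$, each $UP_iU^\dagger$ is a pure-$Z$ Pauli up to a $\pm 1$ phase; these phases are tracked explicitly by the tableau algorithm (via the Heisenberg-evolved sign bits) and can be absorbed into the $v_i$, which is harmless here since the lemma only claims $UP_iU^\dagger=Z^{\mathbf{a}_i}$ and we may multiply $U$ by single-qubit Paulis to kill any residual signs. Reading off the first $n$ entries of each row yields the vector $\mathbf a_i$. The whole procedure is deterministic, runs in time $\mathrm{poly}(n,m)$, and is documented in standard references on the stabilizer formalism~\cite{gottesman2016surviving}. The only mildly subtle step is verifying that the symplectic elimination on the isotropic subspace $V$ never gets ``stuck''---i.e.\ that one can always find a pivot---which follows from the fact that any isotropic subspace of $(\mathbb{F}_2^{2n},\langle\cdot,\cdot\rangle_{\mathsf{J}})$ extends to a Lagrangian, combined with Witt's extension theorem over $\mathbb{F}_2$ asserting that any isomorphism between two isotropic subspaces extends to a global symplectic automorphism.
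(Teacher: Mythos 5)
The paper does not actually prove this lemma---it is stated as a citation to Gottesman's book and used as a black box---so there is no in-paper argument to compare against. Your proof is the standard one and is essentially correct: commutation of the $P_i$ is equivalent to isotropy of $V=\mathrm{span}\{\mathrm{symp}(P_i)\}$ under the symplectic form, any isotropic subspace can be carried into the pure-$Z$ Lagrangian by an element of $\mathrm{Sp}_{2n}(\mathbb{F}_2)$ (Witt extension / symplectic Gaussian elimination), and every symplectic matrix is realized by a $\mathrm{poly}(n)$-size circuit of $H$, $S$, and $\mathrm{CNOT}$ gates whose decomposition is efficiently computable. This is exactly the content of the cited reference.

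One caveat: your final claim that residual $\pm1$ signs ``can be killed by multiplying $U$ by single-qubit Paulis'' is too strong in general. Conjugating by $X^{\mathbf b}$ flips the sign of $Z^{\mathbf a_i}$ by $(-1)^{\mathbf a_i\cdot\mathbf b}$, so eliminating all signs requires solving the linear system $\mathbf a_i\cdot\mathbf b=s_i$, which is solvable only when the sign assignment is consistent with the linear dependencies among the $\mathbf a_i$. It need not be: take $P_1=X_1X_2$, $P_2=Z_1Z_2$, $P_3=Y_1Y_2$; these commute, $\mathbf a_1+\mathbf a_2+\mathbf a_3=\mathbf 0$, yet $P_1P_2P_3=-I$, so no Clifford $U$ can send all three to $+Z^{\mathbf a_i}$ simultaneously. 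Your first fallback---absorbing the signs into the coefficients $v_i$---is the correct resolution (and is what the paper implicitly does when it writes $UHU^\dagger=\sum_i v_i' Z^{\mathbf a_i}$ with possibly updated signs); the lemma as literally stated, with no sign freedom, is slightly imprecise for the same reason, so this is a defect of the statement rather than of your argument. With that one sentence struck or weakened, your proof stands.
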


\noindent
Define $R\in\mathbb{F}_2^{n\times m}$ by taking the $i$-th column to be $\mathbf a_i$.
For any support $\mathbf y\in\mathbb{F}_2^m$, write $P_{\mathbf y} := \prod_{i \in \supp(\mathbf y)} P_i$ and
$P_{\mathbf y}':=U P_{\mathbf y} U^\dagger=Z^{R \mathbf y}$.
In particular, the \emph{classical} (diagonal) Hamiltonian $H':=U H U^\dagger$ is specified by
the parity-check $R$ in exactly the way used by DQI in the diagonal setting.

\begin{theorem}[DQI simulates commuting HDQI at the sampling level]
\label{thm:dqi-simulates-commuting}
Let $H=\sum_{i=1}^m v_i P_i$ be a commuting Pauli Hamiltonian with parity-check matrix
$B^\intercal$ defined by $\mathrm{Symp}(H)$, and let $\mathcal{P}$ be a degree-$\ell$ polynomial.
There is a polynomial-time classical procedure that, given $(B^\intercal,\mathcal{P})$ and
a bounded-distance decoder for $\mathrm{Symp}(H)$ that corrects up to $\ell$ errors, constructs:
\begin{enumerate}
\item a Clifford $U$ and a classical parity-check $R\in\mathbb{F}_2^{n\times m}$ for the
diagonalized instance $H'=U H U^\dagger=\sum_i v_i Z^{\mathbf a_i}$ with columns $\mathbf a_i$,
\item a bounded-distance decoder for the classical code with parity check $R$ that corrects up to $\ell$ errors,
\end{enumerate}
such that a single call to a DQI oracle on input $(R,\mathcal{P})$ produces a computational-basis
string $\mathbf z\in\{0,1\}^n$ distributed as
$\Pr[\mathbf z]\propto \mathcal{P}(\lambda(\mathbf z))^2$, where $\lambda(\mathbf z)$ is the eigenvalue of $H'$ on $\ket{\mathbf z}$.
Equivalently, defining $\ket{\lambda}:=U^\dagger \ket{z}$ and noting that $U^\dagger$ is Clifford, the algorithm can output stabilizer descriptions of $\ket{\lambda}$ distributed as
$\Pr[\ket{\lambda}]\propto \mathcal{P}(\lambda)^2$.
Hence DQI, together with efficient classical pre/post-processing and Cliffords, reproduces the
eigenbasis sampling behavior of commuting HDQI.
\end{theorem}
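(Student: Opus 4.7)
The proof plan is to use simultaneous Clifford diagonalization to literally reduce the commuting HDQI sampling problem to the diagonal setting in which DQI operates, and then to transfer both the decoder and the output distribution across this Clifford change-of-basis. First I would invoke Lemma~\ref{lem:simul-diag} to compute, in classical polynomial time, a Clifford $U$ and a matrix $R\in\mathbb{F}_2^{n\times m}$ whose columns are $\mathbf{a}_1,\ldots,\mathbf{a}_m$, so that $UP_iU^\dagger=Z^{\mathbf{a}_i}$ and hence $H'=UHU^\dagger=\sum_i v_i Z^{\mathbf{a}_i}$ is a diagonal classical Hamiltonian. Its parity-check, in the symplectic sense of Definition~\ref{def:symplectic_codes}, is the block matrix $\bigl[\begin{smallmatrix}R\\ 0\end{smallmatrix}\bigr]$, whose kernel over $\mathbb{F}_2^m$ coincides with $\Symp(H')$. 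This $R$ is precisely the classical parity-check on which DQI operates when asked to produce the superposition $\sum_{\mathbf{z}}\mathcal{P}(\lambda(\mathbf{z}))\ket{\mathbf{z}}$ for the diagonal cost function $\lambda(\mathbf{z})=\sum_i v_i(-1)^{\mathbf{a}_i\cdot\mathbf{z}}$.

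Next I would promote the given bounded-distance decoder for $\Symp(H)$ into a bounded-distance decoder for the classical code with parity-check $R$. Because Clifford conjugation acts on $n$-qubit Paulis (modulo phase) as a symplectic automorphism $T\in\mathsf{Sp}_{2n}(\mathbb{F}_2)$, one has the column-wise identity $B^\intercal = T\cdot\bigl[\begin{smallmatrix}R\\ 0\end{smallmatrix}\bigr]$, where $B^\intercal$ is the symplectic parity-check of $\Symp(H)$. In particular, $\Symp(H)=\Symp(H')=\ker R$ as linear codes, and any syndrome $\mathbf{s}_R=R\mathbf{y}$ can be mapped in $O(n^2)$ classical time to the corresponding $\Symp(H)$-syndrome $T\bigl[\begin{smallmatrix}\mathbf{s}_R\\ 0\end{smallmatrix}\bigr]=B^\intercal \mathbf{y}$. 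Feeding this syndrome to the $\Symp(H)$ decoder recovers $\mathbf{y}$ exactly when $|\mathbf{y}|\le \ell$; this is the decoder the DQI oracle requires.

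With these two ingredients in hand, I would invoke a single call to the DQI oracle on input $(R,\mathcal{P})$, which by construction produces a computational-basis string $\mathbf{z}$ distributed as $\Pr[\mathbf{z}]\propto\mathcal{P}(\lambda(\mathbf{z}))^2$. Because $U$ is a unitary, the eigenvalues of $H$ and $H'$ coincide, and the eigenstate $\ket{\lambda}$ of $H$ with eigenvalue $\lambda=\lambda(\mathbf{z})$ is exactly $U^\dagger\ket{\mathbf{z}}$. Thus, either applying $U^\dagger$ to the (classical, computational-basis) DQI output produces a sample from the HDQI eigenbasis distribution, or—since both $\ket{\mathbf{z}}$ and $U^\dagger$ are specified by polynomial-size stabilizer data—one may output a stabilizer tableau of $\ket{\lambda}$ purely classically. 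In either case the marginal over eigenvalues matches that of measuring $\rho_{\mathcal{P}}(H)$ in its eigenbasis.

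The main conceptual point to get right is that the simulation is genuinely at the \emph{sampling} level only: the DQI output is an $\ket{\mathbf{z}}$ in the computational basis, which is a stabilizer state, so post-processing by $U^\dagger$ is a Clifford operation and the resulting eigenstate $\ket{\lambda}$ of $H$ is itself stabilizer. I expect the only subtle step to be bookkeeping the symplectic transformation $T$ induced by $U$ so that the decoder reduction in the second paragraph is explicit and efficient; all other steps are either classical polynomial-time Clifford synthesis or a direct appeal to the DQI oracle guarantee. Consequently, the theorem does \emph{not} claim that HDQI is fully equivalent to DQI — coherent outputs such as the thermofield double, as well as non-Clifford observables evaluated on $\rho_{\mathcal{P}}(H)$, are not captured by this sampling-level reduction, and indeed lie outside its scope.
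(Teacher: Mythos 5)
Your proposal is correct and follows essentially the same route as the paper's proof: diagonalize via the simultaneous-Clifford lemma to obtain $U$ and $R$, convert an $R$-syndrome into a $\mathrm{Symp}(H)$-syndrome through the symplectic automorphism induced by $U$ (your $T$ is the paper's $S^{-1}$ acting on $(\mathbf{s},\mathbf{0})$), invoke the DQI oracle on $(R,\mathcal{P})$, and map the output back with the Clifford $U^\dagger$ to get a stabilizer description of $\ket{\lambda}$. Your closing caveat about the sampling-level scope matches the paper's accompanying remark, so no gaps to report.
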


\begin{proof}
By Lemma~\ref{lem:simul-diag}, compute a Clifford $U$ and columns $\mathbf a_i\in\mathbb{F}_2^n$
so that $H'=\sum_i v_i Z^{\mathbf a_i}$ with $\mathbf a_i$ as the $i$-th column of
$R\in\mathbb{F}_2^{n\times m}$. 
For any support $y$, we have
$P_{\mathbf y}' = U P_{\mathbf{y}} U^\dagger = Z^{R \mathbf y}$.
Thus $H'$ is diagonal in the computational basis with the standard DQI representation:
the classical (diagonal) parity structure is exactly $R$.

\emph{Decoder for $R$ via the decoder for $\mathrm{Symp}(H)$.}
A decoder for $R$ receives a syndrome $\mathbf s = R \mathbf y$ and must recover an error pattern $y$
(of weight $\le \ell$ in the regime of interest). From $\mathbf s$ form the diagonal Pauli
$P'_{\mathbf y} = Z^{\mathbf s}$. Undo the diagonalization: $P_{\mathbf y} = U^\dagger Z^{\mathbf s} U$.
Let $\mathbf b := \mathrm{Symp}(P_{\mathbf y}) \in\mathbb{F}_2^{2n}$. 
Recovering $\mathbf y$ is the standard bounded-distance decoding task for the code with parity check $B^\intercal$: solve $B^\intercal \mathbf y = \mathbf b$ using the assumed decoder for $\mathrm{Symp}(H)$. 
The output $\mathbf y$ is then returned as the decoded error for the code with check $R$.
(Equivalently, since $U$ is Clifford there exists a symplectic matrix
$S\in\mathrm{Sp}(2n,\mathbb{F}_2)$ with $\mathrm{Symp}(U P U^\dagger)=S \cdot \mathrm{Symp}(P)$;
hence $\mathrm{Symp}(U^\dagger Z^{\mathbf s} U)=S^{-1}(\mathbf s, \mathbf 0)$, and one decodes $B^\intercal \mathbf y = S^{-1}(\mathbf s, \mathbf 0)$.)

\emph{Reduction to DQI.}
DQI for the diagonal instance $H'=\sum_i v_i Z^{\mathbf a_i}$ with parity-check $R$ (and decoder as above)
prepares the usual ``$q$-sample'' whose computational-basis measurement yields $\mathbf z$ with probability
proportional to $\mathcal{P}(\lambda(\mathbf z))^2$ (cf.\ the classical/diagonal discussion in the main text).
There is an efficient classical algorithm that returns a stabilizer description of $\ket{\lambda}:=U^\dagger\ket{\mathbf z}$ given $\mathbf z$, and after applying this algorithm, we obtain stabilizer descriptions of $\ket{\lambda}$ distributed as $\Pr[\ket{\lambda}]\propto \mathcal{P}(\lambda)^2$ as claimed.
All steps are efficient: computing $U$ and $(\mathbf a_i)$ by Gaussian elimination on $B^\intercal$,
running the bounded-distance decoder for $\mathrm{Symp}(H)$, and  classically tracking
Cliffords.
\end{proof}

\begin{remark}
\label{rem:scope}
The equivalence established above is an \emph{eigenbasis-sampling} equivalence:
DQI (after diagonalization by $U$) reproduces the $\propto\mathcal{P}(\lambda)^2$ sampling
distribution over eigenstates of $H$.
Certain HDQI tasks that rely on access to the mixed state $\rho_{\mathcal{P}}(H)$ beyond such
sampling (e.g., estimation of general non-Clifford observables on Gibbs states or other
polynomial Hamiltonian states) are not directly captured by DQI in this black-box simulation.
\end{remark}

\section{No-go result on structured classical codes} \label{app:symplectic_extension}

A key observation from \cite{JSW24} was that an algebraically structured classical code maps by the DQI reduction to a particular optimization problem which could be a candidate for quantum advantage.
Since DQI is a special case of HDQI, such a strategy holds here as well, but all such Hamiltonians would be completely classical (i.e. a $Z$-type Hamiltonian).
A natural question is whether any version of this technique could be applied for HDQI.
We show in this Appendix that, at least for the commuting Hamiltonian case, this is not possible in a certain sense.
More precisely, there is no way to start with a structured classical code which is provably decodable to high distance, transform it in a way that the columns remain symplectically orthogonal (so that the corresponding Hamiltonian is commuting), and obtain a Hamiltonian which is non-trivially quantum.
Roughly speaking, this is because any valid transformation is equivalent to applying an explicitly computable Clifford operation to the classical Hamiltonian.
In general, the construction of the HDQI decoder in this setting would be to undo this Clifford, obtaining the good classical code, run the classical code's decoder, and then re-do the Clifford.
In this setting, HDQI would not do anything that DQI could not have already done.
Therefore, if we wish to search for quantum advantage by using structured classical codes, we must be able to either find a Clifford which maps it to an inherently interesting commuting Hamiltonian, or be able to treat significantly non-commuting models.

To formally support the above argument, we prove that any linear transformation of a code which preserves the symplectic inner product can be equivalently carried out by a Clifford operation which we explicitly compute.
In what follows, $\odot$ refers to the symplectic inner product.
A fact about symplectic vector spaces is that, analogous to Gram-Schmidt, there is an orthogonalization process which produces a basis of $2n$ binary vectors paired up into $n$ pairs, so that every basis vector is symplectically orthogonal exactly to all other basis vectors except for its paired partner.

\begin{theorem}[Symplectic extension] \label{thm:symplectic_extension}
    Let $B^\intercal \in \F_2^{2n \times m}$ be a binary matrix. Suppose there is a matrix $T \in \F_2^{2n \times 2n}$ which preserves that symplectic inner product over columns of $B^\intercal$. That is, for all $i, j \in [m]$, \begin{align}
        \mathbf{b}_i \odot \mathbf{b}_j = T \mathbf{b}_i \odot T \mathbf{b}_j .
    \end{align}
    Then there exists a symplectic matrix $T'$, efficiently computable given $T$, such that $T' B^\intercal = T B^\intercal$.
\end{theorem}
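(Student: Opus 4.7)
The plan is to invoke a constructive version of Witt's extension theorem for symplectic vector spaces over $\F_2$. Let $V \subseteq \F_2^{2n}$ denote the column span of $B^\intercal$. By hypothesis the restriction $T|_V$ is a symplectic isometry onto $W := T(V)$. The goal is to extend this partial isometry to a full symplectic map $T' \in \mathrm{Sp}(2n,\F_2)$, which will automatically satisfy $T'\mathbf{b}_i = T\mathbf{b}_i$ on every column and hence $T'B^\intercal = TB^\intercal$. The key abstract fact is that any isometry between subspaces of a non-degenerate symplectic space extends to an isometry of the whole space; what remains is to make this extension explicit and efficient.

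First, I would apply a symplectic Gram--Schmidt procedure to $V$ to produce a basis of the form $e_1,f_1,\ldots,e_r,f_r,g_1,\ldots,g_s$, where the pairs $(e_i,f_i)$ are hyperbolic (i.e.\ $e_i \odot f_j = \delta_{ij}$ with all other pairings within this block zero) and $g_1,\ldots,g_s$ is a basis for the radical $\mathrm{Rad}(V) = V \cap V^{\perp}$. This decomposition can be computed via Gaussian elimination over $\F_2$ in $O(n^3)$ time. Because $T|_V$ preserves the symplectic form, the images $e_i' := Te_i$, $f_i' := Tf_i$, $g_j' := Tg_j$ form a basis of $W$ with the identical hyperbolic-plus-radical structure.

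Second, I would extend both bases to full symplectic bases of $\F_2^{2n}$. For each radical vector $g_j$, I need to choose a partner $h_j \in \F_2^{2n}$ with $g_j \odot h_j = 1$ and $g_j \odot h_i = 0$ for $i \neq j$, as well as $h_j$ symplectically orthogonal to all chosen $e_i,f_i$. Such partners exist because the ambient form is non-degenerate: the required pairings form a consistent linear system of rank $r+s$ inside the $2n$-dimensional space, which can be solved by Gaussian elimination. After pairing up the radical vectors, one completes to a full symplectic basis of $\F_2^{2n}$ by iteratively extracting hyperbolic pairs from the symplectic complement. Running the same procedure on $W$ yields partners $h_j'$ and further hyperbolic pairs, producing a full symplectic basis $\{e_i',f_i',g_j',h_j',\ldots\}$ of $\F_2^{2n}$.

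Finally, I would define $T'$ as the unique linear map sending each vector of the first symplectic basis to the corresponding vector of the second. By construction $T'$ agrees with $T$ on $V$, and since its action on a symplectic basis preserves all the defining symplectic pairings, $T'$ is symplectic on all of $\F_2^{2n}$. The main delicate step is the treatment of the radical: the partners $h_j$ (and their images $h_j'$) must be chosen so that extending the radical does not spoil the symplectic independence of the hyperbolic blocks already fixed by $T$. This is handled precisely by the iterative extension above, which over a finite-dimensional non-degenerate symplectic space always succeeds and runs in $\mathrm{poly}(n)$ time, yielding both existence of $T'$ and an explicit construction from $T$ and $B^\intercal$.
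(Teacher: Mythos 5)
Your proposal is correct and follows essentially the same route as the paper: symplectic Gram--Schmidt on the column span to split it into hyperbolic pairs plus a radical, pairing each radical vector with a partner drawn from the symplectic complement, completing both the domain and image bases to full symplectic bases, and defining $T'$ by the basis correspondence. The paper does not name Witt's extension theorem explicitly, but the construction is the same, including the treatment of the radical you flag as the delicate step.
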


\begin{proof}
    Let $S := \operatorname{span}(\set{\mathbf{b}_i}_{i=1}^m)$ which has some dimension $k \leq 2n$. 
    Using symplectic Gram-Schmidt, we may construct a symplectic basis of the subspace $S \subseteq \F_2^n$. 
    This basis consists of some $r$ pairs of vectors $(\mathbf{u}_1, \mathbf{u'}_1), \dots, (\mathbf{u}_r, \mathbf{u'}_r)$ which satisfy \begin{align}
        \mathbf{u}_i \odot \mathbf{u'}_j = \delta_{ij} ,\; \mathbf{u}_i \odot \mathbf{u}_j = 0 ,\; \mathbf{u'}_i \odot \mathbf{u'}_j = 0 .
    \end{align}
    In general, $2r \neq k$, and there will be some $k - 2r$ unpaired vectors $\mathbf{w}_1, \dots, \mathbf{w}_{k-2r}$ which satisfy \begin{align}
        \mathbf{w}_i \odot \mathbf{w}_j = 0 ,\; \mathbf{w}_i \odot \mathbf{u}_j = 0 ,\; \mathbf{w}_i \odot \mathbf{u'}_j = 0 .
    \end{align}
    Note that $T$ preserves the symplectic inner product for any linear combination of $\mathbf{u}_i$, $\mathbf{u'}_j$ where $i, j \in [r]$, and $\mathbf{w}_i$ where $i \in [k-2r]$.
    Next, find the dual space $S^\perp$ write down a basis for it. 
    From that, we know the partners of each $\mathbf{w}_i$ live in $S^\perp$. 
    Applying more Gram-Schmidt, we can pair up the $\mathbf{w}_i$ as well as any remaining vectors, into a complete basis. 
    We will call this basis $\set{(\mathbf{u}_i, \mathbf{u'}_i)}_{i=1}^n$, where $\mathbf{u}_i$ for $i \in [r]$ are as before, $\mathbf{u}_i$ for $i = r+1, \dots, k-r$ are the $\mathbf{w}_1, \dots, \mathbf{w}_{k-2r}$, $\mathbf{u}'_i$ for $i \in [r]$ are as before, and $\mathbf{u}'_i$ for $i > r$ are from $S^\perp$. 
    Next, let \begin{align}
        \mathbf{v}_i = T(\mathbf{u}_i) ,\, \mathbf{v'}_i = T(\mathbf{v}_i) \text{  for  } i \in [r] .
    \end{align}
    Since $T$ preserves the symplectic inner product for any linear combination of $\mathbf{u}_i$ and $\mathbf{u'}_j$ where $i, j \in [r]$, we are in an isomorphic setting and can apply the same procedure as before to build an extended basis $\set{(\mathbf{v}_i, \mathbf{v'}_i)}_{i=1}^n$. 
    To complete the proof, we define $T'$ by taking every $\mathbf{u}_i \longrightarrow \mathbf{v}_i$ and $\mathbf{u'}_i \longrightarrow \mathbf{v'}_i$ and extending linearly. 
    By construction, the symplectic inner product is always preserved, and $T'$ agrees with $T$ on $S$ as desired.
\end{proof}

Note that Clifford operators are isomorphic to symplectic matrices, so from a symplectic matrix $T'$ we can compute efficiently the corresponding Clifford circuit.
Now, let $T$ be any transformation on $B^\intercal$ which preserves the symplectic inner product on all columns of $B^\intercal$.
If the Hamiltonian corresponding to $B^\intercal$ is related to a classical Hamiltonian by a Clifford, it remains related to a classical Hamiltonian by a Clifford after the transformation by $T$. 
As a consequence, there is no hope of constructing a HDQI instance by (a) combining one or more arbitrary good classical codes and then transforming them in such a way that the symplectic inner product is preserved, and (b) using a decoder for HDQI in which we undo the linear transformation, decode, and re-do the transformation.

\section{Polynomial approximations to Gibbs states}
\label{app:gibbs_theorem_proof}
In this section, we formally prove Theorem~\ref{thm:gibbs_explicit}.
For a given Hamiltonian $H$, let $\rho_{\beta}(H) = \frac{e^{-\beta H}}{\Tr(e^{-\beta H})}$ be the Gibbs state at temperature $\beta$ associated with $H$. 
We argue that HDQI can approximately prepare purified Gibbs states. 

\Gibbsthm*
Note that we have solved for $m$ with all constants included, as that affects the (constant) temperature at which we can prepare the Gibbs state. 
Choose $K$ such that $\norm{H} \leq K$. 
We will wish to approximate with a polynomial $\calP$, the function 
\begin{equation}
    f(u) =e^{-a u} ,
\end{equation} 
where
    \begin{align}
        u&:= \frac{x}{K} , \quad a&:= \frac{\beta K}{2}.
    \end{align}
and the variable $x \in [-K,K]$ is supported on the range of eigenvalues of $H$, and therefore $u \in [-1,1]$. The division by $2$ in the definition of $a$ comes about because the HDQI state is proportional to the {\em square} of the polynomial we apply. 

To approximate $f(u)$, we will use the Chebyshev expansion of $e^{-au}$, but truncate it at degree $m$ (to be determined), obtaining the truncated Chebyshev polynomial $p_m$:
\begin{equation}\label{eq:truncated_Chebyshev}
\calP(u) = p_m(u) := I_0(-a)+2 \sum_{k=1}^{m} I_k(-a) T_k(u) 
= I_0(a) + 2 \sum_{k=1}^{m}(-1)^k I_k(a) T_k(u),
\end{equation}
where $T_k$ are Chebyshev polynomials of the first kind and $I_k$ are the modified Bessel functions of the first kind. 
The second equality follows from the Jacobi–Anger expansion of $e^{a \cos (\theta)}$ (see for instance \cite{NIST:DLMF}, 10.35); in the third equality we have used the identity $I_k(-t) = (-1)^k I_k(t)$. 
The Chebyshev polynomials of the first kind are defined by $T_k(u):=\cos (k \arccos (u))$ and so $\forall u\in [-1,1],\, T_k(u) \in [-1,1].$ 
Theorem~\ref{thm:polyexp} shows that choosing degree $m$ to be $\propto \beta K$ suffices, and computes the exact constant needed. 

Before going into details of the polynomial approximation, we first provide a bridging lemma that lifts the error of $p_m$ as an approximation to $f(u)$, to the trace distance error between the target Gibbs state, and the state that results from applying a factor proportional $p_m(\lambda_i)$ to the corresponding energy eigenstates $\{\ket{\lambda_i}\}$ of $H$. Since $e^{-au} = p_{\infty}(u)$, the absolute error made by $p_m$ as an approximation to $e^{-au}$ is
\begin{equation}
    \sup_{u\in [-1,1]} \left\lvert 2 \sum_{k=m+1}^{\infty} (-1)^k I_k(a) T_k(u)\right\rvert 
    \leq \left\lvert 2 \sum_{k=m+1}^{\infty} I_k(a) \right\rvert,
\end{equation}
and the relative error (which we will later bound) is
\begin{equation}
    \sup_{u\in [-1,1]} \Bigg|\frac{p_m(u)-e^{-au}}{e^{-au}}\Bigg| =: \delta.
\end{equation}
Then noting that the Gibbs weights are actually the {\em square} of the target function $f(u)=e^{-au}$, we compute that
\begin{align}
    |f(u)^2 - p_m(u)^2| &\leq |f(u)-p_m(u)|\, \Big(2\min \{f(u),p_m(u)\} + |f(u)-p_m(u)|\Big) \\
    &\leq \delta f(u)(2f(u)+\delta f(u))\\
    &\leq 3\delta f(u)^2.\label{def:eta}
\end{align}

To connect this back to the problem of approximating Gibbs states, let us evaluate $f^2$ and $p_m^2$ at the actual eigenvalues of $H$, which we call $\{\lambda_i\}$. Recall that, in keeping with the definition of $u$, we will need to normalize $\lambda_i$ by dividing by $K$. In the following Lemma, in view of Definition~\ref{def:eta}, we should think of $\eta = 3\delta$ where $\delta$ is the error of the polynomial approximation.
\begin{lemma}[From error of polynomial approximation to trace distance error of final state]\label{lem:polyerrortotrdisterror}
Let $(w')_i=p_m(\lambda_i/K)^2$; similarly let $(w)_i=e^{-\beta\,\lambda_i/K} = f(\lambda_i/K)^2$, and suppose these weights satisfy 
\begin{equation}
\lvert w_i' - w_i\rvert \le \eta\, w_i
\end{equation}
for all $i$ with some $\eta<1$. Let $p_i = w_i/\sum_j w_j$ and $q_i = w_i'/\sum_j w_j'$. Then
\begin{align}
        \frac{1}{2}\left\lVert \rho_{\calP}(H) - \rho_{\beta}(H)\right\rVert_1
        &= \frac{1}{2}\left\lVert \frac{w}{\sum_j w_j} - \frac{w'}{\sum_j w_j'} \right\rVert_1 = \frac{1}{2}\sum_i \lvert p_i - q_i\rvert \\
        &\le\; \frac{\eta}{1-\eta}.
\end{align}
\end{lemma}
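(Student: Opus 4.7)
The plan is to reduce the inequality to a simple comparison of two probability distributions that differ only by a bounded multiplicative reweighting of their unnormalized measures. Concretely, writing $Z := \sum_j w_j$ and $Z' := \sum_j w_j'$, the hypothesis $|w_i' - w_i| \le \eta w_i$ immediately yields the pointwise sandwich $(1-\eta) w_i \le w_i' \le (1+\eta) w_i$, and, by summing, the global sandwich $(1-\eta) Z \le Z' \le (1+\eta) Z$. In particular $Z' > 0$ since $\eta < 1$, so the distribution $q$ is well-defined.

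The cleanest route is then to introduce the multiplicative ratios $\alpha_i := w_i'/w_i \in [1-\eta, 1+\eta]$ (with the convention that $\alpha_i$ is irrelevant when $w_i = 0$, since then the hypothesis forces $w_i' = 0$ and hence $p_i = q_i = 0$). Observe that
\begin{equation}
    q_i \;=\; \frac{\alpha_i w_i}{\sum_j \alpha_j w_j} \;=\; \frac{\alpha_i}{\bar\alpha}\, p_i, \qquad \bar\alpha \;:=\; \sum_j \alpha_j p_j \;=\; Z'/Z.
\end{equation}
Hence $|q_i - p_i| = p_i \,|\alpha_i - \bar\alpha|/\bar\alpha$, and summing over $i$ gives
\begin{equation}
    \sum_i |p_i - q_i| \;=\; \frac{1}{\bar\alpha}\sum_i p_i\, |\alpha_i - \bar\alpha| \;\le\; \frac{2\eta}{\bar\alpha} \;\le\; \frac{2\eta}{1-\eta},
\end{equation}
where the penultimate inequality uses the fact that both $\alpha_i$ and $\bar\alpha$ lie in the interval $[1-\eta,1+\eta]$, so their difference is at most $2\eta$, and the last inequality uses $\bar\alpha \ge 1-\eta$. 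Halving gives the claimed bound on the trace distance (which coincides with the total variation distance for the diagonal states in the eigenbasis of $H$).

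There is essentially no hard step here; the only subtlety is the identification of the trace distance of the two diagonal states $\rho_{\mathcal P}(H)$ and $\rho_\beta(H)$ with the total variation distance of the induced weight distributions, which follows because both matrices are simultaneously diagonal in the eigenbasis $\{\ket{\lambda_i}\}$ of $H$. I will record this reduction in one line, then carry out the three-line arithmetic above. The decomposition via $\alpha_i$ avoids any need to split $q_i - p_i$ into numerator and denominator contributions separately, which would yield the weaker constant $2\eta/(1-\eta)$ for the trace distance rather than $\eta/(1-\eta)$.
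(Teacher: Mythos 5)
Your proof is correct. It establishes the same key estimate as the paper, namely $\sum_i |p_i - q_i| \le 2\eta/(1-\eta)$, and then halves it; the reduction of trace distance to total variation distance via simultaneous diagonalizability is exactly what the paper implicitly uses in the first equality of the lemma. The route differs mildly in the middle step: the paper splits $q_i - p_i$ additively via the triangle inequality into a numerator term $|w_i'-w_i|/Z'$ and a denominator term $w_i|Z'-Z|/(Z'Z)$, each summing to at most $\eta/(1-\eta)$, whereas you work multiplicatively through the exact identity $q_i = (\alpha_i/\bar\alpha)\,p_i$ and bound $\lvert\alpha_i - \bar\alpha\rvert \le 2\eta$. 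Your identity is slightly cleaner and makes transparent where the factor of $2$ comes from (both $\alpha_i$ and $\bar\alpha$ range over an interval of width $2\eta$); it would also let you sharpen the constant if one exploited that $\alpha - \bar\alpha$ has mean zero under $p$, though that is unnecessary here.

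One correction to your closing remark: the additive split does \emph{not} yield a weaker constant. The paper's split gives $\sum_i|p_i-q_i| \le 2\eta/(1-\eta)$, hence trace distance $\le \eta/(1-\eta)$ after halving — identical to your bound. The two decompositions are equivalent in strength for this lemma.
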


\begin{proof} We will prove the last inequality in the statement of the Lemma, as everything else follows by definition. 
Write $r_i := w_i'/w_i \in [1-\eta,\,1+\eta]$. Then
\begin{align}
    \sum_i w_i'=\sum_i r_i w_i \in \big[(1-\eta)\sum_i w_i,\,(1+\eta)\sum_i w_i\big].
\end{align}
Using
\begin{align}
    \lvert q_i - p_i\rvert
= \left\lvert \frac{w_i'}{\sum_j w_j'} - \frac{w_i}{\sum_j w_j}\right\rvert
\le \frac{\lvert w_i' - w_i\rvert}{\sum_j w_j'} \;+\;
\frac{w_i\,\lvert \sum_j w_j' - \sum_j w_j\rvert}{(\sum_j w_j')(\sum_j w_j)}\,,
\end{align}
summing over $i$ gives
\begin{align}
    \sum_i \lvert q_i - p_i\rvert
\;\le\; \frac{\sum_i \lvert w_i' - w_i\rvert}{\sum_j w_j'}
\;+\; \frac{\lvert \sum_j w_j' - \sum_j w_j\rvert}{\sum_j w_j'}
\;\le\; \frac{\eta \sum_i w_i}{\sum_j w_j'} \;+\; \frac{\eta \sum_i w_i}{\sum_j w_j'}
\;\le\; \frac{2\eta}{1-\eta}\,,
\end{align}
where we used $\sum_i \lvert w_i' - w_i\rvert \le \eta \sum_i w_i$ and
$\sum_j w_j' \ge (1-\eta)\sum_j w_j$. This proves the claim.
\end{proof}

From Lemma~\ref{lem:polyerrortotrdisterror}, if we target a trace distance error of $\frac{1}{2}\left\lVert \rho_{\calP}(H) - \rho_{\beta}(H)\right\rVert_1 \leq \eps<1$, choosing $\delta =\eps/6$ suffices. With this scaffolding in place, we are ready to bound the error of the degree-$m$ truncated Chebyshev expansion as a polynomial approximation of $e^{-au}$.

\begin{theorem}[Polynomial approximations of the exponential function on a bounded interval] \label{thm:polyexp}
Let $K, \beta \in \mathbb{R}_{+}$. There exists an efficiently constructable polynomial $p_m \in \mathbb{R}[x]$ such that
\begin{equation}
\sup_{u\in[-1,1]} \frac{|p_m(u)-e^{-au}|}{e^{-au}} \leq \frac{\eps}{6},
\end{equation}
as long as
\begin{equation}
    m\geq 1.12 \beta K + 0.648\ln \left(\frac{12}{\eps}\right).
\end{equation}
\end{theorem}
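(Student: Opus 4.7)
}
The plan is to bound the supremum of the \emph{relative} error $|p_m(u)-e^{-au}|/e^{-au}$ on $[-1,1]$ by (i) first controlling the absolute error via a Chebyshev tail sum, (ii) converting to relative error by dividing by the pointwise minimum of $e^{-au}$ on the interval, and (iii) applying a sharp tail bound for modified Bessel functions to show the required $m$ suffices.

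For step (i), since $\calP(u)=p_m(u)$ is defined in Eqn.~(\ref{eq:truncated_Chebyshev}) as the degree-$m$ truncation of the expansion $e^{-au}=I_0(a)+2\sum_{k=1}^{\infty}(-1)^k I_k(a)T_k(u)$, and the Chebyshev polynomials satisfy $|T_k(u)|\leq 1$ on $[-1,1]$, the triangle inequality immediately gives
\begin{equation}
\sup_{u\in[-1,1]}\bigl|p_m(u)-e^{-au}\bigr|\;\leq\;2\sum_{k=m+1}^{\infty}I_k(a).
\end{equation}
For step (ii), note that on $[-1,1]$ the function $e^{-au}$ attains its minimum value $e^{-a}$ at $u=1$; dividing through yields
\begin{equation}
\sup_{u\in[-1,1]}\frac{|p_m(u)-e^{-au}|}{e^{-au}}\;\leq\;2e^{a}\sum_{k=m+1}^{\infty}I_k(a).
\end{equation}
Thus it suffices to show that the right-hand side is at most $\eps/6$ whenever $m\geq 1.12\beta K+0.648\ln(12/\eps)$, i.e.\ whenever $m\geq 2.24\,a+0.648\ln(12/\eps)$ (using $a=\beta K/2$).

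Step (iii) is the analytic heart of the argument. Starting from the series representation $I_k(a)=\sum_{j\geq 0}\frac{(a/2)^{2j+k}}{j!(j+k)!}$ and using $k!/(j+k)!\leq (k+1)^{-j}$, I would establish the standard upper bound
\begin{equation}
I_k(a)\;\leq\;\frac{(a/2)^k}{k!}\,\exp\!\left(\frac{a^2}{4(k+1)}\right),
\end{equation}
and then apply Stirling to write $(a/2)^k/k!\leq (ea/(2k))^k/\sqrt{2\pi k}$. Summing the geometric-like tail and combining with the prefactor $e^{a}$ reduces the task to showing that, for $k\geq m+1$,
\begin{equation}
e^{a}\left(\frac{ea}{2k}\right)^{k}\;\leq\;\frac{\eps}{C},
\end{equation}
for an explicit constant $C$ absorbing the Stirling and geometric-series factors. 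Taking logarithms and writing $k=c\,a$ for some $c>e/2$, this becomes $c\ln(2c/e)\geq 1+\frac{1}{a}\ln(C/\eps)$. A numerical check shows that $c=2.24$ yields $c\ln(2c/e)\approx 1.12>1$, which is precisely the constant appearing in the theorem; the additive $0.648\ln(12/\eps)$ term absorbs the $\eps$-dependent slack, with $0.648\approx 1/\ln(2\cdot 2.24/e)$.

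The main obstacle will be tightening the constants to $1.12$ and $0.648$ rather than cruder (but still linear-in-$\beta K$) values. This requires careful bookkeeping: in particular, the Stirling bound has to be applied with sub-leading terms so that the geometric tail $\sum_{k\geq m+1}(a/(2k))^{k-m-1}$ and the $e^{a^2/(4(k+1))}$ factor are not double-counted with the $e^{a}$ prefactor. One clean route is to optimize the Stirling-based bound in $c$ analytically rather than asymptotically, solving $c\ln(2c/e)=1$ numerically once and then picking up the $\eps$-dependence via a single $\ln(1/\eps)/\ln(2c/e)$ correction; the constant $0.648$ should emerge as $\lceil 1/\ln(2\cdot 1.12)\rceil^{-1}$ or a close variant. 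Efficient constructability of $p_m$ is immediate, since each Bessel coefficient $I_k(a)$ can be evaluated to machine precision in time $\poly(m)$ by truncating its power series.
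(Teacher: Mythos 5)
Your steps (i) and (ii) coincide exactly with the paper's proof: truncate the Jacobi--Anger/Chebyshev expansion at degree $m$, bound the absolute error by the Bessel tail $2\sum_{k>m}I_k(a)$ using $|T_k(u)|\le 1$, and pass to relative error by dividing by $\inf_{u\in[-1,1]}e^{-au}=e^{-a}$, arriving at $\delta_m(a)\le 2e^a\sum_{k>m}I_k(a)$. Where you diverge is step (iii). The paper observes that $e^{-a}I_k(a)$ is the probability mass function of $D\sim\mathsf{Skellam}(a/2,a/2)$, so the entire tail sum equals $2e^{2a}\Pr(D>m)$, and then applies a single Chernoff/MGF bound optimized at $\theta=\sinh^{-1}(m/a)$, yielding the condition $\Lambda(m,a)-2a\ge\ln(12/\eps)$ with $\Lambda(m,a)=m\sinh^{-1}(m/a)-a\bigl(\sqrt{1+m^2/a^2}-1\bigr)$, which is then solved numerically for $m$. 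Your route---termwise bounds $I_k(a)\le\frac{(a/2)^k}{k!}e^{a^2/(4(k+1))}$, Stirling, and a geometric tail---captures the same large-deviation rate (indeed $\sinh^{-1}(c)=\ln(2c)+\frac{1}{4c^2}+O(c^{-4})$, which is precisely your $\ln(2k/(a))$ term plus the contribution of the $e^{a^2/(4(k+1))}$ correction), so it can in principle recover the same constants. But the margin at $c=m/a=2.24$ is razor-thin: your exponent per unit of $a$ is $1-c\ln(2c/e)+\frac{1}{4c}\approx-0.007$ versus the paper's $1+\sqrt{1+c^2}-c\sinh^{-1}(c)\approx-0.010$, so the bookkeeping you defer as the ``main obstacle'' is genuinely the whole difficulty, and the Skellam trick handles the full tail in one clean optimization rather than term by term.

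Two of your numerical identifications are wrong and would need fixing before the constant-chasing could succeed. First, $1/\ln(2\cdot2.24/e)=1/\ln(1.648)\approx 2.0$, not $0.648$; the correct origin of $0.648$ is $1/\sinh^{-1}(2.24)\approx 1/1.546\approx 0.647$ (in your framework, approximately $1/(\ln(2c)+\frac{1}{4c^2})$ at $c=2.24$), i.e., the reciprocal of $\partial\Lambda/\partial m$ at the threshold, which is the rate at which increasing $m$ buys extra decay to absorb $\ln(12/\eps)$. Second, the fact that $c\ln(2c/e)\approx 1.12$ at $c=2.24$ is a coincidence and is not why $1.12$ appears in the theorem: that constant is simply $c/2$, because $a=\beta K/2$ turns $m\ge ca$ into $m\ge 1.12\,\beta K$. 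Neither slip invalidates the approach, but together they indicate that the part of the argument you have not yet carried out is exactly the part where your proposal currently goes astray.
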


\begin{proof}
An expression for the relative error is
\begin{align}
    \delta_m(a) := \sup_{u\in[-1,1]} \frac{|p_m(u)-e^{-au}|}{e^{-au}} \leq \frac{|p_m(u)-e^{-au}|}{\inf_{u\in[-1,1]} e^{-au}} = e^a \lVert p_m(u) -e^{-au}\rVert_{\infty} \leq 2e^a \sum_{k> m} I_k(a).
\end{align}
We now relate this to the tails of a Skellam distribution \cite{skellam}: namely, the Skellam random variable $D\sim \textsf{Skellam}(\mu,\mu)$ is the difference of two independent $\textsf{Poisson}(\mu)$ variables. Its probability mass function is
\begin{align}
    \Pr\{D=k\}=e^{-2 \mu} I_{|k|}(2 \mu),
\end{align}
and therefore to bound $\delta_m(a)$ (recalling that $a=\beta K/2$) it suffices to bound $2e^{\beta K}\Pr(|D|>m)$ where $D\sim \textsf{Skellam}(a/2,a/2)$. In the following, set $\mu=a/2$. To bound the tails of the Skellam distribution, we will use the standard method of showing that the moment-generating function concentrates. The MGF of $D$ is:
\begin{equation}
    M_D(\theta):=\mathbf{E} [e^{\theta D}]=\exp \left(\mu\left(e^\theta-1\right)\right) \exp \left(\mu\left(e^{-\theta}-1\right)\right)=\exp (2 \mu(\cosh \theta-1)).
\end{equation}
By Markov's inequality applied to the moment-generating function,
\begin{equation}
\Pr(D >m)=\Pr\left(e^{\theta D} \geq e^{\theta m}\right) \leq e^{-\theta m} \mathbf{E} [e^{\theta D}] = \exp \Big[2 \mu(\cosh \theta-1) - \theta m\Big].
\end{equation}
The above holds for any $\theta$, so to get the tightest bound we optimize $\theta$ by setting the derivative of the RHS to 0:
\begin{equation}
    \frac{d}{d\theta} \Bigg[ \exp \Big[2 \mu(\cosh \theta-1) - \theta m\Big]\Bigg] = \exp \Big[2 \mu(\cosh \theta-1) - \theta m\Big] (-m+2\mu \sinh(\theta))
\end{equation}
This is $0$ when $\theta = \sinh^{-1}\left(\frac{m}{2\mu}\right).$ Plugging this back in and using the identity $\cosh(\theta)=\sqrt{1+\sinh^2(\theta)}$, we obtain 
\begin{equation}
   \Pr(D >m)\leq \exp \Big[2 \mu\left(\sqrt{1+\frac{m^2}{a^2}}-1\right) - m \sinh^{-1}\left(\frac{m}{a}\right)\Big]. 
\end{equation}
We actually need to bound both tails, so 
\begin{equation}
    \Pr(|D| >m) = 2 \Pr(D >m) =2e^{-\Lambda(m,a)}
\end{equation}
where 
\begin{equation}
    \Lambda(m,a) = m \sinh^{-1}\left(\frac{m}{a}\right) - a \left(\sqrt{1+\frac{m^2}{a^2}}-1\right).
\end{equation}
and to hit the target error of $\delta_m(a)=\eps/6$, we need to have $\Lambda(m,a)-2a \geq \log\left(\frac{12}{\eps}\right).$ Solving this equation numerically obtains 
\begin{equation}
    m\geq 1.12 \beta K + 0.64792\ln \left(\frac{12}{\eps}\right).
\end{equation}
\end{proof}
This completes the proof of Theorem~\ref{thm:gibbs_explicit}.

\section{Efficient expansion under a constant number of product relations}\label{app:block_expansion}
In this section, we show that if $m$ variables $z_1,\dots,z_m$ (with $z_i^2=1$) satisfy $k=\mathrm{const}$ independent multiplicative relations, then after eliminating $k$ variables the remaining $n=m-k$ variables can be partitioned into $r\le 2^k$ blocks so that, for any univariate polynomial $\calP$, the element $\calP(S)$ with $S=\sum_{i=1}^m z_i$ admits a unique expansion as a linear combination of products of elementary symmetric polynomials within these blocks. Moreover, the coefficients of this blockwise-symmetric expansion can be computed in time polynomial in $n$.

\subsection{Setting and statement}

Fix integers $m\ge 1$ and $k\ge 0$. Work over a field $\F$ of characteristic $0$
(e.g.\ $\R$), and in the quotient ring
\begin{align}
    \mathcal{R}\ :=\ \F[z_1,\dots,z_m]\big/\langle z_1^2-1,\dots,z_m^2-1\rangle .
\end{align}
That is, $z_i^2 = 1$ for all $i$.
Assume we are given $k$ (mod-$2$ independent) \emph{product relations}
\begin{equation}\label{eq:relations}
\prod_{i\in B_j} z_i \;=\; 1\qquad (j=1,\dots,k),
\end{equation}
where $B_j\subseteq[m]$ and ``independence'' means that no relation is the
(mod-$2$) product of the others. 
Let $P(x)=\sum_{r=0}^{\ell}p_r x^r\in\F[x]$
be a univariate polynomial of degree $\ell$, and let
\begin{align}
    S\ :=\ z_1+\cdots+z_m\in\mathcal{R}.
\end{align}
By Gaussian elimination over $\F_2$ we may eliminate $k$ variables and rewrite
every $z_i$ as a monomial in $n:=m-k$ \emph{independent} variables. After this
reduction we keep denoting the independent variables by $z_1,\dots,z_n$.

\blockexp*

The remainder of this Appendix provides a complete, self-contained proof.

\subsection{Block structure induced by the relations}

\begin{lemma}[Membership patterns and blocks]\label{lem:blocks}
After eliminating $k$ variables so that Eqn.~(\ref{eq:relations}) holds identically,
each remaining variable $z_i$ ($1\le i\le n$) acquires a $k$-bit \emph{membership pattern}
$\mathbf b(i)=(b_1(i),\dots,b_k(i))\in\{0,1\}^k$ recording in which relations
it appears. Group the indices $i$ by equal patterns; this yields
$r\le 2^k$ (possibly empty) groups, and discarding the empty ones produces blocks
$V_1,\dots,V_r$ that partition $[n]$. For each $j$, the monomial $p_{U_j}$ in
Eqn.~(\ref{eq:S-decomposition}) equals the product of all $z_i$ whose pattern has
$b_j(i)=1$, i.e.\ $U_j$ is a union of full blocks.
\end{lemma}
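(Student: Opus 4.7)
The plan is to make the $\F_2$ linear algebra explicit: represent each of the $k$ relations in Eqn.~(\ref{eq:relations}) as an $m$-bit indicator vector $\mathbf{1}_{B_j}\in\F_2^m$, and apply Gaussian elimination to bring these vectors into reduced row echelon form over $\F_2$. Because the $k$ relations are (mod-$2$) independent, RREF produces a set of $k$ distinct pivot coordinates $i_1^*,\dots,i_k^*\in[m]$; after relabeling, the $j$-th relation reads
\begin{equation}
z_{i_j^*}\cdot \prod_{i\in C_j} z_i\ =\ 1,\qquad C_j\subseteq [m]\setminus\{i_1^*,\dots,i_k^*\},
\end{equation}
where the $C_j$ lie entirely among the $n=m-k$ non-pivot (hence independent) variables, which we then rename $z_1,\dots,z_n$. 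Using $z_{i_j^*}^2=1$, we get the substitution rule $z_{i_j^*}=\prod_{i\in C_j}z_i$.

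Second, I would define the membership pattern $\mathbf{b}(i)\in\{0,1\}^k$ of an independent variable $z_i$ by setting $b_j(i):=1$ iff $i\in C_j$. This depends only on the row-reduced relation matrix and assigns to each of the $n$ independent variables one of at most $2^k$ binary strings. Grouping the indices $[n]$ by equal patterns and discarding empty classes produces the partition $[n]=V_1\sqcup\cdots\sqcup V_r$ with $r\le 2^k$, which is the claimed block decomposition.

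Third, I would plug the substitution into $S=\sum_{i=1}^m z_i$. The $n$ independent variables contribute $\sum_{t=1}^{r} T_t$ with $T_t=\sum_{i\in V_t}z_i$, while the $k$ pivot variables contribute $\sum_{j=1}^{k}\prod_{i\in C_j}z_i=\sum_{j=1}^{k}p_{U_j}$ with $U_j:=C_j$. To finish, I would verify the ``union of full blocks'' property: by construction, $i\in U_j \iff b_j(i)=1$, and two independent indices with the same pattern $\mathbf{b}(i)$ agree in every coordinate $j$, so they are either both in $U_j$ or both outside. Hence $U_j$ is closed under the block equivalence, i.e.\ is a union of the $V_t$, matching Eqn.~(\ref{eq:S-decomposition}).

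The content here is essentially bookkeeping; the one place requiring care is choosing the elimination order so that the pivot variables do not appear in any of the ``tail'' sets $C_j$ used to define patterns. This is precisely what RREF guarantees (pivots only appear in their own row), and it is the reason the sets $U_j$ live among the independent variables and therefore admit a clean description in terms of the patterns $\mathbf{b}(\cdot)$. No characteristic-$0$ assumption is used in this lemma; it enters only in later parts of Theorem~\ref{thm:blockwise_decomposition_theorem} when one argues uniqueness of the expansion and efficient computation of its coefficients.
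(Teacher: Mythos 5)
Your proposal is correct and follows essentially the same route as the paper's proof: row-reduce the relation indicator vectors over $\F_2$, express the $k$ pivot variables as products of the free variables, read off the membership pattern of each free variable from its column of incidences in the reduced rows, and observe that each $U_j$ is determined coordinatewise by the patterns (hence a union of blocks) with $r\le 2^k$ by pigeonhole. Your added remark that RREF keeps pivots out of the tail sets $C_j$ is a useful clarification of the same argument, not a different approach.
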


\begin{proof}
Row-reduce the system Eqn.~(\ref{eq:relations}) over $\F_2$, selecting $k$ pivot
variables and expressing them as monomials (products) in the remaining $n$ free
variables. Each original relation corresponds to a row $c_j\in\{0,1\}^{\,n}$
that indicates which free variables appear in the product on the right-hand side
after elimination. Define the membership pattern of $z_i$ to be the column
vector of its incidences across the $k$ rows; two variables share a pattern if
their columns coincide. The set of free variables whose $j$-th bit equals $1$
is therefore a union of whole pattern classes; multiplying those variables
exactly yields the monomial $p_{U_j}$. Finally $r\le 2^k$ because there are at
most $2^k$ distinct patterns.
\end{proof}

\begin{corollary}\label{cor:block-symmetry}
The element $S$ in Eqn.~(\ref{eq:S-decomposition}) is invariant under the natural
action of the direct product $G:=\mathfrak S_{V_1}\times\cdots\times \mathfrak
S_{V_r}$ (permutations within each block), hence so is $\calP(S)$.
\end{corollary}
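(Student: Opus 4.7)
The plan is to verify $G$-invariance summand-by-summand in the decomposition Eqn.~(\ref{eq:S-decomposition}), and then transfer invariance from $S$ to $\calP(S)$ through the ring structure. First, I would observe that $\mathcal{R}$ is a commutative $\F$-algebra (being a quotient of $\F[z_1,\dots,z_m]$), and that each permutation $g = (g_1,\dots,g_r) \in \mathfrak S_{V_1}\times\cdots\times\mathfrak S_{V_r}$ acts on $\mathcal{R}$ by relabeling generators inside each block, which extends to an $\F$-algebra automorphism because the defining relations $z_i^2-1$ are themselves permuted among each other.

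The core step is then a two-line check of the two kinds of summands. For a block sum $T_t = \sum_{i\in V_t} z_i$, the action of $g_t$ permutes the summation index set $V_t$ onto itself, so $g\cdot T_t = T_t$; for $s\neq t$, the factor $g_s$ fixes $T_t$ pointwise. For a product $p_{U_j} = \prod_{i\in U_j} z_i$, Lemma~\ref{lem:blocks} guarantees that $U_j$ is a union of entire blocks, so each $g_s$ either fixes every factor of $p_{U_j}$ (if $V_s\cap U_j=\varnothing$) or permutes a full subset of factors among themselves; commutativity of $\mathcal{R}$ finishes the argument. Summing shows $g\cdot S = S$.

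Finally, since $g$ acts as an $\F$-algebra automorphism and $\calP$ has coefficients in $\F$, evaluation commutes with the action: $g\cdot \calP(S) = \calP(g\cdot S) = \calP(S)$ for every $g\in G$. This immediately yields the corollary.

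I do not expect any genuine obstacle here; the argument is essentially a bookkeeping consequence of Lemma~\ref{lem:blocks}. The one point that deserves emphasis, and which I would flag explicitly, is the essential use of commutativity of $\mathcal{R}$ to conclude invariance of the monomials $p_{U_j}$. This is precisely the feature that fails in the non-commuting setting of Section~\ref{sec:general}, where permuting factors produces nontrivial signs tracked by the antisymmetry character $\alpha_G$; noting this contrast will usefully motivate the block-symmetric expansion in Eqn.~(\ref{eq:block-expansion}) that follows.
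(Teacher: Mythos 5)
Your proposal is correct and follows essentially the same route as the paper: check that each $T_t$ is fixed by permutations within its block, that each $p_{U_j}$ is fixed because $U_j$ is a union of full blocks (Lemma~\ref{lem:blocks}), and then pass from $S$ to $\calP(S)$ since the $G$-action is an algebra automorphism. The extra remark contrasting this with the non-commuting setting is accurate but not needed for the corollary itself.
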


\begin{proof}
Each $T_t$ is manifestly invariant under permutations of $V_t$, and each
$p_{U_j}$ is the product of all variables in a union of full blocks, hence
invariant under $G$. Polynomials in a $G$-invariant element are again
$G$-invariant.
\end{proof}

\subsection{A basis for blockwise symmetric polynomials}

Fix a block $V$ of size $n_V$. In the vector space $\mathcal{R}$, the set of
monomials $\{\prod_{i\in S}z_i:\ S\subseteq V\}$ is a basis. The subspace
$\mathcal{R}^{\mathfrak S_V}$ of polynomials invariant under permutations of $V$
is therefore spanned by the \emph{elementary symmetric sums}
\begin{align}
    e_a(V)\ :=\ \sum_{S\subseteq V,\ |S|=a}\ \prod_{i\in S}z_i
\qquad (0\le a\le n_V),
\end{align}
because these sums precisely aggregate monomials of equal cardinality.

\begin{lemma}[Single-block basis]\label{lem:single-block-basis}
For a fixed block $V$, the family $\{e_a(V)\}_{a=0}^{n_V}$ is a basis of
$\mathcal{R}^{\mathfrak S_V}$.
\end{lemma}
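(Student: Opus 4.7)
The plan is to exploit the fact that, in the quotient ring $\mathcal{R}$, the relations $z_i^2=1$ collapse every monomial in the variables of $V$ to a squarefree one, so a natural monomial basis is indexed by subsets $S\subseteq V$; then I will identify the $\mathfrak{S}_V$-invariant subspace with the span of orbit sums under the induced action of $\mathfrak{S}_V$ on subsets. First I would fix the block $V$ and restrict attention to the subalgebra of $\mathcal{R}$ generated by $\{z_i:i\in V\}$; since the $z_i^2-1$ relations are monic in each variable, a standard Gröbner-basis (or direct reduction) argument shows that the squarefree monomials $\{m_S:=\prod_{i\in S}z_i:S\subseteq V\}$ form an $\F$-basis of this subalgebra, yielding $2^{n_V}$ basis elements.

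Next, I would describe the $\mathfrak{S}_V$-action: a permutation $\sigma\in\mathfrak{S}_V$ sends $m_S$ to $m_{\sigma(S)}$, hence the action on the monomial basis is the induced permutation action on the power set $2^V$. The orbits of this action are precisely the collections of subsets of a fixed cardinality $a\in\{0,1,\dots,n_V\}$, since $\mathfrak{S}_V$ acts transitively on $\binom{V}{a}$. Because the $m_S$ form a basis permuted by $\mathfrak{S}_V$, invariants must have equal coefficients within each orbit, so a general invariant is a linear combination of the orbit sums $\sum_{|S|=a}m_S=e_a(V)$; this proves spanning.

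For linear independence, I would observe that distinct $e_a(V)$'s are supported on disjoint sets of monomials in $\{m_S\}$ (namely, those with $|S|=a$), so any nontrivial linear combination $\sum_a c_a\,e_a(V)=0$ forces each $c_a$ to vanish by reading off the coefficient of any $m_S$ with $|S|=a$. This simultaneously shows that $\dim \mathcal{R}^{\mathfrak{S}_V}=n_V+1$, matching the number of orbits of $\mathfrak{S}_V$ on $2^V$.

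I do not anticipate a substantive obstacle here: once the squarefree monomial basis of the single-block subalgebra is in place, the lemma reduces to the elementary combinatorial fact that invariants of a permutation action on a basis are spanned by orbit sums. The only care needed is to keep the argument intrinsic to a single block---permutations in $\mathfrak{S}_V$ do not move the variables outside $V$, so no cross-block monomials interfere---which is immediate from the definition of the block action.
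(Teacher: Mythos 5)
Your proposal is correct and follows essentially the same route as the paper: identify the squarefree monomials $\{\prod_{i\in S}z_i : S\subseteq V\}$ as a basis of the single-block algebra, note that $\mathfrak{S}_V$-invariants are spanned by orbit sums (which are exactly the $e_a(V)$ since the orbits on subsets are the cardinality classes), and get independence from the fact that distinct $e_a(V)$ are supported on disjoint sets of basis monomials. The only difference is that you spell out the orbit-sum argument for spanning more explicitly than the paper does, which is fine.
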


\begin{proof}
Spanning was noted above. For linear independence, observe that each monomial
$\prod_{i\in S}z_i$ appears in exactly one $e_a(V)$, namely with $a=|S|$, and
with coefficient $1$. Therefore a linear dependence among the $e_a(V)$ would
produce a nontrivial linear dependence among disjoint subsets of the monomial
basis, which is impossible.
\end{proof}

\begin{lemma}[Product basis]\label{lem:product-basis}
Let $V_1,\dots,V_r$ be the blocks from Lemma~\ref{lem:blocks}. Then the family
\begin{align}
    \big\{E_{\boldsymbol a}=\prod_{t=1}^r e_{a_t}(V_t)\ :\ 0\le a_t\le |V_t|\big\}
\end{align}
is a basis of the $G$-invariant subspace $\mathcal{R}^G$.
\end{lemma}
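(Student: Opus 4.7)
The plan is to derive this from the block-tensor structure of $\mathcal{R}$. First I would observe that since the blocks $V_1,\dots,V_r$ partition $[n]$ and the relations $z_i^2 = 1$ involve only single variables, the ring $\mathcal{R}$ factors as a tensor product
\begin{equation}
\mathcal{R} \;\cong\; \mathcal{R}_1 \otimes_{\F} \cdots \otimes_{\F} \mathcal{R}_r,
\qquad
\mathcal{R}_t \;:=\; \F[z_i : i\in V_t]\big/\langle z_i^2 - 1 : i\in V_t\rangle,
\end{equation}
with the factor $\mathcal{R}_t$ only involving variables from $V_t$. The group $G = \mathfrak{S}_{V_1}\times\cdots\times\mathfrak{S}_{V_r}$ acts diagonally, with the $t$-th factor acting only on $\mathcal{R}_t$ and trivially on the others. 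By the standard fact that invariants of a direct-product action on a tensor product split, $\mathcal{R}^G = \mathcal{R}_1^{\mathfrak{S}_{V_1}} \otimes \cdots \otimes \mathcal{R}_r^{\mathfrak{S}_{V_r}}$.

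Next I would apply Lemma~\ref{lem:single-block-basis} to each tensor factor: $\mathcal{R}_t^{\mathfrak{S}_{V_t}}$ admits the basis $\{e_{a_t}(V_t) : 0\le a_t\le |V_t|\}$. Taking the tensor product of these bases yields a basis of the tensor product, and the resulting basis elements are precisely
\begin{equation}
e_{a_1}(V_1) \otimes \cdots \otimes e_{a_r}(V_r) \;=\; \prod_{t=1}^r e_{a_t}(V_t) \;=\; E_{\boldsymbol a},
\end{equation}
where the middle equality uses that the $e_{a_t}(V_t)$ live in disjoint variable sets, so their tensor product coincides with their product inside $\mathcal{R}$.

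As a sanity check, and an alternative route, one can argue directly using orbit sums: the $G$-orbits on the monomial basis $\{\prod_{i\in S} z_i : S\subseteq [n]\}$ of $\mathcal{R}$ are indexed by the tuples of intersection sizes $\boldsymbol{a} = (|S\cap V_1|,\dots,|S\cap V_r|)$, since $G$ permutes each $S\cap V_t$ freely among subsets of $V_t$ of the same size. The orbit sum for $\boldsymbol{a}$ is exactly $E_{\boldsymbol{a}}$, orbit sums of a permutation group action always form a basis of the invariant subspace, and distinct $\boldsymbol{a}$ yield disjoint sets of monomials, giving linear independence without any cancellation. I do not expect a real obstacle here: the only point requiring care is that the relations in Eqn.~(\ref{eq:relations}) have already been eliminated so that $z_1,\dots,z_n$ are genuinely free generators of $\mathcal{R}$ subject only to the single-variable squaring relations, which is exactly what makes the tensor decomposition (and equivalently the orbit decomposition) valid.
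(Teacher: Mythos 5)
Your main argument is exactly the paper's proof: factor $\mathcal{R}$ as a tensor product over blocks, note that $G$-invariants of a product action split as the tensor product of the factorwise invariants, and apply the single-block basis lemma to each factor. The orbit-sum argument you add as a sanity check is a valid alternative route, but the primary proof is the same as the paper's, so there is nothing to fix.
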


\begin{proof}
Since $\mathcal{R}\cong \bigotimes_{t=1}^r \mathcal{R}_t$ with
$\mathcal{R}_t=\F[V_t]/\langle z_i^2-1:i\in V_t\rangle$, and the group
$G$ acts independently on the tensor factors, we have
$\mathcal{R}^G\cong \bigotimes_{t=1}^r \mathcal{R}_t^{\mathfrak S_{V_t}}$.
By Lemma~\ref{lem:single-block-basis}, $\{e_{a_t}(V_t)\}_{a_t}$ is a basis of
$\mathcal{R}_t^{\mathfrak S_{V_t}}$. The tensor-product basis is therefore
exactly the family $\{E_{\boldsymbol a}\}$.
\end{proof}

\begin{proof}[Proof of Theorem~\ref{thm:blockwise_decomposition_theorem}(ii)]
By Corollary~\ref{cor:block-symmetry}, $\calP(S)\in\mathcal{R}^G$. 
By
Lemma~\ref{lem:product-basis}, $\{E_{\boldsymbol a}\}$ is a basis of
$\mathcal{R}^G$. 
Uniqueness and existence of the expansion
in Eqn.~(\ref{eq:block-expansion}) follow.
\end{proof}

\subsection{Structure identities and an invariant operator}

For a block $V$ of size $n_V$ write $e_a=e_a(V)$ for brevity. Two identities
hold in $\mathcal{R}$ (here and below we interpret $e_{-1}=e_{n_V+1}=0$):

\begin{lemma}[Sum-of-variables identity]\label{lem:sum-identity}
Let $T:=\sum_{i\in V} z_i$. Then
\begin{equation}\label{eq:Te}
T\,e_a\ =\ (a+1)\,e_{a+1} + (n_V-a+1)\,e_{a-1}\qquad(0\le a\le n_V).
\end{equation}
\end{lemma}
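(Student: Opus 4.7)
The plan is to prove the identity by directly expanding the product $T\,e_a$ in the monomial basis of $\mathcal{R}$ and collecting like terms. Write $T = \sum_{j\in V} z_j$ and $e_a = \sum_{S\subseteq V,\ |S|=a}\, \prod_{i\in S} z_i$, so that
\begin{equation}
T\,e_a \;=\; \sum_{j\in V}\ \sum_{\substack{S\subseteq V\\ |S|=a}}\ z_j\prod_{i\in S} z_i.
\end{equation}
For each pair $(j,S)$ I would split on whether $j\in S$ or $j\notin S$, using the quotient relation $z_j^2=1$ to handle the in-case.

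In the case $j\notin S$, we have $z_j\prod_{i\in S} z_i = \prod_{i\in S\cup\{j\}} z_i$, a squarefree monomial of degree $a+1$. For a fixed target subset $S'\subseteq V$ with $|S'|=a+1$, the pair $(j,S)$ with $S\cup\{j\}=S'$ corresponds to choosing $j\in S'$ and $S=S'\setminus\{j\}$; there are exactly $a+1$ such choices. Hence the contribution of this case to $T\,e_a$ is $(a+1)\,e_{a+1}$. In the case $j\in S$, the relation $z_j^2=1$ gives $z_j\prod_{i\in S} z_i = \prod_{i\in S\setminus\{j\}} z_i$, a squarefree monomial of degree $a-1$. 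For a fixed $S''\subseteq V$ with $|S''|=a-1$, the pair $(j,S)$ with $S\setminus\{j\}=S''$ corresponds to choosing $j\in V\setminus S''$ and setting $S=S''\cup\{j\}$; there are $n_V-(a-1)=n_V-a+1$ such choices. Hence this case contributes $(n_V-a+1)\,e_{a-1}$.

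Adding the two contributions yields
\begin{equation}
T\,e_a \;=\; (a+1)\,e_{a+1} \;+\; (n_V-a+1)\,e_{a-1},
\end{equation}
with the boundary conventions $e_{-1}=e_{n_V+1}=0$ absorbing the edge cases $a=0$ (no $j\in S$ is possible) and $a=n_V$ (no $j\notin S$ is possible). I do not anticipate a significant obstacle here: the identity is a Pieri-type rule and the only thing to keep track of is the double-counting multiplicity $a+1$ (choices of which element of $S'$ plays the role of the ``added'' variable $j$) versus $n_V-a+1$ (choices of which element outside $S''$ plays the role of the ``cancelled'' variable $j$). The argument is purely combinatorial once the $z_j^2=1$ simplification is applied.
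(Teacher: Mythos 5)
Your proof is correct and follows essentially the same route as the paper's: expand $T\,e_a$ over pairs $(j,S)$, split on $j\in S$ versus $j\notin S$ using $z_j^2=1$, and count the multiplicities $a+1$ and $n_V-a+1$ with which each degree-$(a{+}1)$ and degree-$(a{-}1)$ monomial arises. The boundary conventions $e_{-1}=e_{n_V+1}=0$ are handled exactly as in the paper.
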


\begin{proof}
Write $e_a=\sum_{|J|=a} z_J$ with $z_J:=\prod_{j\in J}z_j$. Then
\begin{align}
    T e_a=\sum_{|J|=a}\ \sum_{i\in V} z_i z_J.
\end{align}
If $i\notin J$, then $z_i z_J=z_{J\cup\{i\}}$, a monomial of degree $a+1$.
Every monomial $z_K$ with $|K|=a+1$ arises exactly $(a+1)$ times in this manner
(from the pairs $(J,i)$ with $J=K\setminus\{i\}$). If $i\in J$, then $z_i^2=1$
implies $z_i z_J=z_{J\setminus\{i\}}$, a monomial of degree $a-1$.
Every monomial of degree $a-1$ arises exactly $n_V-a+1$ times (choose any
$i\in V\setminus(J\setminus\{i\})$ to form $J$). Summing over all $J$ yields
Eqn.~(\ref{eq:Te}).
\end{proof}

\begin{lemma}[Full-product reflection]\label{lem:reflection}
Let $p_V:=\prod_{i\in V} z_i$. Then
\begin{equation}\label{eq:reflection}
p_V\,e_a\ =\ e_{n_V-a}\qquad(0\le a\le n_V).
\end{equation}
\end{lemma}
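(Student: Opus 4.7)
The plan is to verify the identity by direct expansion in the monomial basis of $\mathcal{R}$, exploiting only the defining relations $z_i^2=1$ and the commutativity of the quotient ring. Since $p_V$ is a single monomial and $e_a$ is a linear combination of monomials of degree $a$ supported on $V$, their product is a sum of monomials which I will compute index-by-index.

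First, I would write $e_a = \sum_{J\subseteq V,\,|J|=a} z_J$, where $z_J:=\prod_{j\in J}z_j$, and then distribute $p_V$ inside the sum:
\begin{align}
p_V\,e_a \;=\; \sum_{J\subseteq V,\,|J|=a}\ \Big(\prod_{i\in V} z_i\Big)\, z_J.
\end{align}
Next, for each fixed $J$, I would collect the product $\big(\prod_{i\in V} z_i\big)\cdot\big(\prod_{j\in J} z_j\big)$. Because $\mathcal{R}$ is commutative and $z_j^2=1$ in $\mathcal{R}$, every variable with index in $J$ appears exactly twice and therefore cancels, while every variable with index in $V\setminus J$ appears exactly once. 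Thus $\big(\prod_{i\in V} z_i\big)\,z_J = z_{V\setminus J}$, a monomial of degree $n_V-a$.

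The final step is combinatorial: the map $J\mapsto V\setminus J$ is a bijection between subsets of $V$ of size $a$ and subsets of size $n_V-a$. Re-indexing the sum along this bijection gives
\begin{align}
p_V\,e_a \;=\; \sum_{J\subseteq V,\,|J|=a} z_{V\setminus J}
\;=\; \sum_{K\subseteq V,\,|K|=n_V-a} z_K
\;=\; e_{n_V-a},
\end{align}
which is exactly Eqn.~(\ref{eq:reflection}). There is no genuine obstacle here: the only substantive input is the relation $z_i^2=1$, which collapses $p_V\cdot z_J$ to $z_{V\setminus J}$, and the bijection on subsets then yields the desired symmetric sum. The identity $p_V^2=1$ (a consequence of $z_i^2=1$) is also implicitly consistent with the statement, since applying $p_V$ twice returns $e_a$ via $e_{n_V-(n_V-a)}=e_a$.
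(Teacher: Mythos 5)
Your proof is correct and follows exactly the same route as the paper's: expand $e_a$ over $a$-subsets $J$, use $z_j^2=1$ to reduce $p_V z_J$ to $z_{V\setminus J}$, and re-index via the complementation bijection. No gaps.
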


\begin{proof}
We have
$p_V e_a=\sum_{|J|=a} \big(\prod_{i\in V}z_i\big)\big(\prod_{j\in J}z_j\big)
=\sum_{|J|=a} \prod_{i\in V\setminus J} z_i$, where we used $z_j^2=1$.
As $J$ ranges over all $a$-subsets of $V$, $V\setminus J$ ranges over all
$(n_V-a)$-subsets; hence the sum equals $e_{n_V-a}$.
\end{proof}

Return now to the decomposition in Eqn.~(\ref{eq:S-decomposition}).
For a multi-index $\boldsymbol a=(a_1,\dots,a_r)$ set $E_{\boldsymbol a}=
\prod_{t=1}^r e_{a_t}(V_t)$, and define a linear operator
$A:\mathrm{span}\{E_{\boldsymbol a}\}\to \mathrm{span}\{E_{\boldsymbol a}\}$
by left multiplication by $S$.

\begin{proposition}[Invariant tridiagonal-plus-reflection structure]\label{prop:A-action}
For all $\boldsymbol a$,
\begin{equation}\label{eq:A-action}
A\,E_{\boldsymbol a}
=\sum_{t=1}^r\Big((a_t+1)\,E_{\boldsymbol a+\mathbf e_t}
+(n_t-a_t+1)\,E_{\boldsymbol a-\mathbf e_t}\Big)
\ +\ \sum_{j=1}^k E_{R_j(\boldsymbol a)},
\end{equation}
where $n_t:=|V_t|$, $\mathbf e_t$ is the $t$-th unit vector, and the
\emph{reflection} $R_j$ acts by
\begin{align}
    (R_j(\boldsymbol a))_t\ :=\
\begin{cases}
n_t-a_t, & \text{if }V_t\subseteq U_j,\\
a_t, & \text{otherwise.}
\end{cases}
\end{align}
\end{proposition}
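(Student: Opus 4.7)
The plan is to expand $A E_{\boldsymbol a} = S \cdot E_{\boldsymbol a}$ using the decomposition of $S$ given in Eqn.~(\ref{eq:S-decomposition}), and to handle the ``sum'' part $\sum_t T_t$ and the ``product'' part $\sum_j p_{U_j}$ separately. Since we work in the commutative quotient $\mathcal R$, all factors commute freely, so there are no sign issues to track.

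First, I would treat each $T_t E_{\boldsymbol a}$. Because $T_t=\sum_{i\in V_t}z_i$ only involves variables in the block $V_t$, and the blocks $V_1,\dots,V_r$ partition $[n]$ by Lemma~\ref{lem:blocks}, $T_t$ commutes with every factor $e_{a_s}(V_s)$ for $s\neq t$. Hence
\begin{align}
    T_t E_{\boldsymbol a} \;=\; \Bigl(\,T_t\, e_{a_t}(V_t)\Bigr)\,\prod_{s\neq t}e_{a_s}(V_s).
\end{align}
Applying Lemma~\ref{lem:sum-identity} (the sum-of-variables identity) inside the block $V_t$ with $n_V=n_t$ yields
\begin{align}
    T_t\, e_{a_t}(V_t)\;=\;(a_t+1)\,e_{a_t+1}(V_t)+(n_t-a_t+1)\,e_{a_t-1}(V_t),
\end{align}
so that $T_t E_{\boldsymbol a}=(a_t+1)E_{\boldsymbol a+\mathbf e_t}+(n_t-a_t+1)E_{\boldsymbol a-\mathbf e_t}$. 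Summing over $t$ reproduces the first bracket in Eqn.~(\ref{eq:A-action}). The boundary conventions $e_{-1}(V_t)=e_{n_t+1}(V_t)=0$ in Lemma~\ref{lem:sum-identity} ensure that $E_{\boldsymbol a\pm\mathbf e_t}$ is interpreted as zero whenever it would leave the index box $0\le a_t\le n_t$.

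Next, I would analyze each term $p_{U_j}E_{\boldsymbol a}$. By Lemma~\ref{lem:blocks}, each $U_j$ is a union of full blocks, so
\begin{align}
    p_{U_j}\;=\;\prod_{t:\,V_t\subseteq U_j}\,p_{V_t},\qquad p_{V_t}:=\prod_{i\in V_t}z_i.
\end{align}
The factors $p_{V_t}$ again commute with every $e_{a_s}(V_s)$ for $s\neq t$, so we can distribute them into the product $E_{\boldsymbol a}$ and, for each $t$ with $V_t\subseteq U_j$, replace the local factor $e_{a_t}(V_t)$ by $p_{V_t}e_{a_t}(V_t)=e_{n_t-a_t}(V_t)$ using the reflection identity in Lemma~\ref{lem:reflection}. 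The remaining factors (blocks $V_t\not\subseteq U_j$) are untouched. This is precisely the operator $R_j$ of the statement, giving $p_{U_j}E_{\boldsymbol a}=E_{R_j(\boldsymbol a)}$; summing over $j$ supplies the reflection block in Eqn.~(\ref{eq:A-action}).

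The main conceptual work is therefore already done in Lemmas~\ref{lem:blocks}, \ref{lem:sum-identity}, and~\ref{lem:reflection}; the proof of Proposition~\ref{prop:A-action} is essentially an orchestration of these facts with the decomposition $S=\sum_t T_t+\sum_j p_{U_j}$. The only point one has to be a bit careful about is the claim that $U_j$ is a union of whole blocks, since the reflection identity is only available block-by-block; but this is exactly the content of Lemma~\ref{lem:blocks} and is the reason the block partition was chosen by equal membership patterns in the first place.
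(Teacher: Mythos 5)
Your proposal is correct and follows exactly the paper's argument: decompose $S$ via Eqn.~(\ref{eq:S-decomposition}), apply Lemma~\ref{lem:sum-identity} blockwise for the $T_t$ terms, and Lemma~\ref{lem:reflection} together with the fact that each $U_j$ is a union of whole blocks for the $p_{U_j}$ terms. The only difference is that you spell out the commutation and boundary-convention details more explicitly than the paper does.
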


\begin{proof}
Eqn.~\eqref{eq:S-decomposition} writes $S$ as a sum of blocks $T_t$
plus blockwise full products $p_{U_j}$.
Multiplication by $T_t$ changes only the $t$-th factor; by
Lemma~\ref{lem:sum-identity} it contributes the first sum in
Eqn.~\eqref{eq:A-action}. 
Multiplication by $p_{U_j}$ multiplies by $p_{V_t}$
on each block $V_t\subseteq U_j$, which by Lemma~\ref{lem:reflection}
replaces $e_{a_t}(V_t)$ by $e_{n_t-a_t}(V_t)$; this yields the second sum.
\end{proof}

\subsection{Efficient coefficient computation via Horner's rule}

Let $\mathcal{V}:=\mathrm{span}\{E_{\boldsymbol a}\}$.
Represent elements of $\mathcal{V}$ by their coefficient arrays
$v=\{v_{\boldsymbol a}\}_{0\le a_t\le n_t}$.
Proposition~\ref{prop:A-action} immediately gives the component-wise action
of $A$ on arrays:
\begin{equation}\label{eq:A-on-arrays}
(Av)_{\boldsymbol a}
=\sum_{t=1}^r \Big(a_t\,v_{\boldsymbol a-\mathbf e_t}
+(n_t-a_t)\,v_{\boldsymbol a+\mathbf e_t}\Big)
\ +\ \sum_{j=1}^k v_{R_j(\boldsymbol a)},
\end{equation}
where out-of-range indices are interpreted as zero (this is just
Eqn.~\eqref{eq:A-action} re-indexed to collect the coefficient of $E_{\boldsymbol a}$).

Define $e^{(\mathbf 0)}\in\mathcal{V}$ by
$e^{(\mathbf 0)}_{\boldsymbol a}=1$ if $\boldsymbol a=\mathbf 0$, else $0$.
Since $P(S)=\sum_{r=0}^{\ell}p_r S^r=\sum_{r=0}^{\ell}p_r A^r e^{(\mathbf 0)}$,
Horner's rule yields the following recurrence for the coefficient array
$\gamma = \{\gamma_{\boldsymbol a}\}$ of $P(S)$ in the basis $\{E_{\boldsymbol a}\}$:
\begin{equation}\label{eq:horner}
v^{(\ell)}:=p_\ell e^{(\mathbf 0)},\qquad
v^{(q)}:=A\,v^{(q+1)}+p_q e^{(\mathbf 0)}\quad (q=\ell-1,\dots,0),\qquad
\gamma := v^{(0)}.
\end{equation}

\begin{lemma}[Complexity per step]\label{lem:complexity-A}
Let $D=\prod_{t=1}^r (n_t+1)$ be the size of the coefficient array.
Given $v$, the array $u:=Av$ defined by Eqn.~\eqref{eq:A-on-arrays} can be computed
using $O((r+k)D)$ arithmetic operations and $O(D)$ working memory.
\end{lemma}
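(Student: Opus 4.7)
The plan is to compute the array $u := Av$ entry-wise from Eqn.~\eqref{eq:A-on-arrays} by traversing the $D$ multi-indices $\boldsymbol a$ in a structured nested-loop order, and to arrange the bookkeeping so that each of the $r$ tridiagonal contributions and each of the $k$ reflection contributions costs $O(1)$ amortized work per entry. Summed over the $D$ entries and the $(r+k)$ terms, this yields the claimed $O((r+k)D)$ arithmetic.

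First I would fix the lexicographic linearization $\mathrm{idx}(\boldsymbol a) = \sum_{t=1}^r a_t \prod_{s<t}(n_s+1)$ and precompute the strides $\sigma_t := \prod_{s<t}(n_s+1)$ for $t=1,\dots,r$ in $O(r)$ time. Under this linearization the shift $\boldsymbol a \mapsto \boldsymbol a \pm \mathbf e_t$ corresponds to $\mathrm{idx} \mapsto \mathrm{idx} \pm \sigma_t$, so each of the $r$ tridiagonal lookups $v_{\boldsymbol a \pm \mathbf e_t}$ in Eqn.~\eqref{eq:A-on-arrays} is $O(1)$ (with out-of-range indices treated as zero). The scalars $a_t$ and $n_t - a_t$ are immediately available from the loop variables. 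Iterating over all $D$ entries, the tridiagonal part therefore costs $O(rD)$.

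Second, for each reflection $R_j$ ($j=1,\dots,k$) I would precompute the set $T_j \subseteq [r]$ of blocks with $V_t \subseteq U_j$; this is $O(rk)$ preprocessing, which is dominated by $O((r+k)D)$. The key observation is that as $\boldsymbol a$ advances by $+\mathbf e_t$ in the nested loop, the reflected index $\mathrm{idx}(R_j(\boldsymbol a))$ changes by exactly $-\sigma_t$ if $t \in T_j$ and by $+\sigma_t$ otherwise, because $R_j$ acts coordinate-wise by $a_t \mapsto n_t - a_t$ on $T_j$ and trivially elsewhere. Maintaining a vector of $k$ running pointers $(\mathrm{idx}_j)_{j=1}^k$ throughout the traversal therefore costs $O(k)$ work per step of the loop, giving $O(kD)$ in total for all reflection lookups. (When the loop wraps around an inner coordinate, the corresponding pointers are simply reset by the same formula, with $O(k)$ amortized per wrap.)

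Putting the two parts together yields the bound $O((r+k)D)$ arithmetic operations. For memory, we keep the input $v$ and output $u$ of size $D$, the $O(r)$ precomputed strides, the $O(k)$ running reflected pointers, and $O(rk)$ bits for the sets $T_j$; all auxiliary data is $O(r+k) = O(1)$ in the regime of interest, so the working memory is dominated by the two arrays and is $O(D)$, as claimed. The only mildly subtle point is the second step: a naive recomputation of $R_j(\boldsymbol a)$ from scratch at each entry would cost $\Omega(r)$ per reflection and degrade the bound to $O(krD)$, which is why the incremental pointer update is needed. Once that is in place, the rest is routine bookkeeping.
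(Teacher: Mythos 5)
Your proposal is correct and follows essentially the same approach as the paper's proof: evaluate $u_{\boldsymbol a}$ entry by entry, charging $O(r)$ for the tridiagonal neighbor reads and $O(k)$ for the reflection reads, for $O(r+k)$ per entry and $O((r+k)D)$ total, with memory dominated by the two arrays of size $D$. The one thing you add beyond the paper's one-line accounting is the incremental stride-update for the reflected indices $\mathrm{idx}(R_j(\boldsymbol a))$; this is a valid and slightly more careful implementation detail (the paper simply treats each reflected lookup as $O(1)$), but it does not change the argument, and in the regime $r\le 2^k=O(1)$ the distinction is immaterial anyway.
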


\begin{proof}
Eqn.~ \eqref{eq:A-on-arrays} shows that to obtain $(Av)_{\boldsymbol a}$ for
each $\boldsymbol a$ one must:
(i) read the $2r$ neighbor entries
$v_{\boldsymbol a\pm \mathbf e_t}$ and form the $r$ weighted sums
$a_t v_{\boldsymbol a-\mathbf e_t}+(n_t-a_t) v_{\boldsymbol a+\mathbf e_t}$, and
(ii) read the $k$ reflected entries $v_{R_j(\boldsymbol a)}$ and add them.
This is $O(r+k)$ work per array entry, hence $O((r+k)D)$ in total.
Storing both $v$ and $u$ requires $O(D)$ memory.
\end{proof}

\begin{proof}[Proof of Theorem~\ref{thm:blockwise_decomposition_theorem}(iii)]
Initialize $v^{(\ell)}$ and iterate Eqn.~\eqref{eq:horner} down to $q=0$.
Each application of $A$ costs $O((r+k)D)$ by Lemma~\ref{lem:complexity-A},
and adding $p_q e^{(\mathbf 0)}$ costs $O(1)$. The total time is
$O((r+k)D\,\ell)$ and the peak memory is $O(D)$.
Finally, since $r\le 2^k$ and $k$ is constant, $r$ is a constant and
\begin{align}
    D=\prod_{t=1}^r (n_t+1)\ \le\ \Big(\tfrac{1}{r}\sum_{t=1}^r (n_t+1)\Big)^{\!r}
=\Big(\tfrac{n}{r}+1\Big)^{\!r}\ =\ n^{O(1)}
\end{align}
by AM--GM. Thus the algorithm runs in time polynomial in $n=m-k$ and $\ell$.
\end{proof}

\subsection{Completion of the proof}

\begin{proof}[Proof of Theorem~\ref{thm:blockwise_decomposition_theorem}(i)]
This is Lemma~\ref{lem:blocks}.
\end{proof}

\begin{proof}[Proof of Theorem~\ref{thm:blockwise_decomposition_theorem}(ii)]
This is Corollary~\ref{cor:block-symmetry} and Lemma~\ref{lem:product-basis}.
\end{proof}

\begin{proof}[Proof of Theorem~\ref{thm:blockwise_decomposition_theorem}(iii)]
This is the Horner scheme from Eqn.~\eqref{eq:horner} together with
Proposition~\ref{prop:A-action} and Lemma~\ref{lem:complexity-A}.
\end{proof}

We end with a remark on possible extensions.

\begin{remark}[Signs \texorpdfstring{$\prod z_i=-1$}{product = -1}]
If some relations equal $-1$ rather than $1$, the decomposition
Eqn.~\eqref{eq:S-decomposition} holds with $p_{U_j}$ multiplied by $-1$.
The operator formula Eqn.~\eqref{eq:A-action} then has $\pm E_{R_j(\boldsymbol a)}$
with the appropriate signs. All proofs and complexities are unchanged.
\end{remark}

\section{Efficient computation of combinatorial coefficients}
\label{app:combinatorics}

Consider the polynomial
\begin{equation}
    \mathcal{P}_{\ell,m}(z_1,\ldots,z_m) = (z_1 + z_2 + \ldots + z_m)^\ell,
\end{equation}
where, for each $i \in \{1,\ldots,m\}$, $z_i$ is a formal variable satisfying $z_i^2 = 1$. We can re-express this in the form
\begin{equation}
    \mathcal{P}_{\ell,m}(z_1,\ldots,z_m) = \sum_{r=0}^\ell a(m,\ell,r) P^{(r)}(z_1,\ldots,z_r),
\end{equation}
where $P^{(r)}$ is the elementary symmetric polynomial
\begin{equation}
    P^{(r)}(z_1,\ldots,z_m) = \sum_{\substack{\mathbf{p} \in \mathbb{F}_2^m \\ |\mathbf{p}|=r}} \prod_{j=1}^m z_j^{p_j}. 
\end{equation}
The coefficient $a(m,\ell,r)$ with $r=\ell$ arises from the pure cross terms in which no variable is repeated. The coefficients with lower $r$ arise from repeated factors that square to the identity, and thus lead to a ``trickle down'' into terms of lower degree. 
In this Appendix, we show that the coefficients $a(m,\ell,r)$ can be computed classically in time $\mathrm{poly}(m,\ell,r)$.
Therefore, the symmetric polynomial expansion in Theorem~\ref{thm:commuting_symmetric_polynomial_expansion} can be efficiently implemented.

First, consider $a(m,\ell,0)$ in the case that $\ell$ is even. 
This is the number of terms in which all monomials appear to even powers, hence yielding the trivial monomial, which is represented by the all-zeros bitstring. 
Direct counting is infeasible as there are exponentially many terms. 
Instead, consider the following expression.
\begin{equation}
    \frac{1}{2^m} \sum_{z_1 \in \{+1,-1\}} \ldots \sum_{z_m \in \{+1,-1\}} \left(z_1 + \ldots + z_m \right)^\ell
\end{equation}
For each $j \in \{1,\ldots,m\}$, the summation $\frac{1}{2} \sum_{z_j \in \{+1,-1\}}$ eliminates all terms that are odd with respect to $z_j$ and leaves all other terms unchanged. Thus, the above expression is equal to $a(m,\ell,0)$. We can re-express this as
\begin{equation}
    \label{eq:asum}
    a(m,\ell,0) = \frac{1}{2^m} \sum_{n=0}^m \binom{m}{p} \left( m - 2n \right)^\ell,
\end{equation}
where $n$ can be interpreted as the number of variables amongst $\{z_1,\ldots,z_m\}$ that are equal to -1. This is recognizable as a shifted moment of the binomial distribution. 
Such moments to not have simple closed form expressions. However, since the summation in Eqn.~(\ref{eq:asum}) has only $m$ terms it can be classically evaluated in polynomial time, which is enough for our purposes.

The other combinatorial coefficients can be computed by similar means. Specifically, $a(m,\ell,r)$ is equal to the sum of the coefficients on all the monomials in $(z_1+ \ldots + z_m)^\ell$ in which $z_1,\ldots,z_r$ all appear with odd powers and $z_{r+1},\ldots,z_m$ all appear with even powers. This sum of coefficients can be obtained by antisymmetrizing over variables $z_1,\ldots,z_r$ and symmetrizing over variables $z_{r+1},\ldots,z_m$. That is,
\begin{equation}
    a(m,\ell,r) = \frac{1}{2^m} \sum_{z_1,\ldots,z_m \in \{-1,+1\}} z_1 \ldots z_r \left( z_1 + \ldots + z_m \right)^\ell.
\end{equation}
Let $n_1$ be the number of variables amongst $\{z_1,\ldots,z_r\}$ that are equal to $-1$ and let $n_2$ be the number of variables amongst $\{z_{r+1}, \ldots, z_m \}$ that are equal to $-1$. Then, the above sum can be rewritten as
\begin{equation}
    a(m,\ell,r) = \frac{1}{2^m} \sum_{n_1 = 0}^r \binom{r}{n_1} (-1)^{n_1} \sum_{n_2 = 0}^{m-r} \binom{m-r}{n_2} \left( m - 2n_1 - 2n_2 \right)^\ell.
\end{equation}
This sum can be evaluated in polynomial time, as it has only $O(m^2)$ terms.

\end{document}